\documentclass[9pt]{article}

\usepackage{amsmath,amsfonts,amssymb,amsthm,hyperref,enumerate,multicol,eufrak,bm}
\usepackage{calligra,indentfirst,epsfig,lettrine}
\usepackage{caption}
\usepackage{relsize}
\usepackage{tabls}
\usepackage{amsmath}
\usepackage{eqnarray}
\usepackage{cancel}
\usepackage{array}
\usepackage{longtable}
\calligra

\DeclareFontFamily{U}{mathc}{}
\DeclareFontShape{U}{mathc}{m}{it}%
{<->s*[1.03] mathc10}{}

\addtolength{\hoffset}{-0.55cm}\setlength
{\hoffset}{-0.5in}
\addtolength{\voffset}{-0.45cm}\setlength{\voffset}{-0.5in}
\addtolength{\textwidth}{1.5cm}\addtolength{\textheight}{1.7cm}
\setlength{\textwidth}{6.8in}\setlength{\textheight}{9in}
\setlength{\topmargin}{0in}\setlength{\evensidemargin}{0.45in}
\setlength{\oddsidemargin}{0.45in}\setlength{\columnsep}{6mm}
\usepackage{comment}
\setcounter{MaxMatrixCols}{50}
\newcommand{\Mod}[1]{\ (\mathrm{mod}\ #1)}

\newcommand{\qbin}[3]{\genfrac{[}{]}{0pt}{}{#1}{#2}_{#3}}
\makeatletter
\newcommand*{\rom}[1]{\expandafter\@slowromancap\romannumeral #1@}
\makeatother
\usepackage{mathtools}
\usepackage{graphicx}
\usepackage{ulem}
\usepackage{subfigure}
\usepackage{xcolor}
\usepackage{color, soul}
\usepackage{mathrsfs}
\DeclareMathAlphabet{\mathpzc}{OT1}{pzc}{m}{it}
\DeclarePairedDelimiter\ceil{\lceil}{\rceil}
\DeclarePairedDelimiter\floor{\lfloor}{\rfloor}
\usepackage{upgreek}

\allowdisplaybreaks
\numberwithin{equation}{section}

\addtolength{\hoffset}{-0.55cm}\setlength
{\hoffset}{-0.5in}
\addtolength{\voffset}{-0.45cm}\setlength{\voffset}{-0.5in}
\addtolength{\textwidth}{1.5cm}\addtolength{\textheight}{1.7cm}
\setlength{\textwidth}{6.8in}\setlength{\textheight}{9in}
\setlength{\topmargin}{0in}\setlength{\evensidemargin}{0.45in}
\setlength{\oddsidemargin}{0.45in}\setlength{\columnsep}{6mm}
\usepackage{comment}
\setcounter{MaxMatrixCols}{50}
\numberwithin{equation}{section}

\newtheorem{thm}{Theorem}[section]
\newcommand{\etal}{\textit{et al.}}

\newtheorem{prop}{Proposition}[section]
\newtheorem{cor}{Corollary}[section]
\newtheorem{lemma}{Lemma}[section]
\newtheorem{example}{Example}[section]
\newtheorem{remark}{Remark}[section]






\begin{document}

\title{On Eisenstein additive codes over chain rings and\\linear codes over mixed alphabets}

\author{Leijo Jose{\footnote{Email address:~\urlstyle{same}\href{mailto:leijoj@iiitd.ac.in}{leijoj@iiitd.ac.in}}} \thanks{Research support by the National Board for Higher Mathematics (NBHM), India,  under Grant no. 0203/13(46)/2021-R\&D-II/13176,   is gratefully acknowledged.}~ and ~Anuradha Sharma{\footnote{Corresponding Author, Email address:~\urlstyle{same}\href{mailto:anuradha@iiitd.ac.in}{anuradha@iiitd.ac.in}}~\thanks{Research support by the Department of Science and Technology, India, under the Grant no. DST/INT/RUS/RSF/P-41/2021 with TPN 65025, is gratefully acknowledged.} } \\
 {Department of  Mathematics, IIIT-Delhi}\\{New Delhi 110020, India}}
\date{}
\maketitle
\date{}
	\maketitle
\begin{abstract}\label{Abstract}
Let  $\mathcal{R}_e=GR(p^e,r)[y]/\langle g(y),p^{e-1}y^t\rangle$ be a finite commutative chain ring, where  $p$ is a prime number, $GR(p^e,r)$ is the Galois ring of characteristic $p^e$ and rank $r,$ $t$ and $k$ are  positive integers satisfying $1\leq t\leq k$ when $e \geq 2,$ while $t=k$ when $e=1,$ and $g(y)=y^k+p(g_{k-1}y^{k-1}+\cdots+g_1y+g_0)\in GR(p^e,r)[y]$ is an Eisenstein polynomial with $g_0$ as a unit in $GR(p^e,r).$ In this paper, we first establish a duality-preserving 1-1 correspondence between additive codes  over $\mathcal{R}_e$ and $\mathbb{Z}_{p^e}\mathbb{Z}_{p^{e-1}}$-linear  codes, where the character-theoretic dual codes of additive codes  over $\mathcal{R}_e$ correspond to the Euclidean dual codes of $\mathbb{Z}_{p^e}\mathbb{Z}_{p^{e-1}}$-linear  codes, and vice versa. This correspondence gives rise to a method for constructing additive codes over $\mathcal{R}_e$ and their character-theoretic dual codes, as  unlike additive codes over $\mathcal{R}_e,$ $\mathbb{Z}_{p^e}\mathbb{Z}_{p^{e-1}}$-linear  codes can be completely described in terms of generator matrices. We also list additive codes over the chain ring $\mathbb{Z}_4[y]/\langle y^2-2,2y \rangle$ achieving the Plotkin's bound for homogeneous weights, which suggests that additive codes over $\mathcal{R}_e$ is a promising class of error-correcting codes to find optimal codes with respect to the homogeneous metric.  We further provide a method to construct and enumerate all Euclidean self-orthogonal  and self-dual $\mathtt{R}_e\mathtt{R}_{e-1}$-linear codes of an arbitrary block-length,  where $\mathtt{R}_e$ is a finite commutative chain ring of odd characteristic with the maximal ideal $\langle \gamma\rangle $ of nilpotency index $e\geq 2$ and $\mathtt{R}_{e-1} = \mathtt{R}_e/\langle \gamma^{e-1}\rangle$ is the chain ring with the maximal ideal of nilpotency index $e-1.$ By employing this method, we obtain enumeration formulae for all Euclidean self-orthogonal  and self-dual $\mathtt{R}_e\mathtt{R}_{e-1}$-linear codes of an arbitrary block-length.   This also gives rise to a construction method and enumeration formulae for all self-orthogonal and self-dual additive codes over $\mathcal{R}_e,$ where $p$ is an odd prime.  We also obtain an enumeration formula for all complementary-dual additive  codes (or ACD codes in short) over $\mathcal{R}_e.$  Besides this, we  translate the notion of monomial equivalence between additive codes over $\mathcal{R}_e$ to a suitable notion of equivalence between  $\mathbb{Z}_{p^e}\mathbb{Z}_{p^{e-1}}$-linear  codes. With the help of this observation and our enumeration formulae,  we classify all self-orthogonal and self-dual additive codes of lengths $2$ and $3$ over the chain ring $\mathbb{Z}_9[y]/\langle y^2-3,3y\rangle$ up to monomial equivalence by classifying all Euclidean self-orthogonal and self-dual $\mathbb{Z}_9\mathbb{Z}_3$-linear codes of block-lengths $(2,2)$ and $(3,3),$ respectively. We also classify all ACD codes of length $2$ over  $\mathbb{Z}_4[y]/\langle y^2-2,2y\rangle$ and $\mathbb{Z}_9[y]/\langle y^2-3,3y\rangle$ up to monomial equivalence by classifying all Euclidean complementary-dual $\mathbb{Z}_4\mathbb{Z}_2$-linear codes (or Euclidean $\mathbb{Z}_4\mathbb{Z}_2$-LCD codes in short) and all Euclidean complementary-dual $\mathbb{Z}_9\mathbb{Z}_3$-linear codes (or Euclidean $\mathbb{Z}_9\mathbb{Z}_3$-LCD codes in short) of block-length $(2,2),$ respectively.\end{abstract}
{\bf Keywords:} Eisenstein additive codes; Codes over mixed alphabets; Monomial equivalence of additive codes.
\\{\bf 2020 Mathematics Subject
 Classification:} 11T71, 94B60.
 \vspace{-4mm}
\section{Introduction}
\vspace{-2mm}
Additive codes  over the finite field $\mathbb{F}_4$ of order $4$ are introduced and studied by Calderbank \etal~\cite{Calderbank}. They also studied  their dual codes with respect to the trace inner product and provided a method to construct quantum error-correcting  codes from self-orthogonal additive codes over $\mathbb{F}_4.$ Later, Bierbrauer and Edel \cite{Bierbrauer} generalized the theory of additive codes over $\mathbb{F}_4$ to additive  codes (\textit{i.e.,} $\mathbb{F}_q$-linear codes) over $\mathbb{F}_{q^2},$ where $\mathbb{F}_q$ and $\mathbb{F}_{q^2}$ are finite fields of orders $q$ and $q^2,$ respectively. They also constructed families of quantum error-correcting codes with good parameters. Danielsen and Parker \cite{Danielsen} studied self-dual additive codes over $\mathbb{F}_4$ with respect to the Hermitian trace inner product and classified all self-dual additive codes over $\mathbb{F}_4$ of lengths up to $12.$  They also classified all extremal Type II codes of length $14$ over $\mathbb{F}_4$ and showed that  Type I and Type II codes over $\mathbb{F}_4$ with the trivial automorphism group have smallest lengths $9$ and $12,$ respectively. Later, Huffman \cite{Huffman2013} studied additive codes (\textit{i.e.,} $\mathbb{F}_q$-linear codes) over $\mathbb{F}_{q^t},$ where $t\geq2$ is an integer and $\mathbb{F}_{q^t}$ is the finite field of order $q^t.$ By placing ordinary and Hermitian trace inner products,   he also studied their dual codes and  derived the MacWilliams identity, the Gleason-Pierce theorem, the Gleason polynomials,  mass formulae for self-orthogonal and self-dual additive codes over $\mathbb{F}_{q^t}$, the balance principle and the Singleton bound for these codes. He also classified all self-dual additive codes over $\mathbb{F}_{q^t}$ with respect to both ordinary and Hermitian trace inner products in certain specific cases. Since then, much research has been devoted to studying additive codes over various alphabets \cite{Dougherty2022,Dougherty2,Dougherty2023,Huffman2007,Huffman2010}. 

On the other hand, the study of linear codes over finite commutative chain rings  gained a lot of attention after the seminal work   \cite{Hammons}, where  many important binary non-linear codes are  obtained as  Gray images of linear codes over the ring $\mathbb{Z}_4$ of integers modulo $4.$ This motivated several researchers to study linear codes over finite commutative chain rings   \cite{Betty,Dougherty,Norton,Wood,Yadav}. Recently, Mahmoudi and Samei \cite{Samei} studied additive codes of an arbitrary length over the Galois ring $GR(p^e,r)$ of characteristic $p^e$ and rank $r,$  where $p$ is a prime  number,  $e\geq 1$ and $r\geq2$ are  integers, and $\mathbb{Z}_{p^e}$ is the ring of integers modulo $p^e.$ They also obtained a generator matrix for these codes and enumerated all additive codes of an arbitrary length  over $GR(p^e,r).$ They also derived lower and upper bounds on their minimum Hamming weights  and obtained some classes of maximum distance with respect to rank (MDR) additive codes over $GR(p^e,r).$ They also  studied permutation equivalent and decomposable additive codes over $GR(p^e,r).$ Apart from this, they derived the MacWilliams identity and Delsarte's Theorem for additive codes over $GR(p^e,r).$ Cao \etal~\cite{Cao} studied additive cyclic codes of length $N$ over the Galois ring $GR(p^e,m),$ where both $p$ and $m$ are  prime numbers,  $\gcd(N,p)=1$ and $e\geq2$ is an integer. By providing a canonical form decomposition for these codes, they enumerated all additive cyclic codes of length $N$ over $GR(p^e,m).$  They also  obtained a canonical form decomposition of their dual codes with respect to  the ordinary trace inner product. 


Now, let $S,R$ be two finite commutative chain rings  such that $R$ is the Galois extension of $S$ of degree $d \geq 2,$ and let $N$ be a positive integer with $\gcd(N, |S|)=1,$ (throughout this paper, $|\cdot|$ denotes the cardinality function).  When $d$ is a prime number,  Mart\'{i}nez-Moro \textit{et al.} \cite{Edgar} defined  (Galois) additive  cyclic codes of  length $N$ over $R$ as $S$-submodules of $R^N$ that are invariant under the cyclic shift operator. They studied additive  cyclic codes of  length $N$ over $R$ and their dual codes with respect to the ordinary trace inner product on $R^N.$  In a recent work, Jose and Sharma \cite{Jose} studied these codes and their dual codes in the general case when $d \geq 2$ is any integer, not necessarily a prime number. They also provided  canonical form decompositions of additive cyclic codes of length $N$ over $R$ and their dual codes, and characterized all self-orthogonal and self-dual additive cyclic codes and  complementary-dual  additive cyclic codes (or ACD cyclic codes in short) of length $N$ over $R$. They also derived necessary and sufficient conditions under which  a self-dual additive cyclic code of length $N$ over $R$ exists, and enumerated all self-orthogonal and self-dual additive cyclic codes of length $N$ over $R$ by considering the following two cases: (I) both $|R|$ and $r$ are odd, and (II) $q$ is even and $S$ = $\mathbb{F}_q[u]/\langle u^e\rangle$. They also counted  all  ACD cyclic codes of length $N$ over $R.$  

On the other hand, when $S$ is the ring of integers modulo a prime power and $R$ is an Eisenstein extension of $S$ (\textit{i.e.,} in the Eisenstein additivity case),  Mart\'{i}nez-Moro \textit{et al.} \cite{Edgar} defined  the dual codes of (Eisenstein) additive cyclic codes of length $N$ over $R$ with respect to a duality defined via the annihilators of characters (see Wood \cite{Wood}).  To the best of our knowledge,  algebraic structures of  Eisenstein additive codes over $R$ and their character-theoretic dual codes are not thoroughly investigated. In  this paper, we will consider the  Eisenstein additivity case and thoroughly investigate algebraic structures of Eisenstein additive codes over finite commutative chain rings  and their character-theoretic dual codes by establishing a duality-preserving 1-1 correspondence between these codes and linear codes over mixed alphabets of rings of integers modulo prime powers. This correspondence leads to a method to investigate and construct  Eisenstein additive codes over chain rings and their special classes consisting of self-orthogonal, self-dual, and ACD codes, as linear codes over mixed alphabets of chain rings have richer algebraic structures as compared to that of Eisenstein additive codes over chain rings.

Linear codes over mixed alphabets of $\mathbb{Z}_2$ and $\mathbb{Z}_{4}$ are called $\mathbb{Z}_4\mathbb{Z}_2$-linear codes, which  are first introduced and studied by Rif\`{a} and Pujol \cite{Rifa} as abelian translation-invariant propelinear codes. Borges \etal~\cite{Borges2010} thoroughly studied $\mathbb{Z}_4\mathbb{Z}_2$-linear codes and obtained their fundamental parameters. They also obtained standard forms of their generator and parity-check matrices  by defining an appropriate notion of duality for these codes. Additionally, they studied the binary Gray images of $\mathbb{Z}_4\mathbb{Z}_2$-linear codes. Later,  Aydogdu and Siap \cite{Siap2013} generalized $\mathbb{Z}_4\mathbb{Z}_2$-linear codes to $\mathbb{Z}_{2^e}\mathbb{Z}_{2}$-linear codes,  determined standard forms of their generator and parity-check matrices  and derived two upper bounds on their minimum Lee distances. In another work \cite{Siap1}, they further generalized $\mathbb{Z}_{2^e}\mathbb{Z}_{2}$-linear codes to $\mathbb{Z}_{p^e}\mathbb{Z}_{p^s}$-linear codes, where $p$ is a prime number and $e,s$ are positive integers satisfying $e > s.$ They also determined standard forms of their generator and parity-check matrices  and derived  upper bounds on their minimum Lee distances. Borges \etal~\cite{Borges} studied linear and cyclic  codes over mixed alphabets of finite commutative chain rings and  their dual codes. Recently, Jose and Sharma \cite{Jose2} obtained enumeration formulae for all Euclidean and Hermitian complementary-dual linear codes  over mixed alphabets of chain rings, and classified these codes up to monomial equivalence in certain special cases. The reader may refer to \cite{Siap2015,Bajalan2, Bajalan} for other recent works on linear codes over mixed alphabets.

Throughout this paper, let $\mathcal{R}_e=GR(p^e,r)[y]/\langle g(y),p^{e-1}y^t\rangle$ be a finite commutative chain ring, where $p$ is a prime number, $GR(p^e,r)$ is the Galois ring of characteristic $p^e$ and rank $r,$ $t$ and $k$ are positive  integers satisfying $1\leq t\leq k$ when $e \geq 2,$ while $t=k$ when $e=1,$ and $g(y)=y^k+p(g_{k-1}y^{k-1}+\cdots+g_1y+g_0)\in GR(p^e,r)[y]$ is an Eisenstein polynomial with $g_0$ as a unit in $GR(p^e,r).$  The main goal of this paper is to thoroughly investigate algebraic structures of additive codes  over $\mathcal{R}_e$ and their character-theoretic dual codes by establishing a 1-1 correspondence between these codes   and $\mathbb{Z}_{p^e}\mathbb{Z}_{p^{e-1}}$-linear  codes, under which the character-theoretic dual codes of additive codes  over $\mathcal{R}_e$ correspond to  the Euclidean dual codes of $\mathbb{Z}_{p^e}\mathbb{Z}_{p^{e-1}}$-linear  codes, and vice versa. This induces a 1-1 correspondence between self-orthogonal and self-dual additive codes over $\mathcal{R}_e$ and  Euclidean self-orthogonal and   self-dual $\mathbb{Z}_{p^e}\mathbb{Z}_{p^{e-1}}$-linear  codes, respectively. It also induces a   1-1 correspondence between  ACD codes over $\mathcal{R}_e$ and  Euclidean complementary-dual  $\mathbb{Z}_{p^e}\mathbb{Z}_{p^{e-1}}$-linear  codes (or Euclidean $\mathbb{Z}_{p^e}\mathbb{Z}_{p^{e-1}}$-LCD codes in short).    We will further make use of these observations to  provide construction methods and enumeration formulae for all self-orthogonal and self-dual additive codes over $\mathcal{R}_e,$ where $p$ is an odd prime. We will also obtain an enumeration formula for ACD codes over $\mathcal{R}_e.$ These enumeration formulae are useful in the classification of these codes up to monomial equivalence, which we will illustrate in certain special cases.

This manuscript is organized as follows: In Section \ref{prelim}, we state some preliminaries needed to derive our main results. In Section \ref{section3}, we establish a duality-preserving 1-1 correspondence between additive codes  over $\mathcal{R}_e$ and $\mathbb{Z}_{p^e}\mathbb{Z}_{p^{e-1}}$-linear  codes, where the character-theoretic dual codes of additive codes  over $\mathcal{R}_e$ correspond to the Euclidean dual codes of $\mathbb{Z}_{p^e}\mathbb{Z}_{p^{e-1}}$-linear  codes (Theorem \ref{thm0.1}). We also list additive codes over the chain ring $\mathbb{Z}_4[y]/\langle y^2-2,2y \rangle$ achieving the Plotkin's bound for homogeneous weights (see Table \ref{table1}).   In Section \ref{SectionEnumeration}, we provide a construction method and enumeration formulae for all Euclidean self-orthogonal  and self-dual $\mathtt{R}_e\mathtt{R}_{e-1}$-linear codes of an arbitrary block-length,  where $\mathtt{R}_e$ is a finite commutative chain ring of odd characteristic with the maximal ideal $\langle \gamma\rangle $ of nilpotency index $e\geq 2$ and $\mathtt{R}_{e-1} = \mathtt{R}_e/\langle \gamma^{e-1}\rangle$ is the chain ring with the maximal ideal of nilpotency index $e-1$ (Theorems \ref{Thm4.3},\ref{Lemma3.3},\ref{Thm5.3},\ref{lem3.5},\ref{Thm6.3}   and \ref{Thm6.4}). This gives rise to a construction method and enumeration formulae for  all self-orthogonal and self-dual additive codes over $\mathcal{R}_e,$ where $p$ is an odd prime (Corollaries \ref{Corollary3.1}-\ref{Cor3.2}).   In Section \ref{ACD}, we enumerate all ACD codes over $\mathcal{R}_e$ (Corollary \ref{Cor8.1}). In Section \ref{Classification}, we first translate the notion of monomial equivalence between additive codes over $\mathcal{R}_e$ to a suitable notion of equivalence between  $\mathbb{Z}_{p^e}\mathbb{Z}_{p^{e-1}}$-linear  codes (Lemma \ref{lemeq}). With the help of this observation and the enumeration formulae obtained in Section \ref{SectionEnumeration}, we classify all self-orthogonal and self-dual additive codes of lengths $2$ and $3$ over the chain ring $\mathbb{Z}_9[y]/\langle y^2-3,3y\rangle$ up to monomial equivalence by classifying all Euclidean self-orthogonal and self-dual $\mathbb{Z}_9\mathbb{Z}_3$-linear codes of block-lengths $(2,2)$ and $(3,3),$ respectively. Furthermore, by applying the results derived   in Section \ref{ACD}, we classify all  ACD codes  of length $2$ over $\mathbb{Z}_4[y]/\langle y^2-2,2y\rangle$ and $\mathbb{Z}_9[y]/\langle y^2-3,3y\rangle$ up to monomial equivalence by classifying all Euclidean  $\mathbb{Z}_4\mathbb{Z}_2$-LCD codes and Euclidean  $\mathbb{Z}_9\mathbb{Z}_3$-LCD codes of block-length $(2,2),$ respectively.

\vspace{-4mm}
\section{Some preliminaries}\label{prelim}
\vspace{-2mm}
In this section, we will first recall some basic properties of Galois rings and their Eisenstein extensions. We will also state some basic definitions and results in the Character Theory of finite Abelian groups. We will next define linear codes over mixed alphabets of chain rings and their Euclidean dual codes,  and additive codes over chain rings in the Eisenstein additivity case and their character-theoretic dual codes.
\vspace{-3mm}
\subsection{Galois rings and their Eisenstein extensions}\label{prelim1}
\vspace{-1mm}
Let $GR(p^e,r)$ be the Galois ring of characteristic $p^e$ and cardinality $p^{er},$ where $p$ is a prime number, and $e\geq 2$ and $r \geq 1$ are integers. By Theorem 14.6 of \cite{Wan}, we note that the ring $GR(p^e,r)\simeq\mathbb{Z}_{p^e}[x]/\langle h(x)\rangle,$ where $\mathbb{Z}_{p^e}$ is the ring of integers modulo $p^e$ and $h(x)$ is a monic basic irreducible polynomial of degree $r$ over $\mathbb{Z}_{p^e}.$ Further,  each element  $s\in GR(p^e,r)$ can be uniquely expressed as $s=s_0+s_1x+\cdots+s_{r-1}x^{r-1},$ where $s_i\in\mathbb{Z}_{p^e}$ for $0\leq i\leq r-1.$ This gives rise to a $\mathbb{Z}_{p^e}$-module isomorphism $\phi$ from $GR(p^e,r)$ onto $\mathbb{Z}_{p^e}^r,$ defined as $\phi(s)= (s_0,s_1,s_2,\ldots, s_{r-1})$ for all $s=s_0+s_1x+\cdots+s_{r-1}x^{r-1},$ where $s_i\in\mathbb{Z}_{p^e}$ for $0\leq i\leq r-1.$ Thus the Galois ring $GR(p^e,r)$ is a free $\mathbb{Z}_{p^e}$-module of rank $r.$  In an analogous way, each element  $\tilde{s}\in GR(p^{e-1},r)$ can be uniquely expressed as $\tilde{s}=\tilde{s}_0+\tilde{s}_1x+\cdots+\tilde{s}_{r-1}x^{r-1},$ where $\tilde{s}_i\in\mathbb{Z}_{p^{e-1}}$ for $0\leq i\leq r-1.$ This gives rise to a $\mathbb{Z}_{p^{e-1}}$-module isomorphism from $GR(p^{e-1},r)$ onto $\mathbb{Z}_{p^{e-1}}^r,$ defined as $\tilde{s} \mapsto (\tilde{s}_0,\tilde{s}_1,\tilde{s}_2,\ldots, \tilde{s}_{r-1})$ for all $\tilde{s}=\tilde{s}_0+\tilde{s}_1x+\cdots+\tilde{s}_{r-1}x^{r-1},$ where $\tilde{s}_i\in\mathbb{Z}_{p^{e-1}}$ for $0\leq i\leq r-1,$ and we shall denote this isomorphism by $\phi$ itself for our convenience. Thus the Galois ring $GR(p^{e-1},r)$ is also a free $\mathbb{Z}_{p^{e-1}}$-module of rank $r.$ 

 Next, let us consider the quotient ring $\mathcal{R}_e=GR(p^e,r)[y]/\langle g(y),p^{e-1}y^t\rangle, $  where $t$ and $k$ are  integers satisfying $1\leq t\leq k$ when $e \geq 2,$ while $t=k$ when $e=1,$ and $g(y)=y^k+p(g_{k-1}y^{k-1}+\cdots+g_1y+g_0)\in GR(p^e,r)[y]$ is an Eisenstein polynomial with $g_0$ as a unit in $GR(p^e,r).$  By Theorem XVII.5 of \cite{McDonald}, we see that $\mathcal{R}_e$ is a finite commutative chain ring, whose maximal ideal is generated by $y+\langle g(y),p^{e-1}y^t\rangle \in \mathcal{R}_e$ and has nilpotency index $k(e-1) +t.$  The Galois ring $GR(p^e,r)$ is called the coefficient ring of $\mathcal{R}_e.$  The chain ring $\mathcal{R}_e$ is called an Eisenstein extension of  its coefficient ring $GR(p^e,r)$ with invariants $p,e,r,k$ and $t.$   One can easily see  that each element $a\in \mathcal{R}_e$ can be uniquely expressed as $a=a_{0}+a_{1}y+\cdots+a_{t-1}y^{t-1}+a_{t}y^t+\cdots+ a_{k-1}y^{k-1},$ where $a_{i}\in GR(p^e,r)$ for $0\leq i\leq t-1$ and $a_{j}\in GR(p^{e-1},r)$ for $t\leq j\leq k-1.$  This gives rise to a $\mathbb{Z}_{p^e}$-module isomorphism $\Psi$ from  $\mathcal{R}_e$ onto $\mathbb{Z}_{p^e}^{rt}\oplus \mathbb{Z}_{p^{e-1}}^{r(k-t)},$ defined as $\Psi(a)= \bigl(\phi(a_0),\phi(a_1),\ldots,\phi(a_{t-1})~|~\phi(a_t),\phi(a_{t+1}),\ldots,\phi(a_{k-1})\bigr) $ for all $a=a_{0}+a_{1}y+\cdots+a_{t-1}y^{t-1}+a_{t}y^t+a_{t+1}y^{t+1}+\cdots+ a_{k-1}y^{k-1}\in \mathcal{R}_e$ with  $a_{i}\in GR(p^e,r)$ for $0\leq i\leq t-1$ and $a_{j}\in GR(p^{e-1},r) $ for $t\leq j\leq k-1.$ Further, one can easily observe that $\mathcal{R}_e\simeq \mathbb{Z}_{p^e}^{rt} \oplus \mathbb{Z}_{p^{e-1}}^{r(k-t)}$ as $\mathbb{Z}_{p^e}$-modules.
\vspace{-3mm}
 \subsection{Character Theory of finite Abelian groups} \label{prelim2}
 \vspace{-1mm}
 A character of a finite Abelian group $G$ is defined as a group homomorphism from the group $G$ into the multiplicative group $\mathbb{C}^\ast$ of the field of complex numbers with absolute value $1.$ For example, the map $\chi:\mathbb{Z}_{p^e}\rightarrow \mathbb{C}^\ast,$ defined as  $\chi(a)=\xi^a$ for all $a\in\mathbb{Z}_{p^e},$  is a character of the additive group of  $\mathbb{Z}_{p^e},$ where $\xi \in \mathbb{C}^{\ast}$ is a primitive $(p^e)$-th root of unity. If  $\chi_1$ and $ \chi_2 $ are two characters of $G,$ then  the map $\chi_1\chi_2: G \rightarrow \mathbb{C}^{*},$ defined as $(\chi_1\chi_2)(g)=\chi_1(g)\chi_2(g)$ for all $g \in G,$ is also a character of $G$ and is called the product of the characters $\chi_1$ and $\chi_2.$  Further, the set $\widehat{G}$ consisting of all characters of $G$ forms an Abelian group under this multiplication operation. When $G$ is an Abelian group, we note, by Theorem 5.1 of \cite{Huppert}, that the groups $G$ and $\widehat{G}$ are isomorphic. In particular, the character group of a cyclic group is also a cyclic group of the same order. We refer the reader to \cite{Huppert} for more details.

From now on, for a finite Abelian group $G,$ let  $\chi_g $ denote the character of $G$ corresponding to the group element $g \in G,$ \textit{i.e.,} we assume that   $\widehat{G}=\{\chi_g:g\in G\}.$  If $G$ and $H$ are two finite Abelian groups, then for $g \in G$ and $h \in H,$  the map $\chi_{(g,h)}: G \times H \rightarrow \mathbb{C}^{*},$ defined as $\chi_{(g,h)}(g',h')=\chi_g(g')\chi_h(h')$ for all $(g',h') \in G \times H,$ is a character of their direct product $G \times H,$ where $ \chi_g\in\widehat{G}$ and $\chi_h\in\widehat{H}.$ In fact, we see, by Theorem 5.1 of \cite{Huppert},  that the character group of their direct product $G\times H$ is isomorphic to the direct product $\widehat{G}\times\widehat{H}$   under the group isomorphism $\chi_{(g,h)} \mapsto (\chi_g ,\chi_h)$ for all $g \in G$ and $h \in H.$  

In particular, when $G= \mathbb{Z}_{p^e}$ (a cyclic group of order $p^e$), we have $\mathbb{Z}_{p^e}\simeq\widehat{\mathbb{Z}}_{p^e},$ and hence we can assume that $\widehat{\mathbb{Z}}_{p^e}=\{\chi_a:a\in\mathbb{Z}_{p^e}\},$ where for $a\in\mathbb{Z}_{p^e},$ the character $\chi_a \in \widehat{\mathbb{Z}}_{p^e}$ is defined as $\chi_a(b)=\xi^{ab}$ for all $b\in\mathbb{Z}_{p^e}.$ As $\xi\in \mathbb{C}^{\ast}$ is a primitive $(p^e)$-th root of unity, we note that $\xi^{p}$ is a primitive $(p^{e-1})$-th root of unity. In an analogous manner, we see that $\widehat{\mathbb{Z}}_{p^{e-1}}=\{\chi_\alpha:\alpha\in\mathbb{Z}_{p^{e-1}}\},$ where for $\alpha\in\mathbb{Z}_{p^{e-1}},$ the character $\chi_\alpha \in \widehat{\mathbb{Z}}_{p^{e-1}}$ is defined as $\chi_\alpha(\beta)=\xi^{p\alpha\beta}$ for all $\beta\in\mathbb{Z}_{p^{e-1}}.$ 
\vspace{-3mm}
\subsection{Codes over mixed alphabets of chain rings}\label{prelim3}
\vspace{-1mm}
Let $\mathtt{R}_e$ be a finite commutative chain ring with the maximal ideal $\langle\gamma\rangle$  of nilpotency index $e \geq 2$ and the residue field $\overline{\mathtt{R}}_e=\mathtt{R}_e/\langle \gamma \rangle$ of prime power order $q.$ By Theorem XVII.5 of \cite{McDonald}, there exists an element $\zeta_e \in \mathtt{R}_{e}$ of multiplicative order $q-1,$ and the set $\mathcal{T}_e=\{0,1,\zeta_e, \zeta_e^2, \ldots, \zeta_e^{q-2}\}$ is called the  Teichm$\ddot{u}$ller set of $\mathtt{R}_e.$ Further,  by Theorem XVII.5 of \cite{McDonald} again, we see that each element $\mathtt{s} \in \mathtt{R}_{e}$ can be uniquely expressed as $\mathtt{s}=\mathtt{s}_0+\mathtt{s}_1 \gamma +\mathtt{s}_2 \gamma^2+\cdots+\mathtt{s}_{e-1}\gamma^{e-1},$ where $\mathtt{s}_0,\mathtt{s}_1,\ldots, \mathtt{s}_{e-1}\in \mathcal{T}_e.$ 

Throughout this section, we assume that $\mu$ is an integer satisfying $2 \leq 
\mu  \leq e.$ Here, we see that the quotient ring $\mathtt{R}_e/\langle \gamma^{\mu}\rangle$ is also a finite commutative chain ring with the maximal ideal  generated by the element $\gamma+\langle \gamma^{\mu}\rangle $ of nilpotency index $\mu,$ and we will denote this chain ring by $\mathtt{R}_{\mu}$ for our convenience. The residue field of $\mathtt{R}_{\mu}$ is given by $\overline{\mathtt{R}}_{\mu}=\mathtt{R}_{\mu}/\langle \gamma \rangle \simeq \mathtt{R}_e/\langle \gamma \rangle=\overline{\mathtt{R}}_e.$ Since the chain ring $\mathtt{R}_{\mu}$ can be embedded into $\mathtt{R}_e,$ throughout this paper, we will represent elements of $\mathtt{R}_{\mu}$ by their representatives in $\mathtt{R}_e,$ and we will perform addition and multiplication in $\mathtt{R}_{\mu}$ modulo $\gamma^{\mu}.$ Under this assumption, we assume, without any loss of generality, that  the set $\mathcal{T}_e$ is also the  Teichm$\ddot{u}$ller set of $\mathtt{R}_{\mu}$ and that $\overline{\mathtt{R}}_{\mu}=\overline{\mathtt{R}}_e.$  From now on, we will represent elements of the residue field   $\overline{\mathtt{R}}_e$ as $\overline{\mathtt{c}}=\mathtt{c}+\langle \gamma \rangle$ for all $\mathtt{c} \in \mathtt{R}_e.$

In this section, we will state some basic properties of $\mathtt{R}_\mu\mathtt{R}_{\mu-1}$-linear codes and their dual codes.  For positive integers $N_1$ and $N_2,$ a non-empty subset $\mathscr{C}_{\mu}$ of $\mathtt{R}_\mu^{N_1}\oplus\mathtt{R}_{\mu-1}^{N_2}$ is called an $\mathtt{R}_\mu\mathtt{R}_{\mu-1}$-linear code of block-length $(N_1,N_2)$ if it is an $\mathtt{R}_\mu$-submodule of $\mathtt{R}_\mu^{N_1}\oplus\mathtt{R}_{\mu-1}^{N_2}$ \cite{Borges}.  We see, by Proposition 3.2 of Borges \etal~\cite{Borges}, that an $\mathtt{R}_\mu\mathtt{R}_{\mu-1}$-linear code $\mathscr{C}_{\mu}$ of the type $\{k_0,k_1,\ldots,k_{\mu-1};\ell_0,\ell_1,\ldots,\ell_{\mu-2}\}$ and  block-length $(N_1,N_2)$ is permutation equivalent to an $\mathtt{R}_\mu\mathtt{R}_{\mu-1}$-linear code with a generator matrix $\mathtt{G}_{\mu}$  of the form
\vspace{-2mm}\begin{equation}\label{eqn1.1}
\mathtt{G}_\mu=\left[\begin{array}{c|c}
    \mathtt{A}_{\mu} & \mathtt{B}_{\mu} \\ 
    \mathtt{C}_{\mu} & \mathtt{D}_{\mu}
\end{array}\right],\vspace{-2mm}
\end{equation}
where $k_0,k_1,\ldots,k_{\mu-1}, \ell_0,\ell_1,\ldots, \ell_{\mu-2}$ are non-negative integers such that $K=k_0+k_1+\cdots+k_{\mu-1},$ $L=\ell_0+\ell_1+\cdots+\ell_{\mu-2},$ the block matrix $\mathtt{A}_{\mu}$ is a $(K \times N_1)$-matrix over $\mathtt{R}_{\mu}$ whose columns  are grouped into blocks of sizes   $k_0,k_1,\ldots,k_{\mu-1},N_1-K$ and is given by 
\vspace{-2mm}\begin{equation*}
\mathtt{A}_{\mu}=\left[\begin{array}{ccccccc}
 I_{k_0} & \mathtt{A}_{0,1}^{(\mu)} & \mathtt{A}_{0,2}^{(\mu)} & \cdots & \mathtt{A}_{0,\mu-2}^{(\mu)} & \mathtt{A}_{0,\mu-1}^{(\mu)} & \mathtt{A}_{0,\mu}^{(\mu)}\\
 0 & \gamma I_{k_1} & \gamma \mathtt{A}_{1,2}^{(\mu)} & \cdots & \gamma \mathtt{A}_{1,\mu-2}^{(\mu)} & \gamma \mathtt{A}_{1,\mu-1}^{(\mu)} & \gamma \mathtt{A}_{1,\mu}^{(\mu)}\\
 0 & 0 & \gamma^2I_{k_2}  & \cdots & \gamma^2\mathtt{A}_{2,\mu-2}^{(\mu)} & \gamma^2\mathtt{A}_{2,\mu-1}^{(\mu)} & \gamma^2\mathtt{A}_{2,\mu}^{(\mu)} \\
 \vdots  & \vdots & \vdots & \ddots & \vdots & \vdots & \vdots \\
 0 & 0  & 0 & \cdots & \gamma^{\mu-2}I_{k_{\mu-2}} & \gamma^{\mu-2}\mathtt{A}_{\mu-2,\mu-1}^{(\mu)} & \gamma^{\mu-2}\mathtt{A}_{\mu-2,\mu}^{(\mu)}\\
 0 & 0  & 0 & \cdots & 0 & \gamma^{\mu-1}I_{k_{\mu-1}} & \gamma^{\mu-1}\mathtt{A}_{\mu-1,\mu}^{(\mu)}
\end{array}
\right]=\left[\begin{array}{c}
\mathtt{A}_0^{(\mu)}\\
\gamma\mathtt{A}_1^{(\mu)}\\
\gamma^2\mathtt{A}_2^{(\mu)}\\
\vdots\\
\gamma^{\mu-2}\mathtt{A}_{\mu-2}^{(\mu)}\\
\gamma^{\mu-1}\mathtt{A}_{\mu-1}^{(\mu)}
\end{array}
\right],\vspace{-2mm}
\end{equation*}
the block matrix $\mathtt{B}_{\mu}$ is a $(K \times N_2)$-matrix over  $\mathtt{R}_{\mu-1}$ whose columns  are grouped into blocks of sizes   $\ell_0,\ell_1,\ldots,\ell_{\mu-2},N_2-L$ and is given by 
\vspace{-2mm}\begin{equation*}
    \mathtt{B}_{\mu}= \left[\begin{array}{cccccc}
0 & \mathtt{B}_{0,1}^{(\mu)} & \mathtt{B}_{0,2}^{(\mu)} & \cdots & \mathtt{B}_{0,\mu-2}^{(\mu)} & \mathtt{B}_{0,\mu-1}^{(\mu)}\\
0 & 0 & \gamma \mathtt{B}_{1,2}^{(\mu)} &  \cdots & \gamma \mathtt{B}_{1,\mu-2}^{(\mu)} & \gamma \mathtt{B}_{1,\mu-1}^{(\mu)}\\
0 & 0 & 0 &  \cdots & \gamma^2\mathtt{B}_{2,\mu-2}^{(\mu)} & \gamma^2\mathtt{B}_{2,\mu-1}^{(\mu)}\\
\vdots &  \vdots & \vdots & \ddots & \vdots & \vdots\\
0 & 0 & 0 & \cdots & 0 & \gamma^{\mu-2}\mathtt{B}_{\mu-2,\mu-1}^{(\mu)}\\
0 & 0 & 0 & \cdots & 0 & 0 \\
\end{array}
\right]=\left[\begin{array}{c}
\mathtt{B}_0^{(\mu)}\\
\gamma\mathtt{B}_1^{(\mu)}\\
\gamma^2\mathtt{B}_2^{(\mu)}\\
\vdots\\
\gamma^{\mu-2}\mathtt{B}_{\mu-2}^{(\mu)}\\
0
\end{array}
\right],\vspace{-2mm}
\end{equation*}
the block matrix $\mathtt{C}_{\mu}$ is an $(L \times N_1)$-matrix over $\mathtt{R}_\mu$ whose columns  are grouped into blocks of sizes   $k_0,k_1,\ldots,k_{\mu-1},N_1-K$ and is given by 
\vspace{-2mm}\begin{equation*}
    \mathtt{C}_{\mu}=\left[\begin{array}{ccccccc}
0 & 0 & \gamma \mathtt{C}_{0,2}^{(\mu)} & \cdots & \gamma \mathtt{C}_{0,\mu-2}^{(\mu)} & \gamma \mathtt{C}_{0,\mu-1}^{(\mu)} & \gamma \mathtt{C}_{0,\mu}^{(\mu)}\\
0 & 0 & 0 & \cdots & \gamma^2\mathtt{C}_{1,\mu-2}^{(\mu)} & \gamma^2\mathtt{C}_{1,\mu-1}^{(\mu)} & \gamma^2\mathtt{C}_{1,\mu}^{(\mu)}\\
\vdots  & \vdots & \vdots & \ddots & \vdots & \vdots & \vdots\\
0 & 0 & 0 & \cdots & 0 & 0 & \gamma^{\mu-1}\mathtt{C}_{\mu-2,\mu}^{(\mu)}
\end{array}
\right]=\left[\begin{array}{c}
\gamma\mathtt{C}_0^{(\mu)}\\
\gamma^2\mathtt{C}_1^{(\mu)}\\
\vdots\\
\gamma^{\mu-1}\mathtt{C}_{\mu-2}^{(\mu)}
\end{array}
\right],\vspace{-2mm}
\end{equation*} and the block matrix $\mathtt{D}_{\mu}$ is an $(L \times N_2)$-matrix over $\mathtt{R}_{\mu-1}$ whose columns  are grouped into blocks of sizes   $\ell_0,\ell_1,\ldots,\ell_{\mu-2},N_2-L$ and is given by 
\vspace{-2mm}\begin{equation*}
    \mathtt{D}_{\mu}= \left[\begin{array}{cccccc}
I_{\ell_0} & \mathtt{D}_{0,1}^{(\mu)} & \mathtt{D}_{0,2}^{(\mu)} & \cdots & \mathtt{D}_{0,\mu-2}^{(\mu)} & \mathtt{D}_{0,\mu-1}^{(\mu)}\\
0 & \gamma I_{\ell_1} & \gamma \mathtt{D}_{1,2}^{(\mu)} & \cdots & \gamma \mathtt{D}_{1,\mu-2}^{(\mu)} & \gamma \mathtt{D}_{1,\mu-1}^{(\mu)}\\
\vdots &  \vdots & \vdots & \ddots & \vdots & \vdots\\
0 & 0 & 0 & \cdots & \gamma^{\mu-2}I_{\ell_{\mu-2}} & \gamma^{\mu-2}\mathtt{D}_{\mu-2,\mu-1}^{(\mu)}
\end{array}
\right]=\left[\begin{array}{c}
\mathtt{D}_0^{(\mu)}\\
\gamma\mathtt{D}_1^{(\mu)}\\
\vdots\\
\gamma^{\mu-2}\mathtt{D}_{\mu-2}^{(\mu)}
\end{array}
\right].\vspace{-2mm}
\end{equation*}
 Here $I_{k_i}$ is the $k_i \times k_i$ identity matrix over $\mathtt{R}_\mu$ for $0 \leq i \leq \mu-1,$ $I_{\ell_j}$ is the $\ell_j \times \ell_j$ identity matrix over  $\mathtt{R}_{\mu-1}$ for $0 \leq j \leq \mu-2,$ the matrix $\gamma^i\mathtt{A}_{i,j}^{(\mu)} \in M_{k_i\times k_j}(\mathtt{R}_\mu)$ is to be considered modulo $\gamma^{j}$ for $0\leq i <  j\leq \mu-1,$ the matrix $\gamma^{i+1}\mathtt{C}_{i,j}^{(\mu)} \in M_{\ell_i\times k_j}(\mathtt{R}_\mu)$ is to be considered modulo $\gamma^{j}$ for $0\leq i \leq  \mu-3$ and $ i+2\leq j\leq \mu-1,$  the matrices $\gamma^i\mathtt{B}_{i,j}^{(\mu)}\in M_{k_i\times \ell_j}(\mathtt{R}_{\mu-1})$ and $\gamma^i\mathtt{D}_{i,j}^{(\mu)}\in M_{\ell_i\times \ell_j}(\mathtt{R}_{\mu-1})$ are to be considered modulo $\gamma^{j}$ for $0\leq i < j\leq \mu-2,$  the matrices $\gamma^i\mathtt{A}_{i,\mu}^{(\mu)} \in M_{k_i\times (N_1-K)}(\mathtt{R}_\mu)$ and $\gamma^{j+1}\mathtt{C}_{j,\mu}^{(\mu)} \in M_{\ell_j\times (N_1-K)}(\mathtt{R}_\mu)$ are to be  considered modulo $\gamma^\mu$ for $0\leq i \leq \mu-1$ and $0\leq j \leq \mu-2,$ and the matrices $\gamma^i\mathtt{B}_{i,\mu-1}^{(\mu)}\in M_{k_i\times (N_2-L)}(\mathtt{R}_{\mu-1})$ and $\gamma^i\mathtt{D}_{i,\mu-1}^{(\mu)}\in M_{\ell_i\times (N_2-L)}(\mathtt{R}_{\mu-1})$ are to be considered modulo $\gamma^{\mu-1}$ for $0\leq i \leq \mu-2,$ (throughout this paper, for positive integers $s$ and $n,$  we denote  the set of all $s \times n$ matrices over $Y$ by $M_{s\times n}(Y)$). The matrix $\mathtt{G}_{\mu}$ of the form \eqref{eqn1.1} is said to be in standard form.  Further, by Proposition 3.2 of Borges \etal~\cite{Borges}, each $\mathtt{R}_\mu\mathtt{R}_{\mu-1}$-linear code of block-length $(N_1,N_2)$ has a unique type $\{k_0,k_1,\ldots,k_{\mu-1};\ell_0,\ell_1,\ldots,\ell_{\mu-2}\},$ \textit{i.e.,} the non-negative integers $k_0,k_1,\ldots,k_{\mu-1},\ell_0,\ell_1,\ldots,\ell_{\mu-2}$ are fixed uniquely for any $\mathtt{R}_\mu\mathtt{R}_{\mu-1}$-linear code of block-length $(N_1,N_2).$

The dual codes of $\mathtt{R}_\mu\mathtt{R}_{\mu-1}$-linear codes of block-length $(N_1,N_2)$ are generally studied \cite{Siap1,Bajalan,Borges} with  respect to a symmetric and non-degenerate bilinear  form $\langle \cdot,\cdot\rangle_E:(\mathtt{R}_\mu^{N_1}\oplus\mathtt{R}_{\mu-1}^{N_2})\times(\mathtt{R}_\mu^{N_1}\oplus\mathtt{R}_{\mu-1}^{N_2})\rightarrow\mathtt{R}_{\mu},$ defined as \vspace{-2mm}\begin{equation}\label{Eucmix}\langle m_1,m_2\rangle_E=\sum\limits_{i=1}^{N_1}\mathtt{c}_i\mathtt{c}'_i+\gamma\sum\limits_{j=1}^{N_2}\mathtt{d}_j\mathtt{d}'_j
\vspace{-2mm}\end{equation} for all $m_1=(\mathtt{c}_1,\mathtt{c}_2,\ldots,\mathtt{c}_{N_1}|\mathtt{d}_{1},\mathtt{d}_2,\ldots,\mathtt{d}_{N_2}),~m_2=(\mathtt{c}'_1,\mathtt{c}'_2,\ldots,\mathtt{c}'_{N_1}|\mathtt{d}'_{1},\mathtt{d}'_2,\ldots,\mathtt{d}'_{N_2})\in\mathtt{R}_\mu^{N_1}\oplus\mathtt{R}_{\mu-1}^{N_2}.$ The bilinear form $\langle \cdot,\cdot\rangle_E$ is called the Euclidean bilinear form on $\mathtt{R}_\mu^{N_1}\oplus\mathtt{R}_{\mu-1}^{N_2}$ \cite{Bajalan}. The Euclidean dual code of an $\mathtt{R}_\mu\mathtt{R}_{\mu-1}$-linear code $\mathscr{C}_{\mu}$ of block-length $(N_1,N_2),$  denoted by $\mathscr{C}_{\mu}^{\perp_E},$ is defined as  \begin{equation}\label{Eucdual}\mathscr{C}_{\mu}^{\perp_E}=\{m_1\in\mathtt{R}_\mu^{N_1}\oplus\mathtt{R}_{\mu-1}^{N_2}:\langle m_1,m_2\rangle_E=0~\text{for all $m_2\in\mathscr{C}_{\mu}$} \}.\end{equation} Note that the Euclidean dual code $\mathscr{C}_{\mu}^{\perp_E}$ is also an  $\mathtt{R}_\mu\mathtt{R}_{\mu-1}$-linear code of block-length $(N_1,N_2).$ By Corollary 3.2 of Bajalan \etal~\cite{Bajalan2}, we note that $|\mathscr{C}_{\mu}||\mathscr{C}_{\mu}^{\perp_E}|=q^{\mu N_1+(\mu-1)N_2}.$ Further, the code $\mathscr{C}_{\mu}$ is said to be (i) Euclidean self-orthogonal if $\mathscr{C}_{\mu} \subseteq \mathscr{C}_{\mu}^{\perp_E},$ (ii) Euclidean self-dual if $\mathscr{C}_{\mu} = \mathscr{C}_{\mu}^{\perp_E}$ and (iii) a linear code with Euclidean complementary-dual (or a Euclidean $\mathtt{R}_\mu\mathtt{R}_{\mu-1}$-LCD code) if  $\mathscr{C}_{\mu} \cap \mathscr{C}_{\mu}^{\perp_E}=\{0\}.$

Now to study Euclidean dual codes of $\mathtt{R}_\mu\mathtt{R}_{\mu-1}$-linear codes of block-length $(N_1,N_2)$, let $\mathfrak{M}$ be the set consisting of matrices whose rows are elements of $\mathtt{R}_\mu^{N_1}\oplus\mathtt{R}_{\mu-1}^{N_2}$.  We observe that any matrix in $\mathfrak{M}$ is of the form $[A_1~| ~B_1],$ where $A_1$ is a matrix over $\mathtt{R}_\mu$ with $N_1$ columns and $B_1$ is a matrix over $\mathtt{R}_{\mu-1}$ with $N_2$ columns, and both the matrices $A_1$ and $B_1$ have the same number of rows. Let us define an operation $\diamond$ on the set $\mathfrak{M}$ as
\vspace{-1mm}$$G\diamond H^T=A_1A_2^T+\gamma B_1B_2^T\vspace{-1mm}$$
for all $G=[A_1~|~B_1]$ and $H=[A_2~|~B_2]$ in $\mathfrak{M},$ where $P^T$ denotes the transpose of a matrix $P.$ Note that the matrix $G\diamond H^T$ is a matrix over $\mathtt{R}_\mu.$ One can easily verify that  an $\mathtt{R}_\mu\mathtt{R}_{\mu-1}$-linear code $\mathscr{C}_{\mu}\subseteq\mathtt{R}_\mu^{N_1}\oplus\mathtt{R}_{\mu-1}^{N_2}$ with a generator matrix $\mathtt{G}_\mu$ is Euclidean self-orthogonal if and only if $\mathtt{G}_\mu\diamond\mathtt{G}_\mu^T=0.$ We next make the following observation. 
\vspace{-1mm}\begin{lemma}\label{Lem6.1}
Let $\mathscr{C}_{\mu}$ be an $\mathtt{R}_\mu\mathtt{R}_{\mu-1}$-linear code of block-length $(N_1,N_2)$ with a generator matrix $\mathtt{G}_\mu$  given by \eqref{eqn1.1}. Let $\gamma^{i-1}R_i^{(\mu)}\in \mathfrak{M}$ denote the $i$-th row block of order $k_{i-1}\times (N_1+N_2) $ of the matrix $
    [\mathtt{A}_{\mu} ~|~ \mathtt{B}_{\mu}]$  for $1\leq i\leq \mu,$ and let $\gamma^{j-1}S_j^{(\mu)}\in \mathfrak{M}$ denote the $j$-th row block of order $\ell_{j-1}\times (N_1+N_2) $ of the matrix $
    [\mathtt{C}_{\mu} ~| ~\mathtt{D}_{\mu}]$ for $1\leq j\leq \mu-1.$ Then the code $\mathscr{C}_{\mu}$ is Euclidean  self-orthogonal if and only if \vspace{-2mm}\begin{eqnarray*}(\gamma^{i-1} R_i^{(\mu)})\diamond \big(\gamma^{j-1} {R_j^{(\mu)}}^T\big) & \equiv & 0~(\textnormal{mod }\gamma^\mu)\text{ ~~for~~ }1\leq i\leq j \leq \mu,\vspace{-2mm}\\(\gamma^{i-1} R_i^{(\mu)})\diamond (\gamma^{j-1}{S_j^{(\mu)}}^T) & \equiv & 0~(\textnormal{mod }\gamma^\mu)\text{ ~~ for ~~}1\leq i \leq \mu\text{ ~and~ }1\leq j\leq \mu-1,\text{ and}\vspace{-2mm}\\ (\gamma^{i-1}S_i^{(\mu)})\diamond ({\gamma^{j-1}S_j^{(\mu)}}^T) & \equiv & 0~(\textnormal{mod }\gamma^\mu)\text{ ~~for ~~ }1\leq i\leq j \leq \mu-1.\vspace{-2mm}\end{eqnarray*} \end{lemma}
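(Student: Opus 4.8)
The plan is to lean on the self-orthogonality criterion recorded just before the statement, namely that $\mathscr{C}_{\mu}$ is Euclidean self-orthogonal if and only if $\mathtt{G}_\mu\diamond\mathtt{G}_\mu^T=0$ over $\mathtt{R}_\mu$, and then to decompose the matrix $\mathtt{G}_\mu\diamond\mathtt{G}_\mu^T$ into blocks dictated by the row-block partition of $\mathtt{G}_\mu$. Writing $\mathtt{G}_\mu=[A_1\mid B_1]$, where $A_1$ is the $(K+L)\times N_1$ matrix over $\mathtt{R}_\mu$ obtained by stacking $\mathtt{A}_\mu$ on top of $\mathtt{C}_\mu$ and $B_1$ is the $(K+L)\times N_2$ matrix over $\mathtt{R}_{\mu-1}$ obtained by stacking $\mathtt{B}_\mu$ on top of $\mathtt{D}_\mu$, the rows of $\mathtt{G}_\mu$ split into the $\mu$ blocks $\gamma^{i-1}R_i^{(\mu)}$ for $1\le i\le\mu$ (coming from $[\mathtt{A}_\mu\mid\mathtt{B}_\mu]$) and the $\mu-1$ blocks $\gamma^{j-1}S_j^{(\mu)}$ for $1\le j\le\mu-1$ (coming from $[\mathtt{C}_\mu\mid\mathtt{D}_\mu]$).

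First I would record that the $\diamond$-operation respects this partition of the rows. Since $\diamond$ is defined by $G\diamond H^T=A_1A_2^T+\gamma B_1B_2^T$ and ordinary matrix multiplication is compatible with a partition of rows into blocks, the block of $\mathtt{G}_\mu\diamond\mathtt{G}_\mu^T=A_1A_1^T+\gamma B_1B_1^T$ indexed by an ordered pair of row blocks $(U,V)$ is precisely $U\diamond V^T$. Consequently, the single matrix equation $\mathtt{G}_\mu\diamond\mathtt{G}_\mu^T=0$ in $\mathtt{R}_\mu$ is equivalent to the simultaneous vanishing, modulo $\gamma^\mu$, of every block product: $(\gamma^{i-1}R_i^{(\mu)})\diamond(\gamma^{j-1}{R_j^{(\mu)}}^T)$ and $(\gamma^{i-1}R_i^{(\mu)})\diamond(\gamma^{j-1}{S_j^{(\mu)}}^T)$ and $(\gamma^{i-1}S_i^{(\mu)})\diamond(\gamma^{j-1}{R_j^{(\mu)}}^T)$ and $(\gamma^{i-1}S_i^{(\mu)})\diamond(\gamma^{j-1}{S_j^{(\mu)}}^T)$, taken over the full ranges of the relevant indices.

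Next I would trim these index ranges to those listed in the statement by invoking symmetry. Because $\mathtt{G}_\mu\diamond\mathtt{G}_\mu^T=A_1A_1^T+\gamma B_1B_1^T$ is a sum of symmetric matrices, it is itself symmetric over $\mathtt{R}_\mu$, so its $(V,U)$-block is the transpose of its $(U,V)$-block. This lets me restrict the $R$--$R$ products to $1\le i\le j\le\mu$ and the $S$--$S$ products to $1\le i\le j\le\mu-1$, since the missing blocks are transposes of retained ones; it also shows the $S$--$R$ products are transposes of the $R$--$S$ products, so keeping only $(\gamma^{i-1}R_i^{(\mu)})\diamond(\gamma^{j-1}{S_j^{(\mu)}}^T)$ for $1\le i\le\mu$ and $1\le j\le\mu-1$ loses nothing. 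Combining this reduction with the block decomposition produces exactly the three stated families of congruences, which completes the equivalence.

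The argument is essentially bookkeeping, so I do not expect a genuine obstacle; the only point demanding care is the ring in which the entries are read. The summand $\gamma B_1B_1^T$ must be interpreted through the representatives of $\mathtt{R}_{\mu-1}$ inside $\mathtt{R}_e$ and then reduced modulo $\gamma^\mu$, which is precisely why each condition is phrased as a congruence $\equiv 0\pmod{\gamma^\mu}$ rather than an exact identity. Tracking this convention, together with the differing orders $k_{i-1}\times(N_1+N_2)$ of the blocks $\gamma^{i-1}R_i^{(\mu)}$ and $\ell_{j-1}\times(N_1+N_2)$ of the blocks $\gamma^{j-1}S_j^{(\mu)}$ so that all the block products are dimensionally well defined, is all that the verification requires.
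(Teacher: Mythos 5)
Your proposal is correct; the paper offers no argument at all for this lemma (its proof reads ``Its proof is a straightforward exercise''), and what you have written is exactly the intended routine verification: invoke the criterion $\mathtt{G}_\mu\diamond\mathtt{G}_\mu^T=0$ recorded just before the statement, observe that $\diamond$ is compatible with the row-block partition so that each block of $\mathtt{G}_\mu\diamond\mathtt{G}_\mu^T=A_1A_1^T+\gamma B_1B_1^T$ is a pairwise product of the $R_i^{(\mu)}$'s and $S_j^{(\mu)}$'s, and use symmetry of that matrix to discard the redundant $(j,i)$-blocks and the $S$--$R$ family. Your closing remark about reading $\gamma B_1B_1^T$ through representatives in $\mathtt{R}_e$ and reducing modulo $\gamma^\mu$ is also the right point of care, and it matches the paper's conventions for elements of $\mathtt{R}_{\mu-1}$ embedded in $\mathtt{R}_e$.
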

\begin{proof}
Its proof is a straightforward exercise.
\end{proof}

Next, for an $\mathtt{R}_\mu\mathtt{R}_{\mu-1}$-linear code $\mathscr{C}_{\mu}$ of block-length $(N_1,N_2)$ with a generator matrix $\mathtt{G}_\mu$  given by \eqref{eqn1.1}, we assume, throughout this paper, that $\mathscr{C}_{\mu}^{(X)}$ is a linear code of length $N_1$ over $\mathtt{R}_\mu$ with a generator matrix $\mathtt{A}_\mu$ and  that $\mathscr{C}_{\mu}^{(Y)}$ is a linear code of length $N_2$ over $\mathtt{R}_{\mu-1}$ with a generator matrix $\mathtt{D}_\mu.$  Further, for an element $\mathtt{c}_1\in\mathtt{R}_\mu$
(or $\mathtt{R}_{\mu-1}$), let us define $\overline{\mathtt{c}}_1\in\overline{\mathtt{R}}_e$ as $\overline{\mathtt{c}}_1=\mathtt{c}_1+\langle \gamma\rangle.$  Also, for any matrix $V$ over $\mathtt{R}_\mu$ (or $\mathtt{R}_{\mu-1}$), let  $\overline{V}$ be a matrix over $\overline{\mathtt{R}}_e$ with the $(i,j)$-th entry  $\overline{v}_{i,j}$ if $v_{i,j}$ is the $(i,j)$-th entry of  $V$ for each $i$ and $j.$ Now, for $1\leq i\leq \mu$ and $1\leq j\leq \mu-1,$  the $i$-th torsion code  of $\mathscr{C}_{\mu}^{(X)}$ and the $j$-th torsion code  of $\mathscr{C}_{\mu}^{(Y)}$  are defined \cite{Dougherty} as 
\vspace{-1mm}\begin{align*}
Tor_i(\mathscr{C}_{\mu}^{(X)})&=\{\overline{\mathtt{c}}=(\overline{\mathtt{c}}_1,\overline{\mathtt{c}}_2,\ldots,\overline{\mathtt{c}}_{N_1})\in\overline{\mathtt{R}}_e^{N_1}:\gamma^{i-1}\mathtt{c}\in\mathscr{C}_{\mu}^{(X)}~\text{for some $\mathtt{c}=(\mathtt{c}_1,\mathtt{c}_2,\ldots,\mathtt{c}_{N_1})\in\mathtt{R}_\mu^{N_1}$}\}~\text{and}\vspace{-1mm}\\
Tor_j(\mathscr{C}_{\mu}^{(Y)})&=\{\overline{\mathtt{d}}=(\overline{\mathtt{d}}_1,\overline{\mathtt{d}}_2,\ldots,\overline{\mathtt{d}}_{N_2})\in\overline{\mathtt{R}}_e^{N_2}:\gamma^{j-1}\mathtt{d}\in\mathscr{C}_{\mu}^{(Y)}~\text{for some $\mathtt{d}=(\mathtt{d}_1,\mathtt{d}_2,\ldots,\mathtt{d}_{N_2})\in\mathtt{R}_{\mu-1}^{N_2}$}\},\vspace{-1mm}
\end{align*}
respectively. For $1\leq i\leq \mu,$ we note, by Lemma 3.4 of Norton and S\u{a}l\u{a}gean~\cite{Norton},  that $Tor_i(\mathscr{C}_{\mu}^{(X)})$ is a linear code of  length $N_1$ and dimension $k_0+k_1+\cdots+k_{i-1}$ over $\overline{\mathtt{R}}_e$  with a generator matrix 
\begin{equation*}
\left[\begin{array}{ccccccc}
 \overline{I_{k_0}} & \overline{\mathtt{A}_{0,1}^{(\mu)}} & \overline{\mathtt{A}_{0,2}^{(\mu)}} & \cdots & \overline{\mathtt{A}_{0,i-1}^{(\mu)}} & \cdots & \overline{\mathtt{A}_{0,\mu}^{(\mu)}}\\
 0 & \overline{I_{k_1}} &  \overline{\mathtt{A}_{1,2}^{(\mu)}} & \cdots & \overline{\mathtt{A}_{1,i-1}^{(\mu)}} & \cdots & \overline{\mathtt{A}_{1,\mu}^{(\mu)}}\\
 \vdots  & \vdots & \vdots & \ddots & \vdots & \ddots & \vdots \\
 0 & 0  & 0 & \cdots & \overline{I_{k_{i-1}}} & \cdots & \overline{\mathtt{A}_{i-1,\mu}^{(\mu)}}
\end{array}
\right],
\end{equation*}
and that for $1\leq j\leq \mu-1,$ $Tor_j(\mathscr{C}_{\mu}^{(Y)})$ is a linear code of length $N_2$  and dimension $\ell_0+\ell_1+\cdots+\ell_{j-1}$ over $\overline{\mathtt{R}}_e$ with a generator matrix  
\vspace{-1mm}\begin{equation*}
\left[\begin{array}{ccccccc}
 \overline{I_{\ell_0}} & \overline{\mathtt{D}_{0,1}^{(\mu)}} & \overline{\mathtt{D}_{0,2}^{(\mu)}} & \cdots & \overline{\mathtt{D}_{0,j-1}^{(\mu)}} & \cdots & \overline{\mathtt{D}_{0,\mu-1}^{(\mu)}}\\
 0 & \overline{I_{\ell_1}} &  \overline{\mathtt{D}_{1,2}^{(\mu)}}  & \cdots & \overline{\mathtt{D}_{1,j-1}^{(\mu)}} & \cdots & \overline{\mathtt{D}_{1,\mu-1}^{(\mu)}}\\
 \vdots  & \vdots & \vdots & \ddots & \vdots & \ddots & \vdots \\
 0 & 0  & 0 & \cdots & \overline{I_{\ell_{j-1}}} & \cdots & \overline{\mathtt{D}_{j-1,\mu-1}^{(\mu)}}
\end{array}
\right].
\end{equation*}
We next make the following observation.
\vspace{-1mm}\begin{lemma}\label{Lem1.1} Let $\mathscr{C}_{\mu}$ be a Euclidean self-orthogonal $\mathtt{R}_\mu \mathtt{R}_{\mu-1}$-linear code of block-length $(N_1,N_2).$ The following hold.
\vspace{-2mm}\begin{itemize}
\item[(a)] $Tor_{i}(\mathscr{C}_{\mu}^{(X)})\subseteq Tor_{i}(\mathscr{C}_{\mu}^{(X)})^{\perp_E}$   and $Tor_{j}(\mathscr{C}_{\mu}^{(Y)})\subseteq Tor_{j}(\mathscr{C}_{\mu}^{(Y)})^{\perp_E}$ for  $1\leq i\leq \floor{\frac{\mu+1}{2}}$ and $1\leq j\leq \floor{\frac{\mu}{2}}.$
\vspace{-2mm}\item[(b)] $Tor_{i}(\mathscr{C}_{\mu}^{(X)})\subseteq Tor_{\mu-i+1}(\mathscr{C}_{\mu}^{(X)})^{\perp_E}$  and $Tor_{j}(\mathscr{C}_{\mu}^{(Y)})\subseteq Tor_{\mu-j}(\mathscr{C}_{\mu}^{(Y)})^{\perp_E}$ for  $\ceil{\frac{\mu+1}{2}}\leq i\leq \mu$ and $\ceil{\frac{\mu}{2}}\leq j\leq \mu-1.$\vspace{-2mm}
\end{itemize}
In particular, if the code $\mathscr{C}_{\mu}$ is Euclidean self-dual, then we have $Tor_{i}(\mathscr{C}_{\mu}^{(X)}) = Tor_{\mu-i+1}(\mathscr{C}_{\mu}^{(X)})^{\perp_E}$ and  $Tor_{j}(\mathscr{C}_{\mu}^{(Y)}) = Tor_{\mu-j}(\mathscr{C}_{\mu}^{(Y)})^{\perp_E}$ for $\ceil{\frac{\mu+1}{2}}\leq i\leq \mu$ and $\ceil{\frac{\mu}{2}}\leq j\leq \mu-1,$ (throughout this paper, the floor and ceiling functions are denoted by  $\floor{\cdot}$ and $\ceil{\cdot},$ 
 respectively). 
  \end{lemma}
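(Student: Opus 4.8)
The plan is to reduce every assertion to a statement about the generators of the torsion codes and then to read it off from the self-orthogonality conditions of Lemma~\ref{Lem6.1}. Writing $R_s^{(\mu)}=[\mathtt{A}_{s-1}^{(\mu)}\mid \mathtt{B}_{s-1}^{(\mu)}]$ for $1\le s\le \mu$ (with $\mathtt{B}_{\mu-1}^{(\mu)}=0$) and $S_s^{(\mu)}=[\gamma\,\mathtt{C}_{s-1}^{(\mu)}\mid \mathtt{D}_{s-1}^{(\mu)}]$ for $1\le s\le \mu-1$, the generator matrices of $Tor_i(\mathscr{C}_{\mu}^{(X)})$ and $Tor_j(\mathscr{C}_{\mu}^{(Y)})$ displayed above show that $Tor_i(\mathscr{C}_{\mu}^{(X)})$ is spanned over $\overline{\mathtt{R}}_e$ by the rows of $\overline{\mathtt{A}_0^{(\mu)}},\ldots,\overline{\mathtt{A}_{i-1}^{(\mu)}}$ and $Tor_j(\mathscr{C}_{\mu}^{(Y)})$ by the rows of $\overline{\mathtt{D}_0^{(\mu)}},\ldots,\overline{\mathtt{D}_{j-1}^{(\mu)}}$. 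Since $\overline{\mathtt{R}}_e$ is a field, it therefore suffices to prove that the ordinary Euclidean inner product of a row of $\overline{\mathtt{A}_{s-1}^{(\mu)}}$ with a row of $\overline{\mathtt{A}_{t-1}^{(\mu)}}$ vanishes for the relevant pairs $(s,t)$, and likewise for the $\overline{\mathtt{D}}$'s.

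The key computation is to expand the $\diamond$-products of Lemma~\ref{Lem6.1} and factor out the common power of $\gamma$. For the $X$-blocks,
\[
(\gamma^{s-1}R_s^{(\mu)})\diamond(\gamma^{t-1}{R_t^{(\mu)}}^{T})=\gamma^{\,s+t-2}\bigl(\mathtt{A}_{s-1}^{(\mu)}{\mathtt{A}_{t-1}^{(\mu)}}^{T}+\gamma\,\mathtt{B}_{s-1}^{(\mu)}{\mathtt{B}_{t-1}^{(\mu)}}^{T}\bigr)\equiv 0 \ (\mathrm{mod}\ \gamma^{\mu}),
\]
so that $\mathtt{A}_{s-1}^{(\mu)}{\mathtt{A}_{t-1}^{(\mu)}}^{T}+\gamma\,\mathtt{B}_{s-1}^{(\mu)}{\mathtt{B}_{t-1}^{(\mu)}}^{T}\equiv 0\ (\mathrm{mod}\ \gamma^{\mu-s-t+2})$; whenever $s+t\le\mu+1$ this holds modulo $\gamma$, and reducing it gives $\overline{\mathtt{A}_{s-1}^{(\mu)}}\,{\overline{\mathtt{A}_{t-1}^{(\mu)}}}^{T}=\overline{0}$. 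In the same way the $S$-$S$ condition yields $\gamma^{\,s+t-1}\bigl(\gamma\,\mathtt{C}_{s-1}^{(\mu)}{\mathtt{C}_{t-1}^{(\mu)}}^{T}+\mathtt{D}_{s-1}^{(\mu)}{\mathtt{D}_{t-1}^{(\mu)}}^{T}\bigr)\equiv0\ (\mathrm{mod}\ \gamma^{\mu})$, whence $\overline{\mathtt{D}_{s-1}^{(\mu)}}\,{\overline{\mathtt{D}_{t-1}^{(\mu)}}}^{T}=\overline{0}$ as soon as $s+t\le\mu$. It then remains only to match the index ranges: the generators of $Tor_i(\mathscr{C}_{\mu}^{(X)})$ use $s,t\in\{1,\ldots,i\}$, so self-orthogonality holds once $2i\le\mu+1$, i.e. $i\le\floor{\frac{\mu+1}{2}}$ (part (a), $X$-part), while pairing $Tor_i(\mathscr{C}_{\mu}^{(X)})$ with $Tor_{\mu-i+1}(\mathscr{C}_{\mu}^{(X)})$ gives $s+t\le i+(\mu-i+1)=\mu+1$, always meeting the bound (part (b), $X$-part). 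The $Y$-statements follow identically with the sharper threshold $s+t\le\mu$, producing the ranges $j\le\floor{\frac{\mu}{2}}$ and the pairing of $Tor_j(\mathscr{C}_{\mu}^{(Y)})$ with $Tor_{\mu-j}(\mathscr{C}_{\mu}^{(Y)})$.

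For the self-dual case I first note that the inclusions of part (b), applied to the larger member of each symmetric pair, give $\dim Tor_i(\mathscr{C}_{\mu}^{(X)})+\dim Tor_{\mu-i+1}(\mathscr{C}_{\mu}^{(X)})\le N_1$ for \emph{every} $i$, and similarly $\dim Tor_j(\mathscr{C}_{\mu}^{(Y)})+\dim Tor_{\mu-j}(\mathscr{C}_{\mu}^{(Y)})\le N_2$ for every $j$; so it remains only to obtain the reverse inequalities, which I do by a global count. Writing $K_s=\dim Tor_s(\mathscr{C}_{\mu}^{(X)})=k_0+\cdots+k_{s-1}$ and $L_s=\dim Tor_s(\mathscr{C}_{\mu}^{(Y)})=\ell_0+\cdots+\ell_{s-1}$, the standard form \eqref{eqn1.1} gives $\log_q|\mathscr{C}_{\mu}|=\sum_{i=0}^{\mu-1}(\mu-i)k_i+\sum_{j=0}^{\mu-2}(\mu-1-j)\ell_j=\sum_{s=1}^{\mu}K_s+\sum_{s=1}^{\mu-1}L_s$. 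Summing $K_i+K_{\mu-i+1}\le N_1$ over $i=1,\ldots,\mu$ and $L_j+L_{\mu-j}\le N_2$ over $j=1,\ldots,\mu-1$ yields $2\sum_s K_s\le\mu N_1$ and $2\sum_s L_s\le(\mu-1)N_2$; adding these and comparing with the self-dual identity $2\log_q|\mathscr{C}_{\mu}|=\mu N_1+(\mu-1)N_2$ (the self-dual case of $|\mathscr{C}_{\mu}||\mathscr{C}_{\mu}^{\perp_E}|=q^{\mu N_1+(\mu-1)N_2}$) forces both bounds to be equalities. Since each summand is at most $N_1$ (respectively $N_2$) and the totals are now tight, every individual inequality must be an equality, giving $\dim Tor_i(\mathscr{C}_{\mu}^{(X)})=N_1-\dim Tor_{\mu-i+1}(\mathscr{C}_{\mu}^{(X)})=\dim Tor_{\mu-i+1}(\mathscr{C}_{\mu}^{(X)})^{\perp_E}$; combined with the containment from part (b), this yields the asserted equality of codes, and similarly for the $Y$-part.

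The $\diamond$-expansions and the bookkeeping of index ranges are routine; the one genuinely delicate point is the self-dual case, where a containment must be upgraded to an equality. The obstacle there is that I have no direct description of the torsion codes of $\mathscr{C}_{\mu}^{\perp_E}$, so I cannot compare the torsion codes of the dual directly. The device that resolves it is the cardinality identity together with the observation that $\log_q|\mathscr{C}_{\mu}|=\sum_s K_s+\sum_s L_s$, which converts a whole family of individual dimension inequalities into a single tight global inequality and thereby forces each of them to be sharp.
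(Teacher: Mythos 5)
Your proof is correct and complete. For comparison: the paper does not actually write out an argument for this lemma --- it cites Lemma 7 of Dougherty \textit{et al.}~\cite{Dougherty}, the chain-ring analogue, whose proof rests on the same $\gamma$-divisibility computation you perform. Your realization of that computation through the row blocks of the standard form and Lemma~\ref{Lem6.1} is sound, and you correctly track the one feature specific to the mixed-alphabet setting: the Euclidean form carries an extra factor $\gamma$ on the $\mathtt{R}_{\mu-1}$-coordinates, which shifts the orthogonality threshold from $s+t\le\mu+1$ (the $X$-side) to $s+t\le\mu$ (the $Y$-side) and is exactly what produces the asymmetric ranges $\floor{\frac{\mu+1}{2}}$ versus $\floor{\frac{\mu}{2}}$ in the statement. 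Where you genuinely diverge is the self-dual case: the paper's implicit route is to read the dimension identities off the type of the dual code quoted from Bajalan \textit{et al.}~\cite{Bajalan} (as in Remark~\ref{Remark1.1}, self-duality forces $k_i=k_{\mu-i}$ and $2k_0+k_1+\cdots+k_{\mu-1}=N_1$, and similarly for the $\ell_j$), whereas you force $K_i+K_{\mu-i+1}=N_1$ and $L_j+L_{\mu-j}=N_2$ by summing your containment inequalities and comparing with $2\log_q\lvert\mathscr{C}_\mu\rvert=\mu N_1+(\mu-1)N_2$. That counting argument is valid and self-contained; its only unstated ingredient is the cardinality formula $\log_q\lvert\mathscr{C}_\mu\rvert=\sum_{i}(\mu-i)k_i+\sum_{j}(\mu-1-j)\ell_j$, which is standard from the standard form \eqref{eqn1.1}, so this is a legitimate (if slightly longer) alternative to invoking the dual-type formula.
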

\begin{proof} Proof is similar to that of Lemma 7 of Dougherty \etal~\cite{Dougherty}.
\end{proof}

Bajalan \etal~\cite{Bajalan}  observed that if $\mathscr{C}_{\mu}\subseteq\mathtt{R}_\mu^{N_1}\oplus\mathtt{R}_{\mu-1}^{N_2}$ is  an  $\mathtt{R}_\mu\mathtt{R}_{\mu-1}$-linear code of the type $\{k_0,k_1,\ldots,k_{\mu-1};\ell_0,\ell_1,\\\ldots,\ell_{\mu-2}\},$ then its Euclidean dual code $\mathscr{C}_{\mu}^{\perp_E}$ is of the type $\{N_1-\sum\limits_{i=0}^{\mu-1}k_i,k_{\mu-1},k_{\mu-2},\ldots,k_1;N_2-\sum\limits_{j=0}^{\mu-2}\ell_j,\ell_{\mu-2},\ell_{\mu-3},\\\ldots,\ell_1\}.$ In particular, if the code $\mathscr{C}_{\mu}$ is Euclidean self-dual, then by  uniqueness of the types of $\mathscr{C}_{\mu}$ and $\mathscr{C}_{\mu}^{\perp_E},$ we obtain $2k_0+k_1+\cdots+k_{\mu-1}=N_1,$ $2\ell_0+\ell_1+\cdots+\ell_{\mu-2}=N_2,$ $k_i=k_{\mu-i}$ for $1\leq i\leq \mu-1$ and $\ell_j=\ell_{\mu-1-j}$ for $1\leq j\leq \mu-2.$ From this and by Lemma \ref{Lem1.1}, we deduce the following:
\vspace{-1mm}\begin{remark}\label{Remark1.1}
If $\mathscr{C}_{\mu}\subseteq\mathtt{R}_\mu^{N_1}\oplus\mathtt{R}_{\mu-1}^{N_2}$ is a Euclidean self-orthogonal $\mathtt{R}_\mu\mathtt{R}_{\mu-1}$-linear code of the type $\{k_0,k_1,\ldots,k_{\mu-1};\ell_0,\\\ell_1,\ldots,\ell_{\mu-2}\},$ then we have \vspace{-2mm}$$2k_0+2k_1+\cdots+2k_{\mu-1-i}+k_{\mu-1-i+1}+k_{\mu-1-i+2}+\cdots+k_i\leq N_1\text{ ~ for ~ }\lfloor{\tfrac{\mu}{2}\rfloor}\leq i\leq \mu-1\vspace{-2mm}$$ and \vspace{-1mm}$$2\ell_0+2\ell_1+\cdots+2\ell_{\mu-2-j}+\ell_{\mu-2-j+1}+\ell_{\mu-2-j+2}+\cdots+\ell_j\leq N_2\text{ ~ for ~ }\floor{\tfrac{\mu-1}{2}}\leq j\leq \mu-2.\vspace{-1mm}$$ Furthermore, if the code $\mathscr{C}_{\mu}$ is Euclidean self-dual, then we  have \vspace{-2mm}$$2(k_0+k_1+\cdots+k_{\frac{\mu-1}{2}})=N_1\text{  ~ and ~ }2(\ell_0+\ell_1+\cdots+\ell_{\frac{\mu-3}{2}})+\ell_{\frac{\mu-1}{2}}=N_2\text{ ~ if }\mu \text{ ~is odd,}\vspace{-2mm}$$ while \vspace{-1mm}$$2(k_0+k_1+\cdots+k_{\frac{\mu-2}{2}})+k_{\frac{\mu}{2}}=N_1\text{ ~ and ~ } 2(\ell_0+\ell_1+\cdots+\ell_{\frac{\mu-2}{2}})=N_2\text{ ~if }\mu \text{ ~is even.}$$  
\end{remark}
\vspace{-3mm}\subsection{Additive codes over chain rings} \label{prelim4}
\vspace{-1mm}
Recall, from Section \ref{prelim1}, that $\mathcal{R}_e=GR(p^e,r)[y]/\langle g(y),p^{e-1}y^t\rangle$ is a finite commutative chain ring, where $t$ and $k$ are  integers satisfying $1\leq t\leq k$ when $e \geq 2,$ while $t=k$ when $e=1,$ and $g(y)=y^k+p(g_{k-1}y^{k-1}+\cdots+g_1y+g_0)\in GR(p^e,r)[y]$ is an Eisenstein polynomial with $g_0$ as a unit in $GR(p^e,r).$ Now for a positive integer $N,$ let $\mathcal{R}_e^N$ denote the set consisting of all $N$-tuples over $\mathcal{R}_e.$ Clearly, $\mathcal{R}_e^N$ is an Abelian group under the component-wise addition, or equivalently, $\mathcal{R}_e^N$ can be viewed as a $\mathbb{Z}_{p^e}$-module. A non-empty subset $\mathcal{C}$ of $\mathcal{R}_e^N$ is called an additive code of length $N$ over $\mathcal{R}_e$ if it is an additive subgroup of $\mathcal{R}_e^N,$ or equivalently, $\mathcal{C}$ is a $\mathbb{Z}_{p^e}$-submodule of $\mathcal{R}_e^N$.

In this paper, we will  study the dual codes of additive codes of length $N$ over $\mathcal{R}_e$ with respect to the notion of   
character-theoretic duality considered in \cite{Edgar} and \cite{Wood}.  For this, we first note that the additive group of $\mathcal{R}_e$ is isomorphic to its character group $\widehat{\mathcal{R}}_e,$ and hence $\widehat{\mathcal{R}}_e=\{\chi_a: a \in \mathcal{R}_e\}.$ For $\textbf{c}=(c_0,c_1,\ldots,c_{N-1}),\textbf{d}=(d_0,d_1,\ldots,d_{N-1})\in \mathcal{R}_e^N,$ let us define \vspace{-2mm}$$ \chi_\textbf{d}(\textbf{c})=\chi_{d_0}(c_0)\chi_{c_1}(d_1)\ldots \chi_{c_{N-1}}(d_{N-1}).\vspace{-1mm}$$ The $\chi$-dual code of an additive code $\mathcal{C}$ of length $N$ over $\mathcal{R}_e,$ denoted by $\mathcal{C}^{\perp_{\chi}},$ is defined as \vspace{-2mm}\begin{align}\label{eqtn3.2}
    \mathcal{C}^{\perp_{\chi}}=\{\textbf{d}\in \mathcal{R}_e^N:\chi_\textbf{d}(\textbf{c})=1\text{ for all }\textbf{c}\in \mathcal{C}\}.
\end{align}
It is easy to verify that $\mathcal{C}^{\perp_{\chi}}$ is also an additive code of length $N$ over $\mathcal{R}_e.$ By Corollaries 3.6 and 3.7 of Dougherty \etal~\cite{Dougherty2}, we note  that $(\mathcal{C}^{\perp_{\chi}})^{\perp_{\chi}}=\mathcal{C}$ and $|\mathcal{C}||\mathcal{C}^{\perp_{\chi}}|=|\mathcal{R}_e^N|.$ Further, the code $\mathcal{C}$ is said to be (i) self-orthogonal if $\mathcal{C} \subseteq \mathcal{C}^{\perp_\chi},$ (ii) self-dual if $\mathcal{C} = \mathcal{C}^{\perp_\chi}$ and (ii) an additive code with complementary-dual (or an ACD code) if  $\mathcal{C} \cap \mathcal{C}^{\perp_\chi}=\{0\}.$ One may refer to Dougherty \etal~\cite{Dougherty2} for more details on the character-theoretic duality of additive codes. 

Throughout this paper, we will follow the same notations as introduced  in Section \ref{prelim}.
\vspace{-4mm}\section{Additive codes over $\mathcal{R}_e$ and $\mathbb{Z}_{p^e}\mathbb{Z}_{p^{e-1}}$-linear codes}\label{section3}
\vspace{-2mm}
 In this section, we will establish a duality-preserving $\mathbb{Z}_{p^e}$-module isomorphism between additive codes over $\mathcal{R}_e$ and $\mathbb{Z}_{p^e}\mathbb{Z}_{p^{e-1}}$-linear codes induced by the $\mathbb{Z}_{p^e}$-module isomorphism  $\Psi$ defined in Section \ref{prelim1}.  For this, we first recall, from Section \ref{prelim1},  that the additive group of $\mathcal{R}_e$ is isomorphic to the direct sum $\mathbb{Z}_{p^e}^{rt}\oplus\mathbb{Z}_{p^{e-1}}^{r(k-t)}.$ We further observe, from Section \ref{prelim2},  that the character group $\widehat{\mathcal{R}}_e$ of $\mathcal{R}_e$ is isomorphic to the direct sum $\widehat{\mathbb{Z}}_{p^e}^{rt}\oplus\widehat{\mathbb{Z}}_{p^{e-1}}^{r(k-t)}.$ We also recall that each element $a \in \mathcal{R}_e$ can be uniquely expressed as $a=a_{0}+a_{1}y+\cdots+a_{t-1}y^{t-1}+a_{t}y^t+\cdots+ a_{k-1}y^{k-1} ,$ where $a_{i}=\sum\limits_{s=0}^{r-1}a_{i,s}x^s\in GR(p^e,r)$  with $a_{i,0}, a_{i,1},\ldots, a_{i,r-1} \in \mathbb{Z}_{p^e}$ for $0\leq i\leq t-1$ and $a_{j}=\sum\limits_{s=0}^{r-1}a_{j,s}x^s\in GR(p^{e-1},r) $ with $a_{j,0}, a_{j,1},\ldots,a_{j,r-1} \in \mathbb{Z}_{p^{e-1}}$ for $t\leq j\leq k-1.$ We also recall that \vspace{-1mm}\begin{equation*}\Psi(a)=(a_{0,0},a_{0,1},\ldots,a_{0,r-1},\ldots,a_{t-1,0},a_{t-1,1},\ldots,a_{t-1,r-1}~|~a_{t,0},a_{t,1},\ldots,a_{t,r-1},\ldots,a_{k-1,0},a_{k-1,1},\ldots,\\a_{k-1,r-1}),\vspace{-1mm}\end{equation*} as $\phi(a_i)=(a_{i,0},a_{i,1},\ldots,a_{i,r-1})$ for $0\leq i\leq k-1.$  
Corresponding to the element $a \in \mathcal{R}_e,$ the character $\chi_a \in \widehat{\mathcal{R}}_e$  is defined as \vspace{-4mm}\begin{eqnarray}\label{LR}
\chi_{a}(b)&:=& \chi_{\Psi(a)}(\Psi(b))=\prod\limits_{s=0}^{r-1}\Big(\prod\limits_{i=0}^{t-1}\chi_{a_{i,s}}(b_{i,s})\prod\limits_{j=t}^{k-1}\chi_{a_{j,s}}(b_{j,s})\Big)\nonumber \vspace{-2mm}\\
&= &\prod\limits_{s=0}^{r-1}\Big(\prod\limits_{i=0}^{t-1}\xi^{a_{i,s}b_{i,s}}\prod\limits_{j=t}^{k-1}\xi^{pa_{j,s}b_{j,s}}\Big)= \xi^{\sum\limits_{i=0}^{t-1}\sum\limits_{s=0}^{r-1} a_{i,s}b_{i,s}+p\sum\limits_{j=t}^{k-1}\sum\limits_{s=0}^{r-1} a_{j,s}b_{j,s}}
\vspace{-2mm}\end{eqnarray}\vspace{-1mm}
for all $b=b_{0}+b_{1}y+\cdots+b_{t-1}y^{t-1}+b_{t}y^t+\cdots+b_{k-1}y^{k-1}\in \mathcal{R}_e,$ where $b_{i}=\sum\limits_{s=0}^{r-1}b_{i,s}x^s\in\ GR(p^e,r)$ with $b_{i,0}, b_{i,1}, \ldots, b_{i,r-1} \in \mathbb{Z}_{p^e}$ for $0\leq i\leq t-1$ and $b_{j}=\sum\limits_{s=0}^{r-1}b_{j,s}x^s \in GR(p^{e-1},r)$ with $b_{j,0}, b_{j,1}, \ldots, b_{j,r-1} \in \mathbb{Z}_{p^{e-1}}$ for $t\leq j\leq k-1.$ 
The character group of the additive group of  $\mathcal{R}_e$ is given by 
$\widehat{\mathcal{R}}_e=\{\chi_{a}:a\in \mathcal{R}_e\},$ where $\chi_a$ is as defined by \eqref{LR}.
We next recall that the set $\mathcal{R}_e^N$ is an Abelian group under the component-wise addition. For $\textbf{u}=(u_0,u_1,\ldots,u_{N-1}) \in \mathcal{R}_e^N,$ the character $\chi_{\textbf{u}} \in \widehat{\mathcal{R}_e^N} $ is defined as \vspace{-1mm}$$\chi_{\textbf{u}}(\textbf{v})=\chi_{u_0}(v_0)\chi_{u_1}(v_1)\ldots\chi_{u_{N-1}}(v_{N-1})\vspace{-1mm}$$ for all $\textbf{v}=(v_0,v_1,\ldots,v_{N-1})\in \mathcal{R}_e^N,$ where $\chi_{u_i}(v_i)$ is as defined by \eqref{LR} for each $i.$ Further, the character group $\widehat{\mathcal{R}_e^N}$ of the additive group $\mathcal{R}_e^N$ is given by
\vspace{-3mm}\begin{align}\label{eqtn3.1}
\widehat{\mathcal{R}_e^N}=\{\chi_{\textbf{u}}:\textbf{u}\in \mathcal{R}_e^N\}.
\end{align}\vspace{-4mm}

Next, for an  additive code $\mathcal{C} (\subseteq \mathcal{R}_e^N)$  of length $N$ over $\mathcal{R}_e,$ we recall that the $\chi$-dual code of $\mathcal{C}$, denoted by $\mathcal{C}^{\perp_{\chi}},$ is defined as \vspace{-2mm}\begin{align*}\label{eqtn3.2}
    \mathcal{C}^{\perp_{\chi}}=\{\textbf{d}\in \mathcal{R}_e^N:\chi_\textbf{d}(\textbf{c})=1\text{ for all }\textbf{c}\in \mathcal{C}\}.\vspace{-2mm}
\end{align*}
\vspace{-5mm}\begin{lemma}\label{l0.1}Let $\textbf{d}=(d_0,d_1,\ldots,d_{N-1})$ and $\textbf{c}=(c_0,c_1,\ldots,c_{N-1})$ in $\mathcal{R}_e^N$ be fixed. For $0\leq \ell\leq N-1,$ let us assume that  \begin{eqnarray*}\Psi(d_\ell)\hspace{-3mm}&=&\hspace{-3mm}(d_{0,0}^{(\ell)},d_{0,1}^{(\ell)},\ldots,d_{0,r-1}^{(\ell)},\ldots,d_{t-1,0}^{(\ell)},d_{t-1,1}^{(\ell)},\ldots,d_{t-1,r-1}^{(\ell)}~|~d_{t,0}^{(\ell)},d_{t,1}^{(\ell)},\ldots,d_{t,r-1}^{(\ell)},\ldots,d_{k-1,0}^{(\ell)},d_{k-1,1}^{(\ell)},\ldots,d_{k-1,r-1}^{(\ell)})\\\text{and }&& \\\Psi(c_\ell)\hspace{-3mm}&=&\hspace{-3mm}(c_{0,0}^{(\ell)},c_{0,1}^{(\ell)},\ldots,c_{0,r-1}^{(\ell)},\ldots,c_{t-1,0}^{(\ell)},c_{t-1,1}^{(\ell)},\ldots,c_{t-1,r-1}^{(\ell)}~|~c_{t,0}^{(\ell)},c_{t,1}^{(\ell)},\ldots,c_{t,r-1}^{(\ell)},\ldots,c_{k-1,0}^{(\ell)},c_{k-1,1}^{(\ell)},\ldots,c_{k-1,r-1}^{(\ell)}),\end{eqnarray*} where both $d_{i,s}^{(\ell)}, c_{i,s}^{(\ell)}\in \mathbb{Z}_{p^e}$  and both $d_{j,s}^{(\ell)}, c_{j,s}^{(\ell)}\in \mathbb{Z}_{p^{e-1}}$ for  $0 \leq i \leq t-1,$ $t \leq j \leq k-1$ and $0 \leq s \leq r-1.$   We have $\chi_\textbf{d}(\textbf{c})=1$  if and only if \begin{equation}\label{dualityeqn1.4}
\sum\limits_{\ell=0}^{N-1}\sum\limits_{s=0}^{r-1} \bigg(\sum\limits_{i=0}^{t-1}d_{i,s}^{(\ell)}c_{i,s}^{(\ell)}+p\sum\limits_{j=t}^{k-1}d_{j,s}^{(\ell)}c_{j,s}^{(\ell)} \bigg) = 0 \text{ ~in~ } \mathbb{Z}_{p^e}.
\end{equation}
\end{lemma}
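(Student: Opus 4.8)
The plan is to unwind both characters into their defining exponential formula \eqref{LR} and then use the definition of $\chi_{\textbf d}$ as a product of the componentwise characters. First I would write
$$\chi_{\textbf d}(\textbf c)=\prod_{\ell=0}^{N-1}\chi_{d_\ell}(c_\ell),$$
using the definition of the character on $\mathcal{R}_e^N$. For each fixed $\ell$, I would apply \eqref{LR} with $a=d_\ell$ and $b=c_\ell$, reading off the coordinate expansions $\Psi(d_\ell)$ and $\Psi(c_\ell)$ given in the hypothesis, so that
$$\chi_{d_\ell}(c_\ell)=\xi^{\sum_{i=0}^{t-1}\sum_{s=0}^{r-1} d_{i,s}^{(\ell)}c_{i,s}^{(\ell)}+p\sum_{j=t}^{k-1}\sum_{s=0}^{r-1} d_{j,s}^{(\ell)}c_{j,s}^{(\ell)}}.$$

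Next I would multiply these $N$ expressions together; since they are all powers of the fixed primitive $(p^e)$-th root of unity $\xi$, the product collapses to a single power of $\xi$ whose exponent is the full triple sum
$$E:=\sum_{\ell=0}^{N-1}\sum_{s=0}^{r-1}\Bigl(\sum_{i=0}^{t-1}d_{i,s}^{(\ell)}c_{i,s}^{(\ell)}+p\sum_{j=t}^{k-1}d_{j,s}^{(\ell)}c_{j,s}^{(\ell)}\Bigr).$$
Hence $\chi_{\textbf d}(\textbf c)=\xi^{E}$, and the equivalence $\chi_{\textbf d}(\textbf c)=1\iff E\equiv 0\pmod{p^e}$ follows from the fact that $\xi$ has multiplicative order exactly $p^e$, i.e. $\xi^{E}=1$ precisely when $p^e\mid E$, which is the same as $E=0$ in $\mathbb{Z}_{p^e}$.

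The one genuine subtlety — and the step I expect to require the most care rather than the most difficulty — is checking that the exponent $E$ is a well-defined element of $\mathbb{Z}_{p^e}$. The entries $d_{i,s}^{(\ell)},c_{i,s}^{(\ell)}$ lie in $\mathbb{Z}_{p^e}$ for the first block, but for the second block $d_{j,s}^{(\ell)},c_{j,s}^{(\ell)}$ lie only in $\mathbb{Z}_{p^{e-1}}$, so a priori their product is only defined modulo $p^{e-1}$. The point is that these products enter $E$ only through the multiplier $p$: lifting $d_{j,s}^{(\ell)},c_{j,s}^{(\ell)}$ arbitrarily to $\mathbb{Z}_{p^e}$ changes each product $d_{j,s}^{(\ell)}c_{j,s}^{(\ell)}$ by a multiple of $p^{e-1}$, and after multiplication by $p$ the ambiguity becomes a multiple of $p^{e}$, hence vanishes modulo $p^{e}$. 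This is exactly consistent with the definition $\chi_{\alpha}(\beta)=\xi^{p\alpha\beta}$ for $\mathbb{Z}_{p^{e-1}}$-characters recorded in Section \ref{prelim2}, since $\xi^{p}$ is a primitive $(p^{e-1})$-th root of unity; so the individual factors $\xi^{p\,d_{j,s}^{(\ell)}c_{j,s}^{(\ell)}}$ are themselves well-defined. Once this well-definedness is noted, the remaining computation is just the bookkeeping of collecting exponents described above, and the stated equivalence \eqref{dualityeqn1.4} is immediate.
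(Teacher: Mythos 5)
Your proof is correct and follows exactly the route the paper intends: the paper states Lemma \ref{l0.1} without a written proof, treating it as immediate from the definition of $\chi_{\textbf{d}}$ as the product $\prod_{\ell}\chi_{d_\ell}(c_\ell)$ together with formula \eqref{LR} and the fact that $\xi$ has multiplicative order $p^e$. Your extra remark on the well-definedness of the terms $p\,d_{j,s}^{(\ell)}c_{j,s}^{(\ell)}$ for the $\mathbb{Z}_{p^{e-1}}$-coordinates is a legitimate point the paper leaves implicit, and you resolve it correctly.
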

We next observe that the $\mathbb{Z}_{p^e}$-module isomorphism $\Psi$ from $\mathcal{R}_e$ onto $\mathbb{Z}_{p^e}^{rt}\oplus\mathbb{Z}_{p^{e-1}}^{r(k-t)}$ can be extended to a $\mathbb{Z}_{p^e}$-module isomorphism from $\mathcal{R}_e^N$ onto $\mathbb{Z}_{p^e}^{Nrt}\oplus\mathbb{Z}_{p^{e-1}}^{Nr(k-t)}$ component-wise, which we will denote by $\Psi$ itself for our convenience.  From the above discussion, we deduce the following:
\vspace{-1mm}\begin{thm}\label{thm0.1}
\begin{itemize}\item[(a)]A non-empty subset $\mathcal{C}$ of $\mathcal{R}_e^N$ is an additive code if and only if $\Psi(\mathcal{C})$ is a $\mathbb{Z}_{p^e}\mathbb{Z}_{p^{e-1}}$-linear code of block-length $\bigl(Nrt,Nr(k-t)\bigr).$
\vspace{-1mm}\item[(b)]  For an additive code $\mathcal{C}$ of length $N$ over $\mathcal{R}_e,$ we have $\Psi(\mathcal{C}^{\perp_\chi})=\Psi(\mathcal{C})^{\perp_E}.$   
\end{itemize}
\end{thm}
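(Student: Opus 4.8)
The plan is to derive both parts as formal consequences of $\Psi$ being a $\mathbb{Z}_{p^e}$-module isomorphism, together with Lemma \ref{l0.1}. For part (a), I would use only the module-theoretic content: since $\Psi$ is a $\mathbb{Z}_{p^e}$-module isomorphism from $\mathcal{R}_e^N$ onto $\mathbb{Z}_{p^e}^{Nrt}\oplus\mathbb{Z}_{p^{e-1}}^{Nr(k-t)}$, both $\Psi$ and $\Psi^{-1}$ carry $\mathbb{Z}_{p^e}$-submodules to $\mathbb{Z}_{p^e}$-submodules. Hence a non-empty subset $\mathcal{C}\subseteq\mathcal{R}_e^N$ is a $\mathbb{Z}_{p^e}$-submodule of $\mathcal{R}_e^N$ (that is, an additive code over $\mathcal{R}_e$) if and only if $\Psi(\mathcal{C})$ is a $\mathbb{Z}_{p^e}$-submodule of $\mathbb{Z}_{p^e}^{Nrt}\oplus\mathbb{Z}_{p^{e-1}}^{Nr(k-t)}$. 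Taking $\mu=e$ and $\gamma=p$ in Section \ref{prelim3}, so that $\mathtt{R}_e=\mathbb{Z}_{p^e}$ and $\mathtt{R}_{e-1}=\mathbb{Z}_{p^{e-1}}$, the latter is precisely the statement that $\Psi(\mathcal{C})$ is a $\mathbb{Z}_{p^e}\mathbb{Z}_{p^{e-1}}$-linear code of block-length $(Nrt,Nr(k-t))$, which proves (a).

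For part (b), the key observation I would establish is that the left-hand side of \eqref{dualityeqn1.4} is exactly the Euclidean bilinear form \eqref{Eucmix} evaluated on $\Psi(\textbf{d})$ and $\Psi(\textbf{c})$. Expressing $\Psi(\textbf{d})$ and $\Psi(\textbf{c})$ in the coordinates of $\mathbb{Z}_{p^e}^{Nrt}\oplus\mathbb{Z}_{p^{e-1}}^{Nr(k-t)}$, the $\mathbb{Z}_{p^e}$-block collects the entries $d_{i,s}^{(\ell)},c_{i,s}^{(\ell)}$ with $0\le i\le t-1$ and the $\mathbb{Z}_{p^{e-1}}$-block collects the entries $d_{j,s}^{(\ell)},c_{j,s}^{(\ell)}$ with $t\le j\le k-1$; the component-wise extension of $\Psi$ only regroups these per-coordinate blocks, which leaves every sum below unchanged. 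Since $\gamma=p$ here, \eqref{Eucmix} yields
\[
\langle \Psi(\textbf{d}),\Psi(\textbf{c})\rangle_E=\sum_{\ell=0}^{N-1}\sum_{s=0}^{r-1}\sum_{i=0}^{t-1} d_{i,s}^{(\ell)}c_{i,s}^{(\ell)}+p\sum_{\ell=0}^{N-1}\sum_{s=0}^{r-1}\sum_{j=t}^{k-1} d_{j,s}^{(\ell)}c_{j,s}^{(\ell)},
\]
which is the left-hand side of \eqref{dualityeqn1.4}. I would pause here to check well-definedness: each product $d_{j,s}^{(\ell)}c_{j,s}^{(\ell)}$ from the $\mathbb{Z}_{p^{e-1}}$-block is only determined modulo $p^{e-1}$, but the weight $\gamma=p$ makes $p\,d_{j,s}^{(\ell)}c_{j,s}^{(\ell)}$ well-defined modulo $p^e$, matching both the definition of $\langle\cdot,\cdot\rangle_E$ and the exponent in \eqref{LR}.

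With this identity established, the conclusion follows from a short chain of equivalences. Fixing $\textbf{d}\in\mathcal{R}_e^N$, Lemma \ref{l0.1} combined with the displayed equality shows that $\chi_\textbf{d}(\textbf{c})=1$ if and only if $\langle\Psi(\textbf{d}),\Psi(\textbf{c})\rangle_E=0$ in $\mathbb{Z}_{p^e}$. Consequently $\textbf{d}\in\mathcal{C}^{\perp_\chi}$ if and only if $\langle\Psi(\textbf{d}),\Psi(\textbf{c})\rangle_E=0$ for every $\textbf{c}\in\mathcal{C}$, and since $\Psi$ is a bijection this is equivalent to $\Psi(\textbf{d})\in\Psi(\mathcal{C})^{\perp_E}$. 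Letting $\textbf{d}$ vary and using surjectivity of $\Psi$ gives $\Psi(\mathcal{C}^{\perp_\chi})=\Psi(\mathcal{C})^{\perp_E}$. The argument carries no deep obstacle; the one step demanding care is the coordinate bookkeeping in the displayed identity, above all ensuring that the $\mathbb{Z}_{p^{e-1}}$-coordinates enter the Euclidean form with the weight $\gamma=p$, in exact correspondence with the factor $p$ attached to the $j\ge t$ terms of \eqref{dualityeqn1.4}.
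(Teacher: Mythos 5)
Your proof is correct and follows essentially the same route as the paper's own (one-line) proof: part (a) from the fact that $\Psi$ is a $\mathbb{Z}_{p^e}$-module isomorphism, and part (b) from part (a), Lemma \ref{l0.1}, and equations \eqref{Eucmix} and \eqref{Eucdual}. You simply make explicit what the paper leaves implicit, namely that the left-hand side of \eqref{dualityeqn1.4} is exactly $\langle\Psi(\textbf{d}),\Psi(\textbf{c})\rangle_E$ with $\gamma=p$, together with the resulting chain of equivalences.
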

\vspace{-2mm}\begin{proof}
Part (a) follows from the fact that the map $\Psi$ is a $\mathbb{Z}_{p^e}$-module isomorphism from $\mathcal{R}_e^N$ onto $\mathbb{Z}_{p^e}^{Nrt}\oplus\mathbb{Z}_{p^{e-1}}^{Nr(k-t)},$ while part (b) follows from part (a),    Lemma \ref{l0.1} and equations \eqref{Eucmix} and \eqref{Eucdual}.\end{proof}

From the above theorem, we deduce the following:
\vspace{-1mm}\begin{remark}\label{KEY} An additive code $\mathcal{C}$ of length $N$ over $\mathcal{R}_e$ is (i) self-orthogonal if and only if $\Psi(\mathcal{C})$ is a Euclidean self-orthogonal $\mathbb{Z}_{p^e}\mathbb{Z}_{p^{e-1}}$-linear code of block-length $\bigl(Nrt,Nr(k-t)\bigr)$, (ii) self-dual if and only if $\Psi(\mathcal{C})$ is a Euclidean self-dual $\mathbb{Z}_{p^e}\mathbb{Z}_{p^{e-1}}$-linear code of block-length $\bigl(Nrt,Nr(k-t)\bigr)$, and (iii) ACD if and only if $\Psi(\mathcal{C})$ is a Euclidean $\mathbb{Z}_{p^e}\mathbb{Z}_{p^{e-1}}$-LCD code of block-length $\bigl(Nrt,Nr(k-t)\bigr).$   Thus  the study of additive codes of length $N$ over $\mathcal{R}_e$ and their $\chi$-dual codes  is equivalent to the study of $\mathbb{Z}_{p^e}\mathbb{Z}_{p^{e-1}}$-linear codes of block-length $\bigl(Nrt,Nr(k-t)\bigr)$ and their Euclidean dual codes. Further, to study and enumerate self-orthogonal and self-dual additive codes and ACD codes of length $N$ over $\mathcal{R}_e,$ it is enough to study and enumerate  Euclidean self-orthogonal and self-dual $\mathbb{Z}_{p^e}\mathbb{Z}_{p^{e-1}}$-linear codes and Euclidean  $\mathbb{Z}_{p^e}\mathbb{Z}_{p^{e-1}}$-LCD codes of block-length $\bigl(Nrt,Nr(k-t)\bigr),$ respectively. We will make use of this observation to study self-orthogonal and self-dual additive codes and ACD codes of length $N$ over $\mathcal{R}_e.$ \end{remark}

In particular, let us consider the chain ring $\mathscr{R}_2=\mathbb{Z}_4[y]/\langle y^2-2,2y\rangle.$ Note that each element $\alpha \in \mathscr{R}_2$ can be uniquely expressed as $\alpha=\alpha_0+\alpha_1y,$ where $\alpha_0 \in \mathbb{Z}_{4}$ and $\alpha_1 \in \mathbb{Z}_2.$ Here, we see, by Greferath and Schmidt \cite[p. 1]{Greferath1999}, that the homogeneous weight on $\mathscr{R}_2$  is defined as \vspace{-2mm}$$w_{hom}(\alpha)=w_{hom}(\alpha_0+\alpha_1y)=\left\{\begin{array}{cl} 0 & \text{if } \alpha=0;\\
4 & \text{if } \alpha_0=2~\text{ and }~\alpha_1=0;\\
2 & \text{otherwise,}\end{array}\right.\vspace{-2mm}$$
where $\alpha=\alpha_0+\alpha_1y~$ with $~\alpha_0 \in \mathbb{Z}_{4}$ and $\alpha_1 \in \mathbb{Z}_2.$
Further, the homogeneous weight of a word  $\textbf{a}=(a_0,a_1,\ldots,a_{N-1})\in \mathscr{R}_2^N$, denoted by $\textbf{w}_{hom}(\textbf{a}),$ is defined as $\textbf{w}_{hom}(\textbf{a})=\sum\limits_{i=0}^{N-1}w_{hom}(a_i).$ The homogeneous distance between the words $\textbf{a}, \textbf{b} \in \mathscr{R}_2^N$ is defined as $d_{hom}(\textbf{a}, \textbf{b})=\textbf{w}_{hom}(\textbf{a}- \textbf{b}).$ The homogeneous distance of an additive code $\mathcal{C}$ of length $N$ over $\mathscr{R}_2$ is defined as $d_{hom}(\mathcal{C})=\min\{ d_{hom}(\textbf{a}, \textbf{b}): \textbf{a}, \textbf{b} \in \mathcal{C} \text{ and }\textbf{a}\neq \textbf{b}\}= \min\{\textbf{w}_{hom}(\textbf{c}): \textbf{c}(\neq \textbf{0}) \in \mathcal{C}\}.$
In Table \ref{table1}, we apply Theorem \ref{thm0.1} to provide minimal generating sets of some additive codes of length $N,$  size $M$ and homogenous distance $d_{hom}$ over $\mathscr{R}_2$ achieving the Plotkin bound for homogeneous weights (see Greferath and O\textquotesingle Sullivan \cite[Th. 2.2]{Greferath}). This suggests that additive codes over $\mathcal{R}_e$ is a promising class of error-correcting codes to find optimal codes with respect to the homogeneous metric

\vspace{-1mm}\begin{table}[h]
    \centering
\resizebox{\textwidth}{!}{\begin{tabular}{ |l|c|c|c|c||l|c|c|c|c|}
 \hline
  Minimal generating set & $N$ & $M$ & $d_{hom}$ & Remarks & Minimal generating set & $N$ & $M$ & $d_{hom}$ & Remarks\\ [1ex]
 \hline
   $\{(2,2,0),(y,y,2)\}$ &$3$& $4$ & $8$ & Optimal & $\{(2,2,0),(y,2+y,2)\}$ &$3$& $4$ & $8$ & Optimal  \\
   & & & & self-orthogonal & & & & & self-orthogonal \\\hline
  $\{(2,y,y),(y,2+y,2)\}$ &$3$& $4$ & $8$ & Optimal & $\{(2,y,2+y),(y,2+y,2)\}$ &$3$& $4$ & $8$ & Optimal  \\
   & & & & LCD & & & & & LCD \\\hline
$\{(2+y,y,2,2),(y,2,2+y,y)\}$ &$4$& $4$ & $10$ & Optimal &  $\{(2+y,2,0,2),(y,y,2,2)\}$ &$4$& $4$ & $10$ & Optimal  \\
   & & & & LCD & & & & & LCD  \\\hline
 $\{(2+y,y,2,y),(y,2+y,2,2)\}$ &$4$& $4$ & $10$ & Optimal & $\{(0,2,2+y,2),(2,0,y,2)\}$ &$4$& $4$ & $10$ & Optimal  \\
   & & & & &  & & & & \\\hline 
$\{(2,y,y,y),(0,2,2,2+y)\}$ &$4$& $4$ & $10$ & Optimal & $\{(2,0,2,2),(y,2,y,y)\}$ &$4$& $4$ & $10$ & Optimal
\\
   & & & & &  & & & & \\\hline
  \end{tabular}}
 \caption{Some additive codes over $\mathscr{R}_2$ achieving the Plotkin bound for homogeneous weights}
    \label{table1}
\end{table}\normalsize
Next, we recall, from Section \ref{prelim3}, that $\mathtt{R}_e$ is a finite commutative chain ring with the maximal ideal $\langle\gamma\rangle$  of nilpotency index $e$ and that $\mathtt{R}_{e-1}$ is the chain ring $\mathtt{R}_e/\langle\gamma^{e-1}\rangle$ with the maximal ideal of nilpotency index $e-1.$ In the following section, we will provide a recursive method to construct  all Euclidean self-orthogonal and self-dual $\mathtt{R}_e\mathtt{R}_{e-1}$-linear codes of block-length $(N_1,N_2)$ by assuming that the characteristic of $\mathtt{R}_e$ is odd, which  gives rise to enumeration formulae for these two classes of codes.   These results also give rise to construction methods and  enumeration formulae for all self-orthogonal and self-dual additive codes of length $N$ over $\mathcal{R}_e$ as special cases (see Corollaries \ref{Corollary3.1}-\ref{Cor3.2}). 

\vspace{-4mm}\section{Enumeration of Euclidean self-orthogonal and self-dual $\mathtt{R}_e\mathtt{R}_{e-1}$-linear codes of block-length $(N_1,N_2)$}\label{SectionEnumeration}
\vspace{-2mm}
Throughout this section, we assume that  the chain ring $\mathtt{R}_e$ is of odd characteristic, or equivalently, the residue field $\overline{\mathtt{R}}_e$ is of odd prime power order $q.$ Let   $Sym_{s}(Y)$ denote the set of all $s \times s$ symmetric matrices over $Y.$  We also need the following well-known lemma. 
\vspace{-1mm}\begin{lemma}\cite{Betty,Yadav}\label{Matrixlemma}
For a matrix $A\in M_{s\times n}(\overline{\mathtt{R}}_e)$ of rank $s,$ the following hold.
\vspace{-1mm}\begin{enumerate}
\item[(a)] The map $\Phi_A:M_{s\times n}(\overline{\mathtt{R}}_e)\rightarrow Sym_s(\overline{\mathtt{R}}_e),$ defined as $\Phi_A(B)=AB^T+BA^T$ for all $B\in M_{s\times n}(\overline{\mathtt{R}}_e),$ is a surjective  $\overline{\mathtt{R}}_e$-linear transformation with $|Ker(\Phi_A)|=q^{\frac{s}{2}(2n-s-1)}.$ 
\vspace{-1mm}\item[(b)] For a positive integer $\ell$ and  a matrix $J\in M_{s\times \ell}(\overline{\mathtt{R}}_e),$ the matrix equation $AX=J$ in the unknown matrix $X\in M_{n\times\ell}(\overline{\mathtt{R}}_e)$ has precisely $q^{(n-s)\ell}$ distinct solutions.
\end{enumerate}
\end{lemma}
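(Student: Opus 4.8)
\textbf{Plan for Lemma \ref{Matrixlemma}.} I would treat the two parts separately, with the bulk of the work going into part (a). For part (a), linearity of $\Phi_A$ is immediate from the definition, and the image genuinely lands in $Sym_s(\overline{\mathtt{R}}_e)$ because $\Phi_A(B)^T=(AB^T+BA^T)^T=BA^T+AB^T=\Phi_A(B)$. So the content is surjectivity together with the kernel count. The strategy I would follow is to reduce to the case $A=[I_s\mid 0]$. Since $A$ has full row rank $s$, there is a matrix $V\in GL_n(\overline{\mathtt{R}}_e)$ (obtained by column operations) with $AV=[I_s\mid 0]$. I would then transport $\Phi_A$ through the bijection $B\mapsto \widehat{B}:=BV^{-T}$ of $M_{s\times n}(\overline{\mathtt{R}}_e)$: writing $B=\widehat{B}V^T$ gives $\Phi_A(B)=AV\widehat{B}^T+\widehat{B}V^TA^T=(AV)\widehat{B}^T+\widehat{B}(AV)^T=\Phi_{AV}(\widehat{B})$, so $\Phi_A$ and $\Phi_{AV}$ share the same image and the same kernel cardinality. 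Hence it suffices to analyse $A=[I_s\mid 0]$.

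With $A=[I_s\mid 0]$, I would partition $B=[B_1\mid B_2]$ with $B_1\in M_{s\times s}(\overline{\mathtt{R}}_e)$ and $B_2\in M_{s\times(n-s)}(\overline{\mathtt{R}}_e)$, and compute directly that $\Phi_A(B)=B_1^T+B_1$, which is independent of $B_2$. Surjectivity then follows because for any $S\in Sym_s(\overline{\mathtt{R}}_e)$ the choice $B_1=\tfrac{1}{2}S$ gives $B_1+B_1^T=S$; here is precisely where odd characteristic enters, guaranteeing that $\tfrac12$ exists. For the kernel, $\Phi_A(B)=0$ forces $B_2$ arbitrary (contributing $q^{s(n-s)}$ choices) and $B_1$ to be skew-symmetric; over a field of odd characteristic the skew-symmetric condition makes the diagonal vanish and leaves the $\binom{s}{2}$ strictly-upper entries free, giving $q^{s(s-1)/2}$ choices. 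The expected main obstacle is this kernel bookkeeping: one must correctly identify $Ker(\Phi_A)$ as (free $B_2$) $\times$ (skew-symmetric $B_1$) and then verify the exponent identity $s(n-s)+\tfrac{s(s-1)}{2}=\tfrac{s}{2}(2n-s-1)$, which yields $|Ker(\Phi_A)|=q^{\frac{s}{2}(2n-s-1)}$ as claimed.

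For part (b), I would argue by standard linear algebra over the field $\overline{\mathtt{R}}_e$. Viewing $A$ as the linear map $\overline{\mathtt{R}}_e^{\,n}\to\overline{\mathtt{R}}_e^{\,s}$, the rank hypothesis says this map is surjective, so the system $AX=J$ is consistent for every $J$; one can also see this via the same reduction $AV=[I_s\mid 0]$. Applying the count column by column, each column equation $Ax=j$ is consistent and its solution set is a coset of $\ker A$, which has dimension $n-s$ and hence $q^{\,n-s}$ elements. Since the $\ell$ columns of $X$ are independent, the total number of solutions is $(q^{\,n-s})^{\ell}=q^{(n-s)\ell}$, completing the proof. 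Part (b) is routine and I would keep it brief, reserving the detailed verification for the kernel computation in part (a).
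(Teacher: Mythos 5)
Your proof is correct, and it supplies something the paper itself does not: the paper's entire proof of this lemma reads ``Part (a) follows from Lemma 3.1 of Betty \textit{et al.} \cite{Betty}, while part (b) is a straightforward exercise,'' so part (a) is delegated to a reference and no argument appears in the paper at all. Your self-contained route --- reduce to $A=[I_s\mid 0]$ by an invertible column transformation $V$, transport $\Phi_A$ through the bijection $B\mapsto BV^{-T}$ using $\Phi_A(B)=\Phi_{AV}(BV^{-T})$, observe that $\Phi_{[I_s\mid 0]}([B_1\mid B_2])=B_1+B_1^T$, obtain surjectivity from the choice $B_1=\tfrac{1}{2}S$, and count the kernel as (arbitrary $B_2$) $\times$ (skew-symmetric $B_1$), with the exponent identity $s(n-s)+\tfrac{s(s-1)}{2}=\tfrac{s}{2}(2n-s-1)$ --- is the standard argument and is essentially what underlies the cited lemma of Betty and Munemasa. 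The concrete payoff of writing it out is that it pinpoints exactly where the standing hypothesis of Section 4 (that $q$ is odd) enters: surjectivity needs $\tfrac{1}{2}\in\overline{\mathtt{R}}_e$, and the kernel count needs skew-symmetric matrices to have zero diagonal; both fail in characteristic $2$, where the image of $\Phi_A$ consists only of symmetric matrices with zero diagonal, so $\Phi_A$ is not surjective onto $Sym_s(\overline{\mathtt{R}}_e)$. Since the lemma statement does not itself repeat the odd-$q$ assumption, your proof makes its necessity visible, whereas the paper's citation hides it. Your columnwise coset count for part (b) --- each column equation is consistent and its solution set is a coset of the $(n-s)$-dimensional kernel, giving $(q^{n-s})^{\ell}$ solutions in total --- is exactly the ``straightforward exercise'' the paper has in mind, and it is carried out correctly.
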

\begin{proof}
Part (a) follows from Lemma 3.1 of Betty \textit{et al.} \cite{Betty}, while  part (b) is a straightforward exercise.
\end{proof}
Next, for $0\leq s\leq n,$  let $\sigma_{q}(n,s)$  denote the number of distinct Euclidean self-orthogonal codes of length $n$ and dimension $s$  over the residue field $\overline{\mathtt{R}}_e(\simeq \mathbb{F}_q).$ Clearly, $\sigma_{q}(n,0)=1$ and $\sigma_{q}(n,s)=0$ for all  $s > \floor{\frac{n}{2}}.$ For $1 \leq s \leq \floor{\frac{n}{2}}, $ we see, by Theorem 2 of Pless \cite{Pless}, that
\vspace{-2mm}\begin{equation}\label{sigmaE}
\sigma_{q}(n,s)= \left\{\begin{array}{ll}\frac{\prod\limits_{i=0}^{s-1}(q^{(n-1-2i)}-1)}{\prod\limits_{j=1}^{s}(q^j-1)}& \text{if $n$ is odd};\\
\frac{(q^{n-s}-\nu q^{\frac{n}{2}-s}+\nu q^{\frac{n}{2}}-1)\prod\limits_{i=1}^{s-1}(q^{n-2i}-1)}{\prod\limits_{j=1}^s(q^j-1)}& \text{if $n$ is even,}\end{array} \right.
\end{equation}
where \vspace{-2mm}$$\nu=\left\{\begin{array}{cl}
1&\text{if $(-1)^\frac{n}{2}$ is a square in $\mathbb{F}_q$};\\
-1&\text{otherwise.}\\
\end{array}\right.$$
\vspace{-5mm}\begin{remark}\label{Remark2.2}
When $q$ is odd, one can easily see, from equation \eqref{sigmaE},  that there exists a Euclidean self-dual code of length $n$ over $\mathbb{F}_q$  if and only if $n$ is even and $(-1)^\frac{n}{2}$ is a square in $\mathbb{F}_q$.
\end{remark}
For positive integers $s$ and $n$ satisfying $s \leq n,$ the Gaussian binomial coefficient (or the $q$-binomial coefficient) is defined as  $$\qbin{n}{s}{q}=\frac{(q^n-1)(q^n-q)\ldots (q^n-q^{s-1})}{(q^s-1)(q^s-q)\cdots (q^s-q^{s-1})}.$$ Note that $\qbin{n}{0}{q}$ is assigned the value $1.$ Clearly, $\qbin{n}{s}{q}$ equals the number of distinct $s$-dimensional  subspaces of an $n$-dimensional vector space   over $\mathbb{F}_q.$

We now proceed to count  Euclidean self-orthogonal and self-dual $\mathtt{R}_e\mathtt{R}_{e-1}$-linear codes of block-length $(N_1,N_2).$ For this, we will first consider the cases $e=2$ and $e=3.$  Further, for an integer $e \geq 4,$ we will extend the recursive method provided in Section 5 of Yadav and Sharma \cite{Yadav} to construct and enumerate Euclidean self-orthogonal $\mathtt{R}_e\mathtt{R}_{e-1}$-linear codes of block-length $(N_1,N_2).$ Towards this, let us define a map $\theta:\mathtt{R}_e \rightarrow \mathcal{T}_e$ as $\theta(\mathtt{s})=\mathtt{s}_0$ for all $\mathtt{s}=\mathtt{s}_0+\mathtt{s}_1 \gamma +\mathtt{s}_2 \gamma^2+\cdots+\mathtt{s}_{e-1}\gamma^{e-1},$ where $\mathtt{s}_0,\mathtt{s}_1,\ldots, \mathtt{s}_{e-1}\in \mathcal{T}_e.$ The map $\theta$ induces a binary operation $\oplus$ on $\mathcal{T}_e,$ defined as $a \oplus b=\theta(a+b)$ for all $a, b \in \mathcal{T}_e.$ One can easily see that the set $\mathcal{T}_e$ is the finite field of order $q$ under the addition operation $\oplus$ and the usual multiplication operation of $\mathtt{R}_e.$ So from this point on, we assume, without any loss of generality, that $\overline{\mathtt{R}}_e=\mathcal{T}_e.$   Further,  let $\mathfrak{N}_{e}(N_1,N_2;\textbf{k}; \boldsymbol{\ell})$ denote the number of distinct Euclidean self-orthogonal $\mathtt{R}_e\mathtt{R}_{e-1}$-linear codes of block-length $(N_1,N_2)$ and type $(\textbf{k}; \boldsymbol{\ell}),$ where $\textbf{k}$ is an $e$-tuple $(k_0,k_1,\ldots,k_{e-1})$ and $\boldsymbol{\ell}$ is an $(e-1)$-tuple $(\ell_0,\ell_1,\ldots,\ell_{e-2})$ of non-negative integers.  Furthermore, let $\mathfrak{N}_e(N_1,N_2)$ and $\mathfrak{M}_e(N_1,N_2)$ denote the number of  distinct Euclidean self-orthogonal and self-dual $\mathtt{R}_e\mathtt{R}_{e-1}$-linear codes of block-length $(N_1,N_2),$ respectively.

In the following section, we will consider the case $e=2$ and determine the numbers $\mathfrak{N}_{2}(N_1,N_2)$ and $\mathfrak{M}_{2}(N_1,N_2).$
\vspace{-3mm}\subsection{The case $e=2$ }
\vspace{-1mm}
Throughout this section, let us suppose that $e=2.$ Let $\mathscr{C}_2$ be an $\mathtt{R}_2 \mathtt{R}_1$-linear code  of block-length $(N_1,N_2)$ and type $\{k_0,k_1;\ell_0\}.$ Since $\mathcal{T}_2=\overline{\mathtt{R}}_2$ is the finite field of order $q,$  we note, by Section \ref{prelim3}, that the torsion codes $Tor_1(\mathscr{C}_2^{(X)})$ and $Tor_2(\mathscr{C}_2^{(X)})$ are linear codes of length $N_1$ and dimensions $k_0$ and $k_0+k_1$ over $\mathcal{T}_2,$ respectively, while the torsion code  $Tor_1(\mathscr{C}_2^{(Y)})$ is a  linear code of length $N_2$  and dimension $\ell_0$ over $\mathcal{T}_2.$ Further, if the code $\mathscr{C}_2$ is Euclidean self-orthogonal, then by Lemma \ref{Lem1.1}, we see that \vspace{-2mm}$$Tor_1(\mathscr{C}_2^{(X)})\subseteq Tor_2(\mathscr{C}_2^{(X)})\subseteq{Tor_1(\mathscr{C}_2^{(X)})}^{\perp_E} \text{~~and~~} Tor_1(\mathscr{C}_2^{(Y)})\subseteq{Tor_1(\mathscr{C}_2^{(Y)})}^{\perp_E}.\vspace{-2mm}$$ Furthermore, by Remark \ref{Remark1.1}, we must have $2k_0+k_1\leq N_1$ and $2\ell_0\leq N_2.$ 

Here, we will first count all Euclidean self-orthogonal $\mathtt{R}_2\mathtt{R}_1$-linear  codes of block-length $(N_1,N_2)$ and  type $\{k_0,k_1;\ell_0\}$ with prescribed torsion codes. To do this, let $C_1$ and $C_2$  be   linear codes of length $N_1$ over $\mathcal{T}_2$ with dimensions  $k_0$ and $k_0+k_1,$ and  generator matrices \vspace{-2mm}$$[I_{k_0}~A_{0,1}~A_{0,2}] \text{ ~ and ~}
\left[\begin{array}{ccc}
    I_{k_0} & A_{0,1} & A_{0,2}\\
    0 & I_{k_1} & A_{1,2}
\end{array}\right],\vspace{-2mm}$$ respectively, and let $C_3$ be a linear code of length $N_2$ and dimension $\ell_0$ over $\mathcal{T}_2$ with a generator matrix $[I_{\ell_0} ~ D_{0,1}],$
where $A_{0,1}\in M_{k_0\times k_1}(\mathcal{T}_2),$  $A_{0,2}\in M_{k_0\times (N_1-k_0-k_1)}(\mathcal{T}_2),$ $A_{1,2}\in M_{k_1\times (N_1-k_0-k_1)}(\mathcal{T}_2),$ and $D_{0,1}\in M_{\ell_0\times (N_2-\ell_0)}(\mathcal{T}_2).$ Clearly, $C_1\subseteq C_2.$ Now, we have the following lemma.
\vspace{-1mm}\begin{lemma}\label{Lem4.1}
If $\mathscr{C}_2$ is an $\mathtt{R}_2\mathtt{R}_1$-linear code of block-length $(N_1,N_2)$ with $Tor_1(\mathscr{C}_2^{(X)})=C_1,$ $Tor_2(\mathscr{C}_2^{(X)})=C_2$ and $Tor_1(\mathscr{C}_2^{(Y)})=C_3,$ then there exist matrices $A'_{0,2}\in M_{k_0\times (N_1-k_0-k_1)}(\mathcal{T}_2),$  $B_{0,1}\in M_{k_0\times (N_2-\ell_0)}(\mathcal{T}_2)$ and  $C_{0,2}\in M_{\ell_0\times (N_1-k_0-k_1)}(\mathcal{T}_2),$ such that the matrix  \vspace{-1mm}\begin{align}\label{eqnLem1.1}
\left[\begin{array}{ccc|cc}
    I_{k_0} & A_{0,1} & A_{0,2}+\gamma A_{0,2}' & 0 & B_{0,1} \\
    0 & \gamma I_{k_1} & \gamma A_{1,2} & 0 & 0 \\
    0 & 0 & \gamma C_{0,2} & I_{\ell_0} & D_{0,1} \\
\end{array}\right]=\left[\begin{array}{c}
    R_1^{(2)}\\
    \gamma R_2^{(2)}\\
    S_1^{(2)}
\end{array}\right]
\vspace{-2mm}\end{align} 
is a generator matrix of the code $\mathscr{C}_2.$
\end{lemma}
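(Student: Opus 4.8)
The plan is to specialize the standard-form generator matrix \eqref{eqn1.1} to the case $\mu=2$, read off its torsion codes via the formulas of Section~\ref{prelim3}, and then match these against the prescribed codes $C_1,C_2,C_3$. By Proposition 3.2 of \cite{Borges}, $\mathscr{C}_2$ has a generator matrix of the form \eqref{eqn1.1}; since the prescribed torsion codes $C_1,C_2,C_3$ are given in systematic form with pivots in the leading $k_0$ (resp. $k_0{+}k_1$, resp. $\ell_0$) coordinates, the pivot structure of $\mathscr{C}_2$ is the leading-column one, so no column permutation is needed and \eqref{eqn1.1} applies directly. Writing it out for $\mu=2$, the code $\mathscr{C}_2$ admits a generator matrix
\[
\left[\begin{array}{ccc|cc}
 I_{k_0} & P_{0,1} & P_{0,2} & 0 & Q_{0,1}\\
 0 & \gamma I_{k_1} & \gamma P_{1,2} & 0 & 0\\
 0 & 0 & \gamma R_{0,2} & I_{\ell_0} & S_{0,1}
\end{array}\right],
\]
where, following the reduction conventions stated after \eqref{eqn1.1}, the blocks $P_{0,1},P_{1,2},Q_{0,1},S_{0,1},R_{0,2}$ have entries in $\mathcal{T}_2=\overline{\mathtt{R}}_2$ (being considered modulo $\gamma$), while $P_{0,2}$ has entries in $\mathtt{R}_2$ (being considered modulo $\gamma^2$). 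First I would record the torsion-code generators of this code: $Tor_1(\mathscr{C}_2^{(X)})$ is generated by $[\,I_{k_0}\ \overline{P_{0,1}}\ \overline{P_{0,2}}\,]$, $Tor_2(\mathscr{C}_2^{(X)})$ by $\bigl[\begin{smallmatrix} I_{k_0} & \overline{P_{0,1}} & \overline{P_{0,2}}\\ 0 & I_{k_1} & \overline{P_{1,2}}\end{smallmatrix}\bigr]$, and $Tor_1(\mathscr{C}_2^{(Y)})$ by $[\,I_{\ell_0}\ \overline{S_{0,1}}\,]$, all in systematic form.

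Next I would exploit the hypotheses $Tor_1(\mathscr{C}_2^{(X)})=C_1$, $Tor_2(\mathscr{C}_2^{(X)})=C_2$ and $Tor_1(\mathscr{C}_2^{(Y)})=C_3$. Since $C_1,C_2,C_3$ are given in systematic form with identity blocks in the same pivot columns, and the reduced systematic generator matrix of a linear code over the field $\mathcal{T}_2$ is unique, equality of the codes forces equality of the corresponding submatrices: this yields $\overline{P_{0,1}}=A_{0,1}$, $\overline{P_{1,2}}=A_{1,2}$, $\overline{P_{0,2}}=A_{0,2}$ and $\overline{S_{0,1}}=D_{0,1}$. As $P_{0,1},P_{1,2},S_{0,1}$ already have entries in $\mathcal{T}_2$, one may take $P_{0,1}=A_{0,1}$, $P_{1,2}=A_{1,2}$ and $S_{0,1}=D_{0,1}$ outright.

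The only block with entries genuinely in $\mathtt{R}_2$ is $P_{0,2}$. Using the decomposition $\mathtt{s}=\mathtt{s}_0+\mathtt{s}_1\gamma$ of elements of $\mathtt{R}_2$ (with $\mathtt{s}_0,\mathtt{s}_1\in\mathcal{T}_2$), I would write $P_{0,2}=A_{0,2}+\gamma A'_{0,2}$ with $A'_{0,2}\in M_{k_0\times(N_1-k_0-k_1)}(\mathcal{T}_2)$; here the constraint $\overline{P_{0,2}}=A_{0,2}$ is precisely the statement that the $\gamma^0$-part of $P_{0,2}$ equals $A_{0,2}$, leaving $A'_{0,2}$ unconstrained by the torsion data. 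Finally, setting $B_{0,1}:=Q_{0,1}$ and $C_{0,2}:=R_{0,2}$, neither of which is pinned down by the torsion codes, the generator matrix takes exactly the form \eqref{eqnLem1.1}, which completes the argument. The one delicate point I expect is the appeal to uniqueness of the systematic form over $\mathcal{T}_2$ to transfer the code equalities $Tor_i=C_i$ into equalities of the individual blocks; once this is secured, the splitting of $P_{0,2}$ via the Teichm\"uller decomposition and the renaming of the remaining free blocks are routine.
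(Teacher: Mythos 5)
Your proof is correct and takes essentially the same approach as the paper: the paper's one-line proof (``working as in Lemma 3.1 of Yadav and Sharma and choosing $B_{0,1}$ and $C_{0,2}$ arbitrarily'') is precisely your argument spelled out, namely specialize the standard form \eqref{eqn1.1} to $\mu=2$, identify the mod-$\gamma$ blocks with $A_{0,1},A_{0,2},A_{1,2},D_{0,1}$ via uniqueness of reduced systematic generator matrices over the residue field, split the single $\mathtt{R}_2$-valued block as $A_{0,2}+\gamma A'_{0,2}$ using the Teichm\"uller decomposition, and take the remaining unconstrained blocks as $B_{0,1}$ and $C_{0,2}$. One cosmetic remark: your discussion of why no column permutation is needed can be dropped, since the hypotheses $Tor_1(\mathscr{C}_2^{(X)})=C_1,$ $Tor_2(\mathscr{C}_2^{(X)})=C_2,$ $Tor_1(\mathscr{C}_2^{(Y)})=C_3$ are, by the paper's conventions in Section \ref{prelim3}, only meaningful relative to a generator matrix of $\mathscr{C}_2$ already in the form \eqref{eqn1.1} in the given coordinates, so the existence of such a matrix is implicit in the hypothesis (as it also is in the paper's proof).
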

\begin{proof}
Working as in Lemma 3.1 of Yadav and Sharma \cite{Yadav} and choosing the matrices $B_{0,1}$ and $C_{0,2}$ arbitrarily, the desired result follows. 
\end{proof}

For the rest of this section, we assume that the codes $C_1,$ $C_2$ and $C_3$ satisfy $C_1\subseteq C_1^{\perp_E},$ $C_2\subseteq C_1^{\perp_E}$ and $C_3\subseteq C_3^{\perp_E}.$ It is easy to see that  $2k_0+k_1\leq N_1$ and $2\ell_0\leq N_2.$ Further, working as in Remark 2.2 of Yadav and Sharma \cite{Yadav}, we see that the matrices $A_{0,2}$ and $D_{0,1}$ are of full row-ranks. Now, in the following proposition, we  count all Euclidean self-orthogonal $\mathtt{R}_2\mathtt{R}_1$-linear codes $\mathscr{C}_2$ of block-length $(N_1,N_2)$ satisfying $Tor_1(\mathscr{C}_2^{(X)})=C_1,$ $Tor_2(\mathscr{C}_2^{(X)})=C_2$ and $Tor_1(\mathscr{C}_3^{(Y)})=C_3.$
\vspace{-1mm}\begin{prop}\label{Thm4.1}
There are precisely \vspace{-2mm}$$q^{k_0(N_1+N_2-\frac{3}{2}k_0-k_1-\ell_0-\frac{1}{2})+\ell_0(N_1-2k_0-k_1)}\vspace{-2mm}$$ distinct Euclidean self-orthogonal $\mathtt{R}_2\mathtt{R}_1$-linear codes $\mathscr{C}_2$ of block-length $(N_1,N_2)$ satisfying $Tor_1(\mathscr{C}_2^{(X)})=C_1,$ $Tor_2(\mathscr{C}_2^{(X)})=C_2$ and $Tor_1(\mathscr{C}_2^{(Y)})=C_3.$ 
\end{prop}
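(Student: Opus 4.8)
The plan is to start from the explicit generator matrix \eqref{eqnLem1.1} supplied by Lemma \ref{Lem4.1}, in which the three blocks $A'_{0,2}\in M_{k_0\times (N_1-k_0-k_1)}(\mathcal{T}_2)$, $B_{0,1}\in M_{k_0\times (N_2-\ell_0)}(\mathcal{T}_2)$ and $C_{0,2}\in M_{\ell_0\times (N_1-k_0-k_1)}(\mathcal{T}_2)$ are free, and to count precisely those triples for which the resulting code $\mathscr{C}_2$ is Euclidean self-orthogonal. By Lemma \ref{Lem6.1} with $\mu=2$, self-orthogonality is equivalent to the six congruences obtained by taking $\diamond$-products of the row blocks $R_1^{(2)}$, $\gamma R_2^{(2)}$ and $S_1^{(2)}$ modulo $\gamma^2$. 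I would compute all six products directly; since $\gamma^2=0$ in $\mathtt{R}_2$, the product $(\gamma R_2^{(2)})\diamond(\gamma {R_2^{(2)}}^T)$ is identically zero, and the remaining products reduce to short expressions in the free blocks after substituting the structural factors of $\gamma$ coming from $\gamma C_{0,2}$ in $S_1^{(2)}$ and from the $\diamond$-definition on the $\mathtt{R}_{e-1}$-part.

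The key point, and the place where care is needed, is the interplay between the field addition $\oplus$ on $\mathcal{T}_2=\overline{\mathtt{R}}_e$ and the ambient addition of $\mathtt{R}_2$. By hypothesis the prescribed torsion codes satisfy $C_1\subseteq C_1^{\perp_E}$, $C_2\subseteq C_1^{\perp_E}$ and $C_3\subseteq C_3^{\perp_E}$ over $\mathbb{F}_q\simeq\mathcal{T}_2$, so the corresponding field Gram and cross-Gram matrices vanish; evaluated in $\mathtt{R}_2$, however, these expressions only lie in $\langle\gamma\rangle$, producing fixed symmetric ``carry'' matrices, such as $E\in Sym_{k_0}(\overline{\mathtt{R}}_e)$ with $I_{k_0}+A_{0,1}A_{0,1}^T+A_{0,2}A_{0,2}^T=\gamma E$, and their analogues for the $C_2$--$C_1$ cross term and for $C_3$. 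Using these identities together with $\gamma^2=0$, one checks that the four products $(\gamma R_2^{(2)})\diamond(\gamma {R_2^{(2)}}^T)$, $R_1^{(2)}\diamond {R_2^{(2)}}^T$, $(\gamma R_2^{(2)})\diamond {S_1^{(2)}}^T$ and $S_1^{(2)}\diamond {S_1^{(2)}}^T$ are all $\equiv 0\pmod{\gamma^2}$ automatically, so they impose no constraint. Exactly two genuine conditions survive, each read off by setting the $\gamma$-coefficient to zero: from $R_1^{(2)}\diamond {R_1^{(2)}}^T$,
$$A_{0,2}{A'_{0,2}}^T+A'_{0,2}A_{0,2}^T+B_{0,1}B_{0,1}^T=-E\quad\text{in }Sym_{k_0}(\overline{\mathtt{R}}_e),$$
and from $R_1^{(2)}\diamond {S_1^{(2)}}^T$,
$$A_{0,2}C_{0,2}^T=-B_{0,1}D_{0,1}^T\quad\text{in }M_{k_0\times\ell_0}(\overline{\mathtt{R}}_e).$$

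To finish, I would count the admissible triples $(A'_{0,2},B_{0,1},C_{0,2})$. The block $B_{0,1}$ is unconstrained, contributing $q^{k_0(N_2-\ell_0)}$. For each fixed $B_{0,1}$, the first condition reads $\Phi_{A_{0,2}}(A'_{0,2})=-E-B_{0,1}B_{0,1}^T$; since $A_{0,2}$ has full row rank $k_0$ (established just before the statement), Lemma \ref{Matrixlemma}(a) gives that $\Phi_{A_{0,2}}$ is surjective, so this equation is solvable for \emph{every} right-hand side and has exactly $q^{\frac{k_0}{2}(2(N_1-k_0-k_1)-k_0-1)}$ solutions, a count independent of both the carry $E$ and $B_{0,1}$. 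Again using that $A_{0,2}$ has full row rank, Lemma \ref{Matrixlemma}(b) applied to $A_{0,2}X=-B_{0,1}D_{0,1}^T$ with $X={C_{0,2}}^T$ yields exactly $q^{(N_1-2k_0-k_1)\ell_0}$ solutions for each $B_{0,1}$. As the two surviving equations involve disjoint unknowns once $B_{0,1}$ is fixed, the total count is the product
$$q^{k_0(N_2-\ell_0)}\cdot q^{\frac{k_0}{2}(2(N_1-k_0-k_1)-k_0-1)}\cdot q^{(N_1-2k_0-k_1)\ell_0},$$
whose exponent simplifies to $k_0\bigl(N_1+N_2-\frac{3}{2}k_0-k_1-\ell_0-\frac{1}{2}\bigr)+\ell_0(N_1-2k_0-k_1)$, as claimed. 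I expect the main obstacle to be the bookkeeping of the carry matrices: one must verify that passing from $\mathcal{T}_2$-arithmetic to $\mathtt{R}_2$-arithmetic only shifts the targets of the two surviving equations by fixed matrices, which by the surjectivity in Lemma \ref{Matrixlemma} leaves the solution counts unchanged.
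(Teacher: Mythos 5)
Your proposal is correct and follows essentially the same route as the paper's proof: starting from the generator matrix of Lemma \ref{Lem4.1}, reducing self-orthogonality via Lemma \ref{Lem6.1} to the two congruences in $A'_{0,2}$ (handled by Lemma \ref{Matrixlemma}(a)) and $C_{0,2}$ (handled by Lemma \ref{Matrixlemma}(b)) after choosing $B_{0,1}$ freely, and multiplying the counts. Your explicit treatment of the ``carry'' matrices arising from the passage between $\mathcal{T}_2$-arithmetic and $\mathtt{R}_2$-arithmetic is a slightly more careful rendering of what the paper leaves implicit, but the argument is the same in substance.
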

\begin{proof}
Let $\mathscr{C}_2$ be a Euclidean self-orthogonal $\mathtt{R}_2\mathtt{R}_1$-linear code of block-length $(N_1,N_2)$ satisfying $Tor_1(\mathscr{C}_2^{(X)})=C_1,$ $Tor_2(\mathscr{C}_2^{(X)})=C_2$ and $Tor_1(\mathscr{C}_2^{(Y)})=C_3.$ By Lemma \ref{Lem4.1},  we see that there exist matrices $B_{0,1}\in M_{k_0\times (N_2-\ell_0)}(\mathcal{T}_2),$ $A'_{0,2}\in M_{k_0\times (N_1-k_0-k_1)}(\mathcal{T}_2)$ and  $C_{0,2}\in M_{\ell_0\times (N_1-k_0-k_1)}(\mathcal{T}_2),$ such that the code $\mathscr{C}_2$ has a generator matrix of the form \eqref{eqnLem1.1}. Here, we need to count the choices for the matrices $A'_{0,2},$  $B_{0,1}$ and  $C_{0,2}.$ To do this, we see, by Lemma \ref{Lem6.1}, that the code $\mathscr{C}_2$ is Euclidean self-orthogonal if and only if 
$R_1^{(2)}\diamond (R_1^{(2)})^T\equiv 0~(\text{mod}~\gamma^2),$ $R_1^{(2)}\diamond (R_2^{(2)})^T\equiv 0~(\text{mod}~\gamma),$ $R_1^{(2)}\diamond (S_1^{(2)})^T\equiv 0~(\text{mod}~\gamma^2)$ and $S_1^{(2)}\diamond (S_1^{(2)})^T\equiv 0~(\text{mod}~\gamma^2),$ which is equivalent to the following system of matrix congruences: \vspace{-2mm}\begin{align}
I_{k_0}+A_{0,1}A_{0,1}^T+A_{0,2}A_{0,2}^T+\gamma(A_{0,2}A_{0,2}'^{T}+A_{0,2}'A_{0,2}^T+B_{0,1}B_{0,1}^T)~\equiv 0~(\text{mod}~\gamma^2),\label{eqn1.6}\\
A_{0,1}+ A_{0,2}A_{1,2}^T~\equiv 0~(\text{mod}~\gamma)\label{eqn1.7},\\
 A_{0,2}C_{0,2}^T+ B_{0,1}D_{0,1}^T~\equiv 0~(\text{mod}~\gamma)\label{eqn1.8},\\
 I_{\ell_0}+D_{0,1}D_{0,1}^T~\equiv 0~(\text{mod}~\gamma)\label{eqn1.9}.
\vspace{-2mm}\end{align}
 Since $C_2 \subseteq C_1^{\perp_E}$ and $C_3 \subseteq C_3^{\perp_E},$  we see that 
 \eqref{eqn1.7} and \eqref{eqn1.9} hold. We first choose the matrix $B_{0,1}$ arbitrarily, \textit{i.e.,} the matrix $B_{0,1}$ has precisely $q^{k_0(N_2-\ell_0)}$ distinct choices. Since $C_1\subseteq C_1^{\perp_E}$ and the matrix $A_{0,2}$ is of full row-rank, we see, by applying Lemma \ref{Matrixlemma}(a), that for a given choice of $B_{0,1},$ there are precisely $q^{\frac{k_0}{2}(2N_1-3k_0-2k_1-1)}$ distinct choices for the matrix $A_{0,2}'$  satisfying \eqref{eqn1.6}. Further, by Lemma \ref{Matrixlemma}(b), we see that there are  precisely $q^{\ell_0(N_1-2k_0-k_1)}$ distinct choices for the matrix $C_{0,2}$   satisfying  \eqref{eqn1.8}. Moreover, each of the distinct choices for the matrices $A_{0,2}',$  $B_{0,1}$ and  $C_{0,2}$ gives rise to a   distinct Euclidean self-orthogonal $\mathtt{R}_2\mathtt{R}_1$-linear code of block-length $(N_1,N_2)$ satisfying $Tor_1(\mathscr{C}_2^{(X)})=C_1,$ $Tor_2(\mathscr{C}_2^{(X)})=C_2$ and $Tor_1(\mathscr{C}_2^{(Y)})=C_3.$   From this, the desired result follows. 
\end{proof}

In the following theorem, we obtain the number $\mathfrak{N}_{2}(N_1,N_2;\textbf{k};\boldsymbol{\ell}),$ where $\textbf{k}=(k_0,k_1)$ and $\boldsymbol{\ell}=(\ell_0).$
\vspace{-1mm}\begin{thm}\label{Thm4.2} For $\textbf{k}=(k_0,k_1)$ and $\boldsymbol{\ell}=(\ell_0),$ we have 
\vspace{-2mm}\begin{equation}
\mathfrak{N}_{2}(N_1,N_2;\textbf{k}; \boldsymbol{\ell})=\left\{\begin{array}{cl}\sigma_q(N_1,k_0)\sigma_q(N_2,\ell_0)\qbin{
N_1-2k_0}{
k_1}{q} q^{\Theta_2(\textbf{k}; \boldsymbol{\ell})} & \text{if $2k_0+k_1\leq N_1$ and $2\ell_0\leq N_2$};\\
0 & \text{otherwise,}\end{array} \right.\vspace{-2mm}\end{equation}
where the numbers  $\sigma_q(N_1,k_0)$\textquotesingle s and $\sigma_q(N_2,\ell_0)$\textquotesingle s are given by \eqref{sigmaE}, and $\Theta_2(\textbf{k}; \boldsymbol{\ell})$ is given by \vspace{-2mm}\begin{equation}\label{theta2}\Theta_2(\textbf{k}; \boldsymbol{\ell})=k_0(N_1+N_2-\tfrac{3}{2}k_0-k_1-\ell_0-\tfrac{1}{2})+\ell_0(N_1-2k_0-k_1).\vspace{-2mm}\end{equation}
\end{thm}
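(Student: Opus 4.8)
The plan is to enumerate these codes by partitioning them according to their torsion codes and then applying Proposition \ref{Thm4.1}. First I would observe that every Euclidean self-orthogonal $\mathtt{R}_2\mathtt{R}_1$-linear code $\mathscr{C}_2$ of type $\{k_0,k_1;\ell_0\}$ determines the triple $(C_1,C_2,C_3)=\bigl(Tor_1(\mathscr{C}_2^{(X)}),\,Tor_2(\mathscr{C}_2^{(X)}),\,Tor_1(\mathscr{C}_2^{(Y)})\bigr)$, where $C_1,C_2$ are linear codes of length $N_1$ over $\mathcal{T}_2\simeq\mathbb{F}_q$ of dimensions $k_0$ and $k_0+k_1$ and $C_3$ is a linear code of length $N_2$ over $\mathcal{T}_2$ of dimension $\ell_0$. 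By Lemma \ref{Lem1.1} these satisfy $C_1\subseteq C_2\subseteq C_1^{\perp_E}$ and $C_3\subseteq C_3^{\perp_E}$; in particular $2k_0+k_1\leq N_1$ and $2\ell_0\leq N_2$, which forces $\mathfrak{N}_2(N_1,N_2;\textbf{k};\boldsymbol{\ell})=0$ whenever either inequality fails, accounting for the second branch of the formula.

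Next I would count the admissible triples $(C_1,C_2,C_3)$. The code $C_1$ is an arbitrary Euclidean self-orthogonal code of length $N_1$ and dimension $k_0$ over $\mathbb{F}_q$, contributing $\sigma_q(N_1,k_0)$ choices by \eqref{sigmaE}, and similarly $C_3$ contributes $\sigma_q(N_2,\ell_0)$ choices. For $C_2$, with $C_1$ fixed, the condition $C_1\subseteq C_2\subseteq C_1^{\perp_E}$ together with $\dim C_2=k_0+k_1$ says that $C_2/C_1$ is a $k_1$-dimensional subspace of $C_1^{\perp_E}/C_1$. Since $\dim C_1^{\perp_E}=N_1-k_0$, this quotient has dimension $(N_1-k_0)-k_0=N_1-2k_0$, so the number of choices of $C_2$ is $\qbin{N_1-2k_0}{k_1}{q}$, a value depending only on the type and not on the particular $C_1$. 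Multiplying, the number of admissible triples is $\sigma_q(N_1,k_0)\,\sigma_q(N_2,\ell_0)\,\qbin{N_1-2k_0}{k_1}{q}$.

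Finally, for each admissible triple Proposition \ref{Thm4.1} shows there are exactly $q^{\Theta_2(\textbf{k};\boldsymbol{\ell})}$ Euclidean self-orthogonal codes $\mathscr{C}_2$ with $Tor_1(\mathscr{C}_2^{(X)})=C_1$, $Tor_2(\mathscr{C}_2^{(X)})=C_2$ and $Tor_1(\mathscr{C}_2^{(Y)})=C_3$, where $\Theta_2(\textbf{k};\boldsymbol{\ell})$ is as in \eqref{theta2}; crucially this per-triple count depends only on the type. Since distinct triples index disjoint families of codes and the assignment $\mathscr{C}_2\mapsto(C_1,C_2,C_3)$ realizes each code in exactly one such family, summing the constant count $q^{\Theta_2(\textbf{k};\boldsymbol{\ell})}$ over all admissible triples multiplies it by the number of triples, giving the stated product.

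The main obstacle I anticipate lies in the two uniformity claims that make the product decomposition legitimate: that the number of intermediate codes $C_2$ is the same for every self-orthogonal $C_1$ (handled cleanly through the quotient $C_1^{\perp_E}/C_1$, whose dimension $N_1-2k_0$ is fixed by the type), and that the enumeration of Proposition \ref{Thm4.1} is genuinely independent of the chosen admissible triple. One must also verify that $\mathscr{C}_2\mapsto(C_1,C_2,C_3)$ is surjective onto the admissible triples and that its fibres are exactly the families counted by Proposition \ref{Thm4.1}, so that no code is counted twice or omitted.
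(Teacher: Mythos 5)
Your proposal is correct and follows essentially the same route as the paper: partition the self-orthogonal codes by their torsion-code triples, count the admissible triples via $\sigma_q(N_1,k_0)$, $\sigma_q(N_2,\ell_0)$ and the Gaussian coefficient $\qbin{N_1-2k_0}{k_1}{q}$, and multiply by the type-dependent per-triple count $q^{\Theta_2(\textbf{k};\boldsymbol{\ell})}$ from Proposition \ref{Thm4.1}. Your added care about the uniformity of the $C_2$-count and the disjointness of the fibres is exactly the (implicit) content of the paper's appeal to Lemma \ref{Lem4.1} and Proposition \ref{Thm4.1}.
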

\vspace{-3mm}\begin{proof}Let $\mathscr{D}_2$ be a Euclidean self-orthogonal $\mathtt{R}_2\mathtt{R}_1$-linear  code of block-length $(N_1,N_2)$ and type $\{k_0,k_1;\ell_0\}$  satisfying $Tor_1(\mathscr{D}_2^{(X)})=D_1,$ $Tor_2(\mathscr{D}_2^{(X)})=D_2,$ and $Tor_1(\mathscr{D}_2^{(Y)})=D_3.$ Here, we have $\dim D_1=k_0,$ $\dim D_2=k_0+k_1$ and $\dim D_3=\ell_0.$ Further, by Lemma \ref{Lem1.1}, we see that $D_1\subseteq D_2\subseteq D_1^{\perp_E}$ and $D_3\subseteq D_3^{\perp_E}.$  Also, by Remark \ref{Remark1.1}, we note that  $2k_0+k_1\leq N_1$ and $2\ell_0\leq N_2.$ We further see, by equation \eqref{sigmaE}, that the codes $D_1$ and $D_3$ have precisely $\sigma_q(N_1,k_0)$ and $\sigma_q(N_2,\ell_0)$ distinct choices, respectively.  Further, for a given choice of $D_1,$ one can easily see that there are precisely $
\qbin{N_1-2k_0}{
k_1}{q}$ distinct choices for the code $D_2$ satisfying $D_1\subseteq D_2\subseteq D_1^{\perp_E}.$  From this and by applying Lemma \ref{Lem4.1} and Proposition \ref{Thm4.1}, the desired result follows.
\end{proof}
In the following theorem, we obtain an enumeration formula for $\mathfrak{N}_{2}(N_1,N_2).$
\vspace{-1mm}\begin{thm}\label{Thm4.3}
 We have 
 \vspace{-2mm}\begin{equation*}
\mathfrak{N}_{2}(N_1,N_2)=\sum\limits_{k_0=0}^{\floor{\frac{N_1}{2}}}\sum\limits_{k_1=0}^{N_1-2k_0}\sum\limits_{\ell_0=0}^{\floor{\frac{N_2}{2}}}\sigma_q(N_1,k_0)\sigma_q(N_2,\ell_0)\qbin{
N_1-2k_0}{
k_1}{q}q^{k_0(N_1+N_2-\frac{3}{2}k_0-k_1-\ell_0-\frac{1}{2})+\ell_0(N_1-2k_0-k_1)}, 
 \vspace{-2mm}\end{equation*}
 where the numbers $\sigma_q(N_1,k_0)$\textquotesingle s and $\sigma_q(N_2,\ell_0)$\textquotesingle s  are given by \eqref{sigmaE}.
\end{thm}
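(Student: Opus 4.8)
The plan is to obtain $\mathfrak{N}_{2}(N_1,N_2)$ by partitioning the collection of all Euclidean self-orthogonal $\mathtt{R}_2\mathtt{R}_1$-linear codes of block-length $(N_1,N_2)$ according to their type and then summing the type-wise counts supplied by Theorem \ref{Thm4.2}. The crucial input is the uniqueness of the type $\{k_0,k_1;\ell_0\}$ of any $\mathtt{R}_2\mathtt{R}_1$-linear code of block-length $(N_1,N_2)$, recorded in Section \ref{prelim3} via Proposition 3.2 of Borges \etal~\cite{Borges}. This guarantees that each such self-orthogonal code lies in exactly one type class, so that the full collection is the disjoint union of the type classes; hence $\mathfrak{N}_{2}(N_1,N_2)=\sum \mathfrak{N}_{2}(N_1,N_2;\textbf{k};\boldsymbol{\ell})$, where the sum runs over all admissible $\textbf{k}=(k_0,k_1)$ and $\boldsymbol{\ell}=(\ell_0)$ of non-negative integers.

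First I would substitute the closed-form expression from Theorem \ref{Thm4.2} for each summand. Since that theorem already declares $\mathfrak{N}_{2}(N_1,N_2;\textbf{k};\boldsymbol{\ell})=0$ whenever $2k_0+k_1>N_1$ or $2\ell_0>N_2$, the only types that contribute to the sum are those satisfying $2k_0+k_1\leq N_1$ and $2\ell_0\leq N_2$. This is precisely where the ranges of summation come from: the inequality $2k_0+k_1\leq N_1$ together with $k_1\geq 0$ forces $0\leq k_0\leq\floor{\frac{N_1}{2}}$ and then $0\leq k_1\leq N_1-2k_0$, while $2\ell_0\leq N_2$ forces $0\leq\ell_0\leq\floor{\frac{N_2}{2}}$. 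Discarding the vanishing terms therefore leaves exactly the triple sum appearing in the statement, and inserting the value of $\Theta_2(\textbf{k};\boldsymbol{\ell})$ from \eqref{theta2} into the exponent of $q$ yields the claimed formula.

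There is no genuine obstacle here: the argument is essentially a bookkeeping step that assembles the type-wise enumeration of Theorem \ref{Thm4.2} into a global count. The only point requiring a little care is confirming that the stated summation ranges capture all and only the admissible types, so that no contributing code is omitted and no vanishing type is spuriously included; the elementary inequalities above settle this. Consequently the proof reduces to invoking the uniqueness of types, summing the formula of Theorem \ref{Thm4.2} over the admissible ranges, and simplifying the exponent via \eqref{theta2}.
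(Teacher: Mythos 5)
Your proposal is correct and follows essentially the same route as the paper, which proves Theorem \ref{Thm4.3} by directly summing the type-wise counts of Theorem \ref{Thm4.2} over all admissible types. Your additional bookkeeping — invoking uniqueness of types to justify the disjoint decomposition and checking that the summation ranges exactly capture the non-vanishing terms — is precisely what the paper's phrase ``follows immediately'' leaves implicit.
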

\vspace{-3mm}\begin{proof}
It follows immediately from Theorem \ref{Thm4.2}.
\end{proof}
\vspace{-4mm}\begin{example}\label{ex4.1} Using Magma, we see that there are precisely $5$ and $2635$ distinct non-zero Euclidean self-orthogonal $\mathbb{Z}_9\mathbb{Z}_3$-linear codes of block-lengths $(2,2)$ and $(3,3),$ respectively.  This agrees with Theorem \ref{Thm4.3}.
\end{example}
As a consequence of the above theorem, we obtain an enumeration formula for self-orthogonal additive codes  of length $N$ over $\mathcal{R}_2$ in the following corollary.
\vspace{-1mm}\begin{cor}\label{Corollary3.1}
The number of self-orthogonal additive codes  of length $N$ over $\mathcal{R}_2$ is given by
\vspace{-2mm}$$\sum\limits_{k_0=0}^{\floor{\frac{Nrt}{2}}}\sum\limits_{k_1=0}^{Nrt-2k_0}\sum\limits_{\ell_0=0}^{\floor{\frac{Nr(k-t)}{2}}}\sigma_p(Nrt,k_0)\sigma_p\bigl(Nr(k-t),\ell_0\bigr)\qbin{
Nrt-2k_0}{
k_1}{p}p^{k_0(Nrk-\frac{3}{2}k_0-k_1-\ell_0-\frac{1}{2})+\ell_0(Nrt-2k_0-k_1)},\vspace{-2mm}$$
where the numbers $\sigma_p(Nrt,k_0)$\textquotesingle s and $\sigma_p\bigl(Nr(k-t),\ell_0\bigr)$\textquotesingle s  are given by \eqref{sigmaE}.
\end{cor}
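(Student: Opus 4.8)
The plan is to obtain this corollary as a direct specialization of Theorem \ref{Thm4.3}, using the duality-preserving correspondence established in Section \ref{section3}. The crucial input is Remark \ref{KEY} (which rests on Theorem \ref{thm0.1}): the $\mathbb{Z}_{p^e}$-module isomorphism $\Psi$ restricts to a bijection between self-orthogonal additive codes of length $N$ over $\mathcal{R}_2$ and Euclidean self-orthogonal $\mathbb{Z}_{p^2}\mathbb{Z}_p$-linear codes of block-length $\bigl(Nrt,Nr(k-t)\bigr)$. Since $\Psi$ is a bijection, the two families have the same cardinality, so it suffices to enumerate the latter.

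First I would observe that, in the notation of Section \ref{SectionEnumeration}, the $\mathbb{Z}_{p^2}\mathbb{Z}_p$-linear codes are precisely the $\mathtt{R}_2\mathtt{R}_1$-linear codes attached to the chain ring $\mathtt{R}_2=\mathbb{Z}_{p^2}$, whose maximal ideal $\langle p\rangle$ has nilpotency index $2$ and whose residue field $\mathbb{Z}_{p^2}/\langle p\rangle\simeq\mathbb{F}_p$ has order $q=p$; here $\mathtt{R}_1=\mathbb{Z}_{p^2}/\langle p\rangle=\mathbb{Z}_p$. As $p$ is an odd prime, $\mathbb{Z}_{p^2}$ has odd characteristic, so the standing hypothesis of Section \ref{SectionEnumeration} is satisfied and Theorem \ref{Thm4.3} applies with $q=p$. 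Setting $N_1=Nrt$ and $N_2=Nr(k-t)$ in the formula of Theorem \ref{Thm4.3} then gives the number $\mathfrak{N}_2\bigl(Nrt,Nr(k-t)\bigr)$ of Euclidean self-orthogonal $\mathbb{Z}_{p^2}\mathbb{Z}_p$-linear codes of that block-length, which by the previous paragraph equals the desired count.

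The remaining step is a routine transcription: substituting $N_1=Nrt$ and $N_2=Nr(k-t)$ into the exponent $k_0(N_1+N_2-\tfrac{3}{2}k_0-k_1-\ell_0-\tfrac{1}{2})+\ell_0(N_1-2k_0-k_1)$ and using $Nrt+Nr(k-t)=Nrk$ collapses it to the exponent displayed in the corollary, while the summation ranges, the factors $\sigma_p(Nrt,k_0)$ and $\sigma_p\bigl(Nr(k-t),\ell_0\bigr)$, and the Gaussian binomial coefficient carry over verbatim under the same substitution. I do not expect any genuine obstacle: all the mathematical content is supplied by Theorem \ref{thm0.1} and Theorem \ref{Thm4.3}, and the proof reduces to verifying that the specialization $q=p$, $N_1=Nrt$, $N_2=Nr(k-t)$ reproduces the stated expression. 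The only point meriting a line of justification is that the odd-characteristic hypothesis required by Theorem \ref{Thm4.3} is inherited by $\mathbb{Z}_{p^2}$ from the assumption that $p$ is odd, which is immediate.
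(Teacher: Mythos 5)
Your proposal is correct and coincides with the paper's own proof: the paper likewise invokes Remark \ref{KEY} (resting on Theorem \ref{thm0.1}) to identify self-orthogonal additive codes over $\mathcal{R}_2$ with Euclidean self-orthogonal $\mathbb{Z}_{p^2}\mathbb{Z}_p$-linear codes of block-length $\bigl(Nrt,Nr(k-t)\bigr)$, and then specializes Theorem \ref{Thm4.3} at $N_1=Nrt$, $N_2=Nr(k-t)$, $q=p$. Your additional remark that the odd-characteristic hypothesis of Section \ref{SectionEnumeration} is inherited from $p$ being odd is a harmless (and accurate) elaboration of what the paper leaves implicit.
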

\vspace{-3mm}\begin{proof} By Remark \ref{KEY}, we see that  the number of self-orthogonal additive codes  of length $N$ over $\mathcal{R}_2$ is equal to the number of distinct Euclidean self-orthogonal $\mathbb{Z}_{p^2}\mathbb{Z}_{p}$-linear codes of block-length $\bigl(Nrt,Nr(k-t)\bigr).$ Now on taking $N_1=Nrt$ and $N_2=Nr(k-t)$ in Theorem  \ref{Thm4.3}, the desired result follows immediately.
\end{proof}
\vspace{-4mm}\begin{example}By Corollary \ref{Corollary3.1}, we see that 
there are precisely $5$ and $2635$ distinct non-zero self-orthogonal additive codes of lengths $2$ and $3$ over $\mathbb{Z}_9[y]/\langle y^2-3,3y \rangle,$ respectively. This agrees with Example \ref{ex4.1} and Magma computations.
\end{example}
In the following theorem, we derive a necessary and sufficient condition under which a Euclidean self-dual $\mathtt{R}_2\mathtt{R}_1$-linear code of block-length $(N_1,N_2)$ exists. We further  obtain an enumeration formula for the number $\mathfrak{M}_{2}(N_1,N_2).$
\vspace{-1mm}\begin{thm}\label{Lemma3.3}
There exists a Euclidean self-dual $\mathtt{R}_2\mathtt{R}_1$-linear code of block-length $(N_1,N_2)$ if and only if $N_2$ is even and $(-1)^{\frac{N_2}{2}}$ is a square in $\mathcal{T}_e(\simeq \mathbb{F}_{q}).$ Furthermore, when $N_2$ is even and $(-1)^{\frac{N_2}{2}}$ is a square in $\mathcal{T}_e,$ we have 
\vspace{-2mm}\begin{equation*}
\mathfrak{M}_{2}(N_1,N_2)=\sum\limits_{k_0=0}^{\floor{\frac{N_1}{2}}}\sigma_q(N_1,k_0)\sigma_q(N_2,\tfrac{N_2}{2})q^{\frac{k_0}{2}(N_2+k_0-1)}, 
 \vspace{-2mm}\end{equation*}
 where the numbers $\sigma_q(N_1,k_0)$\textquotesingle s and $\sigma_q(N_2,\frac{N_2}{2})$\textquotesingle s  are given by \eqref{sigmaE}.
 \end{thm}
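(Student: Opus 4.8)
The plan is to reduce the self-dual enumeration to the already-established self-orthogonal count of Proposition \ref{Thm4.1} by a size argument. First I would extract the constraints forced on a Euclidean self-dual $\mathtt{R}_2\mathtt{R}_1$-linear code $\mathscr{C}_2$ of type $\{k_0,k_1;\ell_0\}$. The self-dual case of Remark \ref{Remark1.1} with $\mu=2$ forces $2k_0+k_1=N_1$ and $2\ell_0=N_2$, so $N_2$ is even, $\ell_0=\tfrac{N_2}{2}$, and $k_1=N_1-2k_0$ is determined by $k_0\in\{0,1,\ldots,\floor{\frac{N_1}{2}}\}$. By the self-dual part of Lemma \ref{Lem1.1}, one has $Tor_2(\mathscr{C}_2^{(X)})=Tor_1(\mathscr{C}_2^{(X)})^{\perp_E}$, while $C_3:=Tor_1(\mathscr{C}_2^{(Y)})$ is a Euclidean self-dual code of length $N_2$ over $\mathcal{T}_e\simeq\mathbb{F}_q$; by Remark \ref{Remark2.2} such a $C_3$ exists only if $N_2$ is even and $(-1)^{N_2/2}$ is a square in $\mathbb{F}_q$. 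This yields the necessity of the stated existence condition.

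The crux is the observation that a self-orthogonal $\mathtt{R}_2\mathtt{R}_1$-linear code whose torsion codes are $C_1$ (self-orthogonal of dimension $k_0$), $C_2=C_1^{\perp_E}$, and $C_3$ (self-dual of length $N_2$) is automatically self-dual. Indeed, its type is $\{k_0,N_1-2k_0;\tfrac{N_2}{2}\}$, so $|\mathscr{C}_2|=q^{2k_0+k_1+\ell_0}=q^{N_1+N_2/2}$; combined with the identity $|\mathscr{C}_2||\mathscr{C}_2^{\perp_E}|=q^{2N_1+N_2}$, this gives $|\mathscr{C}_2^{\perp_E}|=|\mathscr{C}_2|$, whence $\mathscr{C}_2\subseteq\mathscr{C}_2^{\perp_E}$ forces $\mathscr{C}_2=\mathscr{C}_2^{\perp_E}$. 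Conversely, the first paragraph shows that every self-dual code carries exactly these torsion data.

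With this in hand, I would enumerate by stratifying over $k_0$. For each $k_0$ there are $\sigma_q(N_1,k_0)$ choices for $C_1$; then $C_2=C_1^{\perp_E}$ is forced, and $C_3$ ranges over the $\sigma_q(N_2,\tfrac{N_2}{2})$ self-dual codes of length $N_2$. Since the hypotheses $C_1\subseteq C_1^{\perp_E}$, $C_2\subseteq C_1^{\perp_E}$, $C_3\subseteq C_3^{\perp_E}$ of Proposition \ref{Thm4.1} all hold, that proposition counts the self-orthogonal lifts with these torsion codes as $q^{\Theta_2(\textbf{k};\boldsymbol{\ell})}$, and by the previous paragraph each such lift is self-dual. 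Substituting $k_1=N_1-2k_0$ and $\ell_0=\tfrac{N_2}{2}$ into \eqref{theta2} annihilates the summand $\ell_0(N_1-2k_0-k_1)$ and collapses $\Theta_2$ to $\tfrac{k_0}{2}(N_2+k_0-1)$. Summing over $0\leq k_0\leq\floor{\frac{N_1}{2}}$ yields the claimed formula for $\mathfrak{M}_2(N_1,N_2)$; its $k_0=0$ term equals $\sigma_q(N_2,\tfrac{N_2}{2})$, which is positive precisely when $N_2$ is even and $(-1)^{N_2/2}$ is a square, giving the sufficiency half of the existence claim.

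The main obstacle is the bookkeeping needed to invoke Proposition \ref{Thm4.1} without over- or under-counting: I must verify that assigning to each self-dual code its torsion triple $(C_1,C_2,C_3)$ together with the lift parameters $(A'_{0,2},B_{0,1},C_{0,2})$ of \eqref{eqnLem1.1} is a bijection onto the admissible data. This rests on the uniqueness of the type and of the standard form (Proposition 3.2 of \cite{Borges}) together with Lemma \ref{Lem4.1}. Once this is established, the substitution simplifying $\Theta_2$ and the verification that $\sigma_q(N_2,\tfrac{N_2}{2})$ governs existence are routine.
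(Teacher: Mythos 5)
Your proposal is correct and follows essentially the same route as the paper, whose proof is a one-line citation of Theorem \ref{Thm4.3}, Lemma \ref{Lem1.1} and Remarks \ref{Remark1.1} and \ref{Remark2.2}: you unwind exactly those ingredients, restricting the self-orthogonal enumeration (Proposition \ref{Thm4.1}/Theorem \ref{Thm4.2}) to the types $\{k_0,N_1-2k_0;\tfrac{N_2}{2}\}$ forced by self-duality. Your explicit cardinality argument ($|\mathscr{C}_{2}||\mathscr{C}_{2}^{\perp_E}|=q^{2N_1+N_2}$, so self-orthogonal of maximal type implies self-dual) and the simplification of $\Theta_2$ to $\tfrac{k_0}{2}(N_2+k_0-1)$ are precisely the details the paper leaves implicit, and both check out.
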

\vspace{-3mm}\begin{proof}
It follows from Theorem \ref{Thm4.3}, Lemma \ref{Lem1.1} and Remarks \ref{Remark1.1} and \ref{Remark2.2}.
 \end{proof}
\vspace{-4mm}\begin{example}\label{Eg.4.3} Using Magma, we see that there are precisely $22$ distinct Euclidean self-dual $\mathbb{Z}_{25}\mathbb{Z}_5$-linear codes of block-length $(2,2).$ This agrees with Theorem \ref{Lemma3.3}.
\end{example}
 In the following corollary, we derive a necessary and sufficient condition for the existence of a self-dual additive code of length $N$ over $\mathcal{R}_2.$ We also obtain an enumeration formula for self-dual additive codes  of length $N$ over $\mathcal{R}_2.$ 
\vspace{-1mm}\begin{cor}\label{Corollary3.2}
There exists a self-dual additive code  of length $N$ over $\mathcal{R}_2$ if and only if $Nr(k-t)$ is even and $(-1)^{\frac{Nr(k-t)}{2}}$ is a square in $\mathbb{Z}_p.$ Furthermore, when $Nr(k-t)$ is even and $(-1)^{\frac{Nr(k-t)}{2}}$ is a square in $\mathbb{Z}_p,$  the number of self-dual additive codes  of length $N$ over $\mathcal{R}_2$ is given by
\vspace{-2mm}\small$$\sum\limits_{k_0=0}^{\floor{\frac{Nrt}{2}}}\sigma_p(Nrt,k_0)\sigma_p\bigl(Nr(k-t),\tfrac{Nr(k-t)}{2}\bigr)p^{\frac{k_0}{2}\bigl(Nr(k-t)+k_0-1\bigr)},\vspace{-2mm}$$
\normalsize where the numbers $\sigma_p(Nrt,k_0)$\textquotesingle s and $\sigma_p\bigl(Nr(k-t),\frac{Nr(k-t)}{2}\bigr)$\textquotesingle s are given by \eqref{sigmaE}.
\end{cor}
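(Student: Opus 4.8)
The plan is to obtain this corollary as an immediate specialization of Theorem \ref{Lemma3.3}, in exact parallel with the proof of Corollary \ref{Corollary3.1}. The only conceptual input is Remark \ref{KEY}: since the map $\Psi$ is a $\mathbb{Z}_{p^2}$-module isomorphism that carries $\chi$-dual codes to Euclidean dual codes (Theorem \ref{thm0.1}), its restriction sets up a bijection between self-dual additive codes $\mathcal{C}$ of length $N$ over $\mathcal{R}_2$ and Euclidean self-dual $\mathbb{Z}_{p^2}\mathbb{Z}_p$-linear codes of block-length $\bigl(Nrt, Nr(k-t)\bigr)$. Consequently a self-dual additive code of length $N$ over $\mathcal{R}_2$ exists if and only if such a mixed-alphabet self-dual code exists, and the two classes are equinumerous; thus the desired count equals $\mathfrak{M}_2\bigl(Nrt, Nr(k-t)\bigr)$.

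First I would specialize the abstract chain ring $\mathtt{R}_2$ of Theorem \ref{Lemma3.3} to $\mathbb{Z}_{p^2}$, whose maximal ideal $\langle p\rangle$ has nilpotency index $2$ and whose residue field $\mathbb{Z}_{p^2}/\langle p\rangle \simeq \mathbb{Z}_p$ has order $q = p$; correspondingly $\mathtt{R}_1 \simeq \mathbb{Z}_p$ and $\mathcal{T}_e$ is identified with $\mathbb{Z}_p$. I would then set $N_1 = Nrt$ and $N_2 = Nr(k-t)$. For the existence claim, I read off the criterion of Theorem \ref{Lemma3.3} — namely that a Euclidean self-dual code of block-length $(N_1,N_2)$ exists iff $N_2$ is even and $(-1)^{N_2/2}$ is a square in $\mathcal{T}_e \simeq \mathbb{F}_q$ — and substitute $N_2 = Nr(k-t)$ and $q = p$ to recover precisely the stated condition. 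For the enumeration, I substitute the same parameters into the formula $\sum_{k_0=0}^{\floor{N_1/2}}\sigma_q(N_1,k_0)\,\sigma_q(N_2,N_2/2)\,q^{\frac{k_0}{2}(N_2+k_0-1)}$ of Theorem \ref{Lemma3.3}, which reproduces the displayed sum verbatim.

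There is no genuine obstacle here; the only point requiring care is the parameter bookkeeping, in particular verifying that the residue field of $\mathbb{Z}_{p^2}$ has order $p$ (so that $q = p$ and not $p^2$) and that the correct block-length is $\bigl(Nrt, Nr(k-t)\bigr)$, so that the substitutions into Theorem \ref{Lemma3.3} align exactly with the quantities appearing in the corollary. All the substantive work — the recursive construction and the counting of Euclidean self-dual mixed-alphabet codes — is already carried out in Theorem \ref{Lemma3.3}, and the present statement merely transports it through the isomorphism $\Psi$.
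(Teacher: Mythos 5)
Your proposal is correct and follows exactly the paper's own proof: it invokes Remark \ref{KEY} to identify self-dual additive codes over $\mathcal{R}_2$ with Euclidean self-dual $\mathbb{Z}_{p^2}\mathbb{Z}_p$-linear codes of block-length $\bigl(Nrt,Nr(k-t)\bigr),$ and then substitutes $N_1=Nrt,$ $N_2=Nr(k-t),$ $q=p$ into Theorem \ref{Lemma3.3}. The added bookkeeping about the residue field of $\mathbb{Z}_{p^2}$ having order $p$ is a sensible check but does not change the argument.
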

\vspace{-3mm}\begin{proof} By Remark \ref{KEY}, we see that  the number of self-dual additive codes  of length $N$ over $\mathcal{R}_2$ is equal to the number of distinct Euclidean self-dual $\mathbb{Z}_{p^2}\mathbb{Z}_{p}$-linear codes of block-length $\bigl(Nrt,Nr(k-t)\bigr).$ Now on taking $N_1=Nrt$ and $N_2=Nr(k-t)$ in Theorem  \ref{Lemma3.3}, we get the desired result.
\end{proof}
\vspace{-3mm}\begin{example}By Corollary \ref{Corollary3.2}, we see that 
there are precisely $22$ distinct self-dual additive codes of length $2$ over $\mathbb{Z}_{25}[y]/\langle y^2-5,5y \rangle.$ This agrees with Example \ref{Eg.4.3} and Magma computations.
\end{example}
\vspace{-5mm}\subsection{The case $e=3$}
\vspace{-1mm}
Throughout this section, let us suppose that $e=3.$ Let $\mathscr{C}_3$ be an $\mathtt{R}_3\mathtt{R}_{2}$-linear code of block-length $(N_1,N_2)$ and type $\{k_0,k_1,k_2;\ell_0,\ell_1\}.$ Since $\mathcal{T}_3=\overline{\mathtt{R}}_3$ is the finite field of order $q,$  we note, by Section \ref{prelim3}, that the torsion codes $Tor_1(\mathscr{C}_3^{(X)}),$ $Tor_2(\mathscr{C}_3^{(X)})$ and $Tor_3(\mathscr{C}_3^{(X)})$ are linear codes of length $N_1$ and dimensions $k_0,$ $k_0+k_1$ and $k_0+k_1+k_2$ over $\mathcal{T}_3,$ respectively, while the torsion codes  $Tor_1(\mathscr{C}_3^{(Y)})$ and $Tor_2(\mathscr{C}_3^{(Y)})$ are linear codes of length $N_2$ and dimensions $\ell_0$ and $\ell_0+\ell_1$ over $\mathcal{T}_3,$ respectively. Further, if the code $\mathscr{C}_3$ is Euclidean self-orthogonal, then by Lemma \ref{Lem1.1}, we see that 
\vspace{-2mm}\begin{align*}
Tor_1(\mathscr{C}_3^{(X)})\subseteq Tor_2(\mathscr{C}_3^{(X)})\subseteq Tor_3(\mathscr{C}_3^{(X)})&\subseteq Tor_1(\mathscr{C}_3^{(X)})^{\perp_E},
\vspace{-1mm}\\ Tor_1(\mathscr{C}_3^{(Y)})\subseteq Tor_2(\mathscr{C}_3^{(Y)})&\subseteq Tor_1(\mathscr{C}_3^{(Y)})^{\perp_E}~\text{ and}\vspace{-1mm}\\ Tor_2(\mathscr{C}_3^{(X)})&\subseteq{Tor_2(\mathscr{C}_3^{(X)})}^{\perp_E}.
\vspace{-2mm}\end{align*}
Further, by Remark \ref{Remark1.1}, we have $2k_0+2k_1\leq N_1,$ $2k_0+k_1+k_2\leq N_1$  and $2\ell_0+\ell_1\leq N_2.$

Here, we will first count all Euclidean self-orthogonal $\mathtt{R}_3\mathtt{R}_2$-linear codes of block-length $(N_1,N_2)$ and type $\{k_0,k_1,k_2;\ell_0,\ell_1\}$ with prescribed torsion codes. To do this, let $E_1,$ $E_2$ and $ E_3$ be linear codes of length $N_1$ over $\mathcal{T}_3$ with dimensions $k_0,$ $k_0+k_1$ and $k_0+k_1+k_2,$ and generator matrices  \vspace{-2mm}$$[I_{k_0} ~ A_{0,1} ~ A_{0,2} ~ A_{0,3}],~~\left[\begin{array}{cccc}
    I_{k_0} & A_{0,1} & A_{0,2} & A_{0,3}\\
    0 & I_{k_1} & A_{1,2} & A_{1,3}
\end{array}\right]\text{~ and ~}\left[\begin{array}{cccc}
    I_{k_0} & A_{0,1} & A_{0,2} & A_{0,3}\\
    0 & I_{k_1} & A_{1,2} & A_{1,3}\\
    0 & 0 & I_{k_2} & A_{2,3}
\end{array}\right],\vspace{-2mm}$$ respectively, where  $A_{i,j}\in M_{k_i\times k_j}(\mathcal{T}_3)$ for $0\leq i<j\leq 2$ and $A_{i,3}\in M_{k_i\times (N_1-k_0-k_1-k_2)}(\mathcal{T}_3)$ for $0\leq i\leq 2.$ Also, let  $E_4$ and $E_5$ be linear codes of length $N_2$  over $\mathcal{T}_3$ with  dimensions $\ell_0$ and $\ell_0+\ell_1,$ and  generator matrices \vspace{-2mm}$$[I_{\ell_0} ~ D_{0,1} ~ D_{0,2}]\text{ ~ and ~}\left[\begin{array}{cccc}
    I_{\ell_0} & D_{0,1} & D_{0,2}\\
    0 & I_{\ell_1} & D_{1,2}
\end{array}\right],\vspace{-2mm}$$ respectively,  
where  $D_{0,1}\in M_{\ell_0\times \ell_1}(\mathcal{T}_3)$ and $D_{i,2}\in M_{\ell_i\times (N_2-\ell_0-\ell_1)}(\mathcal{T}_3)$ for $0\leq i\leq 1.$ Clearly, $E_1\subseteq E_2\subseteq E_3$ and $ E_4\subseteq E_5.$ Now, we have the following lemma. 
\vspace{-2mm}\begin{lemma}\label{Lem5.1}
If $\mathscr{C}_3$ is an $\mathtt{R}_3\mathtt{R}_2$-linear code of block-length $(N_1,N_2)$ with $Tor_1(\mathscr{C}_3^{(X)})=E_1,$ $Tor_2(\mathscr{C}_3^{(X)})=E_2,$ $Tor_3(\mathscr{C}_3^{(X)})=E_3,$ $Tor_1(\mathscr{C}_3^{(Y)})=E_4$ and $Tor_2(\mathscr{C}_3^{(Y)})=E_5,$ then there exist matrices $ A'_{0,2}\in M_{k_0\times k_2}(\mathcal{T}_3),$ $A'_{0,3},A''_{0,3}\in M_{k_0\times (N_1-k_0-k_1-k_2)}(\mathcal{T}_3),$ $ A'_{1,3}\in M_{k_1\times (N_1-k_0-k_1-k_2)}(\mathcal{T}_3),$ $C_{0,2}\in M_{\ell_0\times k_2}(\mathcal{T}_3),$ $B_{0,1}\in M_{k_0\times \ell_1}(\mathcal{T}_3),$ $C_{i,3}\in M_{\ell_i\times (N_1-k_0-k_1-k_2)}(\mathcal{T}_3)$ for $0\leq i\leq 1,$ $C'_{0,3}\in M_{\ell_0\times (N_1-k_0-k_1-k_2)}(\mathcal{T}_3),$  $B_{i,2}\in M_{k_i\times (N_2-\ell_0-\ell_1)}(\mathcal{T}_3)$ for $0\leq i\leq 1,$ $B_{0,2}'\in M_{k_0\times (N_2-\ell_0-\ell_1)}(\mathcal{T}_3)$ and $D_{0,2}'\in M_{\ell_0\times (N_2-\ell_0-\ell_1)}(\mathcal{T}_3),$ such that the matrix
\vspace{-2mm}\begin{align}\label{eqnLem5.1}
\left[\begin{array}{cccc|ccc}
    I_{k_0} & A_{0,1} & A_{0,2}+\gamma A_{0,2}' & A_{0,3}+\gamma A_{0,3}'+\gamma^2 A_{0,3}'' & 0 & B_{0,1} & B_{0,2}+\gamma B_{0,2}' \\
    0 & \gamma I_{k_1} & \gamma A_{1,2} & \gamma A_{1,3}+\gamma^2 A_{1,3}' & 0 & 0 & \gamma B_{1,2} \\
    0 & 0 & \gamma^2 I_{k_2} & \gamma^2 A_{2,3} & 0 & 0 & 0
    \\ 0 & 0 & \gamma C_{0,2} & \gamma C_{0,3}+\gamma^2 C_{0,3}' & I_{\ell_0} & D_{0,1} & D_{0,2}+\gamma D_{0,2}' \\
    0 & 0 & 0 & \gamma^2 C_{1,3} & 0 & \gamma I_{\ell_1} & \gamma D_{1,2}
\end{array}\right]=\left[\begin{array}{c}
    R_1^{(3)}\\
    \gamma R_2^{(3)}\\
    \gamma^2 R_3^{(3)}\\
    S_1^{(3)}\\
    \gamma S_2^{(3)}
\end{array}\right]\vspace{-2mm}
\end{align} is a generator matrix of the code $\mathscr{C}_3.$
\end{lemma}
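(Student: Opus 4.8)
The plan is to mimic the row-reduction argument in the proof of Lemma 3.1 of Yadav and Sharma~\cite{Yadav} (already used for the case $e=2$ in Lemma~\ref{Lem4.1}), now carried out for the three-step chain arising from $\mathtt{R}_3 \supseteq \mathtt{R}_2 \supseteq \overline{\mathtt{R}}_3$ together with the mixed $X$/$Y$ block structure. First I would apply Proposition 3.2 of Borges \etal~\cite{Borges}, recalled in Section~\ref{prelim3}, to put a generator matrix $\mathtt{G}_3$ of $\mathscr{C}_3$ in the standard form \eqref{eqn1.1} with $\mu=3$. This immediately fixes the block-lower-triangular shape, the pivot blocks $I_{k_0},\gamma I_{k_1},\gamma^2 I_{k_2}$ (in the $\mathtt{A}_3$-part) and $I_{\ell_0},\gamma I_{\ell_1}$ (in the $\mathtt{D}_3$-part), and the zero blocks occupying the remaining lower-triangular and first-column positions.

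The key observation is that, by the description of the torsion-code generator matrices in Section~\ref{prelim3} (via Lemma 3.4 of Norton and S\u{a}l\u{a}gean~\cite{Norton}), the reduction modulo $\gamma$ of the first $i$ row blocks of $[\mathtt{A}_3 \mid \mathtt{B}_3]$ generates $Tor_i(\mathscr{C}_3^{(X)})$, and likewise $Tor_j(\mathscr{C}_3^{(Y)})$ is generated from $[\mathtt{C}_3 \mid \mathtt{D}_3]$. Since by hypothesis these torsion codes are the prescribed codes $E_1,E_2,E_3$ and $E_4,E_5$, expanding each block of $\mathtt{G}_3$ in powers of $\gamma$ over the Teichm\"{u}ller field $\mathcal{T}_3$ shows that, after suitable $\mathtt{R}_3$-row operations within each row block (and using the uniqueness of reduced row echelon form over $\mathcal{T}_3$), the $\gamma$-free coefficients of the $X$-diagonal blocks coincide with the unprimed matrices $A_{0,1},A_{0,2},A_{0,3},A_{1,2},A_{1,3},A_{2,3}$ and those of the $Y$-diagonal blocks with $D_{0,1},D_{0,2},D_{1,2}$. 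The higher $\gamma$- and $\gamma^2$-order coefficients then define the primed matrices $A_{0,2}',A_{0,3}',A_{0,3}'',A_{1,3}',C_{0,3}',B_{0,2}',D_{0,2}'$, while the coupling blocks $B_{0,1},B_{0,2},B_{1,2},C_{0,2},C_{0,3},C_{1,3}$, whose leading parts are \emph{not} constrained by the torsion codes, are read off directly; exactly as in Lemma~\ref{Lem4.1}, those genuinely free cross-terms may be taken arbitrarily.

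The last step is purely a matter of bookkeeping: one must check that the moduli attached to the blocks in \eqref{eqnLem5.1} are the correct ones---each entry of the $A$- and $C$-blocks sitting above a pivot of order $\gamma^{j}$ read modulo $\gamma^{j}$, the $X$-tail blocks modulo $\gamma^3$, and the $Y$-tail blocks modulo $\gamma^2$---consistently with the conventions of \eqref{eqn1.1}, and that the pivots $\gamma I_{k_1},\gamma^2 I_{k_2},\gamma I_{\ell_1}$ absorb the lower-order parts of the columns above them so that no entry is lost in passing to the displayed form. I expect this $\gamma$-order bookkeeping across the five row blocks and four column groups to be the only delicate point; one also verifies that the nestings $E_1\subseteq E_2\subseteq E_3$ and $E_4\subseteq E_5$ are compatible with the block-triangular pivots, which holds automatically since $\mathscr{C}_3$ is an $\mathtt{R}_3$-module whose torsion tower is nested by construction. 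Granting this, the asserted matrices exist and \eqref{eqnLem5.1} is a generator matrix of $\mathscr{C}_3$.
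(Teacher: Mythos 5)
Your proposal is correct and takes essentially the same approach as the paper: the paper's proof is a one-line appeal to the row-reduction argument of Lemma 3.1 of Yadav and Sharma \cite{Yadav}, applied to the mixed-alphabet standard form of Borges \etal~\cite{Borges}, with the cross blocks $B_{0,1},B_{0,2},B_{1,2},B_{0,2}',C_{0,2},C_{0,3},C_{1,3},C_{0,3}'$ taken arbitrarily. Your detailed account of identifying the $\gamma$-free coefficients with the prescribed torsion generator matrices and reading off the primed correction matrices is exactly what that citation leaves implicit.
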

\begin{proof}
Working as in Lemma 3.1 of Yadav and Sharma \cite{Yadav} and choosing the matrices $B_{0,1},$ $B_{0,2},$ $B_{1,2},$ $B_{0,2}',$ $C_{0,2},$ $C_{0,3},$ $C_{1,3}$   and $C_{0,3}'$ arbitrarily, the desired result follows. 
\end{proof}

For the rest of this section, we assume that the codes $E_1,$ $E_2,$ $E_3,$ $E_4$ and $E_5$ satisfy $E_1\subseteq E_2\subseteq E_3\subseteq E_1^{\perp_E},$ $ E_2\subseteq E_2^{\perp_E}$ and $ E_4\subseteq E_5\subseteq E_4^{\perp_E}.$ This hold if and only if 
\vspace{-2mm}\begin{align}
I_{k_0}+A_{0,1}A_{0,1}^T+A_{0,2}A_{0,2}^T+A_{0,3}A_{0,3}^T&\equiv0~\text{(mod $\gamma$)},\label{eqn2.8}\\
I_{k_1}+A_{1,2}A_{1,2}^T+A_{1,3}A_{1,3}^T&\equiv0~\text{(mod $\gamma$)},\label{eqn2.9}\\
A_{0,1}+A_{0,2}A_{1,2}^T+A_{0,3}A_{1,3}^T&\equiv0~\text{(mod $\gamma$)},\label{eqn2.10}\\
A_{0,2}+A_{0,3}A_{2,3}^T&\equiv0~\text{(mod $\gamma$)},\label{eqn2.11}\\
I_{\ell_0}+D_{0,1}D_{0,1}^T+D_{0,2}D_{0,2}^T&\equiv0~\text{(mod $\gamma$)}\label{eqn2.12}~\text{and}\\
D_{0,1}+D_{0,2}D_{1,2}^T&\equiv0~\text{(mod $\gamma$)}.\label{eqn2.13}
\vspace{-2mm}\end{align} 
Here, it is easy to see that
$2k_0+2k_1\leq N_1,$ $2k_0+k_1+k_2\leq N_1$ and $2\ell_0+\ell_1\leq N_2.$ Further,  working as in Remark 2.2 of Yadav and Sharma \cite{Yadav}, one can observe that the matrices $A_{0,3}$ and $D_{0,2}$ are of full row-ranks. 
Now, in the following proposition, we will count  all Euclidean self-orthogonal $\mathtt{R}_3\mathtt{R}_2$-linear codes $\mathscr{C}_3$ of block-length $(N_1,N_2)$ satisfying $Tor_1(\mathscr{C}_3^{(X)})= E_1,$ $Tor_2(\mathscr{C}_3^{(X)})= E_2,$ $Tor_3(\mathscr{C}_3^{(X)})= E_3,$ $Tor_1(\mathscr{C}_3^{(Y)})= E_4$ and $Tor_2(\mathscr{C}_3^{(Y)})= E_5.$ 
\vspace{-1mm}\begin{prop}\label{Thm5.1}
There are precisely \vspace{-2mm}$$ q^{(N_1-3k_0-k_1-k_2)(k_0+2\ell_0+k_1+\ell_1)+k_0(N_1+2N_2-1)+k_1(N_2-2\ell_0-\ell_1)+\ell_0(N_2-\frac{3}{2}\ell_0-\ell_1+k_2-\frac{1}{2})}\vspace{-2mm}$$ distinct  Euclidean self-orthogonal $\mathtt{R}_3\mathtt{R}_2$-linear  codes $\mathscr{C}_3$ of block-length $(N_1,N_2)$  satisfying $Tor_1(\mathscr{C}_3^{(X)})= E_1,$ $Tor_2(\mathscr{C}_3^{(X)})= E_2,$ $Tor_3(\mathscr{C}_3^{(X)})= E_3,$ $Tor_1(\mathscr{C}_3^{(Y)})= E_4$ and $Tor_2(\mathscr{C}_3^{(Y)})= E_5.$ 
\end{prop}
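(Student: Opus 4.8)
The plan is to follow the template of Proposition~\ref{Thm4.1}, now with three layers. First I would invoke Lemma~\ref{Lem5.1}: every Euclidean self-orthogonal $\mathtt{R}_3\mathtt{R}_2$-linear code with the prescribed torsion codes $E_1,\ldots,E_5$ has a generator matrix of the standard form \eqref{eqnLem5.1}, and since this form is unique, distinct admissible tuples of the refinement blocks $A_{0,2}',A_{0,3}',A_{0,3}'',A_{1,3}',D_{0,2}'$ together with the cross blocks $B_{0,1},B_{0,2},B_{0,2}',B_{1,2},C_{0,2},C_{0,3},C_{0,3}',C_{1,3}$ give distinct codes. Hence it suffices to count the tuples for which \eqref{eqnLem5.1} is Euclidean self-orthogonal. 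Writing $m_1=N_1-k_0-k_1-k_2$ and $m_2=N_2-\ell_0-\ell_1$, I would record that, by the standing hypotheses, $A_{0,3}\in M_{k_0\times m_1}(\mathcal{T}_3)$ and $D_{0,2}\in M_{\ell_0\times m_2}(\mathcal{T}_3)$ are of full row ranks $k_0$ and $\ell_0$, respectively; these are the two pivots that will feed Lemma~\ref{Matrixlemma}.

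Next I would apply Lemma~\ref{Lem6.1} with $\mu=3$ to the row blocks $R_1^{(3)},R_2^{(3)},R_3^{(3)},S_1^{(3)},S_2^{(3)}$ and expand every $\diamond$-product in powers of $\gamma$. The order-$\gamma^{0}$ congruences are exactly \eqref{eqn2.8}--\eqref{eqn2.13}, which already hold by the assumptions $E_1\subseteq E_2\subseteq E_3\subseteq E_1^{\perp_E}$, $E_2\subseteq E_2^{\perp_E}$ and $E_4\subseteq E_5\subseteq E_4^{\perp_E}$; moreover every product involving $R_3^{(3)}$ or $S_2^{(3)}$ vanishes to order $\gamma$ automatically, so all of $R_2^{(3)}\diamond(S_2^{(3)})^{T}$, $R_3^{(3)}\diamond(S_1^{(3)})^{T}$, $S_1^{(3)}\diamond(S_2^{(3)})^{T}$ and $S_2^{(3)}\diamond(S_2^{(3)})^{T}$ impose no constraint. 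The surviving active congruences are the $\gamma^{1}$- and $\gamma^{2}$-order parts of $R_1^{(3)}\diamond(R_1^{(3)})^{T}$, the $\gamma^{1}$-parts of $R_1^{(3)}\diamond(R_2^{(3)})^{T}$, $R_1^{(3)}\diamond(S_1^{(3)})^{T}$, $R_1^{(3)}\diamond(S_2^{(3)})^{T}$ and $R_2^{(3)}\diamond(S_1^{(3)})^{T}$, together with the $\gamma^{2}$-part of $R_1^{(3)}\diamond(S_1^{(3)})^{T}$ and the $\gamma^{2}$-part of $S_1^{(3)}\diamond(S_1^{(3)})^{T}$.

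To count the admissible tuples I would process these congruences in a topological order and apply Lemma~\ref{Matrixlemma} block by block. The two symmetric conditions from $R_1^{(3)}\diamond(R_1^{(3)})^{T}$ have the shape $\Phi_{A_{0,3}}(\,\cdot\,)=(\ast)$, so by Lemma~\ref{Matrixlemma}(a) they determine $A_{0,3}'$ and $A_{0,3}''$ with $q^{\frac{k_0}{2}(2m_1-k_0-1)}$ choices each, and the symmetric $\gamma^{2}$-part of $S_1^{(3)}\diamond(S_1^{(3)})^{T}$ has the shape $\Phi_{D_{0,2}}(\,\cdot\,)=(\ast)$ and determines $D_{0,2}'$ with $q^{\frac{\ell_0}{2}(2m_2-\ell_0-1)}$ choices; these three applications account for the half-integer terms $-k_0$ and $-\tfrac12\ell_0$ in the exponent. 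The remaining active congruences are linear in a single designated block carrying a full-rank coefficient, so Lemma~\ref{Matrixlemma}(b) solves $R_1^{(3)}\diamond(R_2^{(3)})^{T}$ for $A_{1,3}'$ against $A_{0,3}$, $R_2^{(3)}\diamond(S_1^{(3)})^{T}$ for $B_{1,2}$ against $D_{0,2}$, and the $\gamma^{2}$-part of $R_1^{(3)}\diamond(S_1^{(3)})^{T}$ for $B_{0,2}'$ against $D_{0,2}$, while $A_{0,2}',B_{0,1},C_{0,2},C_{0,3},C_{0,3}'$ range freely over their matrix spaces. Multiplying all the resulting powers of $q$ and simplifying in terms of $m_1,m_2$ would reproduce the stated exponent.

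I expect the main obstacle to be the coupling of $R_1^{(3)}\diamond(S_1^{(3)})^{T}$ and $R_1^{(3)}\diamond(S_2^{(3)})^{T}$ at order $\gamma$, which jointly involve $B_{0,1}$, $B_{0,2}$ and $C_{1,3}$: eliminating in a careless order produces the combined coefficient $D_{0,2}-D_{0,1}D_{1,2}$, which need not have full row rank, so Lemma~\ref{Matrixlemma}(b) would fail to apply. The remedy is to choose $B_{0,1}$ freely, then solve the $\gamma$-part of $R_1^{(3)}\diamond(S_1^{(3)})^{T}$ for $B_{0,2}$ against the full-rank $D_{0,2}$, and only afterwards solve the $\gamma$-part of $R_1^{(3)}\diamond(S_2^{(3)})^{T}$ for $C_{1,3}$ against the full-rank $A_{0,3}$, which breaks the circular dependency. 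The rest is the same kind of bookkeeping as in the case $e=2$: one must check that each designated block genuinely carries a full-rank coefficient after the substitutions, that the prescribed torsion data make \eqref{eqn2.8}--\eqref{eqn2.13} automatic, and that the product of all the $q$-powers collapses to the claimed formula.
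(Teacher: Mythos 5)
Your proposal is correct, and its skeleton is the same as the paper's proof: Lemma \ref{Lem5.1} for the standard generator matrix, Lemma \ref{Lem6.1} to turn self-orthogonality into a system of $\gamma$-adic congruences, the hypotheses on $E_1,\ldots,E_5$ to dispose of the lowest-order parts, and Lemma \ref{Matrixlemma} applied congruence-by-congruence, with parts (a) and (b) used exactly where you use them. The one substantive difference is the assignment of free versus solved-for blocks in the two parts of $R_1^{(3)}\diamond (S_1^{(3)})^T$: the paper takes $B_{0,2},B_{0,2}'$ (together with $B_{0,1},A_{0,2}',C_{0,2}$) as free and solves these congruences for $C_{0,3}$ and $C_{0,3}'$ against the full-row-rank $A_{0,3}$, whereas you free $C_{0,3},C_{0,3}'$ and solve for $B_{0,2},B_{0,2}'$ against the full-row-rank $D_{0,2}$. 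The two schemes give the same total: the paper's contributes $2k_0(N_2-\ell_0-\ell_1)+2\ell_0(N_1-2k_0-k_1-k_2)$ where yours contributes $2\ell_0(N_1-k_0-k_1-k_2)+2k_0(N_2-2\ell_0-\ell_1)$, and these coincide, so both collapse to the stated exponent. Two caveats on your write-up. First, the paper's choice of free blocks buys a simpler ordering: with $B_{0,2},B_{0,2}'$ free, the symmetric congruences from $R_1^{(3)}\diamond (R_1^{(3)})^T$ can be processed immediately; in your scheme they involve the \emph{determined} matrices $B_{0,2},B_{0,2}'$, so the order your narrative suggests (the three Lemma \ref{Matrixlemma}(a) solves first) is not valid as written — the valid topological order is: free blocks, then $B_{0,2},B_{1,2},D_{0,2}'$, then $A_{0,3}',C_{1,3}$, then $A_{1,3}',B_{0,2}'$, then $A_{0,3}''$. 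Since you explicitly invoke a topological ordering and such an ordering exists for your assignment, this is a presentational slip rather than a gap, but it should be fixed. Second, $S_1^{(3)}\diamond (S_2^{(3)})^T$ is not automatic at the required order: its leading coefficient is $D_{0,1}+D_{0,2}D_{1,2}^T$, which vanishes precisely because of \eqref{eqn2.13}, i.e.\ because $E_5\subseteq E_4^{\perp_E}$; it imposes no \emph{new} constraint, which is all your count needs, but it is hypothesis-driven rather than automatic.
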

\begin{proof}
Let $\mathscr{C}_3$ be a Euclidean self-orthogonal $\mathtt{R}_3\mathtt{R}_2$-linear code of block-length $(N_1,N_2)$  satisfying $Tor_1(\mathscr{C}_3^{(X)})= E_1,$ $Tor_2(\mathscr{C}_3^{(X)})= E_2,$ $Tor_3(\mathscr{C}_3^{(X)})= E_3,$ $Tor_1(\mathscr{C}_3^{(Y)})= E_4$ and $Tor_2(\mathscr{C}_3^{(Y)})= E_5.$ Since $ E_3\subseteq E_1^{\perp_E},$ $ E_2\subseteq E_2^{\perp_E}$ and $ E_5\subseteq E_4^{\perp_E},$ we note that \eqref{eqn2.8}-\eqref{eqn2.13} hold. Now by Lemma \ref{Lem5.1}, we see that there exist matrices $A'_{0,2}\in M_{k_0\times k_2}(\mathcal{T}_3),$ $A'_{0,3},A''_{0,3}\in M_{k_0\times (N_1-k_0-k_1-k_2)}(\mathcal{T}_3),$ $A'_{1,3}\in M_{k_1\times (N_1-k_0-k_1-k_2)}(\mathcal{T}_3),$ $C_{0,2}\in M_{\ell_0\times k_2}(\mathcal{T}_3),$ $C_{i,3}\in M_{\ell_i\times (N_1-k_0-k_1-k_2)}(\mathcal{T}_3)$ for $0\leq i\leq 1,$ $C'_{0,3}\in M_{\ell_0\times (N_1-k_0-k_1-k_2)}(\mathcal{T}_3),$ $B_{0,1}\in M_{k_0\times \ell_1}(\mathcal{T}_3),$ $B_{i,2}\in M_{k_i\times (N_2-\ell_0-\ell_1)}(\mathcal{T}_3)$ for $0\leq i\leq 1,$ $B_{0,2}'\in M_{k_0\times (N_2-\ell_0-\ell_1)}(\mathcal{T}_3)$ and $D_{0,2}'\in M_{\ell_0\times (N_2-\ell_0-\ell_1)}(\mathcal{T}_3),$ such that the code $\mathscr{C}_3$ has a generator matrix of the form \eqref{eqnLem5.1}. Further, by Lemma \ref{Lem6.1}, we see that the code $\mathscr{C}_3$ is Euclidean self-orthogonal if and only if 
\vspace{-2mm}\begin{equationarray}{rl}
I_{k_0}+A_{0,1}A_{0,1}^T+A_{0,2}A_{0,2}^T+A_{0,3}A_{0,3}^T+\gamma(A_{0,2}A_{0,2}'^{T}+A_{0,2}'A_{0,2}^T+A_{0,3}A_{0,3}'^{T}&~\nonumber\vspace{1mm}\\ +A_{0,3}'A_{0,3}^T +B_{0,1}B_{0,1}^T+B_{0,2}B_{0,2}^T) +\gamma^2(A_{0,2}'A_{0,2}'^{T}+A_{0,3}A_{0,3}''^{T}+A_{0,3}''A_{0,3}^T&~ \nonumber \vspace{1mm}\\+A_{0,3}'A_{0,3}'^{T}+B_{0,2}B_{0,2}'^{T}+B_{0,2}'B_{0,2}^T)&\equiv0\Mod{\gamma^3}, \label{eqn03.15}\vspace{1mm}\\
A_{0,2}C_{0,2}^T+A_{0,3}C_{0,3}^T+B_{0,1}D_{0,1}^T+B_{0,2}D_{0,2}^T+\gamma(A_{0,2}'C_{0,2}^T+A_{0,3}C_{0,3}'^{T}+A_{0,3}'C_{0,3}^T&~\nonumber  \vspace{1mm}\\+B_{0,2}D_{0,2}'^{T}+B_{0,2}'D_{0,2}^T)&\equiv0\Mod{\gamma^2},\label{eqn03.18}\vspace{1mm}\\
A_{0,1}+A_{0,2}A_{1,2}^T+A_{0,3}A_{1,3}^T+\gamma(A_{0,2}'A_{1,2}^T+A_{0,3}A_{1,3}'^{T}+A_{0,3}'A_{1,3}^T+B_{0,2}B_{1,2}^T)&\equiv0\Mod{\gamma^2}, \label{eqn03.16}\vspace{1mm}\\
I_{\ell_0}+D_{0,1}D_{0,1}^T+D_{0,2}D_{0,2}^T+\gamma(C_{0,2}C_{0,2}^T+C_{0,3}C_{0,3}^T+D_{0,2}D_{0,2}'^{T}+D_{0,2}'D_{0,2}^T)&\equiv0\Mod{\gamma^2},\label{eqn03.22}\vspace{1mm}\\
A_{0,3}C_{1,3}^T+B_{0,1}+B_{0,2}D_{1,2}^T   &\equiv0\Mod{\gamma},\label{eqn03.19}\vspace{1mm}\\
C_{0,2}A_{1,2}^T+C_{0,3}A_{1,3}^T+D_{0,2}B_{1,2}^T&\equiv 0\Mod{\gamma},\label{eqn03.21}\vspace{1mm}\\
I_{k_1}+A_{1,2}A_{1,2}^T+A_{1,3}A_{1,3}^T&\equiv 0\Mod{\gamma},\label{eqn03.20}\vspace{1mm}\\
A_{0,2}+A_{0,3}A_{2,3}^T &\equiv 0\Mod{\gamma}~\text{and}\label{eqn03.17}\vspace{1mm}\\
D_{0,1}+D_{0,2}D_{1,2}^T  &\equiv 0\Mod{\gamma}.\label{eqn03.23}
\vspace{-2mm}\end{equationarray}

By \eqref{eqn2.9}, \eqref{eqn2.11}   and \eqref{eqn2.13}, we see that the congruences  \eqref{eqn03.20}-\eqref{eqn03.23} hold.  So we need to count the choices for the matrices $A'_{0,2},$ $A'_{0,3},$ $A''_{0,3},$ $A'_{1,3},$ $C_{0,2},$ $C_{0,3},$ $C_{1,3},$  $C'_{0,3},$ $B_{0,1},$ $B_{0,2},$ $B_{1,2},$  $B_{0,2}'$ and $D_{0,2}'$ satisfying the  congruences \eqref{eqn03.15}-\eqref{eqn03.21}. 
Towards this, we first note that the congruence  \eqref{eqn2.8} gives
\vspace{-1mm}\begin{align*}
I_{k_0}+A_{0,1}A_{0,1}^T+A_{0,2}A_{0,2}^T+A_{0,3}A_{0,3}^T\equiv\gamma\mathcal{P}_1+\gamma^2\mathcal{P}_2~\text{(mod~$\gamma^3$)}  \end{align*}
for some $\mathcal{P}_1,\mathcal{P}_2\in Sym_{k_0}(\mathcal{T}_3)$. On substituting this into the congruence   \eqref{eqn03.15}, we get
\vspace{-1mm}\begin{align}\nonumber
&\gamma(\mathcal{P}_1+A_{0,2}A_{0,2}'^{T}+A_{0,2}'A_{0,2}^T+A_{0,3}A_{0,3}'^{T}+A_{0,3}'A_{0,3}^T+B_{0,1}B_{0,1}^T+B_{0,2}B_{0,2}^T)\\
&+\gamma^2(\mathcal{P}_2+A_{0,2}'A_{0,2}'^{T}+A_{0,3}A_{0,3}''^{T}+A_{0,3}''A_{0,3}^T+A_{0,3}'A_{0,3}'^{T}+B_{0,2}B_{0,2}'^{T}+B_{0,2}'B_{0,2}^T)\equiv0~\text{(mod~$\gamma^3$)}.\label{eqn2.19}
\end{align}
Here, we will choose the matrices $B_{0,1},B_{0,2},$ $B_{0,2}'$ and $A_{0,2}'$  arbitrarily,  which can be done in $q^{k_0(b-2\ell_0-\ell_1+k_2)}$ distinct ways. Now by Lemma \ref{Matrixlemma}(a), we see that there are precisely $q^{\frac{k_0}{2}(2a-3k_0-2k_1-2k_2-1)}$ distinct choices for the matrix $A_{0,3}'$  satisfying 
\vspace{-1mm}\begin{align*}
\mathcal{P}_1+A_{0,2}A_{0,2}'^{T}+A_{0,2}'A_{0,2}^T+A_{0,3}A_{0,3}'^{T}+A_{0,3}'A_{0,3}^T+B_{0,1}B_{0,1}^T+B_{0,2}B_{0,2}^T\equiv0~\text{(mod $\gamma$).}
\vspace{-2mm}\end{align*}
Thus, we get 
\vspace{-1mm}\begin{align*}
\gamma(\mathcal{P}_1+A_{0,2}A_{0,2}'^{T}+A_{0,2}'A_{0,2}^T+A_{0,3}A_{0,3}'^{T}+A_{0,3}'A_{0,3}^T+B_{0,1}B_{0,1}^T+B_{0,2}B_{0,2}^T)\equiv\gamma^2\mathcal{P}_3~\text{(mod $\gamma^3$)}
\vspace{-2mm}\end{align*}
for some $\mathcal{P}_3\in Sym_{k_0}(\mathcal{T}_3).$ Now on substituting this into the congruence \eqref{eqn2.19} and applying Lemma \ref{Matrixlemma}(a) again, we see that there are precisely $q^{\frac{k_0}{2}(2a-3k_0-2k_1-2k_2-1)}$ distinct choices for the matrix $A_{0,3}''$ satisfying \eqref{eqn2.19}. 

Now, for given choices of the matrices $B_{0,1},B_{0,2},A_{0,2}',A_{0,3}',B_{0,2}'$ and $A_{0,3}'',$  we will  count all possible   choices for the matrices $C_{0,2},C_{0,3},C_{0,3}',D_{0,2}',B_{1,2},A_{1,3}'$ and $C_{1,3}$ satisfying  \eqref{eqn03.18}-\eqref{eqn03.21}.
To do this, we choose the matrix $C_{0,2}$  arbitrarily, which can be chosen in  $q^{\ell_0k_2}$ distinct ways. Further, by Lemma \ref{Matrixlemma}(b), we see that there are precisely $q^{\ell_0(a-2k_0-k_1-k_2)}$ distinct choices for the matrix $C_{0,3}$ satisfying $A_{0,2}C_{0,2}^T+A_{0,3}C_{0,3}^T+B_{0,1}D_{0,1}^T+B_{0,2}D_{0,2}^T\equiv0~\text{(mod $\gamma$)}.$ From this, we get $A_{0,2}C_{0,2}^T+A_{0,3}C_{0,3}^T+B_{0,1}D_{0,1}^T+B_{0,2}D_{0,2}^T\equiv\gamma\mathcal{Q}_1~\text{(mod $\gamma^2$)}$ for some $\mathcal{Q}\in M_{k_0\times\ell_0}(\mathcal{T}_3).$ Furthermore, on substituting this into the congruence \eqref{eqn03.18}, we get
\vspace{-1mm}\begin{align}\label{eqn2.21}
\gamma(\mathcal{Q}_1+A_{0,2}'C_{0,2}^T+A_{0,3}C_{0,3}'^{T}+A_{0,3}'C_{0,3}^T+B_{0,2}D_{0,2}'^{T}+B_{0,2}'D_{0,2}^T)\equiv0~\text{(mod $\gamma^2$)}.
\vspace{-1mm}\end{align}

We next proceed to count the choices for the matrices $C_{0,3}'$ and $D_{0,2}'$ satisfying \eqref{eqn2.21}.  To count the choices for the matrix $D_{0,2}',$ we note that \eqref{eqn2.12} gives
\vspace{-1mm}\begin{align*}
I_{\ell_0}+D_{0,1}D_{0,1}^T+D_{0,2}D_{0,2}^T\equiv\gamma\mathcal{P}_4~\text{(mod $\gamma^2$)}   
\vspace{-1mm}\end{align*}
for some $\mathcal{P}_4\in Sym_{\ell_0}(\mathcal{T}_3).$ On substituting this into the congruence  \eqref{eqn03.22}, we get
\vspace{-1mm}\begin{align}\label{eqn2.22}
\gamma(\mathcal{P}_4+C_{0,2}C_{0,2}^T+C_{0,3}C_{0,3}^T+D_{0,2}D_{0,2}'^{T}+D_{0,2}'D_{0,2}^T)\equiv0~\text{(mod $\gamma^2$)}.
\end{align}
Now, by Lemma \ref{Matrixlemma}(a), we see that there are precisely  $q^{\frac{\ell_0}{2}(2b-3\ell_0-2\ell_1-1)}$ distinct choices for the matrix $D_{0,2}'$ satisfying \eqref{eqn2.22}. Further, for a given choice of the matrix $D_{0,2}',$ we see, by Lemma \ref{Matrixlemma}(b), that there are precisely $q^{\ell_0(a-2k_0-k_1-k_2)}$ distinct choices for the matrix $C_{0,3}'$ satisfying \eqref{eqn2.21}.   

We will finally count the  choices for the remaining matrices $B_{1,2},A_{1,3}'$ and $C_{1,3}$ satisfying the congruences \eqref{eqn03.16}, \eqref{eqn03.19} and \eqref{eqn03.21} for given choices of the matrices $B_{0,1},B_{0,2},A_{0,2}',A_{0,3}',B_{0,2}',A_{0,3}'',C_{0,2},C_{0,3},C_{0,3}'$ and $D_{0,2}'.$ Towards this, we see, by Lemma \ref{Matrixlemma}(b), that there are precisely $q^{k_1(b-2\ell_0-\ell_1)}$ distinct choices for the matrix $B_{1,2}$ satisfying \eqref{eqn03.21}. Now to count the choices for the matrix $A_{1,3}',$ we note that the congruence \eqref{eqn2.10} gives 
\vspace{-1mm}\begin{align*}
A_{0,1}+A_{0,2}A_{1,2}^T+A_{0,3}A_{1,3}^T\equiv\gamma \mathcal{Q}_2~\text{(mod $\gamma^2$)}  \vspace{-1mm}\end{align*}
for some $\mathcal{Q}_2\in M_{k_0\times k_1}(\mathcal{T}_3).$ On substituting this into the congruence   \eqref{eqn03.16}, we get
\vspace{-1mm}\begin{align}\label{eqn2.24}
\gamma(\mathcal{Q}_2+A_{0,2}'A_{1,2}^T+A_{0,3}A_{1,3}'^{T}+A_{0,3}'A_{1,3}^T+B_{0,2}B_{1,2}^T)\equiv0~\text{(mod $\gamma^2$)}.
\end{align}
Next, by Lemma \ref{Matrixlemma}(b), we see that the matrix $A_{1,3}'$ satisfying \eqref{eqn2.24} has precisely $q^{k_1(a-2k_0-k_1-k_2)}$ distinct choices. 
Further, by Lemma \ref{Matrixlemma}(b) again, we see that the matrix $C_{1,3}$ satisfying \eqref{eqn03.19} can be chosen in $q^{\ell_1(a-2k_0-k_1-k_2)}$ distinct ways. Furthermore, each of the distinct choices for the matrices $A'_{0,2},$ $A'_{0,3},$ $A''_{0,3},$ $A'_{1,3},$ $C_{0,2},$ $C_{0,3},$ $C_{1,3},$ $C'_{0,3},$ $B_{0,1},$ $B_{0,2},$ $B_{1,2},$ $B_{0,2}'$ and $D_{0,2}'$ gives rise to a distinct Euclidean self-orthogonal $\mathtt{R}_3\mathtt{R}_2$-linear code $\mathscr{C}_3$ of block-length $(N_1,N_2)$ satisfying $Tor_1(\mathscr{C}_3^{(X)})= E_1,$ $Tor_2(\mathscr{C}_3^{(X)})= E_2,$ $Tor_3(\mathscr{C}_3^{(X)})= E_3,$ $Tor_1(\mathscr{C}_3^{(Y)})= E_4$ and $Tor_2(\mathscr{C}_3^{(Y)})= E_5.$  From this, the desired result follows.
\vspace{-4mm}\end{proof}

In the following theorem, we obtain the number $\mathfrak{N}_{3}(N_1,N_2;\textbf{k};\boldsymbol{\ell}),$ where $\textbf{k}=(k_0,k_1,k_2)$ and $\boldsymbol{\ell}=(\ell_0,\ell_1).$
\vspace{-1mm}\begin{thm}\label{Thm5.2}
For $\textbf{k}=(k_0,k_1,k_2)$ and $\boldsymbol{\ell}=(\ell_0,\ell_1),$ we have
\vspace{-1mm}\begin{equation*} \mathfrak{N}_{3}(N_1,N_2;\textbf{k};\boldsymbol{\ell})=\left\{\begin{array}{ll}\sigma_q(N_1,k_0+k_1)\sigma_q(N_2,\ell_0) \qbin{N_1-2k_0-k_1}{k_2}{q} 
\qbin{N_2-2 \ell_0}{\ell_1}{q} \qbin{k_0+k_1}{k_0}{q} q^{\Theta_3(\textbf{k};\boldsymbol{\ell})}\vspace{1mm} \\
\quad\text{if $2k_0+2k_1\leq N_1,$ $2k_0+k_1+k_2\leq N_1$ and $2\ell_0+\ell_1\leq N_2$};\vspace{2mm}\\
0\quad~\text{otherwise,}\end{array} \right.\vspace{-2mm}\end{equation*}
where the numbers $\sigma_q(N_1,k_0+k_1)$\textquotesingle s and $\sigma_q(N_2,\ell_0)$\textquotesingle s are given by \eqref{sigmaE}, and $\Theta_3(\textbf{k};\boldsymbol{\ell})$ is given by \vspace{-1mm}\small{\begin{equation}\label{theta3}\Theta_3(\textbf{k};\boldsymbol{\ell})=\bigl(N_1-3k_0-k_1-k_2\bigr)\bigl(k_0+2\ell_0+k_1+\ell_1\bigr)+k_0\bigl(N_1+2N_2-1\bigr)+k_1\bigl(N_2-2\ell_0-\ell_1\bigr)+\ell_0\bigl(N_2-\tfrac{3}{2}\ell_0-\ell_1+k_2-\tfrac{1}{2}\bigr).\vspace{-1mm}\end{equation}}\normalsize
\end{thm}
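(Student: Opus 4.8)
The plan is to follow the strategy used for the case $e=2$ in Theorem \ref{Thm4.2}, decomposing the count $\mathfrak{N}_{3}(N_1,N_2;\textbf{k};\boldsymbol{\ell})$ into two independent factors: the number of admissible chains of torsion codes of the prescribed dimensions, and the number of Euclidean self-orthogonal codes realizing each such chain. Proposition \ref{Thm5.1} already supplies the second factor, namely $q^{\Theta_3(\textbf{k};\boldsymbol{\ell})}$, and the crucial point is that this number depends only on the type $(\textbf{k};\boldsymbol{\ell})$ and not on the particular choice of torsion codes $E_1,\ldots,E_5$. Hence it suffices to enumerate the admissible torsion chains and multiply by $q^{\Theta_3(\textbf{k};\boldsymbol{\ell})}$.

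First I would invoke Lemma \ref{Lem1.1} to record the necessary nesting and self-orthogonality conditions forced on the torsion codes of any Euclidean self-orthogonal $\mathtt{R}_3\mathtt{R}_2$-linear code of the given type: on the $X$-side, $E_1\subseteq E_2\subseteq E_3\subseteq E_1^{\perp_E}$ with $E_2\subseteq E_2^{\perp_E}$, and on the $Y$-side, $E_4\subseteq E_5\subseteq E_4^{\perp_E}$, where $\dim E_1=k_0$, $\dim E_2=k_0+k_1$, $\dim E_3=k_0+k_1+k_2$, $\dim E_4=\ell_0$ and $\dim E_5=\ell_0+\ell_1$. I would then count the two sides separately. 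On the $X$-side, the self-orthogonal code $E_2$ of dimension $k_0+k_1$ can be chosen in $\sigma_q(N_1,k_0+k_1)$ ways by \eqref{sigmaE}; given $E_2$, the subcode $E_1$ of dimension $k_0$ can be chosen in $\qbin{k_0+k_1}{k_0}{q}$ ways; and since $E_1\subseteq E_2\subseteq E_2^{\perp_E}\subseteq E_1^{\perp_E}$, the quotient $E_1^{\perp_E}/E_2$ has dimension $N_1-2k_0-k_1$, so the intermediate code $E_3$ with $E_2\subseteq E_3\subseteq E_1^{\perp_E}$ of dimension $k_0+k_1+k_2$ can be chosen in $\qbin{N_1-2k_0-k_1}{k_2}{q}$ ways. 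On the $Y$-side, the self-orthogonal code $E_4$ can be chosen in $\sigma_q(N_2,\ell_0)$ ways, and since $E_4^{\perp_E}/E_4$ has dimension $N_2-2\ell_0$, the code $E_5$ with $E_4\subseteq E_5\subseteq E_4^{\perp_E}$ of dimension $\ell_0+\ell_1$ can be chosen in $\qbin{N_2-2\ell_0}{\ell_1}{q}$ ways. Multiplying these five factors by $q^{\Theta_3(\textbf{k};\boldsymbol{\ell})}$ from Proposition \ref{Thm5.1} produces the claimed formula, while the vanishing in the complementary range is forced by the inequalities $2k_0+2k_1\leq N_1$, $2k_0+k_1+k_2\leq N_1$ and $2\ell_0+\ell_1\leq N_2$ recorded in Remark \ref{Remark1.1}.

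The main point to verify carefully is that this decomposition is a genuine bijection rather than a mere product of plausible factors. Specifically, one must check that every Euclidean self-orthogonal code of the given type determines exactly one admissible torsion chain (necessity provided by Lemma \ref{Lem1.1}), that distinct chains yield disjoint families of codes, and that for each admissible chain Proposition \ref{Thm5.1} enumerates all codes realizing it with no omission or repetition. The disjointness is immediate, since the torsion codes are intrinsically determined by the code; realizability for every admissible chain is precisely the content of Lemma \ref{Lem5.1} together with Proposition \ref{Thm5.1}, whose hypotheses $E_1\subseteq E_2\subseteq E_3\subseteq E_1^{\perp_E}$, $E_2\subseteq E_2^{\perp_E}$ and $E_4\subseteq E_5\subseteq E_4^{\perp_E}$ coincide exactly with the conditions on the admissible chains I am counting. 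I expect the only delicate bookkeeping to lie in confirming that the three Gaussian-binomial factors $\qbin{k_0+k_1}{k_0}{q}$, $\qbin{N_1-2k_0-k_1}{k_2}{q}$ and $\qbin{N_2-2\ell_0}{\ell_1}{q}$ correctly enumerate the nested subspaces under the self-duality constraints; this reduces to the standard fact that $\qbin{n}{s}{q}$ counts the $s$-dimensional subspaces of an $n$-dimensional space over $\mathbb{F}_q$, applied to the quotient spaces $E_2$, $E_1^{\perp_E}/E_2$ and $E_4^{\perp_E}/E_4$ of the respective dimensions computed above.
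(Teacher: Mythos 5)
Your proposal is correct and follows essentially the same route as the paper's own proof: both use Lemma \ref{Lem1.1} and Remark \ref{Remark1.1} to constrain the torsion chains, count $E_2$ (resp. $E_4$) via $\sigma_q$, then $E_1\subseteq E_2$, $E_2\subseteq E_3\subseteq E_1^{\perp_E}$ and $E_4\subseteq E_5\subseteq E_4^{\perp_E}$ via the three Gaussian binomial coefficients, and multiply by the count $q^{\Theta_3(\textbf{k};\boldsymbol{\ell})}$ from Lemma \ref{Lem5.1} and Proposition \ref{Thm5.1}. Your added remarks on disjointness of the fibers and independence of the fiber count from the chosen chain make explicit what the paper leaves implicit, but the argument is the same.
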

\begin{proof}
Let $\mathscr{D}_3$ be a Euclidean self-orthogonal $\mathtt{R}_3\mathtt{R}_2$-linear code of block-length $(N_1,N_2)$  and type $\{k_0,k_1,k_2;\ell_0,\ell_1\}$ satisfying $Tor_1(\mathscr{D}_3^{(X)})= F_1,$ $Tor_2(\mathscr{D}_3^{(X)})= F_2,$ 
$Tor_3(\mathscr{D}_3^{(X)})=F_3,$
$Tor_1(\mathscr{D}_3^{(Y)})=F_4$
and $Tor_2(\mathscr{D}_3^{(Y)})= F_5.$  Here, we have $\dim  F_1=k_0,$ $\dim  F_2=k_0+k_1,$ $\dim  F_3=k_0+k_1+k_2,$ $\dim  F_4=\ell_0$ and $\dim  F_5=\ell_0+\ell_1.$ Further, by Lemma \ref{Lem1.1}, we see that $ F_1\subseteq  F_2\subseteq  F_3\subseteq  F_1^{\perp_E},$ $  F_2\subseteq  F_2^{\perp_E}$ and $  F_4\subseteq  F_5\subseteq  F_4^{\perp_E}$ hold. Also, by Remark \ref{Remark1.1}, we note that $2k_0+2k_1\leq N_1,$ $2k_0+k_1+k_2\leq N_1$ and $2\ell_0+\ell_1\leq N_2.$ We further see, by equation \eqref{sigmaE}, that the codes $  F_2$ and $  F_4$ have precisely $\sigma_q(N_1, k_0+k_1)$ and $\sigma_q(N_2, \ell_0)$ distinct choices, respectively. Further, for a given choice of $  F_2,$ there are precisely $\qbin{k_0+k_1}{
k_0}{q}$ distinct choices for the code $  F_1$ satisfying $  F_1\subseteq  F_2,$ and hence the codes $  F_1$ and $  F_2$ satisfy $  F_1\subseteq  F_2\subseteq  F_2^{\perp_E}\subseteq  F_1^{\perp_E}.$ Further, for a given choice of $  F_2$ satisfying $  F_2\subseteq  F_1^{\perp_E},$ there are precisely $\qbin{
N_1-2k_0-k_1}{
k_2}{q}$ distinct choices for the code $  F_3$ satisfying $  F_2\subseteq  F_3\subseteq  F_1^{\perp_E}.$ Similarly, for a given choice of $  F_4,$ one can easily see that there are precisely $\qbin{
N_2-2\ell_0}{
\ell_1}{q}$ distinct choices for the code $  F_5$ satisfying $  F_4\subseteq  F_5\subseteq  F_4^{\perp_E}.$ From this and by applying Lemma \ref{Lem5.1} and Proposition \ref{Thm5.1}, the desired result follows.
\end{proof}
In the following theorem, we obtain an enumeration formula for $\mathfrak{N}_{3}(N_1,N_2).$
\vspace{-1mm}\begin{thm}\label{Thm5.3}
We have 
\vspace{-1mm}\small{\begin{align*}
\mathfrak{N}_{3}(N_1,N_2)=&\sum\limits_{k_0=0}^{\floor{\frac{N_1}{2}}}\sum\limits_{k_1=0}^{\floor{\frac{N_1}{2}}-k_0}\sum\limits_{k_2=0}^{N_1-2k_0-k_1}\sum\limits_{\ell_0=0}^{\floor{\frac{N_2}{2}}}\sum\limits_{\ell_1=0}^{N_2-2\ell_0}\sigma_q(N_1,k_0+k_1)\sigma_q(N_2,\ell_0)\qbin
{N_1-2k_0-k_1}{
k_2}{q}\qbin{N_2-2\ell_0}{
\ell_1}{q}
\qbin{k_0+k_1}{k_0}{q}q^{\Theta_3(\textbf{k};\boldsymbol{\ell})} , 
 \end{align*}}\normalsize
where the numbers $\sigma_q(N_1,k_0+k_1)$\textquotesingle s and $\sigma_q(N_2,\ell_0)$\textquotesingle s are given by \eqref{sigmaE}, and $\Theta_3(\textbf{k};\boldsymbol{\ell})$ is given by \eqref{theta3}. 
\end{thm}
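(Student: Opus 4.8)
The plan is to obtain $\mathfrak{N}_{3}(N_1,N_2)$ by summing the type-refined count $\mathfrak{N}_{3}(N_1,N_2;\textbf{k};\boldsymbol{\ell})$ from Theorem \ref{Thm5.2} over all admissible types. The essential input is the uniqueness of the type of an $\mathtt{R}_3\mathtt{R}_2$-linear code established in Section \ref{prelim3} (via Proposition 3.2 of Borges \etal~\cite{Borges}): every Euclidean self-orthogonal $\mathtt{R}_3\mathtt{R}_2$-linear code of block-length $(N_1,N_2)$ has a unique type $\{k_0,k_1,k_2;\ell_0,\ell_1\},$ so the collection of all such codes partitions according to type. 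Hence $\mathfrak{N}_{3}(N_1,N_2)=\sum_{\textbf{k},\boldsymbol{\ell}}\mathfrak{N}_{3}(N_1,N_2;\textbf{k};\boldsymbol{\ell}),$ the sum running over all $5$-tuples of non-negative integers $(k_0,k_1,k_2;\ell_0,\ell_1).$

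Next I would invoke Theorem \ref{Thm5.2} to replace each summand by its explicit value. Since $\mathfrak{N}_{3}(N_1,N_2;\textbf{k};\boldsymbol{\ell})$ vanishes unless $2k_0+2k_1\leq N_1,$ $2k_0+k_1+k_2\leq N_1$ and $2\ell_0+\ell_1\leq N_2$ (by Remark \ref{Remark1.1}), only types satisfying these inequalities contribute. Translating the three support conditions into bounds on the summation indices yields the ranges displayed in the statement: $0\leq k_0\leq \floor{\frac{N_1}{2}}$ and $0\leq k_1\leq \floor{\frac{N_1}{2}}-k_0$ from $2k_0+2k_1\leq N_1;$ then $0\leq k_2\leq N_1-2k_0-k_1$ from $2k_0+k_1+k_2\leq N_1;$ and, on the second block, $0\leq \ell_0\leq \floor{\frac{N_2}{2}}$ together with $0\leq \ell_1\leq N_2-2\ell_0$ from $2\ell_0+\ell_1\leq N_2.$ Substituting the closed-form expression for each nonzero summand then gives the stated formula, with $\Theta_3(\textbf{k};\boldsymbol{\ell})$ as in \eqref{theta3}.

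There is no genuine obstacle here, as the argument is pure bookkeeping once Theorem \ref{Thm5.2} is in hand; all the technical work---the torsion-code analysis and the counting of the free blocks $A'_{0,2},A'_{0,3},\ldots$ via Lemma \ref{Matrixlemma}---has already been carried out in Proposition \ref{Thm5.1} and Theorem \ref{Thm5.2}. The only point requiring a moment's care is verifying that the three independent support inequalities from Remark \ref{Remark1.1} are \emph{exactly} captured by the nested summation limits, so that no admissible type is omitted and no forbidden type (for which the count is zero) is erroneously included; one checks, for instance, that the outer bound $k_0\leq\floor{\frac{N_1}{2}}$ is the $k_1=0$ specialization of $k_0+k_1\leq\floor{\frac{N_1}{2}},$ so the limits are mutually consistent and the restricted sum coincides with the unrestricted one.
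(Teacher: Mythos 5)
Your proposal is correct and matches the paper's approach exactly: the paper's proof of Theorem \ref{Thm5.3} is simply ``It follows immediately from Theorem \ref{Thm5.2},'' and your write-up is precisely the bookkeeping that remark leaves implicit---partitioning by the (unique) type, invoking the type-refined count, and converting the support conditions $2k_0+2k_1\leq N_1,$ $2k_0+k_1+k_2\leq N_1,$ $2\ell_0+\ell_1\leq N_2$ into the nested summation limits. Your check that $\floor{\tfrac{N_1}{2}}-k_0=\floor{\tfrac{N_1}{2}-k_0}$ (since $k_0$ is an integer) is the right detail to confirm that the restricted and unrestricted sums agree.
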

\begin{proof}
It follows immediately from Theorem \ref{Thm5.2}.
\end{proof}
\vspace{-4mm}\begin{example}\label{ex4.3} Using Magma, we see that there are precisely $499$ and $2064151$ distinct non-zero Euclidean self-orthogonal $\mathbb{Z}_{27}\mathbb{Z}_9$-linear codes of block-lengths $(2,2)$ and $(3,3),$ respectively. This agrees with Theorem \ref{Thm5.3}.
\end{example}
In the following corollary, we count all self-orthogonal additive codes of length $N$ over $\mathcal{R}_3.$
\vspace{-1mm}\begin{cor}\label{Corollary3.3}
The number of self-orthogonal additive codes of length $N$ over $\mathcal{R}_3$ is given by
\vspace{-1mm}\small{\begin{align*}
\sum\limits_{k_0=0}^{\floor{\frac{Nrt}{2}}}\sum\limits_{k_1=0}^{\floor{\frac{Nrt}{2}}-k_0}\sum\limits_{k_2=0}^{Nrt-2k_0-k_1}\sum\limits_{\ell_0=0}^{\floor{\frac{Nr(k-t)}{2}}}\sum\limits_{\ell_1=0}^{Nr(k-t)-2\ell_0}&\sigma_p(Nrt,k_0+k_1)\sigma_p\bigl(Nr(k-t),\ell_0\bigr)\qbin
{Nrt-2k_0-k_1}{
k_2}{p}\qbin{Nr(k-t)-2\ell_0}{
\ell_1}{p}
\vspace{-1mm}\\&\times\qbin{k_0+k_1}{k_0}{p}p^{\omega_3(\textbf{k};\boldsymbol{\ell})} , 
 \vspace{-1mm}\end{align*}}\normalsize
where the numbers $\sigma_p(Nrt,k_0+k_1)$\textquotesingle s and $\sigma_p\bigl(Nr(k-t),\ell_0\bigr)$\textquotesingle s are given by \eqref{sigmaE}, and $\omega_3(\textbf{k};\boldsymbol{\ell})$ is given by
\vspace{-1mm}\small{\begin{equation*}\omega_3(\textbf{k};\boldsymbol{\ell})=\bigl(Nrt-3k_0-k_1-k_2\bigr)\bigl(k_0+2\ell_0+k_1+\ell_1\bigr)+k_0\bigl(2Nrk-Nrt-1\bigr)+k_1\bigl(Nr(k-t)-2\ell_0-\ell_1\bigr)+\ell_0\bigl(Nr(k-t)-\frac{3}{2}\ell_0-\ell_1+k_2-\frac{1}{2}\bigr).\vspace{-1mm}\end{equation*}}\normalsize
\end{cor}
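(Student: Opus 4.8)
The plan is to reduce the enumeration of self-orthogonal additive codes over $\mathcal{R}_3$ to the already-established count of Euclidean self-orthogonal $\mathbb{Z}_{p^3}\mathbb{Z}_{p^2}$-linear codes, exactly as in the proof of Corollary \ref{Corollary3.1}. First I would invoke Remark \ref{KEY}: an additive code $\mathcal{C}$ of length $N$ over $\mathcal{R}_3$ is self-orthogonal if and only if its image $\Psi(\mathcal{C})$ is a Euclidean self-orthogonal $\mathbb{Z}_{p^3}\mathbb{Z}_{p^2}$-linear code of block-length $\bigl(Nrt, Nr(k-t)\bigr)$. Since $\Psi$ is a $\mathbb{Z}_{p^3}$-module isomorphism by Theorem \ref{thm0.1}(a), this correspondence is a bijection, so the number of self-orthogonal additive codes of length $N$ over $\mathcal{R}_3$ equals $\mathfrak{N}_{3}\bigl(Nrt, Nr(k-t)\bigr)$.

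Next I would specialize the general enumeration formula of Theorem \ref{Thm5.3} to this case. Here the relevant chain ring is $\mathtt{R}_3 = \mathbb{Z}_{p^3}$ with maximal-ideal generator $\gamma = p$ and residue field $\overline{\mathtt{R}}_3 = \mathbb{Z}_p \simeq \mathbb{F}_p$, so the parameter $q$ appearing in Theorem \ref{Thm5.3} equals the prime $p$. Substituting $N_1 = Nrt$, $N_2 = Nr(k-t)$ and $q = p$ into the formula for $\mathfrak{N}_{3}(N_1,N_2)$ immediately produces the claimed summation, with $\sigma_q$ replaced by $\sigma_p$ and the Gaussian binomial coefficients taken over $\mathbb{F}_p$.

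The only point requiring verification is that the exponent $\Theta_3(\textbf{k};\boldsymbol{\ell})$ of \eqref{theta3} specializes to the stated $\omega_3(\textbf{k};\boldsymbol{\ell})$ under $N_1 = Nrt$ and $N_2 = Nr(k-t)$. Three of the four summands transcribe verbatim; the only nontrivial simplification is in the second term, where $k_0(N_1 + 2N_2 - 1) = k_0\bigl(Nrt + 2Nr(k-t) - 1\bigr) = k_0\bigl(2Nrk - Nrt - 1\bigr)$, which matches $\omega_3$. There is no genuine obstacle here: the entire argument is a transport of the count across the isomorphism of Theorem \ref{thm0.1} followed by a routine substitution, all of the combinatorial work having already been carried out in Theorem \ref{Thm5.3}.
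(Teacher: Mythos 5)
Your proposal is correct and follows exactly the paper's own route: apply Remark \ref{KEY} to identify self-orthogonal additive codes over $\mathcal{R}_3$ with Euclidean self-orthogonal $\mathbb{Z}_{p^3}\mathbb{Z}_{p^2}$-linear codes of block-length $\bigl(Nrt,Nr(k-t)\bigr),$ then substitute $N_1=Nrt,$ $N_2=Nr(k-t),$ $q=p$ into Theorem \ref{Thm5.3}. Your explicit check that $k_0(N_1+2N_2-1)$ becomes $k_0(2Nrk-Nrt-1),$ so that $\Theta_3$ specializes to $\omega_3,$ is a routine verification the paper leaves implicit, but otherwise the two arguments coincide.
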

\begin{proof}By Remark \ref{KEY}, we see that  the number of self-orthogonal additive codes  of length $N$ over $\mathcal{R}_3$ is equal to the number of distinct Euclidean self-orthogonal $\mathbb{Z}_{p^3}\mathbb{Z}_{p^2}$-linear codes of block-length $\bigl(Nrt,Nr(k-t)\bigr).$ On taking $N_1=Nrt$ and $N_2=Nr(k-t)$ in Theorem  \ref{Thm5.3}, the desired result follows immediately.\end{proof}
\vspace{-4mm}\begin{example}By Corollary \ref{Corollary3.3}, we see that 
there are precisely $499$ and $2064151$ distinct non-zero self-orthogonal additive codes of lengths $2$ and $3$ over $\mathbb{Z}_{27}[y]/\langle y^2-3,9y \rangle,$ respectively. This agrees with Example \ref{ex4.3} and Magma computations.
\end{example}
In the following theorem, we derive a necessary and sufficient condition for the existence of a Euclidean self-dual $\mathtt{R}_3\mathtt{R}_2$-linear code of block-length $(N_1,N_2).$ We also obtain an enumeration formula for $\mathfrak{M}_{3}(N_1,N_2).$
\vspace{-1mm}\begin{thm}\label{lem3.5}
There exists a Euclidean self-dual $\mathtt{R}_3\mathtt{R}_2$-linear code of block-length $(N_1,N_2)$ if and only if $N_1$ is even and $(-1)^{\frac{N_1}{2}}$ is a square in $\mathcal{T}_e(\simeq \mathbb{F}_q).$ Furthermore, when $N_1$ is even and $(-1)^{\frac{N_1}{2}}$ is a square in $\mathcal{T}_e,$ we have 
\vspace{-1mm}\small{\begin{align*}
\mathfrak{M}_{3}(N_1,N_2)=\sum\limits_{k_0=0}^{\frac{N_1}{2}}\sum\limits_{\ell_0=0}^{\floor{\frac{N_2}{2}}}\sigma_q(N_1,\tfrac{N_1}{2})\sigma_q(N_2,\ell_0)
\qbin{N_1/2}{k_0}{q}q^{k_0(\frac{N_1}{2}+N_2-1)+\frac{\ell_0}{2}(N_1-2k_0+\ell_0-1)} , 
 \end{align*}}\normalsize
where the numbers $\sigma_q(N_1,\frac{N_1}{2})$\textquotesingle s and $\sigma_q(N_2,\ell_0)$\textquotesingle s are given by \eqref{sigmaE}. 
\end{thm}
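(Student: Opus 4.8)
The plan is to deduce both assertions from the self-orthogonal enumeration already established in Theorem \ref{Thm5.2}, paralleling the treatment of the $e=2$ case in Theorem \ref{Lemma3.3}. For the existence claim I would apply Lemma \ref{Lem1.1} with $\mu=3$: here the index $i=2$ is self-paired under $i\mapsto\mu-i+1$, so the ``in particular'' clause forces $Tor_2(\mathscr{C}_3^{(X)})=Tor_2(\mathscr{C}_3^{(X)})^{\perp_E}$ for every Euclidean self-dual $\mathscr{C}_3$; that is, the middle torsion code is a self-dual linear code of length $N_1$ over $\mathcal{T}_e\simeq\mathbb{F}_q$. By Remark \ref{Remark2.2} such a code exists precisely when $N_1$ is even and $(-1)^{N_1/2}$ is a square in $\mathbb{F}_q$, yielding necessity; sufficiency I would recover at the very end from the strict positivity of the enumeration formula, whose $k_0=\ell_0=0$ term equals $\sigma_q(N_1,\tfrac{N_1}{2})>0$.

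Next I would determine the type of a self-dual code. Combining Remark \ref{Remark1.1} for $\mu=3$ odd with the self-dual relations $k_i=k_{\mu-i}$ and $\ell_j=\ell_{\mu-1-j}$ forces $k_1=k_2$, $k_0+k_1=\tfrac{N_1}{2}$ and $2\ell_0+\ell_1=N_2$; equivalently $k_1=k_2=\tfrac{N_1}{2}-k_0$ and $\ell_1=N_2-2\ell_0$, with $0\le k_0\le \tfrac{N_1}{2}$ and $0\le\ell_0\le\floor{N_2/2}$. A Euclidean self-orthogonal code of such a type is automatically self-dual: a direct count gives $|\mathscr{C}_3|=q^{3k_0+2k_1+k_2+2\ell_0+\ell_1}$, which under these relations equals $q^{\frac{3N_1}{2}+N_2}$, so $|\mathscr{C}_3|^2=q^{3N_1+2N_2}=|\mathscr{C}_3||\mathscr{C}_3^{\perp_E}|$ and hence $\mathscr{C}_3=\mathscr{C}_3^{\perp_E}$. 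Since the type of each code is unique, it follows that $\mathfrak{M}_3(N_1,N_2)=\sum_{k_0,\ell_0}\mathfrak{N}_3(N_1,N_2;\textbf{k};\boldsymbol{\ell})$, summed over exactly these types.

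It then remains to specialize Theorem \ref{Thm5.2}. Since $N_1-2k_0-k_1=k_2$ and $N_2-2\ell_0=\ell_1$, two of the Gaussian binomials collapse to $\qbin{k_2}{k_2}{q}=\qbin{\ell_1}{\ell_1}{q}=1$, while $\sigma_q(N_1,k_0+k_1)=\sigma_q(N_1,\tfrac{N_1}{2})$ and $\qbin{k_0+k_1}{k_0}{q}=\qbin{N_1/2}{k_0}{q}$; this reproduces every factor of the claim except the exponent. The one genuine computation, and the step I expect to demand the most care, is the reduction of $\Theta_3$ from \eqref{theta3} under these substitutions. Evaluating term by term, the prefactor $N_1-3k_0-k_1-k_2$ collapses to $-k_0$ with companion $k_0+2\ell_0+k_1+\ell_1=\tfrac{N_1}{2}+N_2$; the term $k_1(N_2-2\ell_0-\ell_1)$ vanishes identically; and the final term reduces to $\tfrac{\ell_0}{2}(N_1-2k_0+\ell_0-1)$. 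Gathering the $k_0$-contributions via $-k_0(\tfrac{N_1}{2}+N_2)+k_0(N_1+2N_2-1)=k_0(\tfrac{N_1}{2}+N_2-1)$ then yields exactly the exponent in the statement, completing the formula; its positivity under the existence hypothesis closes the sufficiency direction noted above.
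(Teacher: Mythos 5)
Your proposal is correct and follows essentially the same route as the paper, whose proof is a one-line citation of exactly the ingredients you use: the self-orthogonal enumeration (Theorem \ref{Thm5.2}/\ref{Thm5.3}), Lemma \ref{Lem1.1}, and Remarks \ref{Remark1.1} and \ref{Remark2.2}. Your write-up simply supplies the details the paper leaves implicit --- the type constraints $k_1=k_2=\tfrac{N_1}{2}-k_0$, $\ell_1=N_2-2\ell_0$, the cardinality argument showing self-orthogonal codes of these types are self-dual, and the reduction of $\Theta_3$ to the stated exponent --- all of which check out.
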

\begin{proof}
It follows from Theorem \ref{Thm5.3}, Lemma \ref{Lem1.1}, and Remarks \ref{Remark1.1} and \ref{Remark2.2}.
\end{proof}
\vspace{-4mm}\begin{example}\label{Eg.4.7} Using Magma, we see that there are precisely $172$ distinct Euclidean self-dual $\mathbb{Z}_{5^3}\mathbb{Z}_{5^2}$-linear codes of block-length $(2,2).$ This agrees with Theorem \ref{lem3.5}.
\end{example}
In the following corollary, we derive a necessary and sufficient condition for the existence of a self-dual  additive code of length $N$ over $\mathcal{R}_3.$ We also count all self-dual additive codes of length $N$ over $\mathcal{R}_3.$ 
\vspace{-1mm}\begin{cor}\label{Corollary3.4}
There exists a self-dual additive code of length $N$ over $\mathcal{R}_3$ if and only if $Nrt$ is even and $(-1)^{\frac{Nrt}{2}}$ is a square in $\mathbb{Z}_p.$ Furthermore, when $Nrt$ is even and $(-1)^{\frac{Nrt}{2}}$ is a square in $\mathbb{Z}_p,$ the number of self-dual additive codes of length $N$ over $\mathcal{R}_3$ is given by
\vspace{-1mm}\small$$\sum\limits_{k_0=0}^{\frac{Nrt}{2}}\sum\limits_{\ell_0=0}^{\floor{\frac{Nr(k-t)}{2}}}\sigma_p(Nrt,\tfrac{Nrt}{2})\sigma_p\bigl(Nr(k-t),\ell_0\bigr)
\qbin{Nrt/2}{k_0}{p}p^{k_0\bigl(\frac{Nrt}{2}+Nr(k-t)-1\bigr)+\frac{\ell_0}{2}(Nrt-2k_0+\ell_0-1)},\vspace{-1mm}$$
\normalsize where the numbers $\sigma_p(Nrt,\frac{Nrt}{2})$\textquotesingle s and $\sigma_p\bigl(Nr(k-t),\ell_0\bigr)$\textquotesingle s are given by \eqref{sigmaE}.
\end{cor}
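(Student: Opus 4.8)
The plan is to reduce this statement to the already-established enumeration for Euclidean self-dual $\mathtt{R}_3\mathtt{R}_2$-linear codes via the duality-preserving correspondence. First I would invoke Remark \ref{KEY}, which asserts that an additive code $\mathcal{C}$ of length $N$ over $\mathcal{R}_3$ is self-dual if and only if its image $\Psi(\mathcal{C})$ is a Euclidean self-dual $\mathbb{Z}_{p^3}\mathbb{Z}_{p^2}$-linear code of block-length $\bigl(Nrt,Nr(k-t)\bigr)$. Since $\Psi$ is a bijection, this yields an exact equality between the number of self-dual additive codes of length $N$ over $\mathcal{R}_3$ and the number $\mathfrak{M}_3\bigl(Nrt,Nr(k-t)\bigr)$ of Euclidean self-dual $\mathbb{Z}_{p^3}\mathbb{Z}_{p^2}$-linear codes of that block-length.

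Next I would apply Theorem \ref{lem3.5} directly, taking $\mathtt{R}_3=\mathbb{Z}_{p^3}$, so that its residue field is $\overline{\mathtt{R}}_3\simeq\mathbb{Z}_p$ and hence $q=p$. Setting $N_1=Nrt$ and $N_2=Nr(k-t)$, the existence criterion of Theorem \ref{lem3.5}, namely that $N_1$ be even and $(-1)^{N_1/2}$ be a square in $\mathcal{T}_e$, translates verbatim into the requirement that $Nrt$ be even and $(-1)^{Nrt/2}$ be a square in $\mathbb{Z}_p$, which is precisely the stated necessary and sufficient condition. The enumeration formula then follows by substituting these values into the expression for $\mathfrak{M}_3(N_1,N_2)$ and replacing each $\sigma_q$ by the corresponding $\sigma_p$.

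I do not expect a substantial obstacle, as the statement is merely a specialization of Theorem \ref{lem3.5} through the correspondence of Remark \ref{KEY}. The only point requiring care is verifying that the residue field of $\mathbb{Z}_{p^3}$ is $\mathbb{F}_p$, so that the parameter $q$ appearing in Theorem \ref{lem3.5} is to be replaced by $p$ rather than left as $q$. This mirrors exactly the derivation of Corollary \ref{Corollary3.2} from Theorem \ref{Lemma3.3} in the case $e=2$, and the argument is otherwise immediate.
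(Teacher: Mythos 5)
Your proposal is correct and matches the paper's own proof exactly: the paper likewise invokes Remark \ref{KEY} to equate the count of self-dual additive codes over $\mathcal{R}_3$ with the number of Euclidean self-dual $\mathbb{Z}_{p^3}\mathbb{Z}_{p^2}$-linear codes of block-length $\bigl(Nrt,Nr(k-t)\bigr)$, and then specializes Theorem \ref{lem3.5} with $N_1=Nrt,$ $N_2=Nr(k-t)$ and $q=p$. Your additional remark about the residue field of $\mathbb{Z}_{p^3}$ being $\mathbb{F}_p$ is the right (implicit) justification for replacing $q$ by $p$.
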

\begin{proof}By Remark \ref{KEY}, we see that  the number of self-dual additive codes  of length $N$ over $\mathcal{R}_3$ is equal to the number of distinct Euclidean self-dual $\mathbb{Z}_{p^3}\mathbb{Z}_{p^2}$-linear codes of block-length $\bigl(Nrt,Nr(k-t)\bigr).$ Now the desired result follows by taking $N_1=Nrt$ and $N_2=Nr(k-t)$ in Theorem  \ref{lem3.5}.\end{proof}
\vspace{-4mm}\begin{example}By Corollary \ref{Corollary3.4}, we see that 
there are precisely $172$ distinct self-dual additive codes of length $2$ over $\mathbb{Z}_{5^3}[y]/\langle y^2-5,5^2y \rangle.$ This agrees with Example \ref{Eg.4.7} and Magma computations.
\end{example}
\vspace{-4mm}\subsection{The case $e\geq 4$}
\vspace{-1mm}
Throughout this section, let $\mu$ be an integer satisfying $4\leq\mu\leq e.$ Here, we will first provide a recursive method to construct a Euclidean self-orthogonal $\mathtt{R}_\mu\mathtt{R}_{\mu-1}$-linear code of block-length $(N_1,N_2)$ from a Euclidean self-orthogonal $\mathtt{R}_{\mu-2}\mathtt{R}_{\mu-3}$-linear code of the same block-length $(N_1,N_2),$ which gives rise to a recurrence relation between enumeration formulae for these two classes of codes. This method is an extension of the recursive method employed by Yadav and Sharma \cite[Sec. 5]{Yadav} to count Euclidean self-orthogonal and self-dual codes over $\mathtt{R}_e.$

 To outline this recursive construction method,  we will follow the same notations as introduced in Section \ref{prelim3} from now on. Here, we first recall, from Section \ref{prelim3}, that the $\mathtt{R}_{\mu-2}$-module  $\mathtt{R}_{\mu-2}^{N_1}\oplus\mathtt{R}_{\mu-3}^{N_2}$ can be embedded into the $\mathtt{R}_{\mu}$-module $\mathtt{R}_{\mu}^{N_1}\oplus\mathtt{R}_{\mu-1}^{N_2}.$ We next assume, throughout this paper, that for  positive integers $n$ and $s,$ 
$\mathscr{K}_{n,s}$ is the set consisting of all $s$-tuples $(k_0,k_1,\ldots,k_{s-1})$ of non-negative integers satisfying $2k_0+2k_1+\cdots+2k_{s-1-i}+k_{s-1-i+1}+k_{s-1-i+2}+\cdots+k_i\leq n$ for $\floor{\frac{s}{2}}\leq i\leq s-1.$ Further, for an $s$-tuple $\textbf{k}=(k_0,k_1,\ldots,k_{s-1})\in\mathscr{K}_{n,s},$  let us define the number
\vspace{-2mm}\begin{align}
m_i(\textbf{k})&=k_0+k_1+\cdots+k_i~\text{for}~0\leq i\leq s-1.\label{defmi}
\vspace{-1mm}\end{align}
Note that $m_i(\mathbf{k})=0$ if $i<0.$ We see, by Remark  \ref{Remark1.1}, that if there exists a Euclidean self-orthogonal $\mathtt{R}_\mu\mathtt{R}_{\mu-1}$-linear code of block-length $(N_1,N_2)$ and type  $\{k_0,k_1,\ldots,k_{\mu-1};\ell_0,\ell_1,\ldots,\ell_{\mu-2}\},$ then   $\textbf{k}=(k_0,k_1,\ldots,k_{\mu-1})\in\mathscr{K}_{N_1,\mu}$ and  $\boldsymbol{\ell}=(\ell_0,\ell_1,\ldots,\ell_{\mu-2})\in\mathscr{K}_{N_2,\mu-1}.$ Moreover, if there exists a Euclidean self-orthogonal $\mathtt{R}_{\mu-2}\mathtt{R}_{\mu-3}$-linear code of block-length $(N_1,N_2)$ and type $\{k_0+k_1,k_2,\ldots,k_{\mu-2};\ell_0+\ell_1,\ell_2,\ldots,\ell_{\mu-3}\},$ then $\textbf{k}=(k_0,k_1,\ldots,k_{\mu-1})\in\mathscr{K}_{N_1,\mu}$ and $\boldsymbol{\ell}=(\ell_0,\ell_1,\ldots,\ell_{\mu-2})\in\mathscr{K}_{N_2,\mu-1}.$
Accordingly, we assume, throughout this section, that $\textbf{k}=(k_0,k_1,\ldots,k_{\mu-1})\in\mathscr{K}_{N_1,\mu}$ and  $\boldsymbol{\ell}=(\ell_0,\ell_1,\ldots,\ell_{\mu-2})\in\mathscr{K}_{N_2,\mu-1}.$ Note that $m_{\mu-2}(\textbf{k})=k_0+k_1+\cdots+k_{\mu-2}$ and $m_{\mu-3}(\boldsymbol{\ell})=\ell_0+\ell_1+\cdots+\ell_{\mu-3}.$ 

Throughout this section, let $\mathscr{C}_{\mu-2}$ be a Euclidean self-orthogonal $\mathtt{R}_{\mu-2}\mathtt{R}_{\mu-3}$-linear code of block-length $(N_1,N_2)$ and type $\{k_0+k_1,k_2,\ldots,k_{\mu-2};\ell_0+\ell_1,\ell_2,\ldots,\ell_{\mu-3}\}.$ Since $Tor_1(\mathscr{C}_{\mu-2}^{(X)})$
is a $(k_0+k_1)$-dimensional subspace of $\mathcal{T}_e^{N_1}$ over $\mathcal{T}_e$ and $Tor_1(\mathscr{C}_{\mu-2}^{(Y)})$ is an $(\ell_0+\ell_1)$-dimensional subspace of $\mathcal{T}_e^{N_2}$ over $\mathcal{T}_e,$ there are precisely $
\qbin{k_0+k_1}{k_{0}}{q}\qbin{\ell_0+\ell_1}{\ell_0}{q}$ distinct ways of choosing  a $k_0$-dimensional subspace $P_1$ of $Tor_1(\mathscr{C}_{\mu-2}^{(X)})$ and an $\ell_0$-dimensional subspace $P_2$ of $Tor_1(\mathscr{C}_{\mu-2}^{(Y)}).$ Further, for  given choices of $P_1$ and $P_2,$  without any loss of generality, we assume, by applying suitable row and columns operations (if necessary), that the code $\mathscr{C}_{\mu-2}$ has a generator matrix $\mathtt{G}_{\mu-2}$ of the form 
\vspace{-2mm}\begin{equation}\label{Gmu-2}
\mathtt{G}_{\mu-2}=\left[\begin{array}{c|c}
    \mathtt{A}_{\mu-2} & \mathtt{B}_{\mu-2} \\ 
    \mathtt{C}_{\mu-2} & \mathtt{D}_{\mu-2}
\end{array}\right],
\vspace{-2mm}\end{equation}
where \vspace{-2mm}\begin{eqnarray*}\small{[\mathtt{A}_{\mu-2}~|~\mathtt{B}_{\mu-2}]}\hspace{-3mm}&=&\hspace{-4mm}\small{\left[\begin{array}{cccccc|cccccc}
 I_{k_0} & \mathtt{A}^{(\mu-2)}_{0,1} & \mathtt{A}^{(\mu-2)}_{0,2} & \cdots & \mathtt{A}^{(\mu-2)}_{0,\mu-2} & \mathtt{A}^{(\mu-2)}_{0,\mu-1}& 0 & \mathtt{B}^{(\mu-2)}_{0,1} & \mathtt{B}^{(\mu-2)}_{0,2} & \cdots & \mathtt{B}^{(\mu-2)}_{0,\mu-3} & \mathtt{B}^{(\mu-2)}_{0,\mu-2} \\
 0 & I_{k_1} & \mathtt{A}^{(\mu-2)}_{1,2}  & \cdots &  \mathtt{A}^{(\mu-2)}_{1,\mu-2} &  \mathtt{A}^{(\mu-2)}_{1,\mu-1}& 0 & 0 & \mathtt{B}^{(\mu-2)}_{1,2} &  \cdots  & \mathtt{B}^{(\mu-2)}_{1,\mu-3}&  \mathtt{B}^{(\mu-2)}_{1,\mu-2}\\
 0 & 0 & \gamma I_{k_2}  & \cdots & \gamma \mathtt{A}^{(\mu-2)}_{2,\mu-2} & \gamma \mathtt{A}^{(\mu-2)}_{2,\mu-1}& 0 & 0 & 0 &  \cdots & \gamma \mathtt{B}^{(\mu-2)}_{2,\mu-3} & \gamma \mathtt{B}^{(\mu-2)}_{2,\mu-2} \\
 \vdots  & \vdots & \vdots & \ddots & \vdots & \vdots & \vdots &  \vdots & \vdots & \ddots & \vdots &\vdots \\
 0 & 0  & 0 & \cdots & \gamma^{\mu-3}I_{k_{\mu-2}} & \gamma^{\mu-3}\mathtt{A}^{(\mu-2)}_{\mu-2,\mu-1}& 0 & 0 & 0 & \cdots & 0 &0
\end{array}
\right]}\\&=&\hspace{-4mm}\small{\left[\begin{array}{c|c}
\mathtt{A}_0^{(\mu-2)}&\mathtt{B}_0^{(\mu-2)}\\
\mathtt{A}_1^{(\mu-2)}&\mathtt{B}_1^{(\mu-2)}\\
\gamma\mathtt{A}_2^{(\mu-2)}&\gamma\mathtt{B}_2^{(\mu-2)}\\
\vdots&\vdots\\
\gamma^{\mu-4}\mathtt{A}_{\mu-3}^{(\mu-2)}&\gamma^{\mu-4}\mathtt{B}_{\mu-3}^{(\mu-2)}\\
\gamma^{\mu-3}\mathtt{A}_{\mu-2}^{(\mu-2)}&0
\end{array}
\right]=\left[\begin{array}{c}
\hat{R}_1^{(\mu-2)}\\
\hat{R}_2^{(\mu-2)}\\
\gamma \hat{R}_3^{(\mu-2)}\\
\vdots\\
\gamma^{\mu-4}\hat{R}_{\mu-2}^{(\mu-2)}\\
\gamma^{\mu-3}\hat{R}_{\mu-1}^{(\mu-2)}
\end{array}
\right]}\normalsize\vspace{-2mm}\end{eqnarray*}
with $\mathtt{A}_{0,i}^{(\mu-2)}=\mathtt{A}_{0,i}'^{(\mu-2)}+\gamma^{i-1}\mathtt{A}_{0,i}''^{(\mu-2)}$ for $2\leq i\leq \mu-2$  and $\mathtt{B}_{0,j}^{(\mu-2)}=\mathtt{B}_{0,j}'^{(\mu-2)}+\gamma^{j-1}\mathtt{B}_{0,j}''^{(\mu-2)}$ for $2\leq j\leq \mu-3,$ and  
\vspace{-2mm}\begin{eqnarray*}\small{[\mathtt{C}_{\mu-2}~|~\mathtt{D}_{\mu-2}]}\hspace{-3mm}&=&\hspace{-4mm}\small{\left[\begin{array}{cccccc|cccccc}
0 & 0 & \gamma \mathtt{C}^{(\mu-2)}_{0,2} & \cdots & \gamma \mathtt{C}^{(\mu-2)}_{0,\mu-2} & \gamma \mathtt{C}^{(\mu-2)}_{0,\mu-1} & I_{\ell_0} & \mathtt{D}^{(\mu-2)}_{0,1} & \mathtt{D}^{(\mu-2)}_{0,2} & \cdots &  \mathtt{D}^{(\mu-2)}_{0,\mu-3} & \mathtt{D}^{(\mu-2)}_{0,\mu-2} \\
0 & 0 & 0 & \cdots & \gamma \mathtt{C}^{(\mu-2)}_{1,\mu-2} & \gamma \mathtt{C}^{(\mu-2)}_{1,\mu-1} & 0 &  I_{\ell_1} & \mathtt{D}^{(\mu-2)}_{1,2} & \cdots &  \mathtt{D}^{(\mu-2)}_{1,\mu-3}  &  \mathtt{D}^{(\mu-2)}_{1,\mu-2} \\
0 & 0 & 0 & \cdots & \gamma^2 \mathtt{C}^{(\mu-2)}_{2,\mu-2} & \gamma^2 \mathtt{C}^{(\mu-2)}_{2,\mu-1} & 0 &  0 & \gamma I_{\ell_2} & \cdots & \gamma \mathtt{D}^{(\mu-2)}_{2,\mu-3}  & \gamma \mathtt{D}^{(\mu-2)}_{2,\mu-2} \\
\vdots  & \vdots & \vdots & \ddots & \vdots & \vdots & \vdots &  \vdots & \vdots & \ddots & \vdots &\vdots \\
0 & 0 & 0 & \cdots & 0 & \gamma^{\mu-3}\mathtt{C}^{(\mu-2)}_{\mu-3,\mu-1} & 0 & 0 & 0 & \cdots & \gamma^{\mu-4}I_{\ell_{\mu-3}} & \gamma^{\mu-4}\mathtt{D}^{(\mu-2)}_{\mu-3,\mu-2}
\end{array}
\right]}\\&=&\hspace{-4mm}\small{\left[\begin{array}{c|c}
\gamma\mathtt{C}_0^{(\mu-2)}&\mathtt{D}_0^{(\mu-2)}\\
\gamma\mathtt{C}_1^{(\mu-2)}&\mathtt{D}_1^{(\mu-2)}\\
\gamma^2\mathtt{C}_2^{(\mu-2)}&\gamma\mathtt{D}_2^{(\mu-2)}\\
\vdots&\vdots\\
\gamma^{\mu-3}\mathtt{C}_{\mu-3}^{(\mu-2)}&\gamma^{\mu-4}\mathtt{D}_{\mu-3}^{(\mu-2)}
\end{array}
\right]=\left[\begin{array}{c}
\hat{S}_1^{(\mu-2)}\\
\hat{S}_2^{(\mu-2)}\\
\gamma \hat{S}_3^{(\mu-2)}\\
\vdots\\
\gamma^{\mu-4}\hat{S}_{\mu-2}^{(\mu-2)}\\
\end{array}
\right]}\normalsize\vspace{-2mm}\end{eqnarray*}
with $\gamma\mathtt{C}_{0,i}^{(\mu-2)}=\gamma\mathtt{C}_{0,i}'^{(\mu-2)}+\gamma^{i-1}\mathtt{C}_{0,i}''^{(\mu-2)}$ for $2\leq i\leq \mu-2$ and $\mathtt{D}_{0,j}^{(\mu-2)}=\mathtt{D}_{0,j}'^{(\mu-2)}+\gamma^{j-1}\mathtt{D}_{0,j}''^{(\mu-2)}$ for $2\leq j\leq \mu-3.$ 
Here,  $I_{k_i}$ is the $k_i \times k_i$ identity matrix over $\mathtt{R}_{\mu-2}$ for $0 \leq i \leq \mu-2$ and  $I_{\ell_j}$ is the $\ell_j \times \ell_j$ identity matrix over  $\mathtt{R}_{\mu-3}$ for $0 \leq j \leq \mu-3.$ Further, we have $\mathtt{A}_{0,1}^{(\mu-2)}\in M_{k_0\times k_1}(\mathcal{T}_e),$   $\mathtt{D}_{0,1}^{(\mu-2)}\in M_{\ell_0\times \ell_1}(\mathcal{T}_e),$  $\mathtt{A}^{(\mu-2)}_{0,\mu-1}\in M_{k_0\times (N_1-m_{\mu-2}(\textbf{k}))}(\mathtt{R}_{\mu-2}),$   $\mathtt{B}^{(\mu-2)}_{0,\mu-2}\in M_{k_0\times (N_2-m_{\mu-3}(\boldsymbol{\ell}))}(\mathtt{R}_{\mu-3}),$   $\mathtt{D}^{(\mu-2)}_{0,\mu-2}\in M_{\ell_0\times (N_2-m_{\mu-3}(\boldsymbol{\ell}))}(\mathtt{R}_{\mu-3})$ and   $\gamma \mathtt{C}^{(\mu-2)}_{0,\mu-1}\in M_{\ell_0\times (N_1-m_{\mu-2}(\textbf{k}))}(\mathtt{R}_{\mu-2}).$ Also, the matrix  $\gamma^{i-1}\mathtt{A}^{(\mu-2)}_{i,\mu-1}\in M_{k_i\times (N_1-m_{\mu-2}(\textbf{k}))}(\mathtt{R}_{\mu-2})$ for $1\leq i\leq \mu-2,$  and
the matrices $\gamma^{j-1}\mathtt{B}^{(\mu-2)}_{j,\mu-2}\in M_{k_j\times (N_2-m_{\mu-3}(\boldsymbol{\ell}))}(\mathtt{R}_{\mu-3}),$ $\gamma^{j}\mathtt{C}^{(\mu-2)}_{j,\mu-1}\in M_{\ell_j\times (N_1-m_{\mu-2}(\textbf{k}))}(\mathtt{R}_{\mu-2})$ and  $\gamma^{j-1} \mathtt{D}^{(\mu-2)}_{j,\mu-2}\in M_{\ell_j\times (N_2-m_{\mu-3}(\boldsymbol{\ell}))}(\mathtt{R}_{\mu-3})$ for $1\leq j\leq \mu-3.$ Moreover, the matrix $\mathtt{A}'^{(\mu-2)}_{0,i}\in M_{k_0\times k_i}(\mathtt{R}_{\mu-2})$ is considered modulo $\gamma^{i-1}$ for $2\leq i\leq \mu-2,$  the matrix $\mathtt{A}^{(\mu-2)}_{i,j}\in M_{k_i\times k_j}(\mathtt{R}_{\mu-2})$ is considered modulo $\gamma^{j-i}$ for $1\leq i<j\leq \mu-2,$ the matrices $\mathtt{B}'^{(\mu-2)}_{0,j}\in M_{k_0\times \ell_j}(\mathtt{R}_{\mu-3})$ and $\mathtt{D}'^{(\mu-2)}_{0,j}\in M_{\ell_0\times \ell_j}(\mathtt{R}_{\mu-3})$ are  considered modulo $\gamma^{j-1}$ for $2\leq j\leq \mu-3,$ 
the matrices $\mathtt{B}^{(\mu-2)}_{i,j}\in M_{k_i\times \ell_j}(\mathtt{R}_{\mu-3})$ and $\mathtt{D}^{(\mu-2)}_{i,j}\in M_{\ell_i\times \ell_j}(\mathtt{R}_{\mu-3})$ are considered modulo $\gamma^{j-i}$ for $1\leq i<j\leq \mu-3,$ the matrix $\gamma \mathtt{C}'^{(\mu-2)}_{0,i}\in M_{\ell_0\times k_i}(\mathtt{R}_{\mu-2})$ is considered modulo $\gamma^{i-1}$ for $2\leq i\leq \mu-2,$ and the matrix $\mathtt{C}^{(\mu-2)}_{i,j}\in M_{\ell_i\times k_j}(\mathtt{R}_{\mu-2})$ is considered modulo $\gamma^{j-i-1}$ for $1\leq i\leq \mu-4$ and $i+2\leq j\leq \mu-2.$ Besides this, the matrix $\mathtt{B}_{0,1}^{(\mu-2)}\in M_{k_0\times \ell_1}(\mathcal{T}_e),$ the matrices $\mathtt{A}_{0,i}''^{(\mu-2)}\in M_{k_0\times k_i}(\mathcal{T}_e)$  and $\mathtt{C}_{0,i}''^{(\mu-2)}\in M_{\ell_0\times k_i}(\mathcal{T}_e)$ for $2\leq i\leq \mu-2,$ and  the matrices $\mathtt{B}_{0,j}''^{(\mu-2)}\in M_{k_0\times \ell_{j}}(\mathcal{T}_e)$ and $\mathtt{D}_{0,j}''^{(\mu-2)}\in M_{\ell_0\times \ell_j}(\mathcal{T}_e)$ for $2\leq j\leq \mu-3$ are chosen arbitrarily. That is, we have  $\mathtt{A}_0^{(\mu-2)}\in M_{k_0\times N_1}(\mathtt{R}_{\mu-2}),$ $\mathtt{B}_0^{(\mu-2)}\in M_{k_0\times N_2}(\mathtt{R}_{\mu-3}),$  $\gamma\mathtt{C}_0^{(\mu-2)}\in M_{\ell_0\times N_1}(\mathtt{R}_{\mu-2}),$   $\mathtt{D}_0^{(\mu-2)}\in M_{\ell_0\times N_2}(\mathtt{R}_{\mu-3})$
$\gamma^{i-1}\mathtt{A}_i^{(\mu-2)}\in M_{k_i\times N_1}(\mathtt{R}_{\mu-2})$ for $1\leq i\leq \mu-2,$  and the matrices
$\gamma^{j-1}\mathtt{B}_j^{(\mu-2)}\in M_{k_j\times N_2}(\mathtt{R}_{\mu-3}),$ $\gamma^j\mathtt{C}_j^{(\mu-2)}\in M_{\ell_j\times N_1}(\mathtt{R}_{\mu-2})$ and $\gamma^{j-1}\mathtt{D}_j^{(\mu-2)}\in M_{\ell_j\times N_2}(\mathtt{R}_{\mu-3})$  for $1\leq j\leq \mu-3.$  
  Moreover, the columns of the matrix $\begin{bmatrix}
 \mathtt{A}_{\mu-2}\\
 \mathtt{C}_{\mu-2}
\end{bmatrix}$ are grouped into blocks of sizes $k_0,k_1,\ldots,k_{\mu-2},N_1-m_{\mu-2}(\textbf{k}),$ and the columns of the matrix $\begin{bmatrix}
\mathtt{B}_{\mu-2}\\
\mathtt{D}_{\mu-2}
\end{bmatrix}$ are grouped into blocks of sizes $\ell_0,\ell_1,\ldots,\ell_{\mu-3},N_2-m_{\mu-3}(\boldsymbol{\ell}).$  Furthermore, since the code $\mathscr{C}_{\mu-2}$ is Euclidean self-orthogonal, we see, by Lemma \ref{Lem1.1} and  working as in Remark 2.2 of Yadav and Sharma \cite{Yadav},  that the matrices $\overline{\mathtt{A}_{0,\mu-1}^{(\mu-2)}}$ and $\overline{\mathtt{D}_{0,\mu-2}^{(\mu-2)}}$ are of full row-ranks.  

Now, let us define the following matrices in $\mathfrak{M}$ as
\vspace{-1mm}\small{\begin{align*}
R_1^{(\mu)}&~=~ \hat{R}_1^{(\mu-2)}+[0~0~\cdots~0~\gamma^{\mu-2}E_{0,\mu-1}+\gamma^{\mu-1}F_{0,\mu-1}~|~0~0~\cdots~0~\gamma^{\mu-3}M_{0,\mu-2}+\gamma^{\mu-2}L_{0,\mu-2}],\\
\gamma^{i}R_{i+1}^{(\mu)}&~=~ \gamma^i\hat{R}_{i+1}^{(\mu-2)}+[0~0~\cdots~0~\gamma^{\mu-1}F_{i,\mu-1}~|~0~0~\cdots~0~\gamma^{\mu-2}L_{i,\mu-2}]~\text{for}~1\leq i\leq \mu-2,\\
\gamma^{\mu-1}R_\mu^{(\mu)}&~=~[0~0~\cdots~0~\gamma^{\mu-1}\mathtt{A}_{\mu-1,\mu-1}~|~0~ 0~ \cdots ~ 0],\\
S_1^{(\mu)}&~=~\hat{S}_1^{(\mu-2)}+[0~0~\cdots~0~\gamma^{\mu-2}K_{0,\mu-1}+\gamma^{\mu-1}J_{0,\mu-1}~|~0~0~\cdots~0~\gamma^{\mu-3}G_{0,\mu-2}+\gamma^{\mu-2}H_{0,\mu-2}],\\
\gamma^jS_{j+1}^{(\mu)}&~ =~\gamma^j\hat{S}_{j+1}^{(\mu-2)}+[0~0~\cdots~0~\gamma^{\mu-1}J_{j,\mu-1}~|~0~0~\cdots~0~\gamma^{\mu-2}H_{j,\mu-1}]~\text{for}~1\leq j\leq \mu-3, \text{ and}\\
\gamma^{\mu-2}S_{\mu-1}^{(\mu)}&~=~[0~0~\cdots~0~\gamma^{\mu-1} J_{\mu-2,\mu-1}~|~0~0~\cdots~0~\gamma^{\mu-2}\mathtt{D}_{\mu-2,\mu-2}],
\end{align*}}\normalsize
where  the matrices $E_{0,\mu-1}\in M_{k_0\times (N_1-m_{\mu-2}(\textbf{k}))}(\mathcal{T}_e),$ $M_{0,\mu-2}\in M_{k_0\times (N_2-m_{\mu-3}(\boldsymbol{\ell}))}(\mathcal{T}_e),$ $K_{0,\mu-1}\in M_{\ell_0\times (N_1-m_{\mu-2}(\textbf{k}))}\\(\mathcal{T}_e),$ $G_{0,\mu-2}\in M_{\ell_0\times (N_2-m_{\mu-3}(\boldsymbol{\ell}))}(\mathcal{T}_e),$ the matrices $F_{i,\mu-1}\in M_{k_i\times (N_1-m_{\mu-2}(\textbf{k}))}(\mathcal{T}_e),$ $L_{i,\mu-2}\in M_{k_i\times (N_2-m_{\mu-3}(\boldsymbol{\ell}))}(\mathcal{T}_e)$ and $J_{i,\mu-1}\in M_{\ell_i\times (N_1-m_{\mu-2}(\textbf{k}))}(\mathcal{T}_e)$ for $0\leq i\leq \mu-2,$ the matrix  $H_{j,\mu-2}\in M_{\ell_j\times (N_2-m_{\mu-3}(\boldsymbol{\ell}))}(\mathcal{T}_e)$ for $0\leq j\leq \mu-3,$ and both the matrices $\mathtt{A}_{\mu-1,\mu-1}\in M_{k_{\mu-1}\times (N_1-m_{\mu-2}(\textbf{k}))}(\mathcal{T}_e)$ and $\mathtt{D}_{\mu-2,\mu-2}\in M_{\ell_{\mu-2}\times (N_2-m_{\mu-3}(\boldsymbol{\ell}))}(\mathcal{T}_e)$ are of full row-ranks.
We next assume, throughout this section, that  $\mathscr{C}_{\mu}$ is an $\mathtt{R}_{\mu}\mathtt{R}_{\mu-1}$-linear code of block-length $(N_1,N_2)$ with a generator matrix  \vspace{-2mm}\begin{equation}\label{Gmu}\mathtt{G}_\mu=\left[\begin{array}{c}
R_1^{(\mu)}\\
\gamma R_2^{(\mu)}\\
\gamma^2 R_3^{(\mu)}\\
\vdots\\
\gamma^{\mu-2} R_{\mu-1}^{(\mu)}\\
\gamma^{\mu-1} R_\mu^{(\mu)}\\
S_1^{(\mu)}\\
\gamma S_2^{(\mu)}\\
\vdots\\
\gamma^{\mu-3}S_{\mu-2}^{(\mu)}\\
\gamma^{\mu-2} S_{\mu-1}^{(\mu)}
\end{array}
\right].\end{equation}
Note that the code $\mathscr{C}_{\mu}$ satisfies $Tor_1(\mathscr{C}_\mu^{(X)})=P_1,$ $Tor_1(\mathscr{C}_\mu^{(Y)})=P_2,$ 
\vspace{-2mm}\begin{align} 
Tor_{i+1}(\mathscr{C}_\mu^{(X)})&=Tor_i(\mathscr{C}_{\mu-2}^{(X)})~\text{for}~1\leq i\leq \mu-2~\text{and}\label{eqn3.32}\\
Tor_{j+1}(\mathscr{C}_\mu^{(Y)})&=Tor_j(\mathscr{C}_{\mu-2}^{(Y)})~\text{for}~1\leq j\leq \mu-3.\label{eqn3.33}
\end{align}
Further, the code $\mathscr{C}_{\mu}$  is of the type $\{k_0,k_1,\ldots,k_{\mu-1};\ell_0,\ell_1,\ldots,\ell_{\mu-2}\}$ and is called a lift of the code $\mathscr{C}_{\mu-2}$ from now on. We next observe the following:
\vspace{-1mm}\begin{lemma} \label{LLemma}If the code $\mathscr{C}_\mu$ is Euclidean self-orthogonal, then the code $\mathscr{C}_{\mu-2}$ is also Euclidean self-orthogonal.  
\end{lemma}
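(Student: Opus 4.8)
The plan is to read the Euclidean self-orthogonality conditions for $\mathscr{C}_{\mu-2}$ off those for $\mathscr{C}_\mu$, using the characterisation (recorded just before Lemma \ref{Lem6.1}) that a code with generator matrix $\mathtt{G}$ over a chain ring of nilpotency index $\nu$ is Euclidean self-orthogonal if and only if $\mathtt{G}\diamond\mathtt{G}^T\equiv 0\pmod{\gamma^{\nu}}$. Thus the hypothesis gives $\mathtt{G}_\mu\diamond\mathtt{G}_\mu^T\equiv0\pmod{\gamma^{\mu}}$, equivalently (by Lemma \ref{Lem6.1}) the vanishing modulo $\gamma^\mu$ of every $\diamond$-product of two generator rows of $\mathtt{G}_\mu$ in \eqref{Gmu}; the goal is to deduce $\mathtt{G}_{\mu-2}\diamond\mathtt{G}_{\mu-2}^T\equiv0\pmod{\gamma^{\mu-2}}$, i.e. the vanishing modulo $\gamma^{\mu-2}$ of every $\diamond$-product of two generator rows of $\mathtt{G}_{\mu-2}$ in \eqref{Gmu-2}.

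First I would fix the correspondence between the generator rows of the two matrices dictated by the construction. Write the rows of $\mathtt{G}_{\mu-2}$ as $\gamma^{a_m}\hat R_m^{(\mu-2)}$ and $\gamma^{b_n}\hat S_n^{(\mu-2)}$, where $a_1=a_2=0$ and $a_m=m-2$ for $m\ge2$ (and the $b_n$ analogously), and the rows of $\mathtt{G}_\mu$ as $\gamma^{m-1}R_m^{(\mu)}$ and $\gamma^{n-1}S_n^{(\mu)}$. Inspecting the defining relations, each such row of $\mathtt{G}_\mu$ may be taken to equal its $\mathtt{G}_{\mu-2}$ partner plus a correction: $R_m^{(\mu)}=\hat R_m^{(\mu-2)}+\Xi_m$ and $S_n^{(\mu)}=\hat S_n^{(\mu-2)}+\Upsilon_n$, where the $\mathtt{R}_\mu$-block of $\Xi_m$ lies in $\langle\gamma^{\mu-m}\rangle$ and its $\mathtt{R}_{\mu-1}$-block in $\langle\gamma^{\mu-m-1}\rangle$ for $m\ge2$, and in $\langle\gamma^{\mu-2}\rangle$ and $\langle\gamma^{\mu-3}\rangle$ respectively for $m=1$ (symmetrically for $\Upsilon_n$). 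The two genuinely new rows $\gamma^{\mu-1}R_\mu^{(\mu)}$ and $\gamma^{\mu-2}S_{\mu-1}^{(\mu)}$ have no partner in $\mathtt{G}_{\mu-2}$ and are simply discarded.

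The core step is the degree bookkeeping. Expanding a $\diamond$-product and recalling that the $\mathtt{R}_{\mu-1}$-block of any $\diamond$-product carries an extra factor $\gamma$, one obtains $R_m^{(\mu)}\diamond R_n^{(\mu)\,T}=\hat R_m^{(\mu-2)}\diamond\hat R_n^{(\mu-2)\,T}+\Delta_{m,n}$ with $\Delta_{m,n}\in\langle\gamma^{\min(d_m,d_n)}\rangle$, where $d_m=\mu-m$ for $m\ge2$ and $d_1=\mu-2$. On the other hand, the vanishing modulo $\gamma^\mu$ of $\gamma^{(m-1)+(n-1)}\bigl(R_m^{(\mu)}\diamond R_n^{(\mu)\,T}\bigr)$ is exactly $R_m^{(\mu)}\diamond R_n^{(\mu)\,T}\equiv0\pmod{\gamma^{\mu-m-n+2}}$. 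A short case check over $m,n\ge2$, over $m=1<n$, and over $m=n=1$ shows that both $\mu-m-n+2\ge\mu-2-a_m-a_n$ and $\min(d_m,d_n)\ge\mu-2-a_m-a_n$ hold; hence, modulo $\gamma^{\mu-2-a_m-a_n}$, the product $R_m^{(\mu)}\diamond R_n^{(\mu)\,T}$ and the correction $\Delta_{m,n}$ both vanish, so that $\hat R_m^{(\mu-2)}\diamond\hat R_n^{(\mu-2)\,T}\equiv0\pmod{\gamma^{\mu-2-a_m-a_n}}$, which is precisely $\gamma^{a_m+a_n}\bigl(\hat R_m^{(\mu-2)}\diamond\hat R_n^{(\mu-2)\,T}\bigr)\equiv0\pmod{\gamma^{\mu-2}}$. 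The cross ($R$--$S$) and the $S$--$S$ pairs are handled identically, using $\Upsilon_n$ together with the blocks $\mathtt{C}_{\mu-2}$ and $\mathtt{D}_{\mu-2}$. Running over all pairs yields $\mathtt{G}_{\mu-2}\diamond\mathtt{G}_{\mu-2}^T\equiv0\pmod{\gamma^{\mu-2}}$, so $\mathscr{C}_{\mu-2}$ is Euclidean self-orthogonal.

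The main obstacle is not conceptual but lies entirely in this bookkeeping: one must juggle the one-step $\gamma$-shift introduced by the lift (visible in \eqref{eqn3.32}--\eqref{eqn3.33}), the extra factor $\gamma$ attached to the $\mathtt{R}_{\mu-1}$-components inside $\diamond$, and the exceptional doubled top rows $R_1,R_2,S_1,S_2$, for which $a_m\neq m-2$ and $d_m\neq\mu-m$. Verifying that the two inequalities $\mu-m-n+2\ge\mu-2-a_m-a_n$ and $\min(d_m,d_n)\ge\mu-2-a_m-a_n$ survive simultaneously in each of these boundary cases is the delicate point; away from the boundary they hold with equality, which is exactly what makes the lift undo the reduction. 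By Remark \ref{Remark1.1} the type bounds needed for these rows to exist are already guaranteed, so no separate feasibility check is required.
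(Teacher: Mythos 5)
Your proposal is correct and follows essentially the same route as the paper: the paper's proof is the one-line observation that $\mathtt{G}_{\mu}\diamond\mathtt{G}_{\mu}^T=0$ in $\mathtt{R}_{\mu}$ implies $\mathtt{G}_{\mu-2}\diamond\mathtt{G}_{\mu-2}^T=0$ in $\mathtt{R}_{\mu-2}$, and your argument is exactly this reduction with the $\gamma$-valuation bookkeeping (the one-step shift from the lift, the extra $\gamma$ on the $\mathtt{R}_{\mu-1}$-blocks inside $\diamond$, and the doubled top rows) written out explicitly, all of which checks out.
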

\begin{proof} As the code $\mathscr{C}_{\mu}$ is Euclidean self-orthogonal, we have $\mathtt{G}_{\mu}\diamond\mathtt{G}_{\mu}^T=0 $ in $\mathtt{R}_{\mu},$ which implies that $\mathtt{G}_{\mu-2}\diamond\mathtt{G}_{\mu-2}^T=0$ in $\mathtt{R}_{\mu-2}.$ From this, it follows that the code $\mathscr{C}_{\mu-2}$ is Euclidean self-orthogonal.  
\end{proof}
In the following proposition, we show  that each Euclidean self-orthogonal $\mathtt{R}_{\mu-2}\mathtt{R}_{\mu-3}$-linear code of block-length $(N_1,N_2)$ and type $\{k_0+k_1,k_2,\ldots,k_{\mu-2};\ell_0+\ell_1,\ell_2,\ldots,\ell_{\mu-3}\}$ can be lifted to a Euclidean self-orthogonal $\mathtt{R}_{\mu}\mathtt{R}_{\mu-1}$-linear code of the same block-length $(N_1,N_2)$ and type $\{k_0,k_1,\ldots,k_{\mu-1};\ell_0,\ell_1,\ldots,\ell_{\mu-2}\}.$ With the help of this observation and using Lemma \ref{LLemma}, we derive  a recurrence relation between enumeration formulae for Euclidean self-orthogonal $\mathtt{R}_{\mu}\mathtt{R}_{\mu-1}$-linear codes of block-length $(N_1,N_2)$ and type $\{k_0,k_1,\ldots,k_{\mu-1};\ell_0,\ell_1,\ldots,\ell_{\mu-2}\}$ and  Euclidean self-orthogonal $\mathtt{R}_{\mu-2}\mathtt{R}_{\mu-3}$-linear codes of the same block-length $(N_1,N_2)$ and type $\{k_0+k_1,k_2,\ldots,k_{\mu-2};\ell_0+\ell_1,\ell_2,\ldots,\ell_{\mu-3}\}.$   The proof of the following proposition also gives rise to a recursive method to construct Euclidean self-orthogonal $\mathtt{R}_{\mu}\mathtt{R}_{\mu-1}$-linear codes of block-length $(N_1,N_2)$ and  type $\{k_0,k_1,\ldots,k_{\mu-1};\ell_0,\ell_1,\ldots,\ell_{\mu-2}\}$ from Euclidean self-orthogonal $\mathtt{R}_{\mu-2}\mathtt{R}_{\mu-3}$-linear codes of the same block-length $(N_1,N_2)$ and type $\{k_0+k_1,k_2,\ldots,k_{\mu-2};\ell_0+\ell_1,\ell_2,\ldots,\ell_{\mu-3}\}.$  
 \vspace{-1mm}\begin{prop}\label{Thm6.1}
Let $\textbf{k}=(k_0,k_1,\ldots,k_{\mu-1})\in\mathscr{K}_{N_1,\mu}$ and  $\boldsymbol{\ell}=(\ell_0,\ell_1,\ldots,\ell_{\mu-2})\in\mathscr{K}_{N_2,\mu-1}$ be fixed. Let $\mathscr{C}_{\mu-2}$ be a Euclidean self-orthogonal $\mathtt{R}_{\mu-2}\mathtt{R}_{\mu-3}$-linear code of block-length $(N_1,N_2)$ and type $\{k_0+k_1,k_2,\ldots,k_{\mu-2};\ell_0+\ell_1,\ell_2,\ldots,\ell_{\mu-3}\}$ with a generator matrix $\mathtt{G}_{\mu-2}$ of the form \eqref{Gmu-2}. The following hold.   
\vspace{-1mm}\begin{itemize}
\item[(a)] There exists a Euclidean self-orthogonal $\mathtt{R}_\mu\mathtt{R}_{\mu-1}$-linear code $\mathscr{C}_\mu$ of block-length $(N_1,N_2)$ and type $\{k_0,k_1,\ldots,\\k_{\mu-1};\ell_0,\ell_1,\ldots,\ell_{\mu-2}\}$ satisfying $Tor_1(\mathscr{C}_\mu^{(X)})=P_1,$ $Tor_1(\mathscr{C}_\mu^{(Y)})=P_2,$  $Tor_{i+1}(\mathscr{C}_\mu^{(X)})=Tor_i(\mathscr{C}_{\mu-2}^{(X)})$ for $1\leq i\leq \mu-2$ and $Tor_{j+1}(\mathscr{C}_\mu^{(Y)})=Tor_j(\mathscr{C}_{\mu-2}^{(Y)})$ for $1\leq j\leq \mu-3.$  
\vspace{-1mm}\item[(b)] Furthermore, each Euclidean self-orthogonal $\mathtt{R}_{\mu-2}\mathtt{R}_{\mu-3}$-linear code $\mathscr{C}_{\mu-2}$ of block-length $(N_1,N_2)$ and type $\{k_0+k_1,k_2,\ldots,k_{\mu-2};\ell_0+\ell_1,\ell_2,\ldots,\ell_{\mu-3}\}$ can be lifted to  precisely 
\vspace{-2mm}\small{$$\qbin{N_1-m_{\mu-2}(\textbf{k})-k_0}{k_{\mu-1}}{q}\qbin{N_2-m_{\mu-3}(\boldsymbol{\ell})-\ell_0}{\ell_{\mu-2}}{q}\qbin{k_0+k_1}{k_{0}}{q}\qbin{\ell_0+\ell_1}{\ell_0}{q}q^{\Theta_\mu(\mathbf{k};\boldsymbol{\ell})}\vspace{-2mm}$$}\normalsize distinct Euclidean self-orthogonal $\mathtt{R}_\mu\mathtt{R}_{\mu-1}$-linear codes $\mathscr{C}_{\mu}$ of block-length $(N_1,N_2)$ and type $\{k_0,k_1,\ldots,k_{\mu-1};\\\ell_0,\ell_1,\ldots,\ell_{\mu-2}\}$ satisfying  $Tor_{i+1}(\mathscr{C}_\mu^{(X)})=Tor_i(\mathscr{C}_{\mu-2}^{(X)})$ for $1\leq i\leq \mu-2$ and $Tor_{j+1}(\mathscr{C}_\mu^{(Y)})=Tor_j(\mathscr{C}_{\mu-2}^{(Y)})$ for $1\leq j\leq \mu-3,$  where 
\vspace{-2mm}\small{\begin{eqnarray}\Theta_\mu(\mathbf{k};\boldsymbol{\ell})&=&k_0\bigl(2N_1-2m_{\mu-2}(\textbf{k})-m_1(\textbf{k})-1\bigr)+\ell_0\bigl(2N_2-2m_{\mu-3}(\boldsymbol{\ell})-m_1(\boldsymbol{\ell})-k_1-1\bigr)-k_{\mu-1}\ell_{\mu-2}\nonumber\\
&&+\bigl(m_{\mu-2}(\textbf{k})+m_{\mu-3}(\boldsymbol{\ell})-k_0-\ell_0\bigr)\bigl(N_1+N_2-m_{\mu-2}(\textbf{k})-m_{\mu-3}(\boldsymbol{\ell})\bigr)+2k_0\bigl(N_2-m_{\mu-3}(\boldsymbol{\ell})\bigr)\nonumber\\
&&+\bigl(k_0+2\ell_0\bigr)\bigl(N_1-m_{\mu-2}(\textbf{k})-k_0\bigr)-(m_{\mu-2}(\textbf{k})+m_{\mu-3}(\boldsymbol{\ell}))(k_{\mu-1}+\ell_{\mu-2}).\label{thetamu}\end{eqnarray}}\normalsize
\vspace{-10mm}\item[(c)] We have \vspace{-2mm}\small{$$\mathfrak{N}_\mu(N_1,N_2; \mathbf{k},\boldsymbol{\ell})=\mathfrak{N}_{\mu-2}(N_1,N_2;\mathbf{k}^{(1)},\boldsymbol{\ell}^{(1)})\qbin{N_1-m_{\mu-2}(\textbf{k})-k_0}{k_{\mu-1}}{q}\qbin{N_2-m_{\mu-3}(\boldsymbol{\ell})-\ell_0}{\ell_{\mu-2}}{q}\qbin{k_0+k_1}{k_{0}}{q}\qbin{\ell_0+\ell_1}{\ell_0}{q}q^{\Theta_\mu(\mathbf{k};\boldsymbol{\ell})},\vspace{-2mm}$$}\normalsize  where $\mathbf{k}^{(1)}=(k_0+k_1,k_2,\ldots,k_{\mu-2})$ and $\boldsymbol{\ell}^{(1)}=(\ell_0+\ell_1,\ell_2,\ldots,\ell_{\mu-3}).$  
\end{itemize}
\end{prop}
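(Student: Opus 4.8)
The plan is to mimic the explicit counting carried out for $\mu = 2$ and $\mu = 3$ in Propositions \ref{Thm4.1} and \ref{Thm5.1}, but now organized as a single recursive step passing from level $\mu-2$ to level $\mu$. For part (a), I would start from the generator matrix $\mathtt{G}_\mu$ of the form \eqref{Gmu}, whose row blocks $R_i^{(\mu)}$ and $S_j^{(\mu)}$ are the lifts of $\hat{R}_i^{(\mu-2)}$ and $\hat{S}_j^{(\mu-2)}$ together with the free correction matrices $E_{0,\mu-1}, F_{i,\mu-1}, M_{0,\mu-2}, L_{i,\mu-2}, K_{0,\mu-1}, J_{i,\mu-1}, G_{0,\mu-2}, H_{j,\mu-2}$ and the full-rank blocks $\mathtt{A}_{\mu-1,\mu-1}, \mathtt{D}_{\mu-2,\mu-2}$. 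By Lemma \ref{Lem6.1}, $\mathscr{C}_\mu$ is Euclidean self-orthogonal precisely when the $\diamond$-products of these row blocks vanish modulo the appropriate power of $\gamma$. Since $\mathscr{C}_{\mu-2}$ is already Euclidean self-orthogonal, every such congruence holds modulo $\gamma^{\mu-2}$ automatically; only the two highest powers of $\gamma$ in each product — the coefficients at $\gamma^{\mu-2}$ and $\gamma^{\mu-1}$ — remain to be killed, and these are exactly the orders at which the new correction matrices appear linearly. Solving these top-order congruences with the aid of Lemma \ref{Matrixlemma} yields at least one admissible choice, establishing existence.

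For part (b), I would count the admissible lifts in the same stepwise fashion as in the proof of Proposition \ref{Thm5.1}. The two factors $\qbin{k_0+k_1}{k_0}{q}$ and $\qbin{\ell_0+\ell_1}{\ell_0}{q}$ record the choices of the $k_0$-dimensional subspace $P_1 \subseteq Tor_1(\mathscr{C}_{\mu-2}^{(X)})$ and the $\ell_0$-dimensional subspace $P_2 \subseteq Tor_1(\mathscr{C}_{\mu-2}^{(Y)})$ that become the bottom torsion codes of the lift, while the factors $\qbin{N_1-m_{\mu-2}(\textbf{k})-k_0}{k_{\mu-1}}{q}$ and $\qbin{N_2-m_{\mu-3}(\boldsymbol{\ell})-\ell_0}{\ell_{\mu-2}}{q}$ record the choices of the new top torsion codes (the full-rank blocks $\mathtt{A}_{\mu-1,\mu-1}$ and $\mathtt{D}_{\mu-2,\mu-2}$), subject to the inclusions $Tor_{\mu}(\mathscr{C}_\mu^{(X)}) \subseteq Tor_1(\mathscr{C}_\mu^{(X)})^{\perp_E}$ and $Tor_{\mu-1}(\mathscr{C}_\mu^{(Y)}) \subseteq Tor_1(\mathscr{C}_\mu^{(Y)})^{\perp_E}$ forced by Lemma \ref{Lem1.1}. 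It then remains to count the free correction matrices: the arbitrary ones contribute directly; the matrices sitting in the diagonal symmetric congruences are counted via the surjectivity and kernel size in Lemma \ref{Matrixlemma}(a); and those appearing in the mixed off-diagonal congruences are counted via the solution count $q^{(n-s)\ell}$ of Lemma \ref{Matrixlemma}(b). Summing all the resulting exponents should reproduce exactly $\Theta_\mu(\mathbf{k};\boldsymbol{\ell})$ of \eqref{thetamu}.

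For part (c), I would assemble the recurrence from (b) and Lemma \ref{LLemma}. By Lemma \ref{LLemma} every Euclidean self-orthogonal $\mathtt{R}_\mu\mathtt{R}_{\mu-1}$-linear code of type $\{k_0,\ldots,k_{\mu-1};\ell_0,\ldots,\ell_{\mu-2}\}$ restricts to a Euclidean self-orthogonal $\mathtt{R}_{\mu-2}\mathtt{R}_{\mu-3}$-linear code, and by the torsion relations \eqref{eqn3.32}--\eqref{eqn3.33} this restriction has type $\{k_0+k_1,k_2,\ldots,k_{\mu-2};\ell_0+\ell_1,\ell_2,\ldots,\ell_{\mu-3}\}$; conversely part (b) shows that each such $\mathscr{C}_{\mu-2}$ admits exactly the stated number of self-orthogonal lifts. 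Since this per-code lift count depends only on the two types and not on the particular code $\mathscr{C}_{\mu-2}$, multiplying it by $\mathfrak{N}_{\mu-2}(N_1,N_2;\mathbf{k}^{(1)},\boldsymbol{\ell}^{(1)})$ and observing that distinct choices produce distinct lifts yields the claimed product formula for $\mathfrak{N}_\mu(N_1,N_2;\mathbf{k},\boldsymbol{\ell})$.

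The hard part will be the bookkeeping in part (b): correctly partitioning the correction matrices into those chosen freely, those determined by a symmetric self-paired congruence handled by Lemma \ref{Matrixlemma}(a), and those determined by a linear matrix equation handled by Lemma \ref{Matrixlemma}(b), and then verifying that the exponents add up to $\Theta_\mu(\mathbf{k};\boldsymbol{\ell})$. A subtle but essential point — exactly as in the $\mu=3$ case — is that the full-rank-ness of $\overline{\mathtt{A}_{0,\mu-1}^{(\mu-2)}}$ and $\overline{\mathtt{D}_{0,\mu-2}^{(\mu-2)}}$ (guaranteed by Lemma \ref{Lem1.1} together with the argument of Remark 2.2 of Yadav and Sharma \cite{Yadav}) is what makes the hypotheses of Lemma \ref{Matrixlemma} applicable and, crucially, makes every count independent of the chosen $\mathscr{C}_{\mu-2}$, so that the recurrence in (c) is well defined.
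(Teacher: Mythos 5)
Your overall strategy is the paper's: lift the standard-form generator matrix $\mathtt{G}_{\mu-2}$ to a matrix $\mathtt{G}_\mu$ of the form \eqref{Gmu}, use Lemma \ref{Lem6.1} to reduce self-orthogonality of $\mathscr{C}_\mu$ to a system of matrix congruences whose low-order parts are already guaranteed by self-orthogonality of $\mathscr{C}_{\mu-2}$, count solutions of the remaining top-order congruences with Lemma \ref{Matrixlemma}, identify the four Gaussian binomial factors with choices of $P_1,P_2$ and of the two new top torsion codes, and deduce (c) from (b) together with Lemma \ref{LLemma}. Parts (a) and (c) of your sketch are essentially the paper's argument.

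There is, however, a genuine gap in part (b). You assert that summing the exponents coming from the free choices and from Lemma \ref{Matrixlemma}(a)/(b) ``should reproduce exactly $\Theta_\mu(\mathbf{k};\boldsymbol{\ell})$,'' which implicitly assumes two things: that distinct solutions of the congruence system yield distinct codes $\mathscr{C}_\mu$, and that all lifts of $\mathscr{C}_{\mu-2}$ arise from one fixed generator matrix $\mathtt{G}_{\mu-2}$ of the form \eqref{Gmu-2}. Neither holds when $\mu\geq 4$, and this is exactly where this case differs from Propositions \ref{Thm4.1} and \ref{Thm5.1}, in which each solution did give a distinct code. In the paper, the raw solution count has exponent $\Theta'(\mathbf{k};\boldsymbol{\ell})$ as in \eqref{theta'}, not $\Theta_\mu(\mathbf{k};\boldsymbol{\ell})$, and two multiplicity corrections are indispensable to pass from one to the other. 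First, different choices of the correction matrices produce the \emph{same} lifted code $\mathscr{C}_\mu$, in precisely $q^{\sum_{j=0}^{\mu-3}\ell_j\ell_{\mu-2}+\sum_{i=0}^{\mu-2}(k_i\ell_{\mu-2}+k_ik_{\mu-1}+\ell_ik_{\mu-1})}$ ways, so the solution count must be divided by this factor. Second, for fixed $P_1$ and $P_2$ the code $\mathscr{C}_{\mu-2}$ has $q^{k_0(\sum_{w=2}^{\mu-2}k_w+\sum_{s=1}^{\mu-3}\ell_s)+\ell_0(\sum_{w'=2}^{\mu-2}k_{w'}+\sum_{s'=2}^{\mu-3}\ell_{s'})}$ distinct generator matrices of the form \eqref{Gmu-2}, coming from the arbitrary blocks $\mathtt{A}_{0,i}''^{(\mu-2)},$ $\mathtt{B}_{0,j}''^{(\mu-2)},$ $\mathtt{C}_{0,i}''^{(\mu-2)},$ $\mathtt{D}_{0,j}''^{(\mu-2)}$ and $\mathtt{B}_{0,1}^{(\mu-2)},$ each of which spawns its own family of lifts; so one must multiply by this factor as well. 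Only after both corrections does the exponent collapse to $\Theta_\mu(\mathbf{k};\boldsymbol{\ell})$ of \eqref{thetamu}. Omitting them leaves your count off by a nontrivial power of $q$, and, equally damaging for part (c), it would make the per-code lift count depend on the chosen generator matrix rather than on the code $\mathscr{C}_{\mu-2}$ alone, so the recurrence $\mathfrak{N}_\mu=\mathfrak{N}_{\mu-2}\cdot(\text{lift count})$ would not be justified.
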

\begin{proof}To prove the result,  we note that the code $\mathscr{C}_{\mu-2}$ is Euclidean self-orthogonal. This,   by Lemma \ref{Lem6.1}, gives
\vspace{-1mm}\small\begin{align}
\hat{R}_1^{(\mu-2)}\diamond (\hat{R}_1^{(\mu-2)})^T&\equiv\gamma^{\mu-2}\mathcal{P}_5+\gamma^{\mu-1}\mathcal{P}_6\Mod{\gamma^{\mu}},\label{eq2.40}\\
\hat{S}_1^{(\mu-2)}\diamond (\hat{S}_1^{(\mu-2)})^T&\equiv\gamma^{\mu-2}\mathcal{P}_7+\gamma^{\mu-1}\mathcal{P}_8\Mod{\gamma^{\mu}},\label{eq2.41}\\
\hat{R}_1^{(\mu-2)}\diamond (\hat{S}_1^{(\mu-2)})^T&\equiv\gamma^{\mu-2}\mathcal{Q}_8+\gamma^{\mu-1}\mathcal{Q}_{9}\Mod{\gamma^{\mu}},\label{eq2.42}\\
(\hat{S}_1^{(\mu-2)})\diamond(\gamma^{j-1}\hat{S}_{j+1}^{(\mu-2)})^T &\equiv\gamma^{\mu-2}\mathcal{Q}_{4,j}\Mod{\gamma^{\mu-1}}~\text{for}~1\leq j\leq \mu-3,\label{eq2.43}\\
(\hat{R}_1^{(\mu-2)})\diamond(\gamma^{j-1}\hat{S}_{j+1}^{(\mu-2)})^T &\equiv\gamma^{\mu-2}\mathcal{Q}_{5,j}\Mod{\gamma^{\mu-1}}~\text{for}~1\leq j\leq \mu-3, \label{eq2.44}\\
(\hat{S}_1^{(\mu-2)})\diamond(\gamma^{i-1}\hat{R}_{i+1}^{(\mu-2)})^T  &\equiv\gamma^{\mu-2}\mathcal{Q}_{6,i}\Mod{\gamma^{\mu-1}}~\text{for}~1\leq i\leq \mu-2, \label{eq2.45}\\
(\hat{R}_1^{(\mu-2)})\diamond(\gamma^{i-1}\hat{R}_{i+1}^{(\mu-2)})^T &\equiv\gamma^{\mu-2}\mathcal{Q}_{7,i}\Mod{\gamma^{\mu-1}}~\text{for}~1\leq i\leq \mu-2,\label{eq2.46}\\
(\gamma^{i-1}\hat{S}_{i+1}^{(\mu-2)})\diamond (\gamma^{j-1}\hat{S}_{j+1}^{(\mu-2)})^T&\equiv0\Mod{\gamma^{\mu-2}}~\text{for}~1\leq i\leq j\leq \mu-3,\label{eqn3.38}\\
(\gamma^{j-1}\hat{S}_{j+1}^{(\mu-2)})\diamond (\gamma^{i-1}\hat{R}_{i+1}^{(\mu-2)})^T&\equiv0\Mod{\gamma^{\mu-2}}~\text{for}~1\leq i\leq \mu-2~\text{and}~1\leq j\leq \mu-3,~\text{and}\label{eqn3.39}\\
(\gamma^{i-1}\hat{R}_{i+1}^{(\mu-2)})\diamond (\gamma^{j-1}\hat{R}_{j+1}^{(\mu-2)})^T&\equiv0\Mod{\gamma^{\mu-2}}~\text{for}~1\leq i\leq j\leq \mu-2\label{eqn3.40}
\vspace{-2mm}\end{align}\normalsize
for some $\mathcal{P}_5,\mathcal{P}_6\in Sym_{k_0}(\mathcal{T}_e),$ $\mathcal{P}_7,\mathcal{P}_8\in Sym_{\ell_0}(\mathcal{T}_e),$ $\mathcal{Q}_8,\mathcal{Q}_9\in M_{k_0\times\ell_0}(\mathcal{T}_e),$ $\mathcal{Q}_{4,j}\in M_{\ell_0\times\ell_j}(\mathcal{T}_e),$ $\mathcal{Q}_{5,j}\in M_{k_0\times\ell_j}(\mathcal{T}_e)$ for $1\leq j\leq \mu-3,$ and $\mathcal{Q}_{6,i}\in M_{\ell_0\times k_i}(\mathcal{T}_e),$  $\mathcal{Q}_{7,i}\in M_{k_0\times k_i}(\mathcal{T}_e)$ for $1\leq i\leq \mu-2.$

Now let $\mathscr{C}_{\mu}$ be an $\mathtt{R}_{\mu}\mathtt{R}_{\mu-1}$-linear code of block-length $(N_1,N_2)$ and type $\{k_0,k_1,k_2,\ldots,k_{\mu-2}, k_{\mu-1};\ell_0,\ell_1,\ell_2,\ldots,\\\ell_{\mu-3}, \ell_{\mu-2}\}$ with a generator matrix $\mathtt{G}_{\mu}$ of the form \eqref{Gmu}. One can easily see that the code $\mathscr{C}_{\mu}$ is Euclidean self-orthogonal if and only if $\mathtt{G}_{\mu}\diamond \mathtt{G}_{\mu}^T=0,$ which, by Lemma \ref{Lem6.1} and congruences \eqref{eq2.40}-\eqref{eqn3.40},  holds if and only if
there exist full row-rank matrices $\mathtt{A}_{\mu-1,\mu-1}\in M_{k_{\mu-1}\times (N_1-m_{\mu-2}(\textbf{k}))}(\mathcal{T}_e)$ and $\mathtt{D}_{\mu-2,\mu-2}\in M_{\ell_{\mu-2}\times (N_2-m_{\mu-3}(\boldsymbol{\ell}))}(\mathcal{T}_e),$ and the matrices $E_{0,\mu-1} \in M_{k_0\times (N_1-m_{\mu-2}(\textbf{k}))}(\mathcal{T}_e),$ $M_{0,\mu-2}\in M_{k_0\times (N_2-m_{\mu-3}(\boldsymbol{\ell}))}(\mathcal{T}_e),$ $K_{0,\mu-1}\in M_{\ell_0\times (N_1-m_{\mu-2}(\textbf{k}))}(\mathcal{T}_e),$ $G_{0,\mu-2}\in M_{\ell_0\times (N_2-m_{\mu-3}(\boldsymbol{\ell}))}(\mathcal{T}_e),$ $F_{i,\mu-1}\in M_{k_i\times (N_1-m_{\mu-2}(\textbf{k}))}(\mathcal{T}_e),$ $L_{i,\mu-2}\in M_{k_i\times (N_2-m_{\mu-3}(\boldsymbol{\ell}))}(\mathcal{T}_e)$ and  $J_{i,\mu-1}\in \\M_{\ell_i\times (N_1-m_{\mu-2}(\textbf{k}))}(\mathcal{T}_e)$ for $0\leq i\leq \mu-2,$ and $H_{j,\mu-2}\in M_{\ell_j\times (N_2-m_{\mu-3}(\boldsymbol{\ell}))}(\mathcal{T}_e)$ for $0\leq j\leq \mu-3,$  satisfying the following system of matrix congruences:
\vspace{-1mm}\small\begin{align}
&\mathcal{P}_5+\mathtt{A}^{(\mu-2)}_{0,\mu-1}E_{0,\mu-1}^T+E_{0,\mu-1}(\mathtt{A}^{(\mu-2)}_{0,\mu-1})^T
+\mathtt{B}^{(\mu-2)}_{0,\mu-2}M_{0,\mu-2}^T+M_{0,\mu-2}(\mathtt{B}^{(\mu-2)}_{0,\mu-2})^T+\gamma\bigl(\mathcal{P}_6\nonumber\\&+\mathtt{A}^{(\mu-2)}_{0,\mu-1}F_{0,\mu-1}^T
+F_{0,\mu-1}(\mathtt{A}^{(\mu-2)}_{0,\mu-1})^T+\mathtt{B}^{(\mu-2)}_{0,\mu-2}L_{0,\mu-2}^T+L_{0,\mu-2}(\mathtt{B}^{(\mu-2)}_{0,\mu-2})^T+\gamma^{\mu-4}M_{0,\mu-2}M_{0,\mu-2}^T\bigr) ~\equiv0\Mod{\gamma^{2}},\label{eqn3.3}\\
&\mathcal{P}_7+\mathtt{D}^{(\mu-2)}_{0,\mu-2}G_{0,\mu-2}^T+G_{0,\mu-2}(\mathtt{D}^{(\mu-2)}_{0,\mu-2})^T+\gamma\bigl(\mathcal{P}_8
\nonumber\\&+\mathtt{C}^{(\mu-2)}_{0,\mu-1}K_{0,\mu-1}^T+K_{0,\mu-1}(\mathtt{C}^{(\mu-2)}_{0,\mu-1})^T+\mathtt{D}^{(\mu-2)}_{0,\mu-2}H_{0,\mu-2}^T
+H_{0,\mu-2}(\mathtt{D}^{(\mu-2)}_{0,\mu-2})^T+\gamma^{\mu-4}G_{0,\mu-2}G_{0,\mu-2}^T\bigr) \equiv0\Mod{\gamma^{2}},\label{eqn3.4}
\\
&\mathcal{Q}_8+\mathtt{A}^{(\mu-2)}_{0,\mu-1}K_{0,\mu-1}^T+\mathtt{B}^{(\mu-2)}_{0,\mu-2}G_{0,\mu-2}^T
+E_{0,\mu-1}(\mathtt{C}^{(\mu-2)}_{0,\mu-1})^T
+M_{0,\mu-2}(\mathtt{D}^{(\mu-2)}_{0,\mu-2})^T+\gamma\bigl(\mathcal{Q}_9\nonumber\\&+\mathtt{A}^{(\mu-2)}_{0,\mu-1}J_{0,\mu-1}^T
+\mathtt{B}^{(\mu-2)}_{0,\mu-2}H_{0,\mu-2}^T+F_{0,\mu-1}(\mathtt{C}^{(\mu-2)}_{0,\mu-1})^T+L_{0,\mu-2}(\mathtt{D}^{(\mu-2)}_{0,\mu-2})^T+\gamma^{\mu-4}M_{0,\mu-2}G_{0,\mu-2}^T\bigr)~\equiv0\Mod{\gamma^{2}},\label{eqn3.5}\\
&\mathcal{Q}_{4,j}+\mathtt{D}^{(\mu-2)}_{0,\mu-2}H_{j,\mu-2}^T+\gamma^{j-1}G_{0,\mu-2}(\mathtt{D}^{(\mu-2)}_{j,\mu-2})^T\equiv0\Mod{\gamma}~\text{for}~1\leq j\leq \mu-3,\label{eqn3.6}\\
&\mathcal{Q}_{5,j}+\mathtt{A}^{(\mu-2)}_{0,\mu-1}J_{j,\mu-1}^T+\mathtt{B}^{(\mu-2)}_{0,\mu-2}H_{j,\mu-2}^T+\gamma^{j-1}G_{0,\mu-2}(\mathtt{D}^{(\mu-2)}_{j,\mu-2})^T\equiv0\Mod{\gamma}~\text{for}~1\leq j\leq \mu-3,\label{eqn3.7}\\
&\mathcal{Q}_{6,i}+\mathtt{D}^{(\mu-2)}_{0,\mu-2}L_{i,\mu-2}^T+\gamma^{i-1}\bigl(K_{0,\mu-1}(\mathtt{A}^{(\mu-2)}_{i,\mu-1})^T+G_{0,\mu-2}(\mathtt{B}^{(\mu-2)}_{i,\mu-2})^T\bigr)\equiv0\Mod{\gamma}~\text{for}~1\leq i\leq \mu-2,\label{eqn3.8}\\
&\mathcal{Q}_{7,i}+\mathtt{B}^{(\mu-2)}_{0,\mu-2}L_{i,\mu-2}^T+\mathtt{A}^{(\mu-2)}_{0,\mu-1}F_{i,\mu-1}^T+\gamma^{i-1}\bigl(E_{0,\mu-1}(\mathtt{A}^{(\mu-2)}_{i,\mu-1})^T+M_{0,\mu-2}(\mathtt{B}^{(\mu-2)}_{i,\mu-2})^T\bigr)\equiv0\Mod{\gamma}~\text{for}~1\leq i\leq \mu-2,\label{eqn3.9}\\
&\mathtt{A}^{(\mu-2)}_{0,\mu-1}J_{\mu-2,\mu-1}^T+\mathtt{B}^{(\mu-2)}_{0,\mu-2}\mathtt{D}_{\mu-2,\mu-2}^T\equiv0\Mod{\gamma},\label{eqn3.11}\\
&\mathtt{A}^{(\mu-2)}_{0,\mu-1}\mathtt{A}_{\mu-1,\mu-1}^T\equiv0\Mod{\gamma},\label{eqn3.10}\\
&\mathtt{D}^{(\mu-2)}_{0,\mu-2}\mathtt{D}_{\mu-2,\mu-2}^T\equiv0\Mod{\gamma},\label{eqn3.12}\\
&(\gamma^{i}\hat{S}_{i+1}^{(\mu-2)})\diamond (\gamma^{j}\hat{S}_{j+1}^{(\mu-2)})^T\equiv0\Mod{\gamma^{\mu}}~\text{for}~1\leq i\leq j\leq \mu-3,\label{eqn3.13}\\
&(\gamma^{j}\hat{S}_{j+1}^{(\mu-2)})\diamond (\gamma^{i}\hat{R}_{i+1}^{(\mu-2)})^T\equiv0\Mod{\gamma^{\mu}}~\text{for}~1\leq i\leq \mu-2~\text{and}~1\leq j\leq \mu-3,~\text{and}\label{eqn3.14}\\
&(\gamma^{i}\hat{R}_{i+1}^{(\mu-2)})\diamond (\gamma^{j}\hat{R}_{j+1}^{(\mu-2)})^T\equiv0\Mod{\gamma^{\mu}}~\text{for}~1\leq i\leq j\leq \mu-2.\label{eqn3.15}
\vspace{-1mm}\end{align}\normalsize

Note that the congruences \eqref{eqn3.38}-\eqref{eqn3.40} are equivalent to the congruences \eqref{eqn3.13}-\eqref{eqn3.15}, respectively.
 From this and by Lemma \ref{LLemma}, we see that to prove this result, we need to establish the existence of full row-rank matrices $\mathtt{A}_{\mu-1,\mu-1}$ and $\mathtt{D}_{\mu-2,\mu-2}$ and the matrices $E_{0,\mu-1},$ $M_{0,\mu-2},$  $K_{0,\mu-1},$ $G_{0,\mu-2}$ and  $F_{i,\mu-1},$  $L_{i,\mu-2}$ and  $J_{i,\mu-1}$  for $0\leq i\leq \mu-2,$ and $H_{j,\mu-2}$ for $0\leq j\leq \mu-3$  satisfying \eqref{eqn3.3}-\eqref{eqn3.12}, and then count their choices.

Since $Tor_1(\mathscr{C}_{\mu-2}^{(X)})$
is a $(k_0+k_1)$-dimensional subspace of $\mathcal{T}_e^{N_1}$ over $\mathcal{T}_e$ and $Tor_1(\mathscr{C}_{\mu-2}^{(Y)})$ is an $(\ell_0+\ell_1)$-dimensional subspace of $\mathcal{T}_e^{N_2}$ over $\mathcal{T}_e,$ there are precisely $
\qbin{k_0+k_1}{k_{0}}{q}\qbin{\ell_0+\ell_1}{\ell_0}{q}$ distinct ways of choosing  a $k_0$-dimensional subspace $P_1=Tor_1(\mathscr{C}_{\mu}^{(X)})$ of $Tor_1(\mathscr{C}_{\mu-2}^{(X)})$ and an $\ell_0$-dimensional subspace $P_2=Tor_1(\mathscr{C}_{\mu}^{(Y)})$ of $Tor_1(\mathscr{C}_{\mu-2}^{(Y)}).$ 
We will now establish the existence of the full row-rank matrix $\mathtt{A}_{\mu-1,\mu-1}\in M_{k_{\mu-1}\times (N_1-m_{\mu-2}(\textbf{k}))}(\mathcal{T}_e)$  satisfying \eqref{eqn3.10} and the full row-rank matrix $\mathtt{D}_{\mu-2,\mu-2}\in M_{\ell_{\mu-2}\times (N_2-m_{\mu-3}(\boldsymbol{\ell}))}(\mathcal{T}_e)$ satisfying \eqref{eqn3.12}, and we will also count their choices. To do this, we note that the code $\mathscr{C}_{\mu-2}$ is Euclidean self-orthogonal. This, by Lemma \ref{Lem1.1}, implies that $Tor_{\mu-2}(\mathscr{C}_{\mu-2}^{(X)})\subseteq Tor_{1}(\mathscr{C}_{\mu-2}^{(X)})^{\perp_E},$ which, by \eqref{eqn3.32}, further implies that $Tor_{\mu-1}(\mathscr{C}_{\mu}^{(X)})\subseteq Tor_{2}(\mathscr{C}_{\mu}^{(X)})^{\perp_E}.$ Since $Tor_{1}(\mathscr{C}_{\mu}^{(X)})\subseteq Tor_{2}(\mathscr{C}_{\mu}^{(X)}),$ we have $Tor_{\mu-1}(\mathscr{C}_{\mu}^{(X)})\subseteq Tor_{1}(\mathscr{C}_{\mu}^{(X)})^{\perp_E}.$ From this, it follows that the number of choices for the  full row-rank matrix $\mathtt{A}_{\mu-1,\mu-1}\in M_{k_{\mu-1}\times (N_1-m_{\mu-2}(\textbf{k}))}(\mathcal{T}_e)$ satisfying \eqref{eqn3.10} is equal to the number of choices for the torsion code  $Tor_\mu(\mathscr{C}_\mu^{(X)})$ satisfying  $Tor_{\mu-1}(\mathscr{C}_\mu^{(X)})\subseteq Tor_{\mu}(\mathscr{C}_\mu^{(X)})\subseteq Tor_{1}(\mathscr{C}_\mu^{(X)})^{\perp_E}.$ One can easily see that such a torsion code $Tor_\mu(\mathscr{C}_\mu^{(X)})$ can be chosen in $\qbin{N_1-m_{\mu-2}(\textbf{k})-k_0}{k_{\mu-1}}{q}$ distinct ways. Further, working similarly as above and by \eqref{eqn3.33},  we see that the number of choices for  the full row-rank matrix $\mathtt{D}_{\mu-2,\mu-2}\in M_{\ell_{\mu-2}\times (N_2-m_{\mu-3}(\boldsymbol{\ell}))}(\mathcal{T}_e)$ satisfying \eqref{eqn3.12} is equal to the number of choices for the torsion code $Tor_{\mu-1}(\mathscr{C}_\mu^{(Y)})$ satisfying $Tor_{\mu-2}(\mathscr{C}_\mu^{(Y)})\subseteq Tor_{\mu-1}(\mathscr{C}_\mu^{(Y)})\subseteq Tor_{1}(\mathscr{C}_\mu^{(Y)})^{\perp_E}.$ Here, one can easily see that such a torsion code $Tor_{\mu-1}(\mathscr{C}_\mu^{(Y)})$ can be chosen in  $\qbin{N_2-m_{\mu-3}(\boldsymbol{\ell})-\ell_0}{\ell_{\mu-2}}{q}$ distinct ways. 

Since the matrix $\overline{\mathtt{A}^{(\mu-2)}_{0,\mu-1}}$ is of full row-rank, we see,  by Lemma \ref{Matrixlemma}(b), that there are precisely $q^{k_0\big(N_1-m_{\mu-2}(\textbf{k})-k_0\big)}$ distinct choices for the matrix $J_{\mu-2,\mu-1}$ satisfying \eqref{eqn3.11}.

Now, for given choices of the matrices $\mathtt{A}_{\mu-1,\mu-1},$ $\mathtt{D}_{\mu-2,\mu-2}$ and $J_{\mu-2,\mu-1},$ we will count all possible  choices for the matrices $E_{0,\mu-1},$ $M_{0,\mu-2},$ $K_{0,\mu-1},$ $G_{0,\mu-2},$ the matrices $F_{i,\mu-1},$ $L_{i,\mu-2}$ for $0\leq i\leq \mu-2$ and the matrices $J_{j,\mu-1},$ $H_{j,\mu-2}$ for $0\leq j\leq \mu-3,$ satisfying the congruences \eqref{eqn3.3}-\eqref{eqn3.9}. 

Towards this,  we will first count  the choices for the matrices $L_{0,\mu-2},$ $M_{0,\mu-2},$ $E_{0,\mu-1}$ and $F_{0,\mu-1}$ satisfying \eqref{eqn3.3}. To do this, we first choose the matrices $L_{0,\mu-2}$ and $M_{0,\mu-2}$ arbitrarily, which can be chosen in $q^{2k_0\big(N_2-m_{\mu-3}(\boldsymbol{\ell})\big)}$ distinct ways. Now, since the matrix $\overline{\mathtt{A}^{(\mu-2)}_{0,\mu-1}}$ is of full row-rank, we see, by Lemma \ref{Matrixlemma}(a), that there are precisely $q^{\frac{k_0}{2}\bigl(2N_1-2m_{\mu-2}(\textbf{k})-k_0-1\bigr)}$ distinct choices for the matrix $E_{0,\mu-1}$ satisfying 
\vspace{-1mm}$$\mathcal{P}_5+\mathtt{A}^{(\mu-2)}_{0,\mu-1}E_{0,\mu-1}^T+E_{0,\mu-1}(\mathtt{A}^{(\mu-2)}_{0,\mu-1})^T
+\mathtt{B}^{(\mu-2)}_{0,\mu-2}M_{0,\mu-2}^T+M_{0,\mu-2}(\mathtt{B}^{(\mu-2)}_{0,\mu-2})^T\equiv0\Mod{\gamma}.\vspace{-1mm}$$
From this, we get 
\vspace{-1mm}$$\gamma^{\mu-2}\bigl(\mathcal{P}_5+\mathtt{A}^{(\mu-2)}_{0,\mu-1}E_{0,\mu-1}^T+E_{0,\mu-1}(\mathtt{A}^{(\mu-2)}_{0,\mu-1})^T
+\mathtt{B}^{(\mu-2)}_{0,\mu-2}M_{0,\mu-2}^T+M_{0,\mu-2}(\mathtt{B}^{(\mu-2)}_{0,\mu-2})^T\bigr)\equiv\gamma^{\mu-1}\mathcal{P}_{9}\Mod{\gamma}\vspace{-1mm}$$
for some $\mathcal{P}_{9}\in Sym_{k_0}(\mathcal{T}_e).$ Now, on substituting this into the congruence \eqref{eqn3.3} and by applying Lemma \ref{Matrixlemma}(a) again, we see that there are precisely $q^{\frac{k_0}{2}\bigl(2N_1-2m_{\mu-2}(\textbf{k})-k_0-1\bigr)}$ distinct choices for the matrix $F_{0,\mu-1}$ satisfying \eqref{eqn3.3}.

Next, we proceed to count choices for the matrices $G_{0,\mu-2},$ $K_{0,\mu-1},$ $H_{0,\mu-2}$ and $J_{0,\mu-1}$ satisfying the congruences \eqref{eqn3.4} and \eqref{eqn3.5}. Since the matrix $\overline{\mathtt{D}^{(\mu-2)}_{0,\mu-2}}$ is of full row-rank, we see, by Lemma \ref{Matrixlemma}(a), that there are precisely $q^{\frac{\ell_0}{2}\bigl(2N_2-2m_{\mu-3}(\boldsymbol{\ell})-\ell_0-1\bigr)}$ distinct choices for the matrix $G_{0,\mu-2}$ satisfying \vspace{-1mm}$$\mathcal{P}_7+\mathtt{D}^{(\mu-2)}_{0,\mu-2}G_{0,\mu-2}^T+G_{0,\mu-2}(\mathtt{D}^{(\mu-2)}_{0,\mu-2})^T\equiv0\Mod{\gamma}.\vspace{-1mm}$$ 
From this,  we get  \vspace{-1mm}$$\gamma^{\mu-2}\bigl(\mathcal{P}_7+\mathtt{D}^{(\mu-2)}_{0,\mu-2}G_{0,\mu-2}^T+G_{0,\mu-2}(\mathtt{D}^{(\mu-2)}_{0,\mu-2})^T\bigr)\equiv\gamma^{\mu-1}\mathcal{P}_{10}\Mod{\gamma^\mu}\vspace{-1mm}$$ for some $\mathcal{P}_{10}\in Sym_{\ell_{0}}(\mathcal{T}_e).$ Now, on substituting this into the congruence \eqref{eqn3.4}, we get 
\vspace{-1mm}\small{\begin{align}\label{eqn3.50}
\mathcal{P}_{10}+\mathcal{P}_8+C_{0,\mu-1}K_{0,\mu-1}^T+K_{0,\mu-1}(\mathtt{C}^{(\mu-2)}_{0,\mu-1})^T+\mathtt{D}^{(\mu-2)}_{0,\mu-2}H_{0,\mu-2}^T+H_{0,\mu-2}(\mathtt{D}^{(\mu-2)}_{0,\mu-2})^T+\gamma^{\mu-4}G_{0,\mu-2}G_{0,\mu-2}^T\equiv0\Mod{\gamma}.
\vspace{-1mm}\end{align}}\normalsize
Further, to count the  choices for the matrix $K_{0,\mu-1},$ we see that  the congruence \eqref{eqn3.5} gives \vspace{-1mm}\begin{align}\label{K0mu-1}
\mathcal{Q}_8+\mathtt{A}^{(\mu-2)}_{0,\mu-1}K_{0,\mu-1}^T+\mathtt{B}^{(\mu-2)}_{0,\mu-2}G_{0,\mu-2}^T+E_{0,\mu-1}(\mathtt{C}^{(\mu-2)}_{0,\mu-1})^T+M_{0,\mu-2}(\mathtt{D}^{(\mu-2)}_{0,\mu-2})^T\equiv0\Mod{\gamma}.
\vspace{-1mm}\end{align} Using the fact that  the matrix $\overline{\mathtt{A}^{(\mu-2)}_{0,\mu-1}}$ is of full row-rank and by applying Lemma \ref{Matrixlemma}(b), we see that there are precisely $q^{\ell_0\bigl(N_1-m_{\mu-2}(\textbf{k})-k_0\bigr)}$ distinct choices for the matrix $K_{0,\mu-1}$ satisfying \eqref{K0mu-1}.
Further, for a given choice of the matrix $K_{0,\mu-1},$ we see, by Lemma \ref{Matrixlemma}(a), that there are precisely $q^{\frac{\ell_0}{2}\bigl(2N_2-2m_{\mu-3}(\boldsymbol{\ell})-\ell_0-1\bigr)}$ distinct choices for the matrix $H_{0,\mu-2}$ satisfying \eqref{eqn3.50}. To count the  choices for the matrix  $J_{0,\mu-1}$ satisfying \eqref{eqn3.5}, we see that  the congruence \eqref{K0mu-1} gives
\vspace{-1mm}$$\gamma^{\mu-2}\bigl(\mathcal{Q}_8+\mathtt{A}^{(\mu-2)}_{0,\mu-1}K_{0,\mu-1}^T+\mathtt{B}^{(\mu-2)}_{0,\mu-2}G_{0,\mu-2}^T+E_{0,\mu-1}(\mathtt{C}^{(\mu-2)}_{0,\mu-1})^T+M_{0,\mu-2}(\mathtt{D}^{(\mu-2)}_{0,\mu-2})^T\bigr)\equiv\gamma^{\mu-1}\mathcal{Q}_{10}\Mod{\gamma^{\mu}}\vspace{-1mm}$$
for some $\mathcal{Q}_{10}\in M_{k_0\times\ell_0}(\mathcal{T}_e).$ Now, on substituting this into the congruence \eqref{eqn3.5} and by applying Lemma \ref{Matrixlemma}(b) again, we see that the matrix $J_{0,\mu-1}$ has precisely $q^{\ell_0\bigl(N_1-m_{\mu-2}(\textbf{k})-k_0\bigr)}$ distinct choices. 


Finally, we will count the choices for the remaining matrices $H_{j,\mu-2},$ $J_{j,\mu-1}$ for $1\leq j\leq \mu-3$ and $L_{i,\mu-2},$ $F_{i,\mu-1}$ for $1\leq i\leq \mu-2$ satisfying the congruences \eqref{eqn3.6}-\eqref{eqn3.9} for given choices of the matrices $\mathtt{A}_{\mu-1,\mu-1},$ $\mathtt{D}_{\mu-2,\mu-2},$ $J_{\mu-2,\mu-1},$ $L_{0,\mu-2},$ $M_{0,\mu-2},$ $E_{0,\mu-1},$ $F_{0,\mu-1},$ $G_{0,\mu-2},$ $K_{0,\mu-1},$ $H_{0,\mu-2}$ and $J_{0,\mu-1}.$ To do this, we see, using the fact that the matrix $\overline{\mathtt{D}^{(\mu-2)}_{0,\mu-2}}$ is of full row-rank and by Lemma \ref{Matrixlemma}(b),   that there are precisely $q^{\ell_j\bigl(N_2-m_{\mu-3}(\boldsymbol{\ell})-\ell_0\bigr)}$ distinct choices for the matrix $H_{j,\mu-2}$ satisfying \eqref{eqn3.6} for $1\leq j\leq \mu-3.$  Further, for a given choice of the matrix $H_{j,\mu-2},$ we see, using the fact that the matrix $\overline{\mathtt{A}^{(\mu-2)}_{0,\mu-1}}$ is of full row-rank and by  Lemma \ref{Matrixlemma}(b) again, that there are precisely  $q^{\ell_j\bigl(N_1-m_{\mu-2}(\textbf{k})-k_0\bigr)}$ distinct choices for the matrix $J_{j,\mu-1}$ satisfying \eqref{eqn3.7} for $1\leq j\leq \mu-3.$
Further, for $1\leq i\leq \mu-2,$ we see, using the fact that the matrix $\overline{\mathtt{D}^{(\mu-2)}_{0,\mu-2}}$ is of full row-rank and   by Lemma \ref{Matrixlemma}(b), that the matrix $L_{i,\mu-2}$ satisfying \eqref{eqn3.8} has precisely $q^{k_i\bigl(N_2-m_{\mu-3}(\boldsymbol{\ell})-\ell_0\bigr)}$ distinct choices.   Now, for a given choice of the matrix $L_{i,\mu-2},$ we note, using the fact that  the matrix $\overline{\mathtt{A}^{(\mu-2)}_{0,\mu-1}}$ is of full row-rank and by Lemma \ref{Matrixlemma}(b) again, that the matrix $F_{i,\mu-1}$  satisfying \eqref{eqn3.9} has precisely $q^{k_i\bigl(N_1-m_{\mu-2}(\textbf{k})-k_0\bigr)}$ distinct choices.  

From the above discussion, we see that the unknown matrices $\mathtt{A}_{\mu-1,\mu-1},$ $\mathtt{D}_{\mu-2,\mu-2},$ $E_{0,\mu-1},$ $M_{0,\mu-2},$  $K_{0,\mu-1},$ $G_{0,\mu-2},$   $F_{i,\mu-1},$  $L_{i,\mu-2}$ and  $J_{i,\mu-1}$  for $0\leq i\leq \mu-2,$ and $H_{j,\mu-2}$ for $0\leq j\leq \mu-3,$ satisfying \eqref{eqn3.3}-\eqref{eqn3.12} have precisely   \vspace{-2mm}$$\qbin{N_1-m_{\mu-2}(\textbf{k})-k_0}{k_{\mu-1}}{q}\qbin{N_2-m_{\mu-3}(\boldsymbol{\ell})-\ell_0}{\ell_{\mu-2}}{q}q^{\Theta'(\textbf{k};\boldsymbol{\ell})}\vspace{-2mm}$$ distinct choices,
where
\vspace{-1mm}\small{\begin{align}\label{theta'}
\Theta'(\textbf{k};\boldsymbol{\ell})=&k_0\bigl(2N_1-2m_{\mu-2}(\textbf{k})-k_0-1\bigr)+\ell_0\bigl(2N_2-2m_{\mu-3}(\boldsymbol{\ell})-\ell_0-1\bigr)+\bigl(k_0+2\ell_0\bigr)\bigl(N_1-m_{\mu-2}(\textbf{k})-k_0\bigr)\nonumber\\&+\bigl(N_1+N_2-m_{\mu-2}(\textbf{k})-m_{\mu-3}(\boldsymbol{\ell})-k_0-\ell_0\bigr)\bigl(m_{\mu-2}(\textbf{k})+m_{\mu-3}(\boldsymbol{\ell})-k_0-\ell_0\bigr)+2k_0\bigl(N_2-m_{\mu-3}(\boldsymbol{\ell})\bigr).
\vspace{-1mm}\end{align}}\normalsize Further, one can easily observe that there are precisely $q^{\sum\limits_{j=0}^{\mu-3}\ell_j\ell_{\mu-2}+\sum\limits_{i=0}^{\mu-2}(k_i\ell_{\mu-2}+k_ik_{\mu-1}+\ell_ik_{\mu-1})}$ distinct choices of the aforementioned unknown matrices, which give rise to the same Euclidean self-orthogonal $\mathtt{R}_{\mu}\mathtt{R}_{\mu-1}$-linear code $\mathscr{C}_{\mu}$ of block-length $(N_1,N_2)$ and type $\{k_0, k_1, \ldots, k_{\mu-1}; \ell_0, \ell_1, \ldots, \ell_{\mu-2}\}$ satisfying $Tor_1(\mathscr{C}_\mu^{(X)})=P_1,$ $Tor_1(\mathscr{C}_\mu^{(Y)})=P_2,$  $Tor_{i+1}(\mathscr{C}_\mu^{(X)})=Tor_i(\mathscr{C}_{\mu-2}^{(X)})$ for $1\leq i\leq \mu-2$ and $Tor_{j+1}(\mathscr{C}_\mu^{(Y)})=Tor_j(\mathscr{C}_{\mu-2}^{(Y)})$ for $1\leq j\leq \mu-3.$  
Thus  each Euclidean self-orthogonal $\mathtt{R}_{\mu-2}\mathtt{R}_{\mu-3}$-linear code $\mathscr{C}_{\mu-2}$ with a generator matrix $\mathtt{G}_{\mu-2}$ of the form \eqref{Gmu-2} can be lifted to precisely
\vspace{-4mm}\begin{eqnarray*}
&\qbin{N_1-m_{\mu-2}(\textbf{k})-k_0}{k_{\mu-1}}{q}\qbin{N_2-m_{\mu-3}(\boldsymbol{\ell})-\ell_0}{\ell_{\mu-2}}{q}q^{\Theta'(\textbf{k},\boldsymbol{\ell})-\sum\limits_{j=0}^{\mu-3}\ell_j\ell_{\mu-2}-\sum\limits_{i=0}^{\mu-2}(k_i\ell_{\mu-2}+k_ik_{\mu-1}+\ell_ik_{\mu-1})}
\vspace{-4mm}\end{eqnarray*}
distinct Euclidean self-orthogonal $\mathtt{R}_{\mu}\mathtt{R}_{\mu-1}$-linear codes $\mathscr{C}_\mu$ satisfying $Tor_1(\mathscr{C}_\mu^{(X)})=P_1,$ $Tor_1(\mathscr{C}_\mu^{(Y)})=P_2,$  \eqref{eqn3.32} and \eqref{eqn3.33},  where  $\Theta'(\textbf{k},\boldsymbol{\ell})$ is given by \eqref{theta'}.
Further, we note that for given choices of $P_1$ and $P_2,$ the code $\mathscr{C}_{\mu-2}$ has precisely $q^{k_0(\sum\limits_{w=2}^{\mu-2}k_w+\sum\limits_{s=1}^{\mu-3}\ell_s)+\ell_0(\sum\limits_{i=2}^{\mu-2}k_i+\sum\limits_{j=2}^{\mu-3}\ell_j)}$ distinct generator matrices $\mathtt{G}_{\mu-2}$ of the form \eqref{Gmu-2} corresponding to the choices of the matrices $\mathtt{A}_{0,i}''^{(\mu-2)}\in M_{k_0\times k_i}(\mathcal{T}_e)$ and $\mathtt{C}_{0,i}''^{(\mu-2)}\in M_{\ell_0\times k_i}(\mathcal{T}_e)$ for $2\leq i\leq \mu-2,$ the matrices $\mathtt{B}_{0,j}''^{(\mu-2)}\in M_{k_0\times \ell_{j}}(\mathcal{T}_e)$ and $\mathtt{D}_{0,j}''^{(\mu-2)}\in M_{\ell_0\times \ell_j}(\mathcal{T}_e)$ for $2\leq j\leq \mu-3,$ and the matrix $\mathtt{B}_{0,1}^{(\mu-2)}\in M_{k_0\times \ell_{1}}(\mathcal{T}_e).$ From this, it follows that each Euclidean self-orthogonal $\mathtt{R}_{\mu-2}\mathtt{R}_{\mu-3}$-linear code $\mathscr{C}_{\mu-2}$ of block-length $(N_1,N_2)$ and type $\{k_0+k_1,k_2,\ldots,k_{\mu-2};\ell_0+\ell_1,\ell_2,\ldots,\ell_{\mu-3}\}$ gives rise to precisely 
\vspace{-1mm}\begin{align*}
&\qbin{N_1-m_{\mu-2}(\textbf{k})-k_0}{k_{\mu-1}}{q}\qbin{N_2-m_{\mu-3}(\boldsymbol{\ell})-\ell_0}{\ell_{\mu-2}}{q}\qbin{k_0+k_1}{k_{0}}{q}\qbin{\ell_0+\ell_1}{\ell_0}{q}
\vspace{-1mm}\\&\times q^{\Theta'(\mathbf{k};\boldsymbol{\ell})+k_0(\sum\limits_{w=2}^{\mu-2}k_w+\sum\limits_{s=1}^{\mu-3}\ell_s)+\ell_0(\sum\limits_{w'=2}^{\mu-2}k_{w'}+\sum\limits_{s'=2}^{\mu-3}\ell_{s'})-\sum\limits_{j=0}^{\mu-3}\ell_j\ell_{\mu-2}-\sum\limits_{i=0}^{\mu-2}(k_i\ell_{\mu-2}+k_ik_{\mu-1}+\ell_ik_{\mu-1})}\vspace{-1mm}\\
=&\qbin{N_1-m_{\mu-2}(\textbf{k})-k_0}{k_{\mu-1}}{q}\qbin{N_2-m_{\mu-3}(\boldsymbol{\ell})-\ell_0}{\ell_{\mu-2}}{q}\qbin{k_0+k_1}{k_{0}}{q}\qbin{\ell_0+\ell_1}{\ell_0}{q}q^{\Theta_{\mu}(\mathbf{k};\boldsymbol{\ell})}
\vspace{-1mm}\end{align*}
distinct Euclidean self-orthogonal $\mathtt{R}_\mu\mathtt{R}_{\mu-1}$-linear codes $\mathscr{C}_{\mu}$ of block-length $(N_1,N_2)$ and type $\{k_0,k_1,\ldots,k_{\mu-1};\ell_0,\\\ell_1,\ldots,\ell_{\mu-2}\}$ satisfying  $Tor_{i+1}(\mathscr{C}_\mu^{(X)})=Tor_i(\mathscr{C}_{\mu-2}^{(X)})$ for $1\leq i\leq \mu-2$ and $Tor_{j+1}(\mathscr{C}_\mu^{(Y)})=Tor_j(\mathscr{C}_{\mu-2}^{(Y)})$ for $1\leq j\leq \mu-3,$ where $\Theta_\mu(\mathbf{k};\boldsymbol{\ell})$ and $\Theta'(\mathbf{k};\boldsymbol{\ell})$ are given by \eqref{thetamu} and \eqref{theta'}, respectively. From this, the desired result follows. 
\end{proof}

Now, with the help of the recurrence relation derived in Proposition \ref{Thm6.1}(c), we will enumerate all Euclidean self-orthogonal $\mathtt{R}_e\mathtt{R}_{e-1}$-linear codes of block-length $(N_1,N_2).$ Towards this, let us first fix some notations. For integers $n\geq 1$ and $s\geq 4,$ and an $s$-tuple $\textbf{k}=(k_0,k_1,\ldots,k_{s-1})\in\mathscr{K}_{n,s},$ let us define the $(s-2i)$-tuple $\textbf{k}^{(i)}$ as
\vspace{-1mm}\begin{align}
\textbf{k}^{(i)}=\bigl(m_i(\textbf{k}),k_{i+1},k_{i+2},\ldots,k_{s-1-i}\bigr)~\text{ for }~0\leq i \leq \left\lfloor \tfrac{s-2}{2} \right\rfloor. 
\vspace{-1mm}\end{align}
Further, for an $e$-tuple $\textbf{k}=(k_0,k_1,\ldots,k_{e-1})\in\mathscr{K}_{N_1,e}$ and an $(e-1)$-tuple $\boldsymbol{\ell}=(\ell_0,\ell_1,\ldots,\ell_{e-2})\in\mathscr{K}_{N_2,e-1},$ we see that $\textbf{k}^{(i)}\in\mathscr{K}_{N_1,e-2i}$ for $0\leq i\leq \floor{\frac{e-2}{2}}$ and $\boldsymbol{\ell}^{(j)}\in\mathscr{K}_{N_2,e-1-2j}$ for $0\leq j\leq \floor{\frac{e-3}{2}}.$ Throughout this paper, for an $e$-tuple $\textbf{k}\in\mathscr{K}_{N_1,e}$ and an $(e-1)$-tuple $\boldsymbol{\ell}\in\mathscr{K}_{N_2,e-1},$ let us define 
\vspace{-2mm}\begin{align}\label{defdeltaE}
\Delta_e(\textbf{k},\boldsymbol{\ell})=\left\{\begin{array}{ll}
\sum\limits_{i=0}^{\frac{e-5}{2}}\Theta_{5+2i}(\textbf{k}^{(\frac{e-5}{2}-i)};\boldsymbol{\ell}^{(\frac{e-5}{2}-i)})&~\text{if}~e~\text{is odd};\\
\sum\limits_{i=0}^{\frac{e-4}{2}}\Theta_{4+2i}(\textbf{k}^{(\frac{e-4}{2}-i)};\boldsymbol{\ell}^{(\frac{e-4}{2}-i)})&~\text{if}~e~\text{is even}\end{array}\right.
\end{align}\vspace{-1mm} and
\vspace{-2mm}\begin{align}\label{defskl}
s_{e}(\textbf{k},\boldsymbol{\ell})=\left\{\begin{array}{ll} \Theta_3(\textbf{k}^{(\frac{e-3}{2})};\boldsymbol{\ell}^{(\frac{e-3}{2})})&~\text{if}~e~\text{is odd}; \vspace{3mm}\\
\Theta_2\bigl(\textbf{k}^{(\frac{e-2}{2})};m_{\frac{e-2}{2}}(\boldsymbol{\ell})\bigr) &~\text{if}~e~\text{is even,}
\vspace{-2mm}\end{array} \right.
\end{align}\normalsize
where $\Theta_\mu(\mathbf{k};\boldsymbol{\ell})$ is given by \eqref{thetamu} for  $ 4 \leq  \mu \leq e,$ and $\Theta_2\bigl(\textbf{k}^{(\frac{e-2}{2})};m_{\frac{e-2}{2}}(\boldsymbol{\ell})\bigr)$ and $\Theta_3(\textbf{k}^{(\frac{e-3}{2})};\boldsymbol{\ell}^{(\frac{e-3}{2})})$   are given by \eqref{theta2} and \eqref{theta3}, respectively. Also, for a prime power $Q_1,$ let us define the number
\vspace{-1mm}\small\begin{align}\label{Beab}
\mathscr{B}_{e,Q_1}^{(N_1,N_2)}(\textbf{k},\boldsymbol{\ell})=& \prod\limits_{i=1}^{\floor{\frac{e-1}{2}}}
\qbin{m_i(\textbf{k})}{k_i}{Q_1}
\prod\limits_{j=1}^{\floor{\frac{e-2}{2}}}
\qbin{m_j(\boldsymbol{\ell})}{\ell_j}{Q_1}
\prod\limits_{w=\floor{\frac{e-1}{2}}}^{e-2}
\qbin{N_1-m_w(\textbf{k})-m_{e-2-w}(\textbf{k})}{k_{w+1}}{Q_1}\nonumber \\& \times \prod\limits_{s=\floor{\frac{e-2}{2}}}^{e-3}
\qbin{N_2-m_s(\boldsymbol{\ell})-m_{e-3-s}(\boldsymbol{\ell})}{\ell_{s+1}}{Q_1}.
\vspace{-1mm}\end{align}\normalsize

In the following theorem, we obtain the number $\mathfrak{N}_e(N_1,N_2;\textbf{k};\boldsymbol{\ell}),$ where $\textbf{k}=(k_0,k_1,\ldots,k_{e-1})$ and $\boldsymbol{\ell}=(\ell_0,\ell_1,\ldots,\ell_{e-2}).$
\vspace{-1mm}\begin{thm}\label{Thm6.2}Let $e \geq 4$ be a fixed integer. 
There exists a Euclidean self-orthogonal $\mathtt{R}_e\mathtt{R}_{e-1}$-linear code of block length $(N_1,N_2)$ and type $\{k_0,k_1,\ldots,k_{e-1};\ell_0,\ell_1,\ldots,\ell_{e-2}\}$ if and only if the $e$-tuple $\textbf{k}=(k_0,k_1,\ldots,k_{e-1})\in\mathscr{K}_{N_1,e}$ and the $(e-1)$-tuple $\boldsymbol{\ell}=(\ell_0,\ell_1,\ldots,\ell_{e-2})\in\mathscr{K}_{N_2,e-1}.$ Furthermore, for $\textbf{k}\in\mathscr{K}_{N_1,e}$ and $\boldsymbol{\ell}\in\mathscr{K}_{N_2,e-1},$ we have \vspace{-1mm}$$\mathfrak{N}_e(N_1,N_2;\textbf{k};\boldsymbol{\ell})=\sigma_q\bigl(N_1,m_{\floor{\frac{e-1}{2}}}(\textbf{k})\bigr)\sigma_q\bigl(N_2,m_{\floor{\frac{e-2}{2}}}(\boldsymbol{\ell})\bigr)\mathscr{B}_{e,q}^{(N_1,N_2)}(\textbf{k},\boldsymbol{\ell}) q^{\Delta_{e}(\textbf{k},\boldsymbol{\ell})+s_{e}(\textbf{k},\boldsymbol{\ell})},\vspace{-1mm}$$ where the numbers $\sigma_q\bigl(N_1,m_{\floor{\frac{e-1}{2}}}(\textbf{k})\bigr)$\textquotesingle s  and $\sigma_q\bigl(N_2,m_{\floor{\frac{e-2}{2}}}(\boldsymbol{\ell})\bigr)$\textquotesingle s are given by \eqref{sigmaE}, and $\Delta_{e}(\textbf{k},\boldsymbol{\ell})$\textquotesingle s, $s_{e}(\textbf{k},\boldsymbol{\ell})$\textquotesingle s and $\mathscr{B}_{e,q}^{(N_1,N_2)}(\textbf{k},\boldsymbol{\ell})$\textquotesingle s are given by \eqref{defdeltaE}, \eqref{defskl} and \eqref{Beab}, respectively.
\end{thm}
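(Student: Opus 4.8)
The plan is to prove both assertions by iterating the recurrence of Proposition \ref{Thm6.1}(c), peeling off two levels at a time, until the process terminates at one of the two base cases: $e=2$ (Theorem \ref{Thm4.2}) or $e=3$ (Theorem \ref{Thm5.2}), according to the parity of $e$. For the existence (``if and only if'') claim, the necessity of $\textbf{k}\in\mathscr{K}_{N_1,e}$ and $\boldsymbol{\ell}\in\mathscr{K}_{N_2,e-1}$ is immediate from Remark \ref{Remark1.1}. For sufficiency I would argue that under these membership conditions the enumeration formula is strictly positive: every factor ($\sigma_q$-values, Gaussian binomials, and powers of $q$) is positive, the membership propagates to the reduced tuples $\textbf{k}^{(1)}\in\mathscr{K}_{N_1,e-2}$ and $\boldsymbol{\ell}^{(1)}\in\mathscr{K}_{N_2,e-3}$ (as recorded in the discussion preceding Proposition \ref{Thm6.1}), so the terminal base-case count remains positive and Proposition \ref{Thm6.1}(a) supplies a lift at each stage.

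For the counting formula, the first thing I would record is that the reduction operation composes correctly, namely $\bigl(\textbf{k}^{(1)}\bigr)^{(i)}=\textbf{k}^{(i+1)}$ together with $m_{j}\bigl(\textbf{k}^{(i)}\bigr)=m_{j+i}(\textbf{k})$ on the relevant index ranges, and likewise for $\boldsymbol{\ell}$. Granting these, I apply Proposition \ref{Thm6.1}(c) first at level $\mu=e$, then at level $\mu=e-2$ to the surviving factor $\mathfrak{N}_{e-2}(N_1,N_2;\textbf{k}^{(1)};\boldsymbol{\ell}^{(1)})$, and so on down to level $\mu=4$ (even $e$) or $\mu=5$ (odd $e$). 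This expresses $\mathfrak{N}_e(N_1,N_2;\textbf{k};\boldsymbol{\ell})$ as the base-case count $\mathfrak{N}_2$ resp. $\mathfrak{N}_3$, evaluated at $\textbf{k}^{(\frac{e-2}{2})},\boldsymbol{\ell}^{(\frac{e-2}{2})}$ resp. $\textbf{k}^{(\frac{e-3}{2})},\boldsymbol{\ell}^{(\frac{e-3}{2})}$, multiplied by all the Gaussian-binomial and $q$-power factors accrued along the way. I would organize the even and odd parities separately, since they terminate in different base cases.

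Next I would match each ingredient of the target formula with the accumulated factors. The two $\sigma_q$-values come solely from the base case: for odd $e$, Theorem \ref{Thm5.2} applied to the $3$-tuple $\textbf{k}^{(\frac{e-3}{2})}$ yields $\sigma_q\bigl(N_1,m_{\frac{e-1}{2}}(\textbf{k})\bigr)$ and $\sigma_q\bigl(N_2,m_{\frac{e-3}{2}}(\boldsymbol{\ell})\bigr)$, which are exactly $\sigma_q\bigl(N_1,m_{\floor{\frac{e-1}{2}}}(\textbf{k})\bigr)$ and $\sigma_q\bigl(N_2,m_{\floor{\frac{e-2}{2}}}(\boldsymbol{\ell})\bigr)$, and similarly for even $e$ via Theorem \ref{Thm4.2}. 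The binomials $\qbin{k_0+k_1}{k_0}{q}$ and $\qbin{\ell_0+\ell_1}{\ell_0}{q}$ contributed at level $\mu$ (applied to $\textbf{k}^{(i)}$) rewrite, using $\qbin{a}{b}{q}=\qbin{a}{a-b}{q}$, as $\qbin{m_{i+1}(\textbf{k})}{k_{i+1}}{q}$ and $\qbin{m_{i+1}(\boldsymbol{\ell})}{\ell_{i+1}}{q}$, and together with the base-case binomial they assemble into the first two products of $\mathscr{B}_{e,q}^{(N_1,N_2)}(\textbf{k},\boldsymbol{\ell})$ in \eqref{Beab}. The remaining binomials $\qbin{N_1-m_{\mu-2}(\textbf{k}^{(i)})-m_i(\textbf{k})}{k_{e-i-1}}{q}$ and $\qbin{N_2-m_{\mu-3}(\boldsymbol{\ell}^{(i)})-m_i(\boldsymbol{\ell})}{\ell_{e-i-2}}{q}$, after reindexing $w=e-2-i$ (resp. $s=e-3-i$) and invoking the composition identities, become the entries of the last two products of \eqref{Beab}, with the base case supplying precisely the boundary indices $w=\floor{\frac{e-1}{2}}$ and $s=\floor{\frac{e-2}{2}}$. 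Finally, the accumulated exponents $\sum_{\mu}\Theta_{\mu}\bigl(\textbf{k}^{(\cdot)};\boldsymbol{\ell}^{(\cdot)}\bigr)$ are $\Delta_e(\textbf{k},\boldsymbol{\ell})$ by \eqref{defdeltaE}, and the base-case exponent is $s_e(\textbf{k},\boldsymbol{\ell})$ by \eqref{defskl}.

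The main obstacle is the index bookkeeping in this last matching step: I must verify, uniformly in the parity of $e$, that the level index $\mu$ and the reduction depth track one another so the $\Theta_{\mu}$-terms land on exactly the superscripts appearing in \eqref{defdeltaE}, and that the two families of Gaussian binomials telescope into the four products of \eqref{Beab} with no overlap or gap at the seams, in particular so that the ``$w=\floor{\frac{e-1}{2}}$'' binomial and the base-case $\sigma_q$-argument are consistent with $m_{\floor{\frac{e-1}{2}}}(\textbf{k})=m_{\frac{e-3}{2}}(\textbf{k})+k_{\frac{e-1}{2}}$. This is delicate but routine algebra; once the composition identities for $\textbf{k}^{(i)}$ and $m_i$ are established and the two parity cases are set up, the verification is mechanical.
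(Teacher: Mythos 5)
Your proposal is correct and takes essentially the same route as the paper: the paper's own proof consists precisely of repeatedly applying the recurrence of Proposition \ref{Thm6.1}(c) until it terminates in the base cases of Theorems \ref{Thm4.2} and \ref{Thm5.2} according to the parity of $e$, with Remark \ref{Remark1.1} supplying the necessity of $\textbf{k}\in\mathscr{K}_{N_1,e}$ and $\boldsymbol{\ell}\in\mathscr{K}_{N_2,e-1}$. Your explicit bookkeeping (the composition identities $(\textbf{k}^{(1)})^{(i)}=\textbf{k}^{(i+1)}$, the reindexing $w=e-2-i$, and the positivity argument for sufficiency) merely spells out what the paper treats as routine.
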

\begin{proof}By repeatedly applying the recurrence relation derived in Proposition \ref{Thm6.1}(c)
   and using Remark \ref{Remark1.1} and Theorems \ref{Thm4.2} and \ref{Thm5.2}, we get the desired result.
\end{proof} 
In the following theorem, we obtain an enumeration formula for $\mathfrak{N}_e(N_1,N_2).$ 
\vspace{-1mm}\begin{thm}\label{Thm6.3}
For $e \geq 4,$ we have
\vspace{-1mm}\begin{align*}
\mathfrak{N}_{e}(N_1,N_2)=&\sum\limits_{\textbf{k}\in \mathscr{K}_{N_1,e}}~\sum\limits_{\boldsymbol{\ell}\in \mathscr{K}_{N_2,e-1}}\sigma_q\bigl(N_1,m_{\floor{\frac{e-1}{2}}}(\textbf{k})\bigr)\sigma_q\bigl(N_2,m_{\floor{\frac{e-2}{2}}}(\boldsymbol{\ell})\bigr)\mathscr{B}_{e,q}^{(N_1,N_2)}(\textbf{k},\boldsymbol{\ell}) ~q^{\Delta_{e}(\textbf{k},\boldsymbol{\ell})+s_{e}(\textbf{k},\boldsymbol{\ell})},
\vspace{-1mm}\end{align*}
where the numbers $\sigma_q\bigl(N_1,m_{\floor{\frac{e-1}{2}}}(\textbf{k})\bigr)$\textquotesingle s and $\sigma_q\bigl(N_2,m_{\floor{\frac{e-2}{2}}}(\boldsymbol{\ell})\bigr)$\textquotesingle s are given by \eqref{sigmaE}, and $\Delta_{e}(\textbf{k},\boldsymbol{\ell})$\textquotesingle s, $s_{e}(\textbf{k},\boldsymbol{\ell})$\textquotesingle s and $\mathscr{B}_{e,q}^{(N_1,N_2)}(\textbf{k},\boldsymbol{\ell})$\textquotesingle s are given by \eqref{defdeltaE}, \eqref{defskl} and \eqref{Beab}, respectively.
\end{thm}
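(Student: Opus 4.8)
The plan is to obtain $\mathfrak{N}_e(N_1,N_2)$ simply by summing the per-type enumerator of Theorem \ref{Thm6.2} over all admissible types, exactly as Theorems \ref{Thm4.3} and \ref{Thm5.3} were deduced from Theorems \ref{Thm4.2} and \ref{Thm5.2}. The key structural fact I would invoke is that, by Proposition 3.2 of Borges \etal~\cite{Borges}, every $\mathtt{R}_e\mathtt{R}_{e-1}$-linear code of block-length $(N_1,N_2)$ has a unique type $\{k_0,k_1,\ldots,k_{e-1};\ell_0,\ell_1,\ldots,\ell_{e-2}\}.$ Consequently the family of all Euclidean self-orthogonal $\mathtt{R}_e\mathtt{R}_{e-1}$-linear codes of block-length $(N_1,N_2)$ is partitioned according to type, so that
$$\mathfrak{N}_e(N_1,N_2)=\sum_{\textbf{k}}\sum_{\boldsymbol{\ell}}\mathfrak{N}_e(N_1,N_2;\textbf{k};\boldsymbol{\ell}),$$
where the outer sum ranges over all $e$-tuples $\textbf{k}$ and all $(e-1)$-tuples $\boldsymbol{\ell}$ of non-negative integers.

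Next I would restrict this double sum to the admissible index sets. By the first assertion of Theorem \ref{Thm6.2} (equivalently, by Remark \ref{Remark1.1}), a Euclidean self-orthogonal code of type $(\textbf{k};\boldsymbol{\ell})$ exists only when $\textbf{k}\in\mathscr{K}_{N_1,e}$ and $\boldsymbol{\ell}\in\mathscr{K}_{N_2,e-1};$ for every other type the count $\mathfrak{N}_e(N_1,N_2;\textbf{k};\boldsymbol{\ell})$ vanishes and contributes nothing. The double sum therefore collapses to one over $\textbf{k}\in\mathscr{K}_{N_1,e}$ and $\boldsymbol{\ell}\in\mathscr{K}_{N_2,e-1}.$ Substituting the closed-form value of $\mathfrak{N}_e(N_1,N_2;\textbf{k};\boldsymbol{\ell})$ supplied by the second assertion of Theorem \ref{Thm6.2} then yields precisely the claimed expression, with the factors $\sigma_q(\cdot,\cdot),$ the product $\mathscr{B}_{e,q}^{(N_1,N_2)}(\textbf{k},\boldsymbol{\ell})$ and the exponent $\Delta_e(\textbf{k},\boldsymbol{\ell})+s_e(\textbf{k},\boldsymbol{\ell})$ carried over verbatim.

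Since all of the combinatorial substance --- the recursive lifting construction, the counting of the free matrix blocks, and the resolution of the self-orthogonality congruences --- has already been absorbed into Proposition \ref{Thm6.1} and Theorem \ref{Thm6.2}, I do not anticipate a genuine obstacle at this final step. The only point requiring care is purely bookkeeping: verifying that the partition by type is exhaustive and disjoint (so that no code is omitted or double-counted) and that the passage from an unrestricted double sum to one indexed by $\mathscr{K}_{N_1,e}\times\mathscr{K}_{N_2,e-1}$ discards no nonzero contribution. Both facts follow immediately from the uniqueness of type and from the vanishing clause in Theorem \ref{Thm6.2}, so the argument ultimately reduces to the one-line deduction that the result follows immediately from Theorem \ref{Thm6.2}.
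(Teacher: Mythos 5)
Your proposal is correct and matches the paper's own proof, which consists of the single line ``It follows immediately from Theorem \ref{Thm6.2}''; your additional bookkeeping (partitioning self-orthogonal codes by their unique type and discarding the types outside $\mathscr{K}_{N_1,e}\times\mathscr{K}_{N_2,e-1}$ via the vanishing clause) is exactly the implicit content of that deduction.
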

\begin{proof}
It follows immediately from Theorem \ref{Thm6.2}.
\end{proof}
\vspace{-3mm}\begin{example}\label{ex4.5} Using Magma, we see that there are precisely $1065$ distinct non-zero Euclidean self-orthogonal $\mathbb{Z}_{3^4}\mathbb{Z}_{3^3}$-linear codes of block-length $(2,2).$ This agrees with Theorem \ref{Thm6.3}.
\end{example}
In the following corollary, we count all self-orthogonal additive codes of length $N$ over $\mathcal{R}_e$ when $e \geq 4.$
\vspace{-1mm}\begin{cor}\label{Cor3.1}
For $e\geq4,$ the number of self-orthogonal additive codes of length $N$ over $\mathcal{R}_e$ is given by
\vspace{-1mm}\begin{align*}
&\sum\limits_{\textbf{k}\in \mathscr{K}_{Nrt,e}}~\sum\limits_{\boldsymbol{\ell}\in \mathscr{K}_{Nr(k-t),e-1}}\sigma_p\bigl(Nrt,m_{\floor{\frac{e-1}{2}}}(\textbf{k})\bigr)\sigma_p\bigl(Nr(k-t),m_{\floor{\frac{e-2}{2}}}(\boldsymbol{\ell})\bigr)\mathscr{B}_{e,p}^{\bigl(Nrt,Nr(k-t)\bigr)}(\textbf{k},\boldsymbol{\ell})~p^{\Delta_{e}(\textbf{k},\boldsymbol{\ell})+s_{e}(\textbf{k},\boldsymbol{\ell})},
\vspace{-1mm}\end{align*}
where the numbers $\sigma_p\bigl(Nrt,m_{\floor{\frac{e-1}{2}}}(\textbf{k})\bigr)$\textquotesingle s and $\sigma_p\bigl(Nr(k-t),m_{\floor{\frac{e-2}{2}}}(\boldsymbol{\ell})\bigr)$\textquotesingle s are given by \eqref{sigmaE}, and $\Delta_{e}(\textbf{k},\boldsymbol{\ell})$\textquotesingle s,  $s_{e}(\textbf{k},\boldsymbol{\ell})$\textquotesingle s and $\mathscr{B}_{e,p}^{\bigl(Nrt,Nr(k-t)\bigr)}(\textbf{k},\boldsymbol{\ell})$\textquotesingle s are given by \eqref{defdeltaE}, \eqref{defskl} and \eqref{Beab}, respectively.
\end{cor}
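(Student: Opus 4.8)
The plan is to reduce the counting of self-orthogonal additive codes of length $N$ over $\mathcal{R}_e$ entirely to the already-established enumeration of Euclidean self-orthogonal $\mathbb{Z}_{p^e}\mathbb{Z}_{p^{e-1}}$-linear codes, so that no fresh combinatorial work is needed. The corollary is stated for $e \geq 4$, which is precisely the regime covered by Theorem \ref{Thm6.3}, and the bridge is Remark \ref{KEY}.

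First I would invoke Remark \ref{KEY}, which asserts that an additive code $\mathcal{C}$ of length $N$ over $\mathcal{R}_e$ is self-orthogonal if and only if $\Psi(\mathcal{C})$ is a Euclidean self-orthogonal $\mathbb{Z}_{p^e}\mathbb{Z}_{p^{e-1}}$-linear code of block-length $\bigl(Nrt, Nr(k-t)\bigr)$. Since $\Psi$ is a bijection (being a $\mathbb{Z}_{p^e}$-module isomorphism from $\mathcal{R}_e^N$ onto $\mathbb{Z}_{p^e}^{Nrt}\oplus\mathbb{Z}_{p^{e-1}}^{Nr(k-t)}$, as recorded just before Theorem \ref{thm0.1}), this correspondence is one-to-one. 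Consequently the number of self-orthogonal additive codes of length $N$ over $\mathcal{R}_e$ equals $\mathfrak{N}_e\bigl(Nrt, Nr(k-t)\bigr)$, the number of Euclidean self-orthogonal $\mathtt{R}_e\mathtt{R}_{e-1}$-linear codes of that block-length, in the case where the underlying chain ring $\mathtt{R}_e$ is $\mathbb{Z}_{p^e}$ (so that $q = p$ and the residue field is $\mathbb{Z}_p$).

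Next I would substitute $N_1 = Nrt$ and $N_2 = Nr(k-t)$ into the formula of Theorem \ref{Thm6.3}, simultaneously replacing $q$ by $p$ throughout (since the residue field of $\mathbb{Z}_{p^e}$ is $\mathbb{Z}_p$, which has odd prime order because $p$ is an odd prime in this section). This directly yields the claimed summation over $\textbf{k}\in\mathscr{K}_{Nrt,e}$ and $\boldsymbol{\ell}\in\mathscr{K}_{Nr(k-t),e-1}$, with the factors $\sigma_p$, $\mathscr{B}_{e,p}^{(Nrt,Nr(k-t))}$, and the exponent $\Delta_e(\textbf{k},\boldsymbol{\ell})+s_e(\textbf{k},\boldsymbol{\ell})$ appearing exactly as stated, since those quantities depend on $N_1, N_2$ only through the specified substitutions. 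The resulting expression matches the corollary verbatim.

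The proof itself is therefore essentially a one-line invocation, and I do not anticipate a genuine obstacle; the only point demanding minor care is confirming that the hypotheses of Theorem \ref{Thm6.3} are met—namely that $\mathbb{Z}_{p^e}$ qualifies as a finite commutative chain ring of odd characteristic with maximal ideal of nilpotency index $e \geq 4$, which holds since $p$ is odd—and that the definitions \eqref{defdeltaE}, \eqref{defskl}, and \eqref{Beab} of $\Delta_e$, $s_e$, and $\mathscr{B}_{e,p}$ are specialised consistently at $(N_1, N_2) = \bigl(Nrt, Nr(k-t)\bigr)$. A concise proof reads: \emph{By Remark \ref{KEY}, the number of self-orthogonal additive codes of length $N$ over $\mathcal{R}_e$ equals the number of distinct Euclidean self-orthogonal $\mathbb{Z}_{p^e}\mathbb{Z}_{p^{e-1}}$-linear codes of block-length $\bigl(Nrt, Nr(k-t)\bigr)$. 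On taking $N_1 = Nrt$ and $N_2 = Nr(k-t)$ in Theorem \ref{Thm6.3}, the desired result follows immediately.}
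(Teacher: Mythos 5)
Your proposal is correct and follows exactly the paper's own argument: invoke Remark \ref{KEY} to identify self-orthogonal additive codes of length $N$ over $\mathcal{R}_e$ with Euclidean self-orthogonal $\mathbb{Z}_{p^e}\mathbb{Z}_{p^{e-1}}$-linear codes of block-length $\bigl(Nrt,Nr(k-t)\bigr),$ then specialise Theorem \ref{Thm6.3} at $N_1=Nrt,$ $N_2=Nr(k-t)$ with $q=p.$ Your additional checks (that $\mathbb{Z}_{p^e}$ has odd characteristic and that the auxiliary quantities specialise consistently) are sound but routine, and the paper leaves them implicit.
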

\begin{proof}It follows by  Remark \ref{KEY} and taking $N_1=Nrt$ and $N_2=Nr(k-t)$ in Theorem  \ref{Thm6.3}.\end{proof}
\vspace{-3mm}\begin{example}By Corollary \ref{Cor3.1}, we see that 
there are precisely $1065$ distinct non-zero self-orthogonal additive codes of length $2$ over $\mathbb{Z}_{3^4}[y]/\langle y^2-3,3^3y \rangle.$ This agrees with Example \ref{ex4.5} and Magma computations.
\end{example}
We will next count all Euclidean self-dual $\mathtt{R}_e\mathtt{R}_{e-1}$-linear codes of block-length $(N_1,N_2).$ To do this, we assume, throughout this paper, that for positive integers $n$ and $s$,  the set $\mathscr{L}_{n,s}$ consists  of all $s$-tuples $(k_0,k_1,\ldots,k_{s-1})$ of non-negative integers satisfying $k_i=k_{s-i}$ for $1\leq i\leq s-1$ and $2(k_0+k_1+\cdots+k_{\floor{\frac{s-1}{2}}})+\delta^{(s)}k_{\floor{\frac{s}{2}}}= n,$ where $\delta^{(s)}=0$ if $s$ is odd, while $\delta^{(s)}=1$ if $s$ is even.
One can easily observe that $\mathscr{L}_{n,s}\subseteq\mathscr{K}_{n,s}.$ Further, for an $e$-tuple $\textbf{k}=(k_0,k_1,\ldots,k_{e-1})\in\mathscr{L}_{N_1,e},$ an  $(e-1)$-tuple $\boldsymbol{\ell}=(\ell_0,\ell_1,\ldots,\ell_{e-2})\in\mathscr{L}_{N_2,e-1}$ and a prime power $Q_1,$ let us define the number
\vspace{-2mm}\small{\begin{align}\label{Eeab}
\mathscr{F}_{e,Q_1}^{(N_1,N_2)}(\textbf{k},\boldsymbol{\ell})=\prod\limits_{i=1}^{\floor{\frac{e-1}{2}}}
\qbin{m_i(\textbf{k})}{k_i}{Q_1}
\prod\limits_{j=1}^{\floor{\frac{e-2}{2}}}
\qbin{m_j(\boldsymbol{\ell})}{\ell_j}{Q_1}.
\vspace{-2mm}\end{align}}
\normalsize

In the following theorem, we consider the case $e \geq 4$ and derive a necessary and sufficient condition for the existence of a Euclidean self-dual $\mathtt{R}_e\mathtt{R}_{e-1}$-linear code of block-length $(N_1,N_2).$ We also  obtain an enumeration formula for $\mathfrak{M}_{e}(N_1,N_2).$
\vspace{-1mm}\begin{thm}\label{Thm6.4}
Let $e \geq 4$ be a fixed integer. There exists a Euclidean self-dual $\mathtt{R}_e\mathtt{R}_{e-1}$-linear code of block-length $(N_1,N_2)$ if and only if either $(i)$ $e$ is odd, $N_1$ is even and $(-1)^\frac{N_1}{2}$ is a square in $\mathcal{T}_e,$ or $(ii)$ both $e$ and $N_2$ are even and $(-1)^\frac{N_2}{2}$ is a square in $\mathcal{T}_e.$ Furthermore, when either $(i)$ $e$ is odd, $N_1$ is even and $(-1)^\frac{N_1}{2}$ is a square in $\mathcal{T}_e,$ or $(ii)$ both $e$ and $N_2$ are even and $(-1)^\frac{N_2}{2}$ is a square in $\mathcal{T}_e,$ we have
\vspace{-1mm}\begin{align*}
\mathfrak{M}_{e}(N_1,N_2)=\sum\limits_{\textbf{k}\in\mathscr{L}_{N_1,e}}~\sum\limits_{\boldsymbol{\ell}\in\mathscr{L}_{N_2,e-1}}\sigma_q\bigl(N_1,m_{\floor{\frac{e-1}{2}}}(\textbf{k})\bigr)\sigma_q\bigl(N_2,m_{\floor{\frac{e-2}{2}}}(\boldsymbol{\ell})\bigr)\mathscr{F}_{e,q}^{(N_1,N_2)}(\textbf{k},\boldsymbol{\ell})~
q^{\Delta_{e}(\textbf{k},\boldsymbol{\ell})+s_e(\textbf{k},\boldsymbol{\ell})},
\vspace{-1mm}\end{align*}
where the numbers $\sigma_q\bigl(N_1,m_{\floor{\frac{e-1}{2}}}(\textbf{k})\bigr)$\textquotesingle s and $\sigma_q\bigl(N_2,m_{\floor{\frac{e-2}{2}}}(\boldsymbol{\ell})\bigr)$\textquotesingle s are given by \eqref{sigmaE}, and $\Delta_{e}(\textbf{k},\boldsymbol{\ell})$\textquotesingle s, $s_e(\textbf{k},\boldsymbol{\ell})$\textquotesingle s and $\mathscr{F}_{e,q}^{(N_1,N_2)}(\textbf{k},\boldsymbol{\ell})$\textquotesingle s are given by \eqref{defdeltaE}, \eqref{defskl} and \eqref{Eeab}, respectively.
\end{thm}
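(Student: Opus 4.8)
The plan is to deduce both parts of the statement from the self-orthogonal enumeration in Theorem \ref{Thm6.2} together with the duality constraints recorded in Lemma \ref{Lem1.1} and Remark \ref{Remark1.1}, following the template already used for the small cases in Theorems \ref{Lemma3.3} and \ref{lem3.5}. First I would settle the existence criterion by examining the middle torsion code of a Euclidean self-dual code $\mathscr{C}_e$. By the self-dual part of Lemma \ref{Lem1.1} we have $Tor_i(\mathscr{C}_e^{(X)})=Tor_{e-i+1}(\mathscr{C}_e^{(X)})^{\perp_E}$ and $Tor_j(\mathscr{C}_e^{(Y)})=Tor_{e-j}(\mathscr{C}_e^{(Y)})^{\perp_E}$. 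When $e$ is odd, the choice $i=\frac{e+1}{2}$ forces $Tor_{\frac{e+1}{2}}(\mathscr{C}_e^{(X)})$ to be a Euclidean self-dual code of length $N_1$ over $\mathcal{T}_e\simeq\mathbb{F}_q$, which by Remark \ref{Remark2.2} exists if and only if $N_1$ is even and $(-1)^{N_1/2}$ is a square in $\mathcal{T}_e$; this is condition $(i)$. When $e$ is even, the choice $j=\frac{e}{2}$ forces $Tor_{\frac{e}{2}}(\mathscr{C}_e^{(Y)})$ to be self-dual of length $N_2$, giving condition $(ii)$ by the same argument. The converse direction I expect to follow once the enumeration formula is in place, since its nonvanishing under $(i)$ or $(ii)$ exhibits an actual self-dual code (built, if needed, via the lifting construction of Proposition \ref{Thm6.1} starting from a self-dual middle torsion code).

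For the enumeration, I would specialize the summation of Theorem \ref{Thm6.3} to the self-dual types. By Remark \ref{Remark1.1}, a self-dual code of type $\{k_0,\ldots,k_{e-1};\ell_0,\ldots,\ell_{e-2}\}$ satisfies $k_i=k_{e-i}$ for $1\leq i\leq e-1$ and $\ell_j=\ell_{e-1-j}$ for $1\leq j\leq e-2$ together with the block-length identities, and these are exactly the defining conditions of $\textbf{k}\in\mathscr{L}_{N_1,e}$ and $\boldsymbol{\ell}\in\mathscr{L}_{N_2,e-1}$; so the index set of the sum shrinks from $\mathscr{K}\times\mathscr{K}$ to $\mathscr{L}\times\mathscr{L}$. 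Under these constraints the factor $\sigma_q\bigl(N_1,m_{\lfloor\frac{e-1}{2}\rfloor}(\textbf{k})\bigr)$ specializes to the count of self-dual (for $e$ odd) or self-orthogonal half-dimensional (for $e$ even) codes of length $N_1$, and symmetrically for the $N_2$-factor, so the correct parity/square hypotheses are automatically enforced.

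The hard part will be showing that the factor $\mathscr{B}_{e,q}^{(N_1,N_2)}(\textbf{k},\boldsymbol{\ell})$ of \eqref{Beab} collapses to $\mathscr{F}_{e,q}^{(N_1,N_2)}(\textbf{k},\boldsymbol{\ell})$ of \eqref{Eeab} on self-dual types, that is, that the two ``upper-half'' products of Gaussian binomials both evaluate to $1$. This is a purely combinatorial identity: using the symmetry $k_i=k_{e-i}$ and the self-dual block-length relation, one computes for $\lfloor\frac{e-1}{2}\rfloor\leq w\leq e-2$ that $N_1-m_w(\textbf{k})-m_{e-2-w}(\textbf{k})=k_{e-w-1}=k_{w+1}$, so each factor $\qbin{N_1-m_w(\textbf{k})-m_{e-2-w}(\textbf{k})}{k_{w+1}}{q}$ reduces to $\qbin{k_{w+1}}{k_{w+1}}{q}=1$, and the identical computation with $\boldsymbol{\ell}$ disposes of the $N_2$-product. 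Conceptually this expresses the fact that, for a self-dual code, duality pins down the upper torsion codes once the lower ones are fixed, leaving no extension choices to count. Substituting these reductions into the per-type formula of Theorem \ref{Thm6.2} and summing over $\mathscr{L}_{N_1,e}\times\mathscr{L}_{N_2,e-1}$ then yields the asserted expression for $\mathfrak{M}_{e}(N_1,N_2)$.
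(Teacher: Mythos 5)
Your proposal is correct and takes essentially the same route as the paper, whose proof of Theorem \ref{Thm6.4} consists precisely of specializing the self-orthogonal enumeration of Theorem \ref{Thm6.3} to self-dual types via Remarks \ref{Remark1.1} and \ref{Remark2.2}. Your explicit verification that $\mathscr{B}_{e,q}^{(N_1,N_2)}(\textbf{k},\boldsymbol{\ell})$ collapses to $\mathscr{F}_{e,q}^{(N_1,N_2)}(\textbf{k},\boldsymbol{\ell})$ on $\mathscr{L}_{N_1,e}\times\mathscr{L}_{N_2,e-1}$ (via the identity $N_1-m_w(\textbf{k})-m_{e-2-w}(\textbf{k})=k_{e-w-1}=k_{w+1}$ and its analogue for $\boldsymbol{\ell}$) is exactly the computation the paper leaves implicit, and it is correct.
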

\begin{proof}
It follows from Theorem \ref{Thm6.3}, and Remarks \ref{Remark1.1} and \ref{Remark2.2}.
\end{proof}
In the following corollary, we consider the case $e \geq 4$ and derive a necessary and sufficient condition for the existence of a self-dual additive code of length $N$ over $\mathcal{R}_e.$ We also count all  self-dual additive codes of length $N$ over $\mathcal{R}_e.$
\vspace{-1mm}\begin{cor}\label{Cor3.2}
Let $e \geq 4$ be a fixed integer. There exists a self-dual additive code of length $N$ over $\mathcal{R}_e$ if and only if either $(i)$ $e$ is odd, $Nrt$ is even and $(-1)^\frac{Nrt}{2}$ is a square in $\mathbb{Z}_p,$ or $(ii)$ both $e$ and $Nr(k-t)$ are even and $(-1)^\frac{Nr(k-t)}{2}$ is a square in $\mathbb{Z}_p.$ Furthermore, when either $(i)$ $e$ is odd, $Nrt$ is even and $(-1)^\frac{Nrt}{2}$ is a square in $\mathbb{Z}_p,$ or $(ii)$ both $e$ and $Nr(k-t)$ are even and $(-1)^\frac{Nr(k-t)}{2}$ is a square in $\mathbb{Z}_p,$ the number of self-dual additive codes of length $N$ over $\mathcal{R}_e$ is given by
\vspace{-1mm}\begin{align*}
&\sum\limits_{\textbf{k}\in\mathscr{L}_{Nrt,e}}\sum\limits_{\boldsymbol{\ell}\in\mathscr{L}_{Nr(k-t),e-1}}\sigma_p\bigl(Nrt,m_{\floor{\frac{e-1}{2}}}(\textbf{k})\bigr)\sigma_p\bigl(Nr(k-t),m_{\floor{\frac{e-2}{2}}}(\boldsymbol{\ell})\bigr)\mathscr{F}_{e,p}^{\bigl(Nrt,Nr(k-t)\bigr)}(\textbf{k},\boldsymbol{\ell})
p^{\Delta_{e}(\textbf{k},\boldsymbol{\ell})+s_e(\textbf{k},\boldsymbol{\ell})},
\vspace{-1mm}\end{align*}
where the numbers $\sigma_p\big(Nrt,m_{\floor{\frac{e-1}{2}}}(\textbf{k})\big)$\textquotesingle s and $\sigma_p\big(Nr(k-t),m_{\floor{\frac{e-2}{2}}}(\boldsymbol{\ell})\big)$\textquotesingle s are given by \eqref{sigmaE}, and $\Delta_{e}(\textbf{k},\boldsymbol{\ell})$\textquotesingle s, $s_e(\textbf{k},\boldsymbol{\ell})$\textquotesingle s and  $\mathscr{F}_{e,p}^{\bigl(Nrt,Nr(k-t)\bigr)}(\textbf{k},\boldsymbol{\ell})$\textquotesingle s are given by \eqref{defdeltaE}, \eqref{defskl} and \eqref{Eeab}, respectively.
\end{cor}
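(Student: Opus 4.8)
The plan is to reduce Corollary \ref{Cor3.2} entirely to Theorem \ref{Thm6.4} via the duality-preserving correspondence $\Psi$ established in Theorem \ref{thm0.1}, exactly as was done for the cases $e=2$ and $e=3$ in Corollaries \ref{Corollary3.2} and \ref{Corollary3.4}. First I would invoke Remark \ref{KEY}(ii): an additive code $\mathcal{C}$ of length $N$ over $\mathcal{R}_e$ is self-dual if and only if $\Psi(\mathcal{C})$ is a Euclidean self-dual $\mathbb{Z}_{p^e}\mathbb{Z}_{p^{e-1}}$-linear code of block-length $\bigl(Nrt,Nr(k-t)\bigr)$. Since $\Psi$ is a bijection on codes (being a $\mathbb{Z}_{p^e}$-module isomorphism) and sends $\chi$-duals to Euclidean duals, it restricts to a bijection between self-dual additive codes over $\mathcal{R}_e$ and Euclidean self-dual $\mathbb{Z}_{p^e}\mathbb{Z}_{p^{e-1}}$-linear codes of that block-length. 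Hence the count of self-dual additive codes of length $N$ over $\mathcal{R}_e$ equals $\mathfrak{M}_e\bigl(Nrt,Nr(k-t)\bigr)$, and a self-dual additive code exists if and only if one of these linear codes exists.

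The second step is to apply Theorem \ref{Thm6.4} with the concrete chain ring $\mathtt{R}_e$ taken to be $\mathbb{Z}_{p^e}$, so that its residue field is $\mathcal{T}_e \simeq \mathbb{Z}_p = \mathbb{F}_p$, giving $q=p$; this is legitimate because $\mathcal{R}_e$ is a chain ring of odd characteristic (as $p$ is odd), matching the standing hypothesis of Section \ref{SectionEnumeration}. I would then substitute $N_1 = Nrt$ and $N_2 = Nr(k-t)$ throughout the existence condition and the enumeration formula of Theorem \ref{Thm6.4}. The existence dichotomy translates verbatim: a Euclidean self-dual code exists iff either $(i)$ $e$ is odd, $N_1=Nrt$ is even and $(-1)^{N_1/2}=(-1)^{Nrt/2}$ is a square in $\mathbb{F}_p$, or $(ii)$ both $e$ and $N_2=Nr(k-t)$ are even and $(-1)^{Nr(k-t)/2}$ is a square in $\mathbb{F}_p$, which is precisely the stated condition with $\mathcal{T}_e$ replaced by $\mathbb{Z}_p$. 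The enumeration formula likewise becomes the claimed sum over $\textbf{k}\in\mathscr{L}_{Nrt,e}$ and $\boldsymbol{\ell}\in\mathscr{L}_{Nr(k-t),e-1}$, with the factors $\sigma_p$, $\mathscr{F}_{e,p}^{(Nrt,Nr(k-t))}$ and the exponents $\Delta_e+s_e$ read off directly from \eqref{sigmaE}, \eqref{Eeab}, \eqref{defdeltaE} and \eqref{defskl}.

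This proof is essentially a specialization-and-translation argument with no genuine obstacle: all the hard work—the recursive lifting construction, the full row-rank analysis of the torsion codes, and the count of the unknown block matrices—is already carried out in Proposition \ref{Thm6.1} and aggregated in Theorems \ref{Thm6.2}, \ref{Thm6.3} and \ref{Thm6.4}. Accordingly I would write a short proof reading: \emph{By Remark \ref{KEY}, the number of self-dual additive codes of length $N$ over $\mathcal{R}_e$ equals the number of distinct Euclidean self-dual $\mathbb{Z}_{p^e}\mathbb{Z}_{p^{e-1}}$-linear codes of block-length $\bigl(Nrt,Nr(k-t)\bigr)$, and such a code exists if and only if a Euclidean self-dual $\mathbb{Z}_{p^e}\mathbb{Z}_{p^{e-1}}$-linear code of that block-length exists. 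Now taking $N_1=Nrt$ and $N_2=Nr(k-t)$ in Theorem \ref{Thm6.4}, the desired existence condition and enumeration formula follow immediately.}

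The one point warranting a moment of care—and the closest thing to an obstacle—is the correct matching of the parity/square conditions under the substitution. I would verify that the correspondence genuinely identifies the $X$-block of the linear code (block-length $N_1=Nrt$, corresponding to the $GR(p^e,r)$-coordinates of $\mathcal{R}_e$) with the odd-$e$ condition and the $Y$-block (block-length $N_2=Nr(k-t)$, the $GR(p^{e-1},r)$-coordinates) with the even-$e$ condition, so that the two cases in Theorem \ref{Thm6.4} map precisely onto cases $(i)$ and $(ii)$ of the corollary; this is immediate from the block-length assignment but is worth stating so that the reader sees why $Nrt$ governs the odd case and $Nr(k-t)$ the even case. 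Everything else is a direct transcription, so no further computation is needed.
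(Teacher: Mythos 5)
Your proposal is correct and follows essentially the same route as the paper: the paper's proof is exactly the one-line reduction via Remark \ref{KEY} combined with the substitution $N_1=Nrt$ and $N_2=Nr(k-t)$ in Theorem \ref{Thm6.4}, with $\mathtt{R}_e=\mathbb{Z}_{p^e}$ so that $q=p$. Your additional remarks on the odd-characteristic hypothesis and on which block governs the odd-$e$ versus even-$e$ condition are accurate but not needed beyond what the paper states.
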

\begin{proof} It follows by Remark \ref{KEY} and taking $N_1=Nrt$ and $N_2=Nr(k-t)$ in Theorem  \ref{Thm6.4}.\end{proof}

\vspace{-5mm}\section{Enumeration of  ACD codes of length $N$ over $\mathcal{R}_e$}\label{ACD}
\vspace{-2mm}In this section, we will count all ACD codes of length $N$ over $\mathcal{R}_e.$ For this, we will apply Theorem 3.5 of Jose and Sharma \cite{Jose2} to obtain an explicit enumeration formula for all Euclidean $\mathbb{Z}_{p^e}\mathbb{Z}_{p^{e-1}}$-LCD codes of an arbitrary block-length.  To state this result, let $L_p(n,s)$ be the number of distinct Euclidean LCD codes of  length $n$ and dimension $s$ over the finite field $\mathbb{Z}_p$ for $0\leq s\leq n,$ where $n$ is a positive integer. It is clear that $L_p(n,0)=L_p(n,n)=1.$ Now for $1\leq s\leq n-1,$ we see, by Theorems 3.2 and 3.3 of Yadav and Sharma \cite{Yadav2}, that the number $L_p(n,s)$ is given by the following:
\vspace{-2mm}\begin{itemize}
    \item[(a)] When $p=2,$ we have 
\begin{equation}\hspace{-3.8cm}\label{lqne}L_2(n,s)=\left\{\begin{array}{llll}  2^{\frac{(n-s)(s+1)}{2}}\qbin{(n-1)/2}{(s-1)/2}{4}  & \text{if both } n \text{ and } s \text{ are odd}; \\
 2^{\frac{ns-s^2+n-1}{2}}\qbin{(n-2)/2}{(s-1)/2}{4} & \text{if } n \text{ is even and } s \text{ is odd;}\\
2^{\frac{s(n-s+1)}{2}}\; \qbin{(n-1)/2}{s/2}{4}   & \text{if } n \text{ is odd and } s \text{ is even};\\
2^{\frac{ns-s^2-2}{2}}\Big( ( 2^{s}+1)\qbin{(n-2)/2}{s/2}{4}\\+( 2^{n-s+1}- 2^{n-s}+1)\qbin{(n-2)/2}{(s-2)/2}{4}\Big) & \text{if both } n \text{ and } s \text{ are even}.
  \end{array} \right.
\vspace{-2mm}\end{equation}
 \item[(b)] When $p$ is an odd prime, we have 
   \vspace{-2mm} \begin{equation}\label{lqno}L_p(n,s)=\left\{\begin{array}{llll}   p^{\frac{(n-s)(s+1)}{2}}\qbin{(n-1)/2}{(s-1)/2}{p^{2}}   & \text{if both } s \text{ and } n \text{ are odd}; \\
 p^{\frac{ns-s^2-1}{2}} ( p^{\frac{n}{2}}-1)\qbin{(n-2)/2}{(s-1)/2}{p^{2}} & \hspace{-2mm}\begin{array}{l}\text{if } s \text{ is odd and } n \text{ is even } \text{with either } p \equiv 1 \Mod{4} \\ \text{ or } n \equiv 0 \Mod{4} \text{ and } p \equiv 3 \Mod{4};\end{array}
\\
 p^{\frac{ns-s^2-1}{2}}( p^{\frac{n}{2}}+1)\qbin{(n-2)/2}{(s-1)/2}{p^{2}}  & \hspace{-2mm}\begin{array}{l} \text{if } s \text{ is odd, } p \equiv 3 \Mod{4}\text{ and }
 n \equiv 2 \Mod{4};\end{array}
\\
 p^{\frac{s(n-s+1)}{2}} \qbin{(n-1)/2}{s/2}{p^{2}}&\text{if } s \text{ is even and } n \text{ is odd;}
\\
   p^{\frac{s(n-s)}{2}}\qbin{n/2}{s/2}{p^{2}} & \text{if both } s \text{ and } n \text{ are even}.
  \end{array} \right.
\vspace{-2mm}\end{equation}
\end{itemize}

Now, the following theorem provides an explicit enumeration formula for all Euclidean  $\mathbb{Z}_{p^e}\mathbb{Z}_{p^{e-1}}$-LCD codes of block-length $(N_1,N_2).$ 
\vspace{-1mm}\begin{thm}[\cite{Jose2}]\label{Thm5.100}
For $e \geq 2,$ the number of Euclidean $\mathbb{Z}_{p^e}\mathbb{Z}_{p^{e-1}}$-LCD codes of block-length $(N_1,N_2)$ is given by
\vspace{-5mm}\begin{align*}
\sum\limits_{i=0}^{N_1}\sum\limits_{j=0}^{N_2}L_p(N_1,i)L_p(N_2,j)p^{(N_1-i)(e-1)(i+j)+(N_2-j)\bigl((e-1)i+(e-2)j\bigr)},
\end{align*}
where the numbers $L_p(N_1,i)$\textquotesingle s and $L_p(N_2,j)$\textquotesingle s are given by \eqref{lqne} and \eqref{lqno}.
\end{thm}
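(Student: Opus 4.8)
The plan is to specialize the general framework of Section \ref{prelim3} to $\mathtt{R}_e=\mathbb{Z}_{p^e}$ with uniformizer $\gamma=p$ and residue field $\mathbb{F}_p$ (so $q=p$ and $\mu=e$), and to reduce the enumeration to a bijection between Euclidean $\mathbb{Z}_{p^e}\mathbb{Z}_{p^{e-1}}$-LCD codes and triples consisting of two $\mathbb{F}_p$-LCD codes together with lift data. First I would record the standard reformulation of the LCD property: since $|\mathscr{C}||\mathscr{C}^{\perp_E}|=p^{eN_1+(e-1)N_2}=|\mathbb{Z}_{p^e}^{N_1}\oplus\mathbb{Z}_{p^{e-1}}^{N_2}|$, the condition $\mathscr{C}\cap\mathscr{C}^{\perp_E}=\{0\}$ is equivalent to the internal direct sum $\mathbb{Z}_{p^e}^{N_1}\oplus\mathbb{Z}_{p^{e-1}}^{N_2}=\mathscr{C}\oplus\mathscr{C}^{\perp_E}$, and hence to non-degeneracy of the restriction of the Euclidean bilinear form \eqref{Eucmix} to $\mathscr{C}$, because $\mathscr{C}\cap\mathscr{C}^{\perp_E}$ is precisely the radical of that restriction. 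Working with a generator matrix in the standard form \eqref{eqn1.1} and the operation $\diamond$ of Lemma \ref{Lem6.1}, I would express non-degeneracy through the Gram matrix $\mathtt{G}_e\diamond\mathtt{G}_e^T$.

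The first substantive step is to show that an LCD code must be \emph{free}, i.e.\ of type $\{i,0,\ldots,0;j,0,\ldots,0\}$ for some $0\le i\le N_1$ and $0\le j\le N_2$, so that $\mathscr{C}\cong\mathbb{Z}_{p^e}^{\,i}\oplus\mathbb{Z}_{p^{e-1}}^{\,j}$. The idea is that any nonzero torsion generator in the standard form (a row scaled by $\gamma^s$ with $s\ge 1$ in the $X$-block, or the analogous phenomenon in the $Y$-block) yields an element annihilated by a low power of $p$ whose pairing against every generator already lies in a high power of $p$; a direct computation with $\diamond$ then exhibits such an element inside $\mathscr{C}\cap\mathscr{C}^{\perp_E}$, contradicting LCD. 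I expect this freeness step to be the main obstacle, since it requires careful bookkeeping of the $\gamma$-valuations across all blocks of \eqref{eqn1.1}; the cleanest route is probably to argue that a direct summand of $\mathbb{Z}_{p^e}^{N_1}\oplus\mathbb{Z}_{p^{e-1}}^{N_2}$ on which the form is non-degenerate admits no cyclic summand of order strictly between the two extremes, by pairing a generator of such a summand against the whole code.

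Next, for a free-type code I would fix information positions and write the generator matrix as
\[
\mathtt{G}_e=\begin{bmatrix} I_i & A & 0 & B\\ 0 & \gamma C & I_j & D\end{bmatrix},
\]
where $A$ is over $\mathbb{Z}_{p^e}$ while $B,C,D$ may be taken over $\mathbb{Z}_{p^{e-1}}$. Computing $\mathtt{G}_e\diamond\mathtt{G}_e^T$ gives a block matrix whose $(1,1)$-block is $I_i+AA^T+pBB^T$ and whose remaining blocks are divisible by $p$; analysing $\mathscr{C}\cap\mathscr{C}^{\perp_E}=\{0\}$ level by level in the $p$-adic filtration, the top level forces the residue matrices $I_i+\overline{A}\,\overline{A}^T$ and $I_j+\overline{D}\,\overline{D}^T$ to be invertible over $\mathbb{F}_p$, and conversely their invertibility makes the cascade collapse so that only the zero vector survives. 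This will show that $\mathscr{C}$ is LCD if and only if its torsion codes $Tor_1(\mathscr{C}^{(X)})=\overline{C^{(X)}}$ (generated by $[I_i\mid\overline{A}]$) and $Tor_1(\mathscr{C}^{(Y)})=\overline{C^{(Y)}}$ (generated by $[I_j\mid\overline{D}]$) are both Euclidean $\mathbb{F}_p$-LCD codes, \emph{independently of the remaining lift data}.

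Finally I would assemble the count through the bijection $\mathscr{C}\mapsto\bigl(\overline{C^{(X)}},\overline{C^{(Y)}},(A,B,C,D)\bigr)$. For fixed residue dimensions $(i,j)$ there are $L_p(N_1,i)$ choices for the $\mathbb{F}_p$-LCD code $\overline{C^{(X)}}$ and $L_p(N_2,j)$ choices for $\overline{C^{(Y)}}$, given by \eqref{lqne} and \eqref{lqno}. With $\overline{A}$ and $\overline{D}$ thereby fixed, the lifts are unconstrained: $A$ contributes $p^{(e-1)i(N_1-i)}$, the free block $B$ contributes $p^{(e-1)i(N_2-j)}$, the block $\gamma C$ contributes $p^{(e-1)j(N_1-i)}$, and the lifts of $D$ over $\mathbb{Z}_{p^{e-1}}$ contribute $p^{(e-2)j(N_2-j)}$. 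Multiplying and regrouping yields exactly $p^{(N_1-i)(e-1)(i+j)+(N_2-j)\left((e-1)i+(e-2)j\right)}$, and summing $L_p(N_1,i)L_p(N_2,j)$ times this power over $0\le i\le N_1$ and $0\le j\le N_2$ gives the stated formula. The remaining care is to verify that distinct residue-plus-lift data yield distinct codes and that every LCD code arises exactly once, which follows from uniqueness of the standard form once the pivot positions are pinned down by the (permutation-invariant) counts $L_p$.
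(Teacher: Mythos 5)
The first thing to note is that the paper offers no internal proof to compare against: Theorem \ref{Thm5.100} is quoted from Theorem 3.5 of Jose and Sharma \cite{Jose2}, and the paper's ``proof'' is just that citation. Your reconstruction is therefore a genuinely independent argument, and its architecture is sound and does yield the stated formula: (i) LCD forces type $\{i,0,\ldots,0;j,0,\ldots,0\}$; (ii) for such a code, with generator matrix $\bigl[\begin{smallmatrix} I_i & A & 0 & B\\ 0 & pC & I_j & D\end{smallmatrix}\bigr],$ the LCD property holds if and only if $I_i+\overline{A}\,\overline{A}^T$ and $I_j+\overline{D}\,\overline{D}^T$ are invertible over $\mathbb{F}_p$ (your cascade argument gives sufficiency; for necessity of the $Y$-condition the candidate vector $p^{e-2}y\,[\,0~pC~|~I_j~D\,]$ must be corrected by adding $p^{e-1}z\,[\,I_i~A~|~0~B\,]$ with $\bar z=-\bar y(\bar C\bar A^T+\bar D\bar B^T)(I_i+\bar A\bar A^T)^{-1},$ a detail your ``top level forces'' phrasing glosses over but which is routine); and (iii) the fibre over a fixed pair of residue LCD codes has size $p^{(e-1)i(N_1-i)}\cdot p^{(e-1)i(N_2-j)}\cdot p^{(e-1)j(N_1-i)}\cdot p^{(e-2)j(N_2-j)},$ whose exponent regroups exactly to $(N_1-i)(e-1)(i+j)+(N_2-j)\bigl((e-1)i+(e-2)j\bigr).$

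Two steps need repair before this is a complete proof. First, your proposed ``cleanest route'' to freeness---that a direct summand carrying a non-degenerate form admits no cyclic summand of order strictly between the two extremes---does not suffice: the problematic summands of order $p^{e-1}$ are \emph{not} strictly between the extremes, and such a summand can a priori arise from a $\gamma$-scaled row in the $X$-block (type $k_1>0$) rather than a free row of the $Y$-block; the abstract module structure cannot tell these apart. The clean fix uses the dual-type formula recalled in Section \ref{prelim3} (from Bajalan \etal~\cite{Bajalan2}): $\mathscr{C}^{\perp_E}$ has type $\{N_1-\sum_i k_i,k_{e-1},\ldots,k_1;N_2-\sum_j\ell_j,\ell_{e-2},\ldots,\ell_1\},$ so the internal direct sum $\mathscr{C}\oplus\mathscr{C}^{\perp_E}=\mathbb{Z}_{p^e}^{N_1}\oplus\mathbb{Z}_{p^{e-1}}^{N_2}$ together with uniqueness of invariant factors forces $k_i=0$ for $i\geq 1$ (comparing $\mathbb{Z}_{p^e}$-multiplicities) and then $\ell_j=0$ for $j\geq 1$ (comparing $\mathbb{Z}_{p^{e-1}}$-multiplicities). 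Second, since the standard form \eqref{eqn1.1} exists only up to a coordinate permutation, the ``residue-plus-lift'' data must be attached to $\mathscr{C}$ intrinsically before you can count fibres: for instance $Tor_1(\mathscr{C}^{(X)})=\{v\in\mathbb{F}_p^{N_1}:(p^{e-1}v~|~0)\in\mathscr{C}\}$ and $Tor_1(\mathscr{C}^{(Y)})$ is the mod-$p$ image of the $Y$-projection of the submodule of elements of $\mathscr{C}$ killed by $p^{e-1}.$ With these definitions, fixing one information set for each residue code pins the pivot positions, and your injectivity/surjectivity argument for the parametrization by $(A,B,C,D)$ then goes through, giving the fibre count and hence the theorem.
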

\begin{proof}It follows from Theorem 3.5 of Jose and Sharma \cite{Jose2}.
\end{proof}
As a consequence of the above theorem, we obtain an  explicit enumeration formula for all ACD codes of length $N$ over $\mathcal{R}_e$ in the following corollary.
\vspace{-1mm}\begin{cor}\label{Cor8.1}
For $e \geq 2,$ the number of ACD codes of length $N$ over $\mathcal{R}_e$ is given by
\vspace{-1mm}\begin{align*}
\sum\limits_{i=0}^{Nrt}\sum\limits_{j=0}^{Nr(k-t)}L_p(Nrt,i)L_p\bigl(Nr(k-t),j\bigr)p^{(Nrt-i)(e-1)(i+j)+\bigl(Nr(k-t)-j\bigr)\bigl((e-1)i+(e-2)j\bigr)},\
\end{align*}
where the numbers $L_p(Nrt,i)$\textquotesingle s and $L_p\bigl(Nr(k-t),j\bigr)$\textquotesingle s are given by \eqref{lqne} and \eqref{lqno}.
\end{cor}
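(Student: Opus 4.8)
The plan is to reduce the enumeration of ACD codes over $\mathcal{R}_e$ to the already-established count of Euclidean $\mathbb{Z}_{p^e}\mathbb{Z}_{p^{e-1}}$-LCD codes by means of the duality-preserving correspondence $\Psi.$ First I would invoke Remark \ref{KEY}, which asserts that an additive code $\mathcal{C}$ of length $N$ over $\mathcal{R}_e$ is ACD (that is, $\mathcal{C}\cap\mathcal{C}^{\perp_\chi}=\{0\}$) precisely when its image $\Psi(\mathcal{C})$ is a Euclidean $\mathbb{Z}_{p^e}\mathbb{Z}_{p^{e-1}}$-LCD code of block-length $\bigl(Nrt,Nr(k-t)\bigr).$ Since $\Psi$ is a $\mathbb{Z}_{p^e}$-module isomorphism from $\mathcal{R}_e^N$ onto $\mathbb{Z}_{p^e}^{Nrt}\oplus\mathbb{Z}_{p^{e-1}}^{Nr(k-t)}$ by Theorem \ref{thm0.1}(a), this correspondence is in fact a bijection between the set of ACD codes of length $N$ over $\mathcal{R}_e$ and the set of Euclidean $\mathbb{Z}_{p^e}\mathbb{Z}_{p^{e-1}}$-LCD codes of that block-length. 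Consequently, the two counts coincide, and it suffices to count the latter.

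Next I would apply Theorem \ref{Thm5.100} with $N_1=Nrt$ and $N_2=Nr(k-t).$ Substituting these values into the enumeration formula of that theorem and carrying the summation indices $i$ and $j$ over the ranges $0\leq i\leq Nrt$ and $0\leq j\leq Nr(k-t)$ directly yields the displayed expression, in which the factors $L_p(Nrt,i)$ and $L_p\bigl(Nr(k-t),j\bigr)$ are computed via \eqref{lqne} and \eqref{lqno} according to the parities of $p,$ $Nrt,$ $Nr(k-t),$ $i,$ and $j,$ and the exponent of $p$ becomes $(Nrt-i)(e-1)(i+j)+\bigl(Nr(k-t)-j\bigr)\bigl((e-1)i+(e-2)j\bigr).$

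There is no substantial obstacle here, as the corollary is a direct specialization of Theorem \ref{Thm5.100} under the correspondence of Remark \ref{KEY}. The one point that merits care is the faithful identification of the block-length parameters: the Eisenstein structure of $\mathcal{R}_e,$ together with the unique decomposition of each $a\in\mathcal{R}_e$ into $t$ coefficients lying in $GR(p^e,r)$ and $k-t$ coefficients lying in $GR(p^{e-1},r)$ (Section \ref{prelim1}), is exactly what forces the block-length $\bigl(Nrt,Nr(k-t)\bigr)$ rather than some other pair, and this assignment must be tracked through $\Psi$ correctly. This is identical to the reasoning already used for Corollaries \ref{Corollary3.1}--\ref{Cor3.2}, so the same specialization argument applies here verbatim and the result follows.
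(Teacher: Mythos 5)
Your proposal is correct and follows the paper's own proof exactly: the paper likewise combines Remark \ref{KEY} (the duality-preserving bijection under $\Psi$ between ACD codes over $\mathcal{R}_e$ and Euclidean $\mathbb{Z}_{p^e}\mathbb{Z}_{p^{e-1}}$-LCD codes) with the specialization $N_1=Nrt,$ $N_2=Nr(k-t)$ in Theorem \ref{Thm5.100}. Your additional remarks on tracking the block-length through the Eisenstein decomposition are a fair elaboration of the same argument, not a different route.
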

\begin{proof}It follows by Remark \ref{KEY} and taking $N_1=Nrt$ and $N_2=Nr(k-t)$ in Theorem  \ref{Thm5.100}.\end{proof}
\vspace{-4mm}\begin{example}
 Using Magma,  we see that
there are precisely $113$ and $883$ distinct  ACD codes of length $2$ over $\mathbb{Z}_4[y]/\langle y^2-2,2y \rangle$ and $\mathbb{Z}_9[y]/\langle y^2-3,3y \rangle,$ respectively.  This agrees with Corollary \ref{Cor8.1}.
\end{example}
\vspace{-4mm}\section{Classification of self-orthogonal, self-dual and ACD codes over $\mathcal{R}_e$}\label{Classification}
\vspace{-2mm}
Two additive codes of length $N$ over $\mathcal{R}_e$ are monomially equivalent if  one code can be obtained from the other by a combination of operations of the following two types: (A) Permutation of the $N$ coordinate positions of the code, and (B) Multiplication of the code symbols appearing in a given coordinate position by a unit in $\mathbb{Z}_{p^e}.$ A generator matrix of an additive code  of length $N$ over $\mathcal{R}_e$ is a matrix whose rows form its minimal generating set over $\mathbb{Z}_{p^e}$.
Thus two additive codes $\mathcal{C}_1$ and $\mathcal{C}_2$ of length $N$ over $\mathcal{R}_e$  with their respective generator matrices $\mathcal{G}_1$ and $\mathcal{G}_2$ are monomially equivalent if and only if $\mathcal{G}_1 =\mathcal{G}_2 \mathcal{U},$  where $\mathcal{U}$ is an $N\times N$ monomial matrix over $\mathbb{Z}_{p^e},$ (note that a monomial matrix over $\mathbb{Z}_{p^e}$ is a square matrix whose non-zero entries are units in $\mathbb{Z}_{p^e}$ and each of its rows and columns has exactly one non-zero entry).  We further see, by Theorem \ref{thm0.1}, that the map $\Psi$ induces a duality-preserving 1-1 correspondence between additive codes of length $N$ over $\mathcal{R}_e$ and $\mathbb{Z}_{p^e}\mathbb{Z}_{p^{e-1}}$-linear codes of block-length $\bigl(Nrt,Nr(k-t)\bigr).$ Accordingly, we will now translate the notion of monomial equivalence between additive codes of length $N$ over $\mathcal{R}_e$ to a suitable notion of equivalence between $\mathbb{Z}_{p^e}\mathbb{Z}_{p^{e-1}}$-linear codes of block-length $\bigl(Nrt,Nr(k-t)\bigr).$

To do this, let $\mathcal{C}_1$ and $\mathcal{C}_2$ be two additive codes of length $N$ over $\mathcal{R}_e$ with generator matrices $\mathcal{G}_1$ and $\mathcal{G}_2,$ respectively. Then the $\mathbb{Z}_{p^e}\mathbb{Z}_{p^{e-1}}$-linear codes $\Psi(\mathcal{C}_1)$ and $\Psi(\mathcal{C}_2)$ of block-length $\bigl(Nrt,Nr(k-t)\bigr)$ have generator matrices $\Psi(\mathcal{G}_1)=[A_1~|~B_1]$ and  $\Psi(\mathcal{G}_2)=[A_2~|~B_2],$ respectively, where $A_1$ and $A_2$ are matrices over $\mathbb{Z}_{p^e}$ with $Nrt$ columns, and $B_1$ and $B_2$ are matrices over $\mathbb{Z}_{p^{e-1}}$ with $Nr(k-t)$ columns. Now, the codes $\mathcal{C}_1$ and $\mathcal{C}_2$ are monomially equivalent if and only if there exists an $N \times N$ monomial matrix $\mathcal{U}$ over $\mathbb{Z}_{p^e}$ satisfying \vspace{-2mm}\begin{equation}\label{mono}\mathcal{G}_1=\mathcal{G}_2 \mathcal{U}.\vspace{-2mm}\end{equation}
From this and using the fact that the map $\Psi$ is a $\mathbb{Z}_{p^e}$-module isomorphism, we see that the matrix equation \eqref{mono} is equivalent to  \vspace{-2mm}\begin{equation}\label{mono2}\Psi(\mathcal{G}_1)=\Psi(\mathcal{G}_2) \Psi(\mathcal{U}).\vspace{-2mm}\end{equation} Further, if $u_{i,j}$ denotes the $(i,j)$-th entry of $\mathcal{U}$ and $u'_{i,j}\equiv u_{i,j} ~(\text{mod }p^{e-1})$ for $1 \leq i, j \leq N,$  then the matrix $\Psi(\mathcal{U}) $ is given by 
\vspace{-1mm}$$\Psi(\mathcal{U})=\left[ \begin{array}{cccc|cccc}u_{1,1}I_{rt} & u_{1,2}I_{rt} & \cdots & u_{1,N}I_{rt}& \textbf{0} & \textbf{0} & \cdots & \textbf{0}\\
u_{2,1}I_{rt} & u_{2,2}I_{rt} & \cdots & u_{2,N}I_{rt}& \textbf{0} & \textbf{0} & \cdots & \textbf{0}\\
\vdots & \vdots & \ddots & \vdots& \vdots & \vdots & \ddots & \vdots\\u_{N,1}I_{rt} & u_{N,2}I_{rt} & \cdots & u_{N,N}I_{rt}& \textbf{0} & \textbf{0} & \cdots & \textbf{0}\\
\textbf{0} & \textbf{0} & \cdots & \textbf{0}& u'_{1,1}I_{r(k-t)} & u'_{1,2}I_{r(k-t)} & \cdots & u'_{1,N}I_{r(k-t)}\\
\textbf{0} & \textbf{0} & \cdots & \textbf{0} & u'_{2,1}I_{r(k-t)} & u'_{2,2}I_{r(k-t)} & \cdots & u'_{2,N}I_{r(k-t)}\\
\vdots & \vdots & \ddots & \vdots& \vdots & \vdots & \ddots & \vdots\\\textbf{0} & \textbf{0} & \cdots & \textbf{0} & u'_{N,1}I_{r(k-t)} & u'_{N,2}I_{r(k-t)} & \cdots & u'_{N,N}I_{r(k-t)}
\end{array}\right]=\left[\begin{array}{c|c}\mathcal{U}_1 & \textbf{0}\\ \textbf{0} & \mathcal{U}_2\end{array}\right],\vspace{-1mm}$$ and we  say that the matrix  $\left[\begin{array}{c|c}\mathcal{U}_1 & \textbf{0}\\ \textbf{0} & \mathcal{U}_2\end{array}\right]$ is  a 
monomial-type matrix, where $\textbf{0}$ denotes the zero matrix, and $I_{rt}$ and $I_{r(k-t)}$ denote $rt\times rt $ and $r(k-t) \times r(k-t)$ identity matrices,  respectively.    From this and by \eqref{mono2}, we see that the codes $\mathcal{C}_1$ and $\mathcal{C}_2$  are monomially equivalent if and only if there exists a monomial-type matrix $\left[\begin{array}{c|c}\mathcal{U}_1 & \textbf{0}\\ \textbf{0} & \mathcal{U}_2\end{array}\right]$ satisfying $[A_1~|~B_1]=[A_2~|~B_2]\left[\begin{array}{c|c}\mathcal{U}_1 & \textbf{0}\\ \textbf{0} & \mathcal{U}_2\end{array}\right]=[A_2\mathcal{U}_1~|~B_2 \mathcal{U}_2].$  Accordingly, we say that two $\mathbb{Z}_{p^e}\mathbb{Z}_{p^{e-1}}$-linear codes $\mathtt{C}_1$ and $\mathtt{C}_2$ of block-length $\bigl(Nrt,Nr(k-t)\bigr)$ with generator matrices $[A_1~|~B_1]$ and $[A_2~|~B_2]$ are $\ast$-equivalent if there exists a monomial-type matrix $\left[\begin{array}{c|c}\mathcal{U}_1 & \textbf{0}\\ \textbf{0} & \mathcal{U}_2\end{array}\right] $ satisfying $[A_1~|~B_1]=[A_2~|~B_2]\left[\begin{array}{c|c}\mathcal{U}_1 & \textbf{0}\\ \textbf{0} & \mathcal{U}_2\end{array}\right].$ From this, we deduce the following:
\vspace{-1mm}\begin{lemma}\label{lemeq} Two additive codes $\mathcal{C}_1$ and $\mathcal{C}_2$ of length $N$ over $\mathcal{R}_e$ are monomially equivalent if and only if $\Psi(\mathcal{C}_1)$ and $\Psi(\mathcal{C}_2)$ are $\ast$-equivalent.
\vspace{-1mm}\end{lemma}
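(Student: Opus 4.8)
The plan is to read Lemma~\ref{lemeq} as the formal record of the dictionary assembled in the paragraphs immediately preceding it, so that the proof reduces to transporting the equivalence of the matrix identities \eqref{mono} and \eqref{mono2} through the correspondence $\mathcal{U}\mapsto\Psi(\mathcal{U})$ between $N\times N$ monomial matrices over $\mathbb{Z}_{p^e}$ and monomial-type matrices. First I would record that this correspondence is a genuine bijection: from the explicit block form of $\Psi(\mathcal{U})$ displayed above, the entry $u_{i,j}$ of $\mathcal{U}$ is read off as the common scalar on the diagonal block $u_{i,j}I_{rt}$ appearing in $\mathcal{U}_1,$ while the block $u'_{i,j}I_{r(k-t)}$ inside $\mathcal{U}_2$ is forced to be its reduction $u'_{i,j}\equiv u_{i,j}\Mod{p^{e-1}}.$ Hence distinct $\mathcal{U}$ yield distinct $\Psi(\mathcal{U}),$ and by the very definition given in the text the monomial-type matrices are precisely the images $\Psi(\mathcal{U})$ as $\mathcal{U}$ ranges over monomial matrices.

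For the forward implication I would assume that $\mathcal{C}_1$ and $\mathcal{C}_2$ are monomially equivalent, so that $\mathcal{G}_1=\mathcal{G}_2\mathcal{U}$ for some $N\times N$ monomial matrix $\mathcal{U}$ over $\mathbb{Z}_{p^e},$ as in \eqref{mono}. Applying $\Psi$ row-wise and invoking the equivariance $\Psi(\mathcal{G}_2\mathcal{U})=\Psi(\mathcal{G}_2)\Psi(\mathcal{U})$ produces \eqref{mono2}, that is, $[A_1\mid B_1]=[A_2\mid B_2]\,\Psi(\mathcal{U}),$ where $\Psi(\mathcal{U})$ is the monomial-type matrix with blocks $\mathcal{U}_1$ and $\mathcal{U}_2.$ This is exactly the statement that $\Psi(\mathcal{C}_1)$ and $\Psi(\mathcal{C}_2)$ are $\ast$-equivalent. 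Conversely, if $\Psi(\mathcal{C}_1)$ and $\Psi(\mathcal{C}_2)$ are $\ast$-equivalent, then \eqref{mono2} holds for some monomial-type matrix; by the bijection of the previous paragraph that matrix equals $\Psi(\mathcal{U})$ for a unique monomial matrix $\mathcal{U}$ over $\mathbb{Z}_{p^e}.$ Since $\Psi$ is a bijective $\mathbb{Z}_{p^e}$-module isomorphism and is equivariant, \eqref{mono2} pulls back to $\mathcal{G}_1=\mathcal{G}_2\mathcal{U},$ i.e. \eqref{mono}, whence $\mathcal{C}_1$ and $\mathcal{C}_2$ are monomially equivalent.

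The only substantive point, and the step I expect to be the main obstacle, is the equivariance $\Psi(\mathbf{v}\mathcal{U})=\Psi(\mathbf{v})\Psi(\mathcal{U})$ for a single vector $\mathbf{v}\in\mathcal{R}_e^N$; once this holds, both implications are purely formal from $\Psi$ being a bijective $\mathbb{Z}_{p^e}$-module isomorphism. I would verify it coordinate-wise. Multiplying a coordinate $a=a_0+a_1y+\cdots+a_{k-1}y^{k-1}\in\mathcal{R}_e$ by a scalar $u\in\mathbb{Z}_{p^e}$ scales each component $a_i\in GR(p^e,r)$ for $0\leq i\leq t-1$ by $u$ computed in $GR(p^e,r),$ whereas for $t\leq j\leq k-1$ the defining relation $p^{e-1}y^t=0$ (and hence $p^{e-1}y^j=0$) forces $ua_j$ to be taken modulo $p^{e-1},$ so the component $a_j\in GR(p^{e-1},r)$ is scaled by $u'\equiv u\Mod{p^{e-1}}.$ Passing through the $\mathbb{Z}_{p^e}$-linear maps $\phi,$ this is precisely the split action by $\mathcal{U}_1$ on the $\mathbb{Z}_{p^e}^{rt}$-part and by $\mathcal{U}_2$ on the $\mathbb{Z}_{p^{e-1}}^{r(k-t)}$-part encoded in the displayed form of $\Psi(\mathcal{U}).$ Thus the entire content of the lemma is this reduction-modulo-$p^{e-1}$ bookkeeping, and \textbf{I would present it as an immediate consequence of the discussion preceding the statement}, with the bijection $\mathcal{U}\leftrightarrow\Psi(\mathcal{U})$ and the equivalence of \eqref{mono} and \eqref{mono2} supplying the two directions.
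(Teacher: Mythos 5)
Your proof is correct and takes essentially the same route as the paper: there, the lemma is deduced directly from the preceding discussion, namely that monomial equivalence amounts to \eqref{mono}, that \eqref{mono} is equivalent to \eqref{mono2} because $\Psi$ is a $\mathbb{Z}_{p^e}$-module isomorphism, and that $\Psi(\mathcal{U})$ is precisely a monomial-type matrix, with $\ast$-equivalence defined to match. Your explicit verification of the bijection $\mathcal{U}\leftrightarrow\Psi(\mathcal{U})$ and of the equivariance $\Psi(\textbf{v}\,\mathcal{U})=\Psi(\textbf{v})\Psi(\mathcal{U})$ (via the reduction modulo $p^{e-1}$ forced by $p^{e-1}y^t=0$) simply makes explicit what the paper leaves implicit.
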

From the above lemma, we see that classifying all self-orthogonal and self-dual additive codes and ACD codes of length $N$ over $\mathcal{R}_e$ up to monomial equivalence is equivalent to classifying all Euclidean self-orthogonal and self-dual $\mathbb{Z}_{p^e}\mathbb{Z}_{p^{e-1}}$-linear codes and Euclidean  $\mathbb{Z}_{p^e}\mathbb{Z}_{p^{e-1}}$-LCD codes of block-length $\bigl(Nrt,Nr(k-t)\bigr)$ up to $\ast$-equivalence, respectively. Towards this, the enumeration formulae  obtained in Sections \ref{SectionEnumeration} and \ref{ACD} are useful in classifying the aforementioned classes of codes up  equivalence. We will now illustrate this in certain specific cases   by carrying out computations in Magma.

Throughout this section, for a prime $p,$ let $\mathscr{R}_p$ be the finite commutative chain ring $\mathbb{Z}_{p^2}[y]/\langle y^2-p,py\rangle.$ By Theorem \ref{thm0.1} and Lemma \ref{lemeq}, we see that  classifying all self-orthogonal and self-dual additive codes of length $N$ over $\mathscr{R}_p$ up to monomial equivalence is equivalent to classifying all Euclidean self-orthogonal and  self-dual $\mathbb{Z}_{p^2}\mathbb{Z}_{p}$-linear codes of block-length $(N,N)$  up to $\ast$-equivalence. For this, we first note that each element $\alpha \in \mathscr{R}_p$ can be uniquely expressed as $\alpha=\alpha_0+\alpha_1y,$ where $\alpha_0 \in \mathbb{Z}_{p^2}$ and $\alpha_1 \in \mathbb{Z}_p.$ Further, we see, by Greferath and Schmidt \cite[p. 1]{Greferath1999}, that the homogeneous weight on $\mathscr{R}_p$  is defined as \vspace{-2mm}$$w_{hom}(\alpha)=w_{hom}(\alpha_0+\alpha_1y)=\left\{\begin{array}{cl} 0 & \text{if } \alpha=0;\\
p^2 & \text{if } \alpha_0\in\{p,2p,\ldots, (p-1)p\}~\text{ and }~\alpha_1=0;\\
(p-1)p & \text{otherwise,}\end{array}\right.\vspace{-2mm}$$
where $\alpha=\alpha_0+\alpha_1y~$ with $~\alpha_0 \in \mathbb{Z}_{p^2}$ and $\alpha_1 \in \mathbb{Z}_p.$
Further, the homogeneous weight of a word  $\textbf{a}=(a_0,a_1,\ldots,a_{N-1})\in \mathscr{R}_p^N$, denoted by $\textbf{w}_{hom}(\textbf{a}),$ is defined as $\textbf{w}_{hom}(\textbf{a})=\sum\limits_{i=0}^{N-1}w_{hom}(a_i).$ The homogeneous distance between the words $\textbf{a}, \textbf{b} \in \mathscr{R}_p^N$ is defined as $d_{hom}(\textbf{a}, \textbf{b})=\textbf{w}_{hom}(\textbf{a}- \textbf{b}).$ The homogeneous distance of an additive code $\mathcal{C}$ of length $N$ over $\mathscr{R}_p$ is defined as $d_{hom}(\mathcal{C})=\min\{ d_{hom}(\textbf{a}, \textbf{b}): \textbf{a}, \textbf{b} \in \mathcal{C} \text{ and }\textbf{a}\neq \textbf{b}\}= \min\{\textbf{w}_{hom}(\textbf{c}): \textbf{c}(\neq \textbf{0}) \in \mathcal{C}\}.$ Accordingly, we define the homogenous weight on the mixed alphabet $\mathbb{Z}_{p^2}\oplus \mathbb{Z}_p$ as $w_{Hom}(\alpha_0| \alpha_1):=w_{hom}(\alpha_0+\alpha_1 y)$ for all $(\alpha_0| \alpha_1) \in \mathbb{Z}_{p^2}\oplus \mathbb{Z}_p.$ This can be further extended to  the mixed alphabet $\mathbb{Z}_{p^2}^N\oplus \mathbb{Z}_p^N$ as 
\vspace{-5mm}$$\textbf{w}_{Hom}(\alpha_0,\alpha_1,\ldots,\alpha_{N-1}|\beta_0,\beta_1,\ldots,\beta_{N-1})=\sum\limits_{i=0}^{N-1}w_{Hom}(\alpha_i|\beta_i)=\sum\limits_{i=0}^{N-1}w_{hom}(\alpha_i+\beta_iy)\vspace{-2mm}$$ for all $(\alpha_0,\alpha_1,\ldots,\alpha_{N-1}|\beta_0,\beta_1,\ldots,\beta_{N-1}) \in \mathbb{Z}_{p^2}^N\oplus \mathbb{Z}_p^N.$ The homogeneous distance between the words $(\textbf{a}|\textbf{b}), (\textbf{c}|\textbf{d}) \in \mathbb{Z}_{p^2}^N\oplus \mathbb{Z}_p^N$ is defined as $d_{Hom}\big((\textbf{a}|\textbf{b}), (\textbf{c}|\textbf{d})\big)= \textbf{w}_{Hom}(\textbf{a}-\textbf{c}|\textbf{b}-\textbf{d}).$ Now the homogenous distance of a $\mathbb{Z}_{p^2}\mathbb{Z}_p$-linear code $\mathscr{C}$ of block-length $(N,N)$ is defined as 
$d_{Hom}(\mathscr{C})=\min\{d_{Hom}\big((\textbf{a}|\textbf{b}), (\textbf{c}|\textbf{d})\big): (\textbf{a}|\textbf{b}), (\textbf{c}|\textbf{d}) \in \mathscr{C} \text{ and } (\textbf{a}|\textbf{b})\neq  (\textbf{c}|\textbf{d})\}=\min\{ \textbf{w}_{Hom}(\textbf{a}|\textbf{b}): (\textbf{a}|\textbf{b}) (\neq \textbf{0}) \in \mathscr{C}\}.$ 
From the above discussion and by Theorem \ref{thm0.1}, we see that for an additive code $\mathcal{C}$ of length $N$ over $\mathscr{R}_p,$ we have $d_{hom}(\mathcal{C})=d_{Hom}(\Psi(\mathcal{C})).$ In view of this, we see that the $\mathbb{Z}_{p^e}$-modules $(\mathscr{R}_p^{N},d_{hom})$ and $(\mathbb{Z}_{p^2}^N \oplus \mathbb{Z}_{p}^N,d_{Hom})$ are isometric.

First of all, we will consider the case $p=3$ and classify all self-orthogonal and  self-dual additive codes of lengths $2$ and $3$ over $\mathscr{R}_3$ up to monomial equivalence by classifying all Euclidean self-orthogonal and self-dual  $\mathbb{Z}_{9}\mathbb{Z}_{3}$-linear codes of block-lengths $(2,2)$ and $(3,3),$ respectively, up to $\ast$-equivalence, by carrying out computations in Magma and applying Theorems \ref{Thm4.3} and \ref{Lemma3.3}.  Here, generator matrices of all $\mathbb{Z}_{9} \mathbb{Z}_{3}$-linear codes for which the corresponding additive code over $\mathscr{R}_3$ achieves the Plotkin bound for homogeneous weights (see Greferath and O\textquotesingle Sullivan \cite[Th. 2.2]{Greferath}) are marked $\dagger.$ 
\vspace{-1mm}\begin{itemize}
\vspace{-1mm}\item[I.] There are $3$ $\ast$-inequivalent non-zero Euclidean self-orthogonal $\mathbb{Z}_9\mathbb{Z}_3$-linear codes of block-length $(2,2).$ Among these codes, there are
\begin{enumerate}
\vspace{-1mm}\item[$\odot$] $2$ Euclidean self-orthogonal $\mathbb{Z}_9\mathbb{Z}_3$-linear codes of homogenous weight $9$ with generator matrices $[3~0~|~0~0]$ and $\left[\begin{array}{cc|cc}
     3&0&0&0  \\  
     0&3&0&0 
\end{array}\right];$ and
\vspace{-1mm}\item[$\odot$] $1$ Euclidean self-orthogonal $\mathbb{Z}_9\mathbb{Z}_3$-linear code of homogenous weight $18$ with a generator matrices ${}^{\dagger}[3~6~|~0~0].$
\end{enumerate}
\vspace{-1mm}\item[II.] By Theorem \ref{Lemma3.3}, we see that there does not exist any Euclidean self-dual $\mathbb{Z}_9\mathbb{Z}_3$-linear code of block-length $(2,2).$
\vspace{-1mm}\item[III.] There are $117$ $\ast$-inequivalent non-zero Euclidean self-orthogonal $\mathbb{Z}_9\mathbb{Z}_3$-linear codes of block-length $(3,3).$ Among these codes, there are
\begin{enumerate}
\vspace{-1mm}\item[$\odot$] $19$ Euclidean self-orthogonal $\mathbb{Z}_9\mathbb{Z}_3$-linear codes of homogenous weight $9$ with generator matrices as listed below:
\\$\bullet$ $[3~0~0~|~0~0~0]$ and $\left[\begin{array}{ccc|ccc}
     3&0&0&0&0&0\\
     0&3&0&0&0&0\\
     0&0&3&0&0&0\\
     0&0&0&1&1&1
\end{array}\right];$
\\$\bullet$ $\left[\begin{array}{ccc|ccc}
     3&0&x&0&0&0\\
     0&y&z&u&v&w
\end{array}\right]$ with $(x,y,z,u,v,w)\in\{(0,3,0,0,0,0),(3,3,0,0,0,0),(0,3,6,2,2,1),(0,3,0,\\2,2,2),(0,3,6,2,2,2),(0,0,0,1,1,2),(0,3,0,1,1,2),(0,3,6,1,1,2)\};$
\\$\bullet$ $\left[\begin{array}{ccc|ccc}
     3&0&0&0&0&0\\
     0&3&x&y&z&u\\
     0&0&v&w&t&s
\end{array}\right]$ with $(x,y,z,u,v,w,t,s)\in\{(0,0,0,0,3,0,0,0),(3,0,0,0,0,1,2,2),(0,1,2,\\1,3,1,2,1),(0,0,0,0,0,1,2,1),(0,0,0,0,3,2,1,2),(0,1,1,1,3,1,1,1),(0,0,0,0,3,1,2,1),(0,1,2,2,3,1,\\2,2),(3,0,0,0,0,1,2,1)\};$
\vspace{-1mm}\item[$\odot$] $97$ Euclidean self-orthogonal $\mathbb{Z}_9\mathbb{Z}_3$-linear codes of homogenous weight $18$ with generator matrices as listed below:
\\$\bullet$ $[x~y~z~|~u~v~w]$ with $(x,y,z,u,v,w)\in\{(3,0,6,0,0,0),(3,3,6,2,1,2),(3,3,0,1,1,2),(0,0,0,1,1,1),\\(3,3,0,2,2,2),(0,3,0,1,2,1),(3,3,6,1,1,2),(3,3,6,2,2,2),(0,3,0,1,1,1),(3,3,0,2,1,2),(3,3,6,2,2,\\1),
(1,4,1,1,1,2),(1,7,8,0,2,0),(1,2,4,2,0,2),(1,4,1,1,1,1),(1,2,4,2,0,1),(1,4,1,0,0,0),(1,4,1,1,\\2,2),(1,7,8,0,1,0),(1,4,1,2,2,2),(1,2,4,1,2,0)\};$ 
\\$\bullet$ $\left[\begin{array}{ccc|ccc}
     3&0&x&y&z&u\\
     0&v&w&x_1&y_1&z_1
\end{array}\right]$ with $(x,y,z,u,v,w,x_1,y_1,z_1)\in\{(6,0,0,0,3,0,2,2,2),(6,0,0,0,3,6,2,2,\\1),(6,0,0,0,3,0,2,2,1),(6,0,0,0,3,6,2,1,2),(3,2,2,1,3,0,1,1,2),(3,1,1,1,3,0,2,2,2),(6,0,0,0,3,6,\\2,2,2),(3,1,1,2,3,0,2,2,1),(3,0,0,0,3,6,0,0,0),(6,1,1,2,3,6,2,2,1),(3,1,2,1,3,0,2,1,2),(0,1,2,1,\\3,0,2,1,2),(0,2,1,1,3,0,1,2,2),(3,2,1,2,3,0,1,2,1),(6,0,0,0,3,0,1,1,2),(6,0,0,0,3,0,1,1,1),(6,1,\\1,1,3,6,2,2,2),(6,0,0,0,3,6,1,1,2),(6,0,0,0,3,6,1,2,1),(0,2,2,1,3,0,1,1,2),(6,0,0,0,3,6,1,1,1),\\(3,2,2,2,3,0,1,1,1)\};$
\\$\bullet$ $\left[\begin{array}{ccc|ccc}
     1&x&y&z&u&v\\
     0&w&x_1&y_1&z_1&u_1
\end{array}\right]$ with $(x,y,z,u,v,w,x_1,y_1,z_1,u_1)\in\{(5,2,1,2,0,0,3,2,1,2),(5,2,0,1,2,\\0,3,2,2,1),
(1,1,0,2,2,0,0,1,2,1),
(2,7,0,0,0,0,0,1,1,2),
(2,7,0,0,0,0,0,1,2,1),
(1,1,0,1,2,0,0,1,1,\\1),
(5,2,0,2,1,0,3,1,1,2),
(5,2,2,0,2,0,3,1,2,1),
(1,1,0,1,2,3,6,2,2,2),
(5,2,2,2,0,0,3,1,1,2),
(5,2,\\1,2,0,0,3,2,1,1),
(5,2,2,1,0,0,3,1,2,1),
(5,2,3,0,1,0,3,1,1,2),
(1,1,1,1,0,3,6,2,1,2),
(5,2,0,2,2,0,\\3,2,1,1),
(5,2,2,1,0,0,3,1,2,2),
(5,2,1,0,1,0,3,2,1,2),
(5,2,1,0,2,0,3,2,2,1),
(1,1,0,2,2,3,6,1,2,1),\\
(5,2,2,2,0,0,3,1,1,1),
(2,7,0,0,0,3,3,2,2,1),
(2,7,0,0,0,3,3,1,2,1),
(5,2,2,0,1,0,3,1,2,2),
(1,1,0,2,\\1,3,6,2,2,2),(1,7,1,0,0,3,6,0,0,0),(1,5,2,2,1,3,3,0,0,0),(1,5,2,2,2,3,3,0,0,0),(1,1,1,0,2,3,6,0,0,\\0),(1,1,2,2,0,3,6,0,0,0),(1,5,0,0,0,3,3,0,0,0),(1,5,1,1,2,3,3,0,0,0),(1,1,1,1,0,3,6,0,0,0),(1,7,0,\\2,0,3,6,0,0,0),(1,5,1,2,2,3,3,0,0,0)
\};$
\\$\bullet$ $\left[\begin{array}{ccc|ccc}
     3&6&x&0&0&0\\
     0&0&0&1&y&z
\end{array}\right]$ with $(x,y,z)\in\{(0,1,2),(0,2,1),(6,1,2),(6,2,2)\};$
\\$\bullet$ $\left[\begin{array}{ccc|ccc}
     1&x&y&z&u&v\\
     0&3&3&0&0&0\\
     0&0&w&x_1&y_1&z_1
\end{array}\right]$ with $(x,y,z,u,v,w,x_1,y_1,z_1,u_1)\in\{(2,7,0,0,0,0,1,2,2),(1,2,0,2,0,3,\\1,2,2),(1,2,0,1,0,3,1,1,2),(1,2,2,0,0,3,2,2,1),(1,8,0,1,2,0,1,2,2),(1,2,1,0,0,3,1,2,1),(2,7,0,0,\\0,0,1,2,1),(1,2,0,2,0,3,1,2,1),(1,8,0,1,1,0,1,1,2),(1,2,0,1,0,3,2,1,2)\};$
\\$\bullet$ $\left[\begin{array}{ccc|ccc}
     3&0&x&y&z&u\\
     0&3&v&w&x_1&y_1\\
     0&0&z_1&u_1&v_1&w_1
\end{array}\right]$ with $(x,y,z,u,v,w,x_1,y_1,z_1,u_1,v_1,w_1)\in\{(0,1,2,2,0,2,1,1,3,2,1,1),\\(0,2,1,2,0,1,2,1,3,1,2,1),(6,0,0,0,6,0,0,0,0,1,1,1),(0,2,1,1,0,1,2,2,3,1,2,2),(0,1,2,1,0,2,1,2,\\3,2,1,2),(6,0,0,0,6,0,0,0,0,1,2,1)
\};$ and
\vspace{-1mm}\item[$\odot$] $1$ Euclidean self-orthogonal $\mathbb{Z}_9\mathbb{Z}_3$-linear code of homogenous weight $27$ with a generator matrix \\${}^{\dagger}[3~6~6~|~0~0~0].$
\end{enumerate}
\vspace{-1mm}\item[IV.] By Theorem \ref{Lemma3.3}, we see that there does not exist any Euclidean self-dual $\mathbb{Z}_9\mathbb{Z}_3$-linear code of block-length $(3,3).$ 
\end{itemize}We will now   classify all ACD codes of length  $2$ over $\mathscr{R}_2$ and  $\mathscr{R}_3$  up to monomial equivalence by classifying all Euclidean   $\mathbb{Z}_{4}\mathbb{Z}_{4}$-LCD codes and $\mathbb{Z}_{9}\mathbb{Z}_{3}$-LCD codes of block-lengths $(2,2),$  respectively, up to $\ast$-equivalence, with the help of Magma and applying Theorem \ref{Thm5.100}.
\begin{itemize}
\vspace{-1mm}\item[V.] There are $203$ $\ast$-inequivalent non-zero Euclidean $\mathbb{Z}_9\mathbb{Z}_3$-LCD codes of block-length $(2,2).$ Among these codes, there are
\begin{enumerate}
\vspace{-1mm}\item[$\odot$] $114$ Euclidean $\mathbb{Z}_9\mathbb{Z}_3$-LCD codes of homogeneous weight $6$ with generator matrices as listed below:
\\$\bullet$ $[x~y~|~z~w]$ with $(x,y,z,w)\in\{(3,0,2,0),(3,0,1,0),(0,0,1,0),(1,0,0,0),(1,0,1,0),(1,0,2,0)\};$
\\$\bullet$ $\left[\begin{array}{cc|cc}
     x&y&z&u\\
     v&w&x_1&y_1
\end{array}\right]$ with $(x,y,z,u,v,w,x_1,y_1)\in\{(1,0,1,0,0,1,0,2),(1,0,0,0,0,1,1,0),(1,0,1,0,0,\\1,0,1),(1,0,2,0,0,1,0,0),(1,0,0,2,0,1,0,1),(1,0,0,0,0,1,0,0),(1,0,0,0,0,1,0,1),(1,0,1,0,0,1,2,1),\\(1,0,2,0,0,1,1,2),(1,0,2,2,0,1,0,1),(1,0,2,0,0,1,1,1),(1,0,2,0,0,1,2,0),(1,0,2,0,0,1,0,2),(1,0,1,\\2,0,1,0,0),(1,0,0,0,0,1,1,2),(1,2,0,0,0,3,0,2),(1,0,0,2,0,0,1,1),(1,0,2,0,0,3,1,0),(1,0,0,0,0,0,\\1,1),(1,6,1,0,0,0,0,1),(1,0,0,0,0,3,1,0),(1,0,2,0,0,3,0,2),(1,0,1,1,0,3,0,2),(1,0,0,0,0,3,0,2),(1,\\2,1,0,0,3,0,1),(1,8,0,1,0,0,1,0),(1,0,1,0,0,3,2,0),(1,0,1,2,0,3,0,1),(1,2,0,1,0,3,1,0),(1,0,1,0,0,\\3,0,1),(1,0,2,0,0,0,0,1),(1,0,2,0,0,3,1,1),(1,0,0,0,0,3,2,2),(1,0,0,0,0,0,0,1),(1,6,2,0,0,0,0,1),\\(1,2,0,2,0,3,2,0),(1,6,0,0,0,0,0,1),(1,0,1,0,0,3,2,1),(1,0,0,1,0,3,0,2),(1,0,0,2,0,0,1,0),(1,0,2,\\2,0,3,0,1),(1,6,0,2,0,0,1,0),(1,0,2,0,0,3,0,1),(1,0,0,2,0,3,0,1),(1,6,0,0,0,0,1,0),(1,0,0,0,0,3,0,\\1),(1,2,0,1,0,3,2,0),(1,0,0,0,0,3,2,1),(1,0,0,2,0,0,1,2),(1,6,0,1,0,0,1,0),(1,0,2,1,0,3,0,2),(1,0,\\0,0,0,0,1,0),(1,0,1,0,0,3,0,2),(1,8,0,2,0,0,1,0),(1,0,2,0,0,3,1,2),(1,8,0,0,0,0,1,0),(1,2,0,0,0,3,\\1,0),(1,0,1,0,0,3,2,2),(1,0,1,0,0,0,0,1),(3,0,1,0,0,3,2,1),(3,0,2,0,0,3,1,2),(3,0,2,0,0,3,1,1),(3,\\0,1,0,0,0,0,1),(3,0,1,0,0,3,0,1),(3,6,0,1,0,0,1,0),(3,0,0,2,0,0,1,0),(3,6,1,0,0,0,0,1),(3,0,2,0,0,\\3,0,2),(3,0,2,0,0,0,0,1),(3,0,2,0,0,3,0,1),(0,3,0,1,0,0,1,2),(0,0,1,0,0,0,0,1),(3,0,1,0,0,3,2,2),\\(0,3,0,2,0,0,1,1)\};$
\\$\bullet$ $\left[\begin{array}{c|c}
     I_2&0\\
     0&I_2
\end{array}\right];$
\\$\bullet$ $\left[\begin{array}{cc|cc}
     1&x&y&z\\
     0&u&v&w\\
     0&x_1&y_1&z_1
\end{array}\right]$ with $(x,y,z,u,v,w,x_1,y_1,z_1)\in\{(0,2,0,1,0,0,0,0,1),(0,0,2,1,0,2,0,1,0),(0,0,\\2,1,0,0,0,1,1),(0,0,2,1,0,1,0,1,0),(0,0,2,1,0,1,0,1,2),(0,0,0,1,2,0,0,0,1),(0,0,0,1,0,0,0,0,1),(0,\\0,0,1,0,1,0,1,0),(0,0,1,1,0,2,0,1,1),(0,0,0,1,0,2,0,1,1),(0,0,2,1,0,1,0,1,1),(0,0,0,1,0,0,0,1,2),\\(1,0,0,3,0,1,0,1,1),(0,0,0,0,1,0,0,0,1),(0,0,2,3,0,2,0,1,0),(0,0,0,3,0,1,0,1,0),(6,0,0,0,1,0,0,0,\\1),(0,0,2,3,0,2,0,1,2),(1,0,0,3,0,2,0,1,1),(0,0,1,3,0,1,0,1,0),(0,0,0,3,0,2,0,1,0),(7,0,0,0,1,0,0,\\0,1),(0,0,0,3,2,0,0,0,1),(1,0,0,3,0,2,0,1,2),(0,0,1,3,0,1,0,1,1),(0,1,0,3,1,0,0,0,1),(1,0,0,3,2,0,\\0,0,1),(0,0,0,3,0,1,0,1,2),(1,0,0,3,1,0,0,0,1),(0,2,0,3,2,0,0,0,1),(0,0,2,3,0,2,0,1,1),(0,0,1,3,0,\\1,0,1,2),(0,0,0,3,0,2,0,1,2)\};$
\vspace{-1mm}\item[$\odot$] $48$ Euclidean $\mathbb{Z}_9\mathbb{Z}_3$-LCD codes of homogeneous weight $9$ with generator matrices as listed below:
\\$\bullet$ $[1~x~|~y~z]$ with $(x,y,z)\in\{(6,0,2),(6,0,0),(6,0,1),(0,0,2),(6,1,2),(6,1,0),(6,1,1),(6,2,2),(0,1,2),\\(6,2,0),(6,2,1),(0,2,2)\};$
\\$\bullet$ $\left[\begin{array}{cc|cc}
     1&x&y&z\\
     0&u&v&w
\end{array}\right]$ with $(x,y,z,u,v,w)\in\{(0,1,2,1,1,1),(0,1,2,1,2,2),(0,2,2,1,1,2),(0,1,2,1,2,1),\\(0,1,1,1,1,0),(0,2,1,1,2,1),(0,2,2,1,1,0),(0,2,2,1,2,2),(0,1,1,1,2,0),(0,2,2,1,2,0),(0,0,2,1,1,\\0),(0,0,2,1,2,0),(0,2,2,3,1,0),(6,0,0,0,1,1),(0,1,2,3,2,0),(0,2,1,3,1,2),(0,0,1,3,1,2),(0,2,2,3,\\1,1),(0,1,1,3,2,2),(0,0,2,3,1,1),(0,2,1,3,1,0),(6,0,2,0,1,2),(0,1,2,3,2,1),(6,0,1,0,1,1),(0,1,1,\\3,2,0),(0,0,2,3,2,0),(0,2,1,3,2,2),(0,2,1,3,1,1),(0,0,1,3,2,2),(0,1,1,3,2,1),(0,0,2,3,2,1),(0,0,\\1,3,2,0),(6,0,2,0,1,1),(6,0,0,0,1,2),(0,1,1,3,1,2),(6,0,1,0,1,2)\};$
\vspace{-1mm}\item[$\odot$] $38$ Euclidean $\mathbb{Z}_9\mathbb{Z}_3$-LCD codes of homogeneous weight $12$ with generator matrices as listed below:
\\$\bullet$ $[x~y~|~z~u]$ with $(x,y,z,u)\in\{(1,4,0,2),(1,4,0,0),(1,4,1,2),(1,4,1,0),(1,4,1,1),(1,4,2,2),(3,6,2,2),\\(3,0,2,1),(3,0,1,2),(0,0,1,2),(3,6,2,1),(3,6,1,2)\};$
\\$\bullet$ $\left[\begin{array}{cc|cc}
     1&x&y&z\\
     0&u&v&w
\end{array}\right]$ with $(x,y,z,u,v,w)\in\{(8,0,2,0,1,1),(8,0,0,0,1,1),(2,0,2,3,2,2),(2,0,0,3,2,2),\\(2,2,0,3,1,1),(2,0,0,3,1,1),(8,0,1,0,1,1),(2,0,1,3,2,2),(2,0,0,3,2,1),(2,1,0,3,1,2),(8,0,2,0,1,2),\\(8,0,0,0,1,2),(2,2,0,3,1,2),(2,1,0,3,1,1)\};$
\\$\bullet$ $\left[\begin{array}{cc|cc}
     3&x&y&z\\
     0&u&v&w
\end{array}\right]$ with $(x,y,z,u,v,w)\in\{(0,1,2,3,1,1),(0,0,2,3,1,1),(6,0,2,0,1,1),(0,0,1,3,1,2),\\(0,0,2,3,2,1),(6,0,1,0,1,2),(0,0,1,3,2,2),(6,0,1,0,1,1),(0,0,2,3,1,0),(0,0,2,3,2,0),(0,1,2,3,2,2),\\(0,2,2,3,1,2)\};$ and
\vspace{-1mm}\item[$\odot$] $3$ Euclidean $\mathbb{Z}_9\mathbb{Z}_3$-LCD codes of homogeneous weight $15$ with generator matrices $[3~6~|~2~0],$ $[3~0~|~0~2]$ and $[3~6~|~1~0].$ 
\end{enumerate}
\vspace{-1mm}\item[VI.] There are $61$ $\ast$-inequivalent non-zero Euclidean  $\mathbb{Z}_4\mathbb{Z}_2$-LCD codes of block-length $(2,2).$ Among these codes, there are
\begin{enumerate}
\vspace{-1mm}\item[$\odot$] $45$ Euclidean $\mathbb{Z}_4\mathbb{Z}_2$-LCD codes of homogeneous weight $2$ with generator matrices as listed below:
\\$\bullet$ $[x~y~|~z~u]$ with $(x,y,z,u)\in\{(0,0,1,0),(0,2,0,1),(1,0,0,0),(1,0,1,0)\};$
\\$\bullet$ $\left[\begin{array}{cc|cc}
     x&y&z&u\\
     v&w&x_1&y_1
\end{array}\right]$ with $(x,y,z,u,v,w,x_1,y_1)\in\{(1,0,1,0,0,1,1,1),(1,0,0,1,0,1,0,1),(1,0,1,0,0,\\1,0,0),(1,0,0,0,0,1,1,1),(1,0,0,0,0,1,0,0),(1,0,0,0,0,1,1,0),(1,0,1,0,0,1,0,1),(1,2,1,0,0,0,0,1),\\(1,2,0,0,0,0,0,1),(1,0,1,0,0,2,0,1),(1,0,0,0,0,0,1,0),(1,0,0,0,0,1,0,1),(1,2,0,0,0,0,1,0),(1,0,1,\\1,0,2,0,1),(1,0,0,1,0,0,1,0),(1,0,1,0,0,2,1,0),(1,0,1,0,0,0,0,1),(1,0,0,1,0,2,0,1),(1,2,0,1,0,0,\\1,0),(1,0,0,0,0,2,1,0),(1,0,0,0,0,0,0,1),(2,0,0,1,0,0,1,1),(0,2,0,1,0,0,1,0),(0,0,1,0,0,0,0,1),(2,\\2,1,0,0,0,0,1),(2,0,1,1,0,2,0,1),(0,2,1,0,0,0,0,1),(2,0,1,0,0,2,0,1)\};$
\\$\bullet$ $\left[\begin{array}{c|c}
     I_2&0\\
     0&I_2
\end{array}\right];$
\\$\bullet$ $\left[\begin{array}{cc|cc}
     1&0&x&y\\
     0&z&u&v\\
     0&0&w&x_1
\end{array}\right]$ with $(x,y,z,u,v,w,x_1)\in\{(1,0,1,0,0,0,1),(0,0,1,0,0,1,0),(1,0,1,1,0,0,1),(0,0,\\1,1,0,0,1),(0,0,2,1,0,0,1),(0,0,2,0,1,1,1),(0,0,0,1,0,0,1),(0,0,2,0,1,1,0),(0,1,2,0,1,1,1),(0,0,0,\\1,0,0,1),(1,0,2,1,0,0,1),(0,1,2,0,1,1,0)\};$
\vspace{-1mm}\item[$\odot$] $14$ Euclidean $\mathbb{Z}_4\mathbb{Z}_2$-LCD codes of homogeneous weight $4$ with generator matrices as listed below:
\\$\bullet$ $[1~x~|~y~z]$ with $(x,y,z)\in\{(2,1,1),(2,0,0),(2,1,0),(0,0,1),(0,1,1),(2,0,1)\};$
\\$\bullet$ $\left[\begin{array}{cc|cc}
     x&y&z&u\\
     v&w&x_1&y_1
\end{array}\right]$ with $(x,y,z,u,v,w,x_1,y_1)\in\{(1,0,1,1,0,1,1,1),(1,0,1,1,0,1,1,0),(1,0,0,1,0,1,\\1,0),(1,0,1,1,0,2,1,0),(1,0,0,1,0,2,1,0),(2,0,0,1,0,2,1,0),(2,2,0,1,0,0,1,1),(2,0,1,1,0,2,1,0)\};$ and
\vspace{-1mm}\item[$\odot$] $2$ Euclidean $\mathbb{Z}_4\mathbb{Z}_2$-LCD codes of homogeneous weight $6$ with generator matrices $[2~2~|~1~0]$ and $[0~2~|~1~0].$
\end{enumerate}
\end{itemize}

\vspace{-7mm}

\end{document}